\title{A $\phi$-Competitive Algorithm for Scheduling Packets with Deadlines\footnote{
		A preliminary version of this article first appeared in SODA'19~\cite{this-conference}.
		An extended and thoroughly revised version was then published in SIAM Journal on Computing~\cite{this-journal}, with Copyright transferred to SIAM.
		This version, with additional counterexamples to some simpler variants of the algorithm,
		is as it existed immediately prior to editing and production by SIAM,
		and is published with SIAM's permission.}}
\author[1]{Pavel Vesel\'{y}}
\author[2]{Marek Chrobak}
\author[3]{{\L}ukasz Je\.{z}}
\author[1]{Ji{\v{r}}\'{\i} Sgall}
\affil[1]{Computer Science Institute of Charles University,
	Faculty of Mathematics and Physics, Prague, Czech Republic\\
	\texttt{\{vesely,sgall\}@iuuk.mff.cuni.cz}}
\affil[2]{Department of Computer Science, University of California at Riverside, CA, U.S.A.\\
	\texttt{marek@cs.ucr.edu}}
\affil[3]{Institute of Computer Science, University of Wroc{\l}aw, Poland\\
	\texttt{lje@cs.uni.wroc.pl}}
\date{}
\newcommand{\myparagraph}[1]{{\smallskip\noindent{\bf #1}}}
\newcommand{\emparagraph}[1]{{\smallskip\noindent\emph{#1}}} 
\newcommand{\indentemparagraph}[1]{{\smallskip\emph{#1}}} 
\newcommand{\etal}{{\em et al.}}
\newcommand{\mycase}[1]{\mbox{{\underline{Case #1}}:\/}}
\newcommand{\bigcell}[2]{\begin{tabular}{@{}#1@{}}#2\end{tabular}}
\let\rho\varrho
\newcommand{\hatf}{{\hat{f}}}
\newcommand{\hatg}{{\hat{g}}}
\newcommand{\dotx}{{\dot{x}}}
\newcommand{\dotz}{{\dot{z}}}
\newcommand{\calA}{{\mathcal{A}}}
\newcommand{\calB}{{\mathcal{B}}}
\newcommand{\calP}{{\mathcal{P}}}
\newcommand{\calU}{{\mathcal{U}}}
\newcommand{\tildea}{{\tilde{a}}}
\newcommand{\tildeb}{{\tilde{b}}}
\newcommand{\eps}{{\epsilon}}
\newcommand{\fstar}{{f^\ast}}  
\newcommand{\gstar}{{g^\ast}}  
\newcommand{\hstar}{{h^\ast}}  
\newcommand{\e}{\mathrm{e}}
\newcommand{\bracedintext}[1]{{\{ #1 \}}}
\newcommand{\braced}[1]{{ \left\{ #1 \right\} }}
\newcommand{\suchthat}{{\;:\;}}
\newcommand{\assign}{\,{\leftarrow}\,}
\newcommand{\razy}{{\,}}
\newcommand{\phinegonef}{{ \textstyle{ \frac{1}{\phi} } }}  
\newcommand{\phinegtwof}{{ \textstyle{ \frac{1}{\phi^2} } }} 
\newcommand{\phinegthreef}{{ \textstyle{ \frac{1}{\phi^3} } }}
\newcommand{\malymax}{{\mbox{\tiny\rm max}}}
\newcommand{\argmax}{{\textrm{argmax}}}  
\newcommand{\BDPS}{{\textsf{PacketSchD}}}
\newcommand{\ComputePlan}{{\textsf{ComputePlan}}}
\newcommand{\PlanMonotonicity}{\textsf{PlanM}}
\newcommand{\weights}{{w}}
\newcommand{\minwt}{{\textsf{minwt}}}
\newcommand{\packetslack}{{\textsf{pslack}}}
\newcommand{\pslack}{{\textsf{pslack}}}
\newcommand{\nexttightslot}{\textsf{nextts}}
\newcommand{\firstlightpacket}{{\ell}}
\newcommand{\prevtightslot}{\textsf{prevts}}
\newcommand{\timehorizon}{{T}}
\newcommand{\tinyspace}{{\hspace{0.5pt}}} 
\newcommand{\gtrweight}{{\triangleright\tinyspace}}
\newcommand{\geqweight}{{\trianglerighteq\tinyspace}} 
\newcommand{\seggroup}[1]{{\langle #1\rangle}}
\newcommand{\terminala}{{\tildea}}
\newcommand{\initialb}{{\tildeb}}
\newcommand{\advcredit}{\textsf{advcredit}}
\newcommand{\substpacket}{\textsf{sub}}
\newcommand{\pendpackets}{{\calU}}
\newcommand{\barpendpackets}{{\overline{\calU}}}
\newcommand{\varplanP}{{\calP}}
\newcommand{\planP}{{P}}
\newcommand{\planPbefStep}{{\widetilde{\planP}^t}} 
\newcommand{\planPaftArrivals}{{{\planP}^t}}    
\newcommand{\planPaftInitSeg}{{{\planP}^{t+1}_{{\mathrm{ins}}}}}
\newcommand{\planPaftPartI}{{{\planP}^t_{(i)}}}
\newcommand{\planPleapAftPartI}{{{\planP}^{t+1}_{(i)}}}
\newcommand{\planPleapAftPartII}{{{\planP}^{t+1}_{(ii)}}}
\newcommand{\planPbefNextStep}{{\widetilde{\planP}^{t+1}}} 
\newcommand{\planPbefGroup}[1]{{\dot{\planP}^{t+1}_{#1}}}
\newcommand{\planPaftGroup}[1]{{\ddot{\planP}^{t+1}_{#1}}}
\newcommand{\planQ}{{Q}} 
\newcommand{\barplanQ}{{\overline{\planQ}}} 
\newcommand{\varbackupplan}{{\calB}}
\newcommand{\backupplan}{{B}}
\newcommand{\backupplanBefStep}{{\widetilde{\backupplan}^t}} 
\newcommand{\backupplanAftArrivals}{{{\backupplan}^t}}    
\newcommand{\backupplanAftAdvMove}{{{\backupplan}^t_{{\mathrm{adv}}}}}
\newcommand{\backupplanAftInitSeg}{{{\backupplan}^{t+1}_{{\mathrm{ins}}}}}
\newcommand{\backupplanAftPartI}{{{\backupplan}^t_{(i)}}}
\newcommand{\backupplanLeapAftPartI}{{{\backupplan}^{t+1}_{(i)}}}
\newcommand{\backupplanLeapAftPartII}{{{\backupplan}^{t+1}_{(ii)}}}
\newcommand{\backupplanBefNextStep}{{\widetilde{\backupplan}^{t+1}}}
\newcommand{\backupplanBefGroup}[1]{{\dot{\backupplan}^{t+1}_{#1}}}
\newcommand{\backupplanAftGroup}[1]{{\ddot{\backupplan}^{t+1}_{#1}}}
\newcommand{\ADV}{{A}}
\newcommand{\varADV}{{\calA}}
\newcommand{\ADVbefStep}{{\widetilde{\ADV}^t}} 
\newcommand{\ADVaftArrivals}{{{\ADV}^t}}    
\newcommand{\ADVaftAdvMove}{{{\ADV}^t_{{\mathrm{adv}}}}}
\newcommand{\ADVaftInitSeg}{{{\ADV}^{t+1}_{{\mathrm{ins}}}}}
\newcommand{\ADVaftPartI}{{{\ADV}^t_{(i)}}}
\newcommand{\ADVleapAftPartI}{{{\ADV}^{t+1}_{(i)}}}
\newcommand{\ADVleapAftPartII}{{{\ADV}^{t+1}_{(ii)}}}
\newcommand{\ADVbefNextStep}{{\widetilde{\ADV}^{t+1}}} 
\newcommand{\AdvBefGroup}[1]{{\dot{\ADV}^{t+1}_{#1}}}
\newcommand{\AdvAftGroup}[1]{{\ddot{\ADV}^{t+1}_{#1}}}
\newcommand{\potentialBefStep}{{\widetilde{\Psi}^t}} 
\newcommand{\potentialAftArrivals}{{\Psi^t}} 
\newcommand{\potentialBefNextStep}{{\widetilde{\Psi}^{t+1}}} 
\newcommand{\OPT}{{\textsf{OPT}\xspace}}
\newcommand{\ALG}{{\textsf{ALG}\xspace}}
\newcommand{\DeltaPsiADV}{\Delta_{{\mathrm{adv}}}\Psi}
\newcommand{\DeltaPsiALG}{\Delta_{{\mathrm{ALG}}}\Psi}
\newcommand{\InvariantA}{\textrm{(InvA)}}
\newcommand{\InvariantB}{\textrm{(InvB)}}
\newcounter{theorem}
\theoremstyle{plain}
\newtheorem{theorem}[theorem]{Theorem}
\newtheorem{corollary}[theorem]{Corollary}
\newtheorem{lemma}[theorem]{Lemma}
\newtheorem{claim}[theorem]{Claim}
\newtheorem{observation}[theorem]{Observation}
\numberwithin{theorem}{section}
\numberwithin{equation}{section}
\numberwithin{figure}{section}
\newenvironment{bigeqn*}{\large\begin{eqnarray*}}{\end{eqnarray*}}
\newcommand{\half}{{\textstyle\frac{1}{2}}}
\begin{document}
	
\thispagestyle{empty}

\maketitle

\begin{abstract}
	In the \emph{online packet scheduling problem with deadlines}  ({\BDPS}, for short), 
	the goal is to schedule transmissions of packets that arrive over time in a network switch and need
	to be sent across a link.                           
	Each packet has a deadline, representing its urgency, and a non-negative weight, that represents its
	priority. Only one packet can be transmitted in any time slot, so
	if the system is overloaded, some packets will inevitably miss their deadlines and be dropped. 
	In this scenario, the natural objective is to compute a transmission schedule that
	 maximizes the total weight of packets that are successfully transmitted. 
	The problem is inherently online, with the scheduling decisions made without the knowledge of future packet
	arrivals.   The central problem concerning {\BDPS}, that has been a subject of   intensive study since 2001,
	is to  determine the optimal competitive ratio of online algorithms, namely the worst-case 
	ratio between the optimum total weight of a schedule (computed by an offline algorithm) 
	and the weight of a schedule computed by a (deterministic) online algorithm.

	We solve this open problem by 
	presenting a $\phi$-competitive online algorithm for {\BDPS} (where $\phi\approx 1.618$ is the golden ratio),
	matching the previously established lower bound.
\end{abstract}

\vfill
\eject

\tableofcontents

\vfill
\eject

\section{Introduction}
\label{sec: introduction}

In the \emph{online packet scheduling problem with deadlines} (abbreviated {\BDPS}),
the goal is to schedule transmissions of packets that arrive over time in a network switch and need to be sent across a link.
The switch stores incoming packets in a buffer that is assumed to have unlimited capacity. Besides its release
time $r_p$, each arriving packet $p$ has two other attributes: a deadline $d_p$, which is the last time slot when $p$ can be transmitted, and 
a non-negative weight $w_p$. The deadline of $p$ represents its urgency, while its weight represents its importance or
priority. (These priorities can be used to implement various levels of service in networks with QoS guarantees.)
Only one packet can be transmitted in any time slot, so
if the system is overloaded, some packets will inevitably miss their deadlines and be dropped. 
In this scenario, the natural objective is to compute a transmission schedule that
 maximizes the total weight of packets that are successfully transmitted.
In the literature, this problem is also occasionally referred to as
\emph{bounded-delay buffer management}, \emph{QoS buffering},
or as a job scheduling problem for unit-length jobs with release times, deadlines, and weights, 
where the objective is to maximize the weighted throughput.
In the offline setting, where the information about all packets is available in advance, 
it is easy to find an optimal schedule by representing the set of packets as a bipartite
graph, with each packet $p$ connected to all slots in its interval $[r_p,d_p]$ by an edge of weight $w_p$,
and applying any algorithm for the maximum-weight bipartite matching to this graph.

In practice, however, scheduling of packets must be accomplished online, with the scheduling decisions made without the knowledge 
of future packet arrivals. The central problem concerning {\BDPS}, that has been a subject of   intensive study since 2001,
is to determine the optimal competitive ratio of online algorithms, namely the worst-case 
ratio between the optimum total weight of a schedule (computed by an offline algorithm) 
and the weight of a schedule computed by a (deterministic) online algorithm.
 
This paper provides the solution of this open problem by establishing an upper bound of
$\phi$ on the competitive ratio for {\BDPS} (where $\phi = (1 + \sqrt{5}) / 2\approx 1.618$ is the golden ratio),
matching the previously known lower
bound~\cite{hajek_unit_packets_01,andelman_queueing_policies_03,Zhu04,chin_partial_job_values_03}.
Our $\phi$-competitive algorithm {\PlanMonotonicity} is presented in Section~\ref{sec: online algorithm}.
The basic idea underlying our algorithm is relatively simple. It is based on the concept of the \emph{optimal plan},
which, at any given time $t$, is the maximum-weight subset of pending packets that can be feasibly scheduled
in the future (if no other packets arrive); we describe it in Section~\ref{sec: plans and their properties}.
When some packet $p$ from the plan is chosen to be transmitted at
time~$t$, it will be replaced in the plan by some other packet $\rho$.  The algorithm chooses $p$ to
maximize an appropriate linear combination of $w_p$ and $w_\rho$.
Furthermore, the algorithm makes additional changes in the plan, adjusting deadlines and weights
of some packets. These changes are necessary to achieve the competitive ratio
$\phi$ and can be viewed as a subtle way to remember necessary information about the past.
While the algorithm itself is not complicated, its
competitive analysis, given in Section~\ref{sec: competitive analysis}, is quite intricate. It relies on
showing a bound on amortized profit at each step, using
a potential function, which quantifies the advantage of the algorithm over the adversary in future steps,
and on maintaining an invariant that allows us to control decreases of the potential function.
The ideas leading to the development of
algorithm~\PlanMonotonicity{} and its analysis are also outlined in SIGACT News Online Algorithms Column~\cite{vesely21_sigact_news}.


\myparagraph{Past work.}
The {\BDPS} problem was first introduced independently by Hajek~\cite{hajek_unit_packets_01}
and~Kesselman~{\etal}~\cite{kesselman_buffer_overflow_04},
who both gave a proof that the greedy algorithm (that always transmits the heaviest packet) is $2$-competitive.
Hajek's paper also contains a proof of a lower bound of $\phi\approx 1.618$
on the competitive ratio. The same lower bound  was later discovered independently by
Andelman~{\etal}~\cite{andelman_queueing_policies_03,Zhu04} and also
by Chin~{\etal}~\cite{chin_partial_job_values_03} in a different, but equivalent setting. Improving over the greedy algorithm,
Chrobak~{\etal}~\cite{chrobak_improved_buffer_04,chrobak_improved_buffer_07} gave an online algorithm with competitive ratio $1.939$.
This ratio was subsequently improved to $1.854$ by Li~{\etal}~\cite{Li_better_online_07},
and to $1.828$ by Englert and Westermann~\cite{englert_suppressed_packets_12}. This value of $1.828$,
prior to our work, has remained the best upper bound on the competitive ratio of {\BDPS} in the published literature.

Algorithms with ratio $\phi$ have been developed for several restricted variants of {\BDPS}.
Li~{\etal}~\cite{li_optimal_agreeable_05} (see also~\cite{Jez_optimal_agreeable_12}) gave a $\phi$-competitive algorithm for
the case of \emph{agreeable} deadlines, which consists of instances where the deadline ordering
is the same as the ordering of release times, i.e., $r_p < r_q$ implies $d_p\le d_q$ for any packets $p$ and $q$. Another well-studied case is that of
\emph{$s$-bounded instances}, where each packet can be scheduled in at most $s$ slots, that is
$d_p \le r_p + s-1$ for each packet $p$. 
A $\phi$-competitive algorithm for $2$-bounded instances was given by Kesselman~{\etal}~\cite{kesselman_buffer_overflow_04}.
This bound was later extended to $3$-bounded instances by Chin~{\etal}~\cite{chin_weighted_throughput_06} and to $4$-bounded
instances by B{\"o}hm~{\etal}~\cite{bohm_bounded_lookahead_16}. The work of
Bienkowski~{\etal}~\cite{bienkowski_phi-competitive_13} provides
an upper bound of $\phi$ (in a somewhat more general setting) for the case where packet weights 
increase with respect to deadlines. All these results are tight, as the proof of the lower bound of 
$\phi$ in~\cite{hajek_unit_packets_01,andelman_queueing_policies_03,Zhu04,chin_partial_job_values_03} is based on an adversary
strategy that uses only $2$-bounded instances with increasing weights, and $2$-bounded instances have the agreeable-deadline 
property.

In \emph{$s$-uniform instances}, each packet has \emph{exactly} $s$ consecutive slots where it can be scheduled.
(Obviously, such instances also satisfy the agreeable deadlines property.) The lower bound
of $\phi$ in~\cite{hajek_unit_packets_01,andelman_queueing_policies_03,Zhu04,chin_partial_job_values_03} does not apply to $s$-uniform instances; 
in fact, as shown by Chrobak~{\etal}~\cite{chrobak_improved_buffer_07}, for $2$-uniform instances, competitive ratio $\approx 1.377$ is optimal.

Randomized online algorithms for {\BDPS} have been studied as well, although the gap between the upper and
lower bounds for the competitive ratio in the randomized case remains quite large. The best upper bound is
$\approx 1.582$~\cite{bartal_online_competitive_04,chin_weighted_throughput_06,bienkowski_randomized_algorithms_11,Jez13}, and it
applies even to the adaptive adversary model. For the adaptive adversary, the best
lower bound is $\approx 1.33$~\cite{bienkowski_randomized_algorithms_11}, while for
the oblivious adversary it is $1.25$~\cite{chin_partial_job_values_03}.

Kesselman~{\etal}~\cite{kesselman_buffer_overflow_04} originally proposed the problem 
in the setting with integer bandwidth $m \ge 1$, which means that $m$ packets are sent in each step.
For an arbitrary $m$, they proved that the greedy algorithm is $2$-competitive
and that there is a $\phi$-competitive algorithm for $2$-bounded instances~\cite{kesselman_buffer_overflow_04}.
Later, Chin~\etal~\cite{chin_weighted_throughput_06} gave an algorithm with ratio 
that tends to $\frac{e}{e-1}\approx 1.582$ for $m\rightarrow \infty$.
The best lower bound for any $m$, also due to Chin~\etal~\cite{chin_weighted_throughput_06}, equals $1.25$
and holds even for randomized algorithms against the oblivious adversary. 
Observe that any upper bound for bandwidth~$1$ implies the same upper bound for an arbitrary $m$,
by simulating an online algorithm for bandwidth~$1$ on an instance where each step is subdivided
into $m$ smaller steps. Hence, our algorithm in Section~\ref{sec: online algorithm}
is $\phi$-competitive for any $m$, which improves the current state-of-the-art for any $m < 13$.

There is a variety of other packet scheduling problems related to {\BDPS}. 
The semi-online setting with lookahead was proposed in~\cite{bohm_bounded_lookahead_16}.
A relaxed variant of {\BDPS} in which only the ordering of deadlines is known but not their exact values,
was studied in~\cite{bienkowski_collecting_weighted_13}, where a lower bound higher than $\phi$ was shown.
In the FIFO model (see, for example, \cite{aiello_competitive_queue_05,kesselman_buffer_overflow_04}), 
packets do not have deadlines, but the switch has a buffer that can only hold a bounded number of
packets, and packets must be transmitted in the first-in-first-out order. 
More information about {\BDPS} and related scheduling problems can be found in
a survey paper by Goldwasser~\cite{goldwasser_survey_10}.

\section{Preliminaries}
\label{sec: preliminaries}

This section provides a formal definition of the {\BDPS} problem, along with some
useful assumptions used throughout the paper.



\myparagraph{The online {\BDPS} problem.}
The instance of {\BDPS} is specified by a set of \emph{packets}, with each packet $p$ represented by a
triple $(r_p,d_p,w_p)$, where integers $r_p$ and $d_p\ge r_p$ denote
the \emph{release time} and \emph{deadline} (or \emph{expiration time}) of $p$,  and $w_p\ge 0$ is the \emph{weight} of $p$.
(To avoid double indexing, we sometimes use notation $w(p)$ to denote $w_p$
and $d(p)$ for $d_p$.) The time is discrete, with time units represented
by consecutive integers that we refer to as \emph{time slots} or \emph{steps}.
In a feasible transmission schedule, a subset of packets is transmitted. Only one packet
can be transmitted in each time step, and each packet $p$
can only be transmitted in a slot from the interval $[r_p,d_p]$.
(Note that this interval includes slot $d_p$.)
The objective is to 
compute a schedule whose total weight of transmitted packets (also called its \emph{profit})  is maximized.

In the online variant of {\BDPS}, which is the focus of our work, the algorithm needs to
compute the solution incrementally over time. At any time step $t$, packets with release times equal to $t$
are revealed and added to the set of pending packets (that is, those that are already released,
but not yet expired or transmitted). Then the algorithm needs to choose one pending packet to transmit in slot $t$. 
As this decision is made without the
knowledge of packets to be released in future time steps, such an online algorithm
cannot, in general, be guaranteed to compute an optimal solution. The quality
of the schedules it computes can be quantified using competitive analysis.
We say that an online algorithm is \emph{$c$-competitive} if, for each instance, the
optimal profit (computed offline) is at most $c$ times the profit of the schedule computed by the online algorithm.


\myparagraph{Useful assumptions.}
The range of integers used as release times and deadlines is not important, as the whole instance can be
shifted in time without affecting the competitive ratio. However, for the sake of concreteness we can assume
that the whole instance fits in the interval $[0,\timehorizon]$, where $\timehorizon$ is some large time horizon,
and that the first step of the computation takes place at time $0$.
Throughout the paper, all time slots are tacitly assumed to be from this finite range.
(For notational reasons, we will occasionally also use slot $-1$.)

Without loss of generality, we make two simplifying assumptions:

\begin{enumerate}[label=(A\arabic*),leftmargin=3em]
	
\item \label{ass:virtualPackets}
We assume that at each step $t$ and for each $\tau\in [t, \timehorizon]$,
there is a pending packet with deadline $\tau$ (or more such packets, if needed). 
This can be achieved by releasing, at time $t$, virtual $0$-weight packets
with deadline $\tau$, for each $\tau\in [t, \timehorizon]$.

\item \label{ass:diffWeights}
We also assume that all packets have different
weights. Any instance can be transformed into an instance with distinct weights
through infinitesimal perturbation of the weights, without affecting the
competitive ratio. Thus the virtual packets from the previous assumption
have, in fact, infinitesimal positive weights, although in the calculations we treat these
values as being equal $0$. The only purpose of this assumption is to facilitate consistent tie-breaking,
in particular uniqueness of plans (to be defined shortly). 
\end{enumerate}
A careful reader may notice that assumption~\ref{ass:virtualPackets} as written reveals the value of $\timehorizon$ to the online algorithm. 
This does not affect our analysis because, for any instance, if the maximum packet deadline in this instance is $d_{\malymax}$,
then the computation of our algorithm is independent of the choice of the time horizon $\timehorizon\ge d_{\malymax}$, provided that
the ties between weights of virtual $0$-weight packets are broken in favor of earlier-deadline packets, in the sense that
those with larger deadlines are considered ``lighter''.%
\footnote{An alternative and equivalent approach would be to change the time horizon dynamically, so that it is always equal to the 
maximum deadline of already released packets. This, however, creates some minor but distracting technical difficulties in the analysis.
As yet another option, one can show that, in fact,  modifying the definition of {\BDPS} by having the  time horizon  revealed 
when the computation starts does not affect the optimal competitive ratio. 
We don't need such a strong statement, however, for our analysis.
}


\myparagraph{The golden ratio.}
The competitive ratio of our algorithm is $\phi = \half (\sqrt{5}+1) \approx 1.618$, the famous number known as the golden ratio.
Its most important property is that it satisfies the equation $\phi^2 = \phi+1$. This identity will
be frequently used in calculations, usually in the form $\frac{1}{\phi^2} + \frac{1}{\phi} = 1$. 
Another useful property is that $\sum_{i=0}^\infty \phi^{-2i} = \phi$, which follows easily from the
formula for summing a geometric sequence and the earlier-mentioned properties of $\phi$.

\section{Plans and their Properties}
\label{sec: plans and their properties}

Consider an execution of some online algorithm. At any time $t$, the algorithm will
have a set of pending packets. We now discuss properties of these pending packets
and introduce the concept of plans that will play a crucial role in our algorithm.

The set of packets pending at a time $t$ has a natural ordering, 
called the \emph{canonical ordering} and denoted $\prec$, 
which orders packets in  non-decreasing order of deadlines, breaking ties in favor of heavier packets.
(By assumption~\ref{ass:diffWeights} the weights are distinct.)
Formally, for two pending packets $p$ and $q$, define $p \prec q$ iff $d_p < d_q$ or $d_p = d_q$ and $w_p > w_q$.
The \emph{earliest-deadline packet} in some subset $X$ of pending packets is the packet
that is first in the canonical ordering of $X$.
Similarly, the \emph{latest-deadline packet} in $X$ is the last packet in the canonical ordering of $X$.


\myparagraph{Plans.}
Consider a set $X$ of packets pending at a time step $t$. $X$ is called a \emph{plan} 
if the packets in $X$ can be feasibly scheduled in future time slots $t,t+1,...$, where ``feasibly''
means that all packets in $X$ meet their deadlines.  We will typically use symbols $\planP, \planQ,...$ to denote plans.
We emphasize that in a plan we do not assign packets to time slots; that is, a plan is \emph{not} a schedule.
A plan has at least one schedule, but in general it may have many.
(In the literature, such scheduled plans are sometimes called \emph{provisional schedules}.) Using a standard
exchange argument, if $\planP$ is a plan then any schedule of $\planP$ can be converted into its \emph{canonical schedule}, 
in which the packets from $\planP$ are assigned to the slots $t,t+1,...$ in the canonical order.

For any two time slots $\sigma \ge \tau \ge t$, by 
$X[\tau, \sigma] = \braced{ j\in X\suchthat d_j \in [\tau, \sigma] }$ we denote the subset of $X$ consisting of
packets whose deadlines are in $[\tau, \sigma]$. In particular, $X[t,\tau]$ contains packets with
deadlines at most $\tau$. In a similar way, we define $X[\tau, \sigma)$, $X(\tau, \sigma]$, and $X(\tau, \sigma)$.
We also define
\begin{equation*} 
	\packetslack(X, \tau) = (\tau-t+1)  - |X[t,\tau]|.
\end{equation*}
Note that $\tau-t+1$ is the number of slots in interval $[t,\tau]$. 
For convenience, we also allow $\tau = t-1$, in which case the above formula gives us
that $\packetslack(X, t-1) = 0$. We stress that the formula for $\packetslack(X, \tau)$ also depends 
on the value of $t$. This $t$ will always be the ``current step'' under consideration and it will be
uniquely determined from context, so we do not include it as a parameter of this function, to reduce clutter.

The values of $\packetslack(X, \tau)$ are useful for determining feasibility of $X$:


\begin{observation}\label{obs: feasible = nonnegative slack}
Let $X$ be a subset of packets pending at some step $t$.
$X$ is a plan if and only if $\packetslack(X, \tau)\ge 0$ for each $\tau\ge t$.
\end{observation}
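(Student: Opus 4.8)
The plan is to prove the two implications separately, using the canonical (earliest-deadline-first) schedule for the harder direction so that no appeal to Hall's theorem is needed.

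For the forward ($\Rightarrow$) direction, suppose $X$ is feasible, so there is an injective assignment of the packets of $X$ to slots in $[t,\infty)$ with every packet $j$ sent in some slot from $[r_j,d_j]\subseteq[t,d_j]$. Fix any $\tau\ge t$. Every packet in $X_{[t,\tau]}$ has deadline at most $\tau$ and is sent no earlier than $t$, so it occupies one of the $\tau-t+1$ slots of $[t,\tau]$; since the assignment is injective, $|X_{[t,\tau]}|\le \tau-t+1$, which is exactly $\packetslack(X,\tau)\ge 0$. (The case $\tau=t-1$ is vacuous, matching the convention $\packetslack(X,t-1)=0$.)

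For the backward ($\Leftarrow$) direction, assume $\packetslack(X,\tau)\ge0$ for all $\tau\ge t$ and exhibit an explicit feasible schedule, namely the canonical one: list the packets of $X$ as $j_1\prec j_2\prec\cdots\prec j_k$ in canonical order and assign $j_i$ to slot $t+i-1$. This uses each slot at most once by construction, so it suffices to check that each $j_i$ meets its deadline, i.e.\ $t+i-1\le d_{j_i}$. Set $\tau=d_{j_i}$. Because $\prec$ orders by non-decreasing deadline, the packets $j_1,\dots,j_i$ all have deadline at most $\tau$ (and at least $t$, being pending at $t$), hence $\{j_1,\dots,j_i\}\subseteq X_{[t,\tau]}$ and $i\le|X_{[t,\tau]}|$. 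Combining with $|X_{[t,\tau]}|\le\tau-t+1$ (from $\packetslack(X,\tau)\ge0$) gives $i\le\tau-t+1$, i.e.\ $t+i-1\le\tau=d_{j_i}$, as required. Thus $X$ is feasible.

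I do not expect a genuine obstacle here; the only point requiring a little care is the tie-breaking in $\prec$ in the backward direction — packets with deadline exactly $d_{j_i}$ that are heavier than $j_i$ precede it, but they still lie in $X_{[t,\tau]}$, so the counting bound $i\le|X_{[t,\tau]}|$ is unaffected. If desired, one can also remark that this argument simultaneously justifies the earlier claim that any feasible schedule can be rearranged into its canonical schedule, since feasibility is characterized by a condition ($\packetslack\ge0$) that does not depend on which feasible schedule one starts from.
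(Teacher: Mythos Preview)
Your proof is correct and follows essentially the same approach as the paper: a counting argument for the forward direction, and verifying that the canonical (earliest-deadline-first) schedule meets all deadlines for the backward direction. The paper phrases the backward direction as a ``simple induction'' whereas you give the explicit counting bound $i\le|X_{[t,\tau]}|\le\tau-t+1$, but these are the same argument.
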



This observation is quite straightforward: If $X$ is a plan then, in any (feasible) schedule of $X$, for each $\tau\ge t$
all packets in $X[t,\tau]$ are scheduled in different slots of the interval $[t,\tau]$; 
thus $|X[t,\tau]| \le \tau-t+1$. This shows the necessity of the condition in the observation.
To show sufficiency,  assume that $\packetslack(X, \tau)\ge 0$ for each $\tau\ge t$. This implies, by simple
induction, that assigning the packets in $X$ to slots $t,t+1,...$ in the canonical order
will produce a schedule (the canonical schedule of $X$, in fact) in which all packets will meet their deadlines.

\medskip

Throughout the paper, we will tacitly assume that any plan $P$ we consider is \emph{full}, in the sense that
it contains enough packets to fill all slots between the current time $t$ and the time horizon $\timehorizon$; that is $|P[t,\timehorizon]| = \timehorizon-t+1$.
This assumption can be made without loss of generality because
there are always sufficiently many infinitesimal-weight packets pending, by Assumption~\ref{ass:virtualPackets}.
With this assumption, the concepts of tight slots, segments, etc., to be introduced below will be always well defined.

Let $\planP$ be a plan for the set of packets pending at some time $t$. We now discuss the structure of $\planP$.
(See Figure~\ref{fig:packets-plans} for an illustration.)
Slot $\tau\ge t$ is called \textit{tight} in $\planP$ if $\packetslack(\planP, \tau) = 0$.
In particular, according to our conventions above, slots $t-1$ and $T$ are tight. If the tight slots of
$\planP$ are $t_0 = t-1 < t_1 < t_2 < \cdots < t_c = T$, then for each 
$i = 1,2,...,c$
the time interval $(t_{i-1},t_i] = \braced{t_{i-1}+1, t_{i-1}+2, \dots, t_i}$
is called a \emph{segment} of $\planP$. In other words, the tight slots divide the time range
into segments, each starting right after a tight slot and ending at (and including) the next tight slot.
The significance of a segment $(t_{i-1},t_i]$ is that in any schedule of $\planP$
all packets in $\planP(t_{i-1},t_i]$ must be scheduled in this segment.
Thus, slightly abusing terminology, we occasionally think of each segment as a 
set of packets to be scheduled in this segment, namely the set $\planP(t_{i-1},t_i]$.
Within a segment, packets from $\planP$ can be permuted, although only in some restricted ways.
In particular, the first slot of a segment may contain any packet from that segment (see Observation~\ref{obs: plan properties}).  
Let $\alpha = t_1$ be the first tight slot of plan $\planP$ (we regard tight slot $t_0 = t-1$ as the ``0-th tight slot'').
The first tight slot $\alpha$ will play a special role in our algorithm, together with
the segment $[t, \alpha] = (t_0,t_1]$ of $\planP$, called the \emph{initial segment}.
Whenever we consider several plans at a time $t$ and some ambiguity arises, we will write $\alpha = \alpha_{\planP}$, to
indicate that $\alpha$ is the first tight slot for this specific plan $\planP$.

\begin{figure}[!t]
	\centering
	\includegraphics[width = 2in]{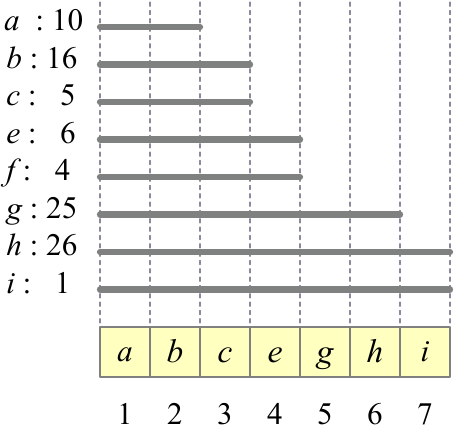} 
	\caption{An example of an instance with optimal plan $\planP = \braced{a,b,c,e,g,h,i}$ and its canonical schedule in time
		slots $1,2,...,7$. Each packet is represented by one row, that shows this packet's identifier
		and weight (separated by colon), followed by a line segment stretching from
		the current time slot $t$ (assumed to be $1$ here) to this packet's deadline. 
		Note that packet $f$ is not in $\planP$, even though it is heavier than $i$.
		The segments of $\planP$ are $(0,3]$, $(3,4]$, $(4,7]$.
		(Alternatively, in terms of packets from $\planP$, these segments are
		$\braced{a,b,c}$, $\braced{e}$, $\braced{g,h,i}$.) The values of $\minwt(\planP,\tau)$ are: 
		$5$ for $\tau\in\braced{1,2,3,4}$, and $1$ for $\tau\in\braced{5,6,7}$. 
	}
	\label{fig:packets-plans}
\end{figure} 

For a plan $\planP$ and a slot $\tau\ge t$, let $\nexttightslot(\planP, \tau)$
be the earliest tight slot $\tau'\ge \tau$ and let $\prevtightslot(\planP, \tau)$ be the latest tight slot $\tau' < \tau$.
Both notations are well-defined: that
$\nexttightslot(\planP, \tau)$ is well-defined follows from slot $T$ being tight, and 
$\prevtightslot(\planP, \tau)$ is well-defined because $t-1$ is a tight slot, according to our convention.
(Both functions also depend on $t$, but $t$ is implied, since $t$ is the time for which $P$ is defined. This
convention will apply to other notations involving plans.)

The notion that will be crucial in the design of our $\phi$-competitive algorithm
is the minimum weight of a packet that can appear in a schedule of the plan in some slot between the current time $t$ and 
a given slot $\tau$.
Naturally, the packets that are candidates for this minimum include all packets in segments ending before $\tau$, 
but also we need to include all packets in the segment of $\tau$, even those with deadlines larger than $\tau$. 
(This is because, as explained earlier, each packet in a segment can be scheduled at the beginning of that segment.)
Formally, for a plan $\planP$ at time $t$ and a slot $\tau\ge t$, define
\begin{equation*} 
	\minwt(\planP,\tau) = \min\braced{\,  w_j \suchthat j \in \planP[t,\nexttightslot(\planP,\tau)] \,}.
\end{equation*}
We will occasionally omit $\planP$ in this notation if it is understood from context.
By definition, if we fix $t$ and $\planP$ and think of $\minwt(\planP,\tau)$ as a function of $\tau$, 
then this function is monotonely non-increasing over the whole range of $\tau = t,t+1,...,T$, and is
constant in each segment (that is, all slots $\tau$ in any given segment have the same value of $\minwt(\planP,\tau)$).


\myparagraph{Optimal plans.}
Given a set of packets $\pendpackets$ pending at a given time $t$,
their plan $\planP$ is called \emph{optimal} if it has the maximum weight among all plans of these packets.
Since the collection of such plans forms a matroid (which is easy to verify using Observation~\ref{obs: feasible = nonnegative slack};
see also e.g.~\cite{kesselman_buffer_overflow_04}), the optimal plan at step $t$
can be computed by the following greedy algorithm:


\begin{algorithm}[H]
\caption{Algorithm~$\ComputePlan(\pendpackets,t)$\label{alg:greedy}}
\begin{algorithmic}[1]
	\Require{$\pendpackets$ is a set of packets pending at time $t$}     
	\State{$X \assign \emptyset$}  
	\For{each packet $j\in \pendpackets$ in order of decreasing weights}           
	      \If{$\packetslack(X\cup\braced{j},\tau)\ge 0$ for all $\tau\ge t$}
	      \State{$X \assign X\cup \braced{j}$}  
		  \EndIf
	\EndFor
	\State{$\planP\assign X$}
	\Comment{$\planP$ is the optimal plan for $\pendpackets$}
\end{algorithmic}
\end{algorithm}


Assumption \ref{ass:diffWeights} about different weights implies that the optimal plan $\planP$ computed above is unique. 
See Figure~\ref{fig:packets-plans} for an example of an optimal plan for a given set of pending packets.


\smallskip

During a computation, the set of pending packets can change. There are two
types of events that can cause this change: an arrival of a new packet and a transmission of a packet, which also involves
incrementing the current time and dropping all newly expired packets. This will also cause the optimal plan to change.
The matroid property implies that at most one packet in the optimal plan changes after each event (not counting
the transmitted packet in a transmission event).  These changes are fairly intuitive and we outline them briefly below; 
for a formal description of these changes, correctness proofs, and
figures with examples, see Appendix~\ref{app: more on properties of plans}.

In the discussion below, we assume that $t$ is the current time step and $\planP$ is the current optimal plan, right before
an arrival or transmission event. 
By $\planQ$ we will denote the optimal plan right after the event. For a slot $\tau\ge t$, whenever we refer
to the change of $\pslack(\tau)$ (that it increases, decreases, or remains the same), without specifying the plan,
we mean the change from $\pslack(\planP,\tau)$ to $\pslack(\planQ,\tau)$. We use the same convention for function $\minwt(\tau)$.


\myparagraph{Packet arrivals.} 
We first consider the event of a new packet $s$ arriving at time $t$. 
As $s$ is added to the set of pending packets, the optimal plan needs to be updated accordingly.
Define $f\in \planP$ to be the packet with $w_f = \minwt(\planP,d_s)$, that is the
lightest packet in $\planP$ with $d_f\le \nexttightslot(\planP, d_s)$.
If $w_s < w_f$, then $s$ is not added to the optimal plan, which thus stays the same.
On the other hand, if $w_s > w_f$ then $s$ is added to the optimal plan and $f$ is forced out, i.e.,
the new optimal plan is $\planQ = \planP\cup \{s\}\setminus \{f\}$.
In the latter case, it is interesting to see how the values of $\pslack(\tau)$ and the segments change:
\begin{itemize}
\item If $d_s \ge d_f$, then $\pslack(\planQ, \tau) =
\pslack(\planP,\tau) + 1$ for $\tau\in [d_f, d_s)$. Therefore, all
tight slots in $[d_f, d_s)$ in $P$ are no longer tight in $Q$ and the segments 
containing $d_f$ and $d_s$ and all segments in-between get merged into
one segment of $\planQ$. 
\item If $d_s < d_f$, then $d_f$ and $d_s$ must be in the same segment of $\planP$
(because $d_f\le \nexttightslot(\planP, d_s)$)
and $\pslack(\planQ, \tau) = \pslack(\planP, \tau) - 1$ for $\tau\in [d_s, d_f)$.
Thus, there may be new tight slots in $[d_s, d_f)$, resulting in new segments.
\end{itemize}
In both cases, the values of $\pslack(\tau)$ remain the same for other slots $\tau\ge t$;
thus other tight slots and segments do not change.
Moreover, $\minwt(\planQ, \tau) \ge \minwt(\planP,\tau)$ holds for any slot $\tau\ge t$;
this property will play a significant role in our algorithm.


\myparagraph{Transmitting a packet.}
Next, we discuss the changes of the optimal plan resulting from a
transmission event.
Throughout the paper, only packets in the optimal plan will be
considered for transmission.  The scenario is this: We consider the
set $\pendpackets$ of all packets pending at time $t$, and the optimal
plan $\planP$ for $\pendpackets$.
We choose some 
packet $p\in \planP$, transmit it at time $t$, and increment the current time to $t+1$.
The new set $\pendpackets'$ of pending packets (now at time $t+1$) is obtained from $\pendpackets$
by removing $p$ and all packets with deadline $t$.
As before, we will use $\planQ$ to denote the new optimal plan after this change, 
namely the optimal plan for $\pendpackets'$ with respect to time $t+1$.

If $p$ is from the initial segment $[t,\alpha]$ of $\planP$, then $\planQ = \planP\setminus \braced{p}$. 
In this case $\pslack(\planQ,\tau) = \pslack(\planP,\tau)-1$ for 
$\tau \in [t+1,d_p)$ and $\pslack(\tau)$ remains unchanged for $t\ge d_p$.
This implies that new tight slots may appear before $d_p$,
i.e., the initial segment may get divided into more segments. 
Furthermore, $\minwt(\tau)$ does not decrease for any $\tau\ge t+1$.

The more interesting case is when $p$ is from a later segment.
Let $\firstlightpacket$ be the lightest packet in $\planP[t,\alpha]$, i.e., in the initial segment,
and let $\rho$ be the heaviest pending packet not in $\planP$ that satisfies $d_\rho > \prevtightslot(\planP, d_p)$
(such a packet $\rho$ exists by Assumption~\ref{ass:virtualPackets}).
Using the matroid property of the feasible sets of packets at time $t+1$ and the structure of the plan,
in Appendix~\ref{app: more on properties of plans} we prove that $\planQ = \planP\setminus \{p, \firstlightpacket\} \cup \{\rho\}$.
Furthermore, in this case we have:    
\begin{itemize}
\item $\pslack(\planQ,\tau) = \pslack(\planP,\tau) - 1$ for $\tau\in [t+1, d_\firstlightpacket)$.
	There may be new tight slots in the interval $[t+1, d_\firstlightpacket)$, resulting in new segments.
\item If $d_\rho \ge d_p$, then $\pslack(\planQ,\tau) = \pslack(\planP,\tau) + 1$
for $\tau\in [d_p, d_\rho)$. Here, all segments that overlap $[d_p, d_\rho]$ are merged into one segment of $\planQ$.
\item If $d_\rho < d_p$, then
$\pslack(\planQ,\tau) = \pslack(\planP,\tau) - 1$ for $\tau\in [d_\rho, d_p)$.  (In this case, $d_p$ and $d_\rho$ must be in
the same segment of $\planP$, because $d_\rho > \prevtightslot(\planP, d_p)$.) As a result of decreasing some
$\pslack(\tau)$ values, new tight slots may appear in $[d_\rho, d_p)$, creating new segments.
\end{itemize}
For slots $\tau\ge t+1$ not covered by the cases above, the value of $\pslack(\tau)$ does not change.
Unlike for packet arrivals, after a packet transmission event some values of
$\minwt(\tau)$ may decrease, either due to $\rho$ being included in $\planQ$ or
as a side-effect of segments being merged.

\smallskip

\emparagraph{Substitute packets.}
The aforementioned updates of the plan motivate the following definition:
Let $\pendpackets$ be the set of all packets pending at time $t$ and
$\planP \subseteq \pendpackets$ be the optimal plan at time $t$. For each $j\in \planP$ we define the \emph{substitute packet of $j$}, 
denoted $\substpacket(\planP,j)$, as follows.
If $j\in \planP[t,\alpha]$, then $\substpacket(\planP,j) = \firstlightpacket$,
where $\firstlightpacket$ is the lightest packet in $\planP[t,\alpha]$.
If $j\notin \planP[t,\alpha]$, then
$\substpacket(\planP,j)$ is the heaviest pending packet $\rho \in \pendpackets \setminus \planP$
that satisfies $d_\rho > \prevtightslot(\planP, d_j)$, which exists by assumption~\ref{ass:virtualPackets}.

By definition, all packets in a segment of $\planP$ have the same substitute packet.
Also, for any $j\in \planP$ it holds that $w_j \ge w(\substpacket(\planP,j))$. This is because for
$j\in \planP[t,\alpha]$ we have $\substpacket(\planP,j) = \firstlightpacket$ and $w_j\ge w_\firstlightpacket$ (the equality will
hold only when $\firstlightpacket = j$), 
while for $j\in \planP(\alpha,\timehorizon]$ we have $d(\substpacket(\planP,j)) > \prevtightslot(\planP,d_j)$;
thus in this case the set $\planP - \braced{j} \cup \braced{ \substpacket(\planP,j) }$ is
feasible, and the optimality of $\planP$ implies that $w_j > w(\substpacket(\planP,j))$.

\section{Online Algorithm}
\label{sec: online algorithm}

This section presents our online algorithm~\PlanMonotonicity{} for {\BDPS}, starting with some intuitions and
additional notation, and followed by two examples where its competitive ratio is exactly $\phi$.

\myparagraph{Intuitions.}  
Similar to other online profit maximization problems, the main challenge in achieving a small competitive ratio for
{\BDPS} is in finding the right balance between the immediate profit and future profits.
Let $\planP$ be the optimal plan at a step $t$. Consider the greedy algorithm for $\BDPS$, which at time $t$ transmits 
the heaviest pending packet $h$, which is always in $\planP$. We focus on the case when $h$ is not in the
initial segment. (The case when $h$ is in the initial segment is different, but similar intuitions apply to it as well.)
In the next step, after transmitting $h$ but before new packets are released,
$h$ will be replaced in the optimal plan by its substitute packet $\rho_h = \substpacket(P,h)$. This packet could be very light,
possibly $w(\rho_h)\approx 0$. Suppose that there is another packet $g$ in $\planP$
with $d_g<d_h$ and $w_g\approx w_h$ whose substitute packet $\rho_g = \substpacket(P,g)$  
is quite heavy, say $w(\rho_g)\approx w_g$. Thus, instead of $h$ we can transmit $g$ at time $t$, achieving roughly
the same immediate profit as from transmitting $h$, but with essentially no decrease in the weight of the optimal plan (which
serves as a rough estimate of future profits). 
This example indicates that a reasonable strategy would be to choose a packet $p$
based both on its weight and the weight of its substitute packet.     
Following this intuition, our algorithm chooses $p$ that maximizes $w_p + \phi\cdot  w(\substpacket(P,p))$,
breaking ties arbitrarily.
The choice of the coefficients in the objective function follows the
intuition from analyzing the $2$-bounded case; see the discussion of
examples in Figure~\ref{fig:two tight examples}.

As it turns out, the above strategy for choosing $p$ 
does not, by itself, guarantee $\phi$-competitiveness. The analysis of special cases and an example
where this simple approach fails lead to the second idea behind our algorithm (we describe the example in Appendix~\ref{app: simpler_algs}). The difficulty
is related to how the values of $\minwt(\tau)$, for a fixed $\tau$, vary while the current time $t$ increases. 
We were able to show $\phi$-competitiveness of the above strategy for certain instances where $\minwt(\tau)$ monotonely
increases as $t$ grows from $0$ to $\tau$. We call this property \emph{slot-monotonicity}. To extend slot-monotonicity to
instances where it does not hold, the idea is then to simply \emph{force} it to hold
by decreasing deadlines and increasing weights of some packets in the new optimal plan. (To avoid unfairly benefiting the algorithm
from these increased weights, we will need to account for them appropriately in the analysis.)
From this point on, the algorithm proceeds using these new weights and deadlines when computing
the optimal plan and choosing a packet for transmission.


\myparagraph{Notation.} To avoid ambiguity, we will index various quantities used by the algorithm with
the superscript $t$ that represents the current time. This includes weights and deadlines of some packets, since these might change over time.
\begin{itemize}
\item We use notations $w^t_j$ and $d^t_j$ for the weight and the deadline of a packet $j$ in step $t$,
before the transmission of this step is implemented. (Our algorithm only changes weights and
deadlines when transmitting a packet, so they are not affected by packet arrivals.)
To avoid double subscripts, we occasionally write $w^t(j)$ and $d^t(j)$ instead of $w^t_j$ and $d^t_j$.
We will sometimes use notation $w^0_j$ for the original weight of packet $j$, when notation $w_j$ may lead to some ambiguity. 
We will also omit $t$ in these notations whenever $t$ is unambiguously implied from context.
\item $\planPaftArrivals$ is the optimal plan at time $t$ after all packets $j$ with $r_j = t$ arrive and
before a packet is transmitted. 
\item We write $\substpacket^t(p)$ to denote $\substpacket(\planPaftArrivals, p)$ and we adopt similar conventions
for $\minwt^t(\tau)$, $\nexttightslot^t(\tau)$, and $\prevtightslot^t(\tau)$.
\item $\alpha = \nexttightslot^t(t)$ is the first tight slot of $\planPaftArrivals$,
so $[t,\alpha]$ is the initial segment.
\item $\firstlightpacket$ is the lightest packet in $\planPaftArrivals[t,\alpha]$.
\item $\rho$, $h_i$, $\gamma$, $\tau_i$, and $k$ will refer to the packets and values chosen in Algorithm~$\PlanMonotonicity(t)$.
\end{itemize}


\begin{algorithm}[ht]
\caption{Algorithm~$\PlanMonotonicity(t)$}
\label{alg:planmonotonicity}
\begin{algorithmic}[1]
\State{transmit packet $p\in \planPaftArrivals$ that maximizes  $w^t_p + \phi\cdot w^t(\substpacket^t(p))$} \label{algLn:transmit} 
\If{$d^t_p > \alpha$} 
\Comment{``leap step''}    
	\State{$\rho \assign \substpacket^t(p)$} \label{algLn:rhoDef}
	\State{$w^{t+1}_{\rho} \assign \minwt^t(d^t_\rho)$} \label{algLn:incrWeightRho} 
	\Comment{increase $w_\rho$}
    \State{$\gamma \assign \nexttightslot^t(d^t_\rho)$ and $\tau_0 \assign \nexttightslot^t(d^t_p)$}  
    \State{$i \assign 0$ and $h_0 \assign p$}   \label{algLn:whileCycleInit}
	\While{$\tau_i < \gamma$}         \label{algLn:whileCycle}      
	      \State{$i \assign i + 1$}  
	      \State{$h_i \assign$ heaviest packet in $\planPaftArrivals(\tau_{i-1}, \gamma]$} \label{algLn:choosing_h_i}
	 	  \State{$\tau_i \assign \nexttightslot^t(d^t_{h_i})$}  
		  \State{$d^{t+1}_{h_i} \assign \tau_{i-1}$ and $w^{t+1}_{h_i} \assign \max(w^t_{h_i}, \minwt^t(\tau_{i-1}))$} \label{algLn:h_i-setWeightAndDdln}
                    \Comment{adjusting packet $h_i$}
	\EndWhile
	\State{$k\assign i$}
	\Comment{final value of $i$}
\EndIf
\end{algorithmic}
\end{algorithm}

The pseudo-code of our algorithm, called~$\PlanMonotonicity(t)$, is given in Algorithm~\ref{alg:planmonotonicity}.
For a pending packet $j$, if $w^{t+1}_j$ (resp.~$d^{t+1}_j$) is not explicitly set in the algorithm,
then its value remains the same by default, that is $w^{t+1}_j \assign w^t_j$ (resp.~$d^{t+1}_j \assign d^t_j$).

\smallskip

Let $p$ be the packet sent by $\PlanMonotonicity$ in step $t$. 
If $p$ is in the initial segment $[t,\alpha]$ of $\planPaftArrivals$, the step is called an \emph{ordinary step}. 
Otherwise (if $d^t_p > \alpha$),
the step is called a \emph{leap step}, and then $\rho = \substpacket^t(p)$ is the heaviest 
pending packet $\rho\not\in \planPaftArrivals$ with $d^t_\rho > \prevtightslot^t(d^t_p)$.
We will further consider two types of leap steps.
If $d_p$ and $d_\rho$ are in the same segment, then this step is called a \emph{simple leap step};
in that case $\gamma = \tau_0$, the while loop is not executed, and $k=0$.
If $d_\rho$ is in a later segment than $d_p$, then this step is called an \emph{iterated leap step};
in that case $\gamma > \tau_0$, the while loop is executed at least once, and $k>0$.

As all packets in the segment of $\planPaftArrivals$ that contains $p$ have the same substitute packet $\substpacket^t(p)$,
$p$ must be the heaviest packet in its segment. Furthermore, $p$ is not too light
compared to the heaviest pending packet $h$; specifically, we have that
$w_p \ge w_h/\phi^2$. Indeed, as mentioned earlier, we have $w_p \ge w(\substpacket^t(p))$.
It follows that $\phi^2 w_p = w_p + \phi w_p 
			\ge w_p + \phi\cdot w(\substpacket^t(p))
			\ge w_h + \phi\cdot w(\substpacket^t(h))
			\ge w_h$, where the second inequality follows by the choice
			of $p$ in line~\ref{algLn:transmit} of Algorithm~$\PlanMonotonicity(t)$.


\myparagraph{Slot-monotonicity.}
Our goal is to maintain \emph{the slot-monotonicity property}, i.e., to ensure that for any fixed slot $\tau$ the value of
$\minwt^t(\tau)$ does not decrease as the current time $t$ progresses from $0$ to $\tau$. 
For this reason, we need to increase the weight of the substitute packet $\rho$ in each leap step (as $w^t_\rho < \minwt^t(d^t_\rho)$),
which is done in line~\ref{algLn:incrWeightRho}.
(To maintain Assumption~\ref{ass:diffWeights}, we add an infinitesimal to the new weight of $\rho$.)
For the same reason, in the iterated leap step, we also need to adjust the deadlines and weights of the
packets $h_i$, which is done in line~\ref{algLn:h_i-setWeightAndDdln}.
The deadlines of $h_i$'s are decreased to make sure that the segments between 
$\beta = \prevtightslot^t(d^t_p)$ and $\gamma$ do not merge (as merging could cause a decrease of
some values of $\minwt^t(\tau)$). These deadline changes can be thought of as a sequence of substitutions, where
$h_1$ replaces $p$ in the segment of $\planP$ ending at $\tau_0$, $h_2$ replaces $h_1$, etc., and
finally, $\rho$ replaces $h_k$ in the segment ending at $\gamma$.  
We sometimes refer to this process as a~``shift'' of the $h_i$'s\footnote{
	While this shift of the $h_i$'s may seem unnecessarily involved,
	we show in Appendix~\ref{app: simpler_algs} that two natural simpler variants of $\PlanMonotonicity$
	are \emph{not} $\phi$-competitive, even though they maintain the slot-monotonicity property.
}. See Figure~\ref{fig:shift} for an illustration.
Then, if the weight of some $h_i$ is too low for its new segment, it is increased to 
match the earlier minimum of that segment, that is $\minwt^t(\tau_{i-1})$.
(Again, to maintain Assumption~\ref{ass:diffWeights}, we add an infinitesimal to the new weight of $h_i$.)

In ordinary steps, the algorithm does not make any weight and deadline changes. Thus, in such steps the optimal
plan changes as described earlier in Section~\ref{sec: plans and their properties} and the slot monotonicity
property is preserved. (See Appendix~\ref{app: more on properties of plans} for formal proofs.)

In leap steps, the algorithm modifies weights and deadlines of some packets, and thus the
discussion from Section~\ref{sec: plans and their properties} does not apply directly.  
The changes in the optimal plan after a leap step are elaborated in detail in 
Lemma~\ref{lem:leapStep} in Appendix~\ref{app: leap step of algorithm planmonotonicity}. 
We briefly summarize them here.
Recall that in a leap step the packet $p$ is in some non-initial segment.
Let $\planP = \planPaftArrivals$ and let $\planQ$ be the optimal plan after $p$ is transmitted, the time is incremented to $t+1$, and
weights and deadlines are changed (according to lines~\ref{algLn:rhoDef}-\ref{algLn:h_i-setWeightAndDdln} in the algorithm).
Let also $\barplanQ$ be the intermediate optimal plan after $p$ is transmitted and the time is incremented, 
but before the algorithm adjusts weights and deadlines. As discussed in Section~\ref{sec: plans and their properties}, 
this plan is $\barplanQ = \planP \setminus \braced{p, \firstlightpacket} \cup \braced{\rho}$, where $\rho = \substpacket^t(p)$.

In a simple leap step, only the weight of $\rho$ is modified.
Increasing the weight of a packet in the optimal plan does not affect its
optimality and thus $\planQ = \barplanQ$. Furthermore, no segments are
merged, i.e., any tight slot of $\planP$ is tight in $\planQ$ as well.

In an iterated leap step, by the choice of $p$, the definition of
$h_i$'s in line~\ref{algLn:choosing_h_i}, and the while loop condition
in line~\ref{algLn:whileCycle}, we have that $w^t_p = w^t_{h_0} >
w^t_{h_1} > w^t_{h_2} > \cdots > w^t_{h_k} > w^t_\rho$ and that $h_k$
is in the segment of $\planP$ ending at $\gamma$, that is
$\prevtightslot^t(d^t_\rho) < d^t_{h_k} \le \gamma$.  As in the simple
leap step, increasing the weight of a packet does not affect the
optimality of a plan.  Moreover, a careful analysis of the changes of
$\pslack()$ values yields that decreasing the deadlines of $h_1, h_2,
\ldots, h_k$ (in line \ref{algLn:h_i-setWeightAndDdln}) does not
change the optimal plan, so we can conclude that $\planQ = \barplanQ$
holds in an iterated leap step as well.
The decrease of the deadlines of $h_i$'s also ensures that
no segments are merged.  (See Appendix~\ref{app: leap step of
algorithm planmonotonicity} for complete proofs.)

The property that no segments are merged, together
with the increase of the weights, allows us to prove that
$\minwt^t(\tau)$ does not decrease for any $\tau$ even in a leap
step. This slot monotonicity property is summarized in the lemma
below, whose proof follows directly from
Lemma~\ref{lem:plan_update-arrival}(c),
Lemma~\ref{lem:plan-update_transmission-1stSegment}(c), and
Lemma~\ref{lem:monotonicity-LeapStep}.


\begin{figure}[!t]
\centering
\includegraphics[width=\textwidth]{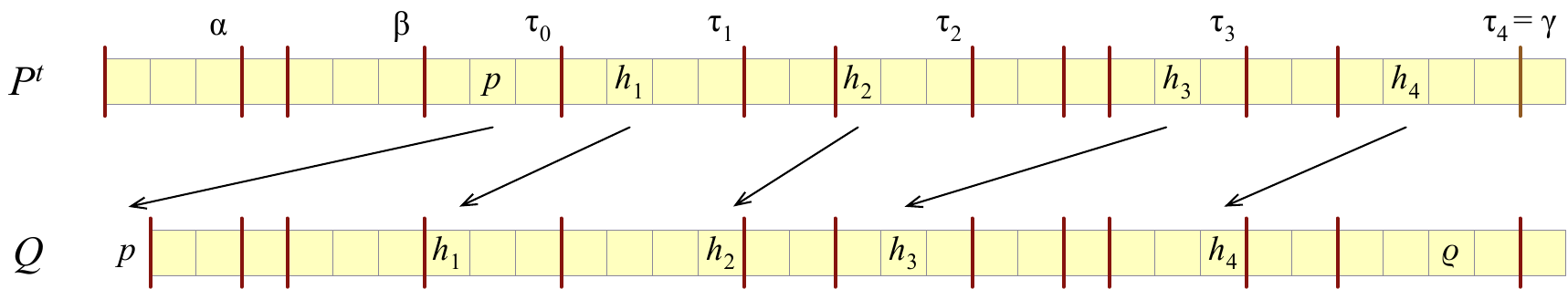}
\caption{An illustration of the shift of packets $h_1, \dots, h_k$ (for $k=4$)
in lines \ref{algLn:whileCycleInit}-\ref{algLn:h_i-setWeightAndDdln} in an iterated leap step.
In this figure, $\beta = \prevtightslot^t(d^t_p)$ and
$\planQ$ is the plan right after $p$ is transmitted and the current time is incremented to $t+1$
(but before packets released at time $t+1$ are taken into account). 
Both plans are represented by their canonical schedules. Vertical dark-red lines separate the segments of the plans.
}
\label{fig:shift}
\end{figure}


\begin{lemma}\label{lem:monotonicityForALG}
Let $\planP$ be the current optimal plan in step $t$ just before an event of either arrival
of a new packet or transmitting a packet (and incrementing the current time),
and let $\planQ$ be the plan after the event. Then $\minwt(\planQ, \tau)\ge \minwt(\planP, \tau)$ for any $\tau > t$,
and also for $\tau=t$ in case of a packet arrival. Hence, in the computation of Algorithm~$\PlanMonotonicity$,
for any fixed $\tau$, the function $\minwt^t(\tau)$ is non-decreasing in $t \in [0,\tau]$.
\end{lemma}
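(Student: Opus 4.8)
The statement has two parts: a per-event monotonicity claim, and the global conclusion that $\minwt^t(\tau)$ is non-decreasing in $t$. The second part follows immediately from the first by induction on events: within any step $t$, the current time does not change during the arrival events, so the plan passes through a finite sequence of intermediate plans, each obtained from the previous by one event; applying the per-event claim to each arrival event and then to the scheduling event (which increments $t$ to $t+1$), and chaining the resulting inequalities $\minwt(P,\tau) \le \minwt(Q,\tau)$, gives monotonicity across the whole execution for every fixed $\tau$, as long as $\tau$ stays in the range $\{t,t+1,\dots\}$ (for the scheduling event we only get $\tau > t$, i.e.\ $\tau \ge t+1$, which is exactly what is needed since after the event the current time is $t+1$). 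So the real content is the per-event claim, and the proof is a case analysis over the three kinds of events.

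The first case is packet arrival. Here I would invoke the analysis of the arrival event already given in Section~\ref{sec: plans} (formally Lemma~\ref{lem:plan_update-arrival}(c) in the appendix, which the excerpt cites): when a new packet $j$ arrives, either the plan is unchanged (if $w_j$ is below the relevant $\minwt$ threshold), in which case there is nothing to prove, or $Q = P \cup \{j\} \setminus \{f\}$ where $f$ is the lightest packet of $P$ with $d_f \le \nexttightslot(P,d_j)$. In the latter case one checks, using the description of how $\pslack$ and the segments change, that $\minwt(Q,\tau) \ge \minwt(P,\tau)$ for all $\tau \ge t$ — intuitively because the only packet removed, $f$, was the minimizer realizing $\minwt(P,d_j)$, and $j$ enters with $w_j > w_f$, while segment merges can only occur to the \emph{right} of $d_f$ in the $d_j \ge d_f$ subcase (and a merge that deletes a tight slot can only lower $\minwt$ at slots in the merged region — one has to verify that $w_j$, the weight of $f$'s replacement, together with $j$'s being heavier than $f$, compensates). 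This is exactly the statement recorded at the end of the "Packet arrival" paragraph, so I would cite it rather than re-derive it.

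The second case is an ordinary scheduling step, where $p \in \initialsegment(P)$ and $Q = P \setminus \{p\}$. This is the easiest case: the excerpt already states that $\minwt(\tau)$ does not decrease for any $\tau \ge t+1$, the point being that removing a packet from the initial segment can only split the first segment into more segments and lowers $\pslack(\tau)$ for $\tau \in [t+1,d_p)$ but never merges segments, and $p$ is not a minimizer of any $\minwt(P,\tau)$ relevant to $\tau \ge t+1$ in a way that would hurt — again this is Lemma~\ref{lem:plan-update_scheduling-1stSegment}(c), which I would simply cite.

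The third case — the leap step — is the main obstacle, and it is precisely why the algorithm performs the weight increases in lines~\ref{algLn:incrWeightRho} and~\ref{algLn:h_i-setWeightAndDdln} and the deadline decreases in line~\ref{algLn:h_i-setWeightAndDdln}. Here $Q = P \setminus \{p,\firstlightpacket\} \cup \{\rho\}$ (with the modified weights/deadlines), and the danger is twofold: first, $\rho$ enters with its \emph{original} small weight $w^t_\rho < \minwt^t(d^t_\rho)$, which by itself would immediately violate monotonicity at $\tau = d^t_\rho$; second, in the unmodified plan $\overline Q$ the segments strictly between $\delta = \prevtightslot^t(d^t_p)$ and $\gamma$ could merge, again lowering $\minwt$ across that region. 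The fix is that (i) line~\ref{algLn:incrWeightRho} raises $w_\rho$ exactly to $\minwt^t(d^t_\rho)$, killing the first problem, and (ii) line~\ref{algLn:h_i-setWeightAndDdln} decreases the deadlines $d_{h_i} \to \tau_{i-1}$ so that every tight slot of $P$ remains tight in $Q$ (no merging), and simultaneously raises $w_{h_i}$ to $\max(w^t_{h_i},\minwt^t(\tau_{i-1}))$ so that after $h_i$ is shifted left into the segment ending at $\tau_{i-1}$, it is not lighter than that segment's old minimum. So the plan would be: invoke Lemma~\ref{lem:leapStep} (appendix) for the structural fact $Q = P \setminus \{p,\firstlightpacket\}\cup\{\rho\}$ and for the fact that the deadline decreases preserve all tight slots of $P$; then, for a fixed $\tau \ge t+1$, let $s = \nexttightslot^t(\tau)$ (a tight slot of $P$, hence of $Q$ by the preceding), and argue that every packet $j \in Q$ with $d^{Q}_j \le s$ has $w^{t+1}_j \ge \minwt^t(\tau)$ — splitting into $j = \rho$ (weight set to $\minwt^t(d^t_\rho) \ge \minwt^t(\tau)$ by monotonicity of $\minwt$ in its slot argument, since $s$ being tight and $d^t_\rho \le \gamma$ places things correctly), $j = h_i$ for some $i$ (weight $\ge \minwt^t(\tau_{i-1}) \ge \minwt^t(\tau)$ once one checks $\tau_{i-1}$ lies in the right range relative to $s$), and $j$ an untouched packet of $P$ (whose weight and the fact $d_j \le s$ is a tight slot of $P$ give $w_j \ge \minwt^t(d_j) \ge \minwt^t(\tau)$). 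Taking the minimum over such $j$ gives $\minwt(Q,\tau) \ge \minwt^t(\tau) = \minwt(P,\tau)$. Since the excerpt explicitly says this follows from Lemma~\ref{lem:monotonicity-LeapStep} in the appendix, the cleanest writeup is to state the three cases, dispatch the first two by citing Lemmas~\ref{lem:plan_update-arrival}(c) and~\ref{lem:plan-update_scheduling-1stSegment}(c), and dispatch the leap-step case by citing Lemma~\ref{lem:monotonicity-LeapStep}, then assemble the global statement by the induction-over-events argument above.
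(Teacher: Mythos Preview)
Your proposal is correct and matches the paper's approach exactly: the paper proves this lemma by citing Lemma~\ref{lem:plan_update-arrival}(c) for arrivals, Lemma~\ref{lem:plan-update_scheduling-1stSegment}(c) for ordinary steps, and Lemma~\ref{lem:monotonicity-LeapStep} for leap steps, then chains these per-event inequalities. Your sketch of the leap-step argument is also accurate and closely tracks the proof of Lemma~\ref{lem:monotonicity-LeapStep} in the appendix.
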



\begin{figure}[!ht]
\centering
\includegraphics[width = 5.5in]{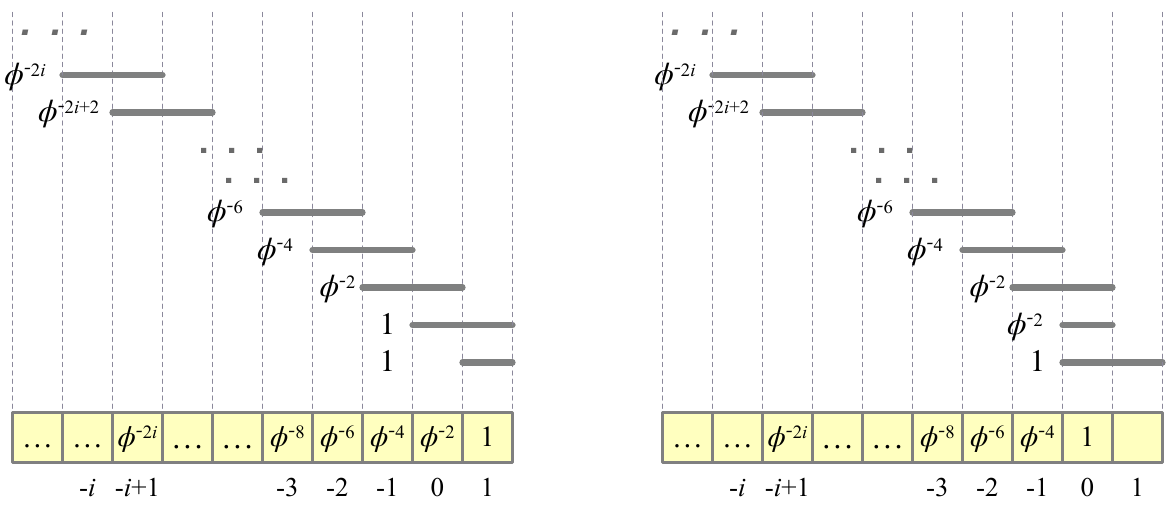}
\caption{Two examples where the competitive ratio of Algorithm~$\PlanMonotonicity$ is $\phi$. Packets
are identified by their weight. We show the schedules of~$\PlanMonotonicity$ at the bottom and time slot indices below the schedules.
}
\label{fig:two tight examples}
\end{figure}


\myparagraph{Tight examples.}
To illustrate the computation of Algorithm~$\PlanMonotonicity$, we now
give two examples of 2-bounded instances where the competitive ratio is exactly $\phi$. 
They also give some intuition for the
choice of coefficients in line~1 of Algorithm~$\PlanMonotonicity$, as the
chosen coefficients exactly balance the competitive ratio for these
two examples. We note that, interestingly, there is another
combination of the coefficients that gives ratio $\phi$ for the $2$-bounded case, namely
$\phi \cdot w_p + w(\substpacket(P,p))$; we do not know if these
coefficients can yield an optimal algorithm for the general
case.

Before we examine these examples, consider the instance where we have two packets pending at some step $t$, one tight
packet $q$ with $d_q = t$ and $w_q = c$, and one packet $r$ with $d_r = t+1$ and $w_r = \phi^2c$. The optimal plan is $\planP^t = \braced{q,r}$
and $\alpha = t$.
In this case we have a tie in line~1 of Algorithm~$\PlanMonotonicity$: For $p=q$ the substitute packet is $\ell = q$ and the
objective value is $w_q + \phi\razy w_q = \phi^2\razy c$, and for $p=r$ the substitute packet has weight $0$ and the 
objective value is $w_r + \phi\razy 0 = \phi^2\razy c$ as well. In the examples below, we will assume that we can
break this tie either way. This can be accomplished by infinitesimally perturbing the weights, without affecting the
competitive ratio.

The two examples are shown in Figure~\ref{fig:two tight examples}.
For simplicity, in these examples packets are identified by their weight and we allow negative-valued time steps. 
In both examples, the instance involves a sequence of packets, where at time $-i$, for $i\ge 0$, a packet of weight
$\phi^{-2i}$ and deadline $-i+1$ is released. For simplicity, we will think of this sequence as starting at $-\infty$.

In the instance on the left, we have one additional tight packet of weight $1$ released at time $1$.
In this instance we break the ties so that the algorithm schedules each packet $\phi^{-2i}$ 
at time $-i+1$ (thus, at each step $-i\le 0$, we transmit the tight pending packet).
As a result, the algorithm will be able to transmit only one packet of weight $1$, and
its profit will be $\sum_{i=0}^\infty \phi^{-2i} = 1/(1-\phi^{-2}) = \phi$.
The optimum solution schedules all packets, so its profit is $1+ \sum_{i=0}^\infty \phi^{-2i} = \phi^2$.

In the instance on the right, we have one additional tight packet of weight $\phi^{-2}$ released at time $0$.
Here, we also break the ties in favor of the tight pending packet until time $-1$, but at time $0$ the algorithm transmits the
packet of weight $1$. The algorithm's profit will be $\sum_{i=0}^\infty \phi^{-2i} - \phi^{-2} = \phi - \phi^{-2} = 1 + \phi^{-1} - \phi^{-2} = 2/\phi$.
The optimum solution schedules all packets, so its profit is $\sum_{i=0}^\infty \phi^{-2i} + \phi^{-2} = 2$.


\myparagraph{Comparison to previous algorithms.} 
Our algorithm shares some broad features with known algorithms in the literature.
In fact, any online algorithm (with competitive ratio below $2$) needs to capture the tradeoff between weight and urgency when
transmitting a packet, so some similarities between these algorithms are inevitable. As we found
in our earlier attempts, however, the exact mechanism of formalizing this tradeoff is critical, and minor tweaks 
can dramatically affect the competitive ratio.

Some prior algorithms used the notion of the \emph{optimal provisional schedule},
which coincides with our concept of canonically ordered optimal plan; recall that in our paper
a plan is defined as a feasible \emph{set} of pending packets rather than their particular schedule. 
For example, the $\phi$-competitive algorithm \textsc{MG} for instances with agreeable deadlines
by Li \etal~\cite{li_optimal_agreeable_05} (see also~\cite{Jez_optimal_agreeable_12}) transmits
packets from the optimal plan only, either the heaviest packet $h$ or the earliest-deadline
packet $e$. (Strictly speaking, this is true for the simplified variant of \textsc{MG}~\cite{Jez_optimal_agreeable_12}, 
whereas the original version of \textsc{MG} transmitted either $e$ or another sufficiently heavy packet from the optimal plan~\cite{li_optimal_agreeable_05}.)
The same authors~\cite{Li_better_online_07} later designed a modified algorithm called \textsc{DP} 
that achieves competitive ratio $3/\phi\approx 1.854$ for arbitrary instances.

Our approach is more similar to that of Englert and Westermann~\cite{englert_suppressed_packets_12},
who designed a $1.893$-competitive memoryless algorithm and an improved 
$1.828$-competitive variant with memory.
Both their algorithms are based on the notion of \emph{suppressed packet} \textsf{supp}($\planP,p$),
for a packet $p$ in the plan $\planP$, which, in our terminology, 
is the same as the substitute packet $\substpacket(\planP,p)$ \emph{if} $p$ is not in the initial segment.
However, the two concepts differ for packets $p$ in the initial segment.
The memoryless algorithm in~\cite{englert_suppressed_packets_12} identifies a packet $m$ of maximum
``benefit'', which is measured by an appropriate linear combination of $w_m$ and $w(\textsf{supp}(\planP,m))$,
and sends either $m$ or $e$ (the earliest-deadline packet in the optimal plan), based on the relation between
$w_e$ and the benefit of $m$.  The algorithm with memory in~\cite{englert_suppressed_packets_12} extends
this approach by comparing $m$'s benefit to $e$'s ``boosted weight''
$\max(w_e, \delta(t))$, where $t$ is the current step and $\delta(\tau)$
is the maximum value of $\minwt(\planP^{t'},\tau)$ over $t'<t$.  
We remark that considering this value of $\delta(\tau)$
takes into account the previous maximal value of $\minwt(\tau)$,
however, it does not prevent actual decreases of $\minwt(\tau)$.

Our algorithm involves several new ingredients that are critical to establishing
the competitive ratio of $\phi$. First, our analysis relies on full characterization of the
evolution of the optimal plan over time, in response to packet arrivals and transmission events.
This characterization is sketched 
in Section~\ref{sec: plans and their properties} and formally treated in Appendix~\ref{app: more on properties of plans}.
Second, we introduce a new objective function  $w^t_p + \phi\cdot w^t(\substpacket^t(p))$
for selecting a packet $p$ for transmission. This function is based on                                       
a definition of substitute packets, $\substpacket(\planP,p)$, that accurately reflects the
changes in the optimal plan following transmission events, including the case when $p$ is in the initial segment.   
Third, we introduce the concept of slot monotonicity, and devise a way for the algorithm to
maintain it over time, using adjustments of weights and deadlines of
the packets in the optimal plan.
This property is very helpful in keeping track of the optimal profit. Last but not least, 
we develop several tools used in the amortized analysis of Algorithm~$\PlanMonotonicity(t)$,
including the concepts of a backup plan, an adversary's stash, and
a novel potential function that captures the relative ``advantage'' of the algorithm over the adversary 
in terms of procuring future profits. 

\section{Competitive Analysis}
\label{sec: competitive analysis}

Let \ALG{} be the schedule computed by $\PlanMonotonicity$ for the instance of {\BDPS} under consideration,
 and let \OPT{} be a  fixed optimal schedule for this instance. (Actually,
\OPT{} can be any schedule for this instance.) For any time step $t$, by $\ALG[t]$ and $\OPT[t]$ we
will denote packets scheduled by \ALG{} and \OPT, respectively, in slot $t$.
(By assumption~\ref{ass:virtualPackets}, we can assume that $\ALG[t]$ and $\OPT[t]$ are defined for all steps $t$.)
Our overall goal is to show that $\phi\cdot w^0(\ALG) \ge w^0(\OPT)$, where, as defined earlier,
we use notation $w^0_j$ for the original weight of packet $j$.


\myparagraph{Notational convention.}
The optimal plan changes in the course of the algorithm's run, as a result of new packets arriving or of packets being transmitted.
In some contexts, it is convenient to think of the current optimal plan as
a dynamic set, which we will denote by $\varplanP$. When more formal treatment is needed, we will use
letters $\planP, \planQ, \dots$, often with appropriate subscripts or superscripts,
to denote the ``snapshot'', i.e., the current contents, of the optimal plan before or after a particular change.
Namely, as already defined in Section~\ref{sec: online algorithm}, $\planPaftArrivals$
is the snapshot of the optimal plan after all packets arrive and before a packet is transmitted in step $t$.
Furthermore, by $\planPbefStep$ we denote the optimal plan before any packet arrives
in step $t$. Thus, as a result of transmitting a packet in step $t$,
the optimal plan changes from $\planPaftArrivals$ to $\planPbefNextStep$. (See Figure~\ref{fig:notation for plans}.)
We define more snapshots in the analysis of a transmission event, which will be split into more substeps.
For clarity, the superscript of a particular snapshot contains the time index with respect to which
the optimal plan is computed (unless it is clear from the context).


\begin{figure}[!ht]
\centering
\includegraphics[width=3in]{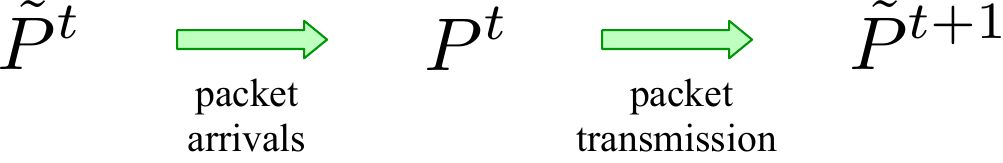}
\caption{Notation for snapshots of the optimal plan $\varplanP$.
}
\label{fig:notation for plans}
\end{figure}


The same conventions apply to other subsets of pending packets used in the analysis: $\varADV$ and $\varbackupplan$, which
we will define shortly. In general,
we think of such sets of packets as dynamic sets that change over time as
new packets arrive, the algorithm transmits packets, and as we adjust the contents
of the sets in the analysis.  As such, the dynamic sets are denoted with calligraphic letters.
Formal analysis requires that we refer to appropriate snapshots of these sets before and after the change under consideration;
such snapshots are denoted with italic letters, typically with subscripts or superscripts.


\myparagraph{Amortized analysis.}
We bound the competitive ratio via amortized analysis, using a combination of three accounting mechanisms:
\begin{itemize}
\item 
We use a potential function, which quantifies the advantage of the algorithm over the adversary in future steps. 
This potential function is defined in Section~\ref{subsec: backup plan and the potential function}.
\item 
In leap steps, when the algorithm increases weights of some packets
(the substitute packet and possibly some $h_i$'s), we charge it a ``penalty'', by subtracting
the total weight increase from its credit for the step.
\item 
The optimal profit of $ w^0(\OPT)$ is amortized over all steps using two
accounting tools: an ``adversary stash'' $\varADV$ and values of the $\minwt()$ function.
These techniques are introduced in Section~\ref{subsec: adversary stash}.
\end{itemize}
See Section~\ref{subsec: overview of the analysis} for an overview of our
analysis, showing how these three techniques can be combined into
a formal proof of $\phi$-competitiveness of Algorithm~$\PlanMonotonicity$.
Then we give the analysis of packet arrival events in Section~\ref{subsec: arrival of a packet}
and of packet transmission events in Section~\ref{subsec: transmitting a packet}.


\subsection{Adversary Stash}
\label{subsec: adversary stash} 

In our analysis, we need a mechanism for keeping track of the
adversary's future profit associated with the packets that have already been released. 
A natural candidate for such mechanism would be the set of packets that are ``pending'' for $\OPT{}$, namely the packets 
scheduled in $\OPT{}$ that have already been released but not yet transmitted by $\OPT{}$. This simple definition, however,
does not quite work for our purpose, in part because Algorithm~$\PlanMonotonicity$ 
modifies the weights and deadlines of packets that are pending for the adversary.

Instead of $\OPT{}$, we will use two other concepts which can be defined in terms of packets that are
pending for the algorithm. 
The first one, called the \emph{adversary stash} and denoted $\varADV$, is used to keep
track of the adversary's packets that are in the current optimal plan $\varplanP$; that is,
$\varADV$ is a subset of $\varplanP$ (see invariant~{\InvariantA} below). $\varADV$ is a dynamic set of packets 
scheduled in some slots in $\braced{t,t+1,\dots }$, where $t$ is the current time step. For $\tau\ge t$,
by $\varADV[\tau]$ we denote the packet scheduled in $\varADV$ in slot $\tau$; if there is no packet, $\varADV[\tau]$ is undefined.
(Abusing notation, we will use $\varADV$ to denote both the set of packets in the adversary stash and their schedule.) 

The adversary stash evolves over time, partly in response to new events and partly as a result of modifications
performed in the course of our analysis. We very briefly outline this process now; a more comprehensive summary 
is presented later in this section, with all details given during the analysis of packet arrival events 
in Section~\ref{subsec: arrival of a packet} and packet transmission events in Section~\ref{subsec: transmitting a packet}.
Initially, $\varADV$ is empty, and whenever a packet $j$ such that $j\in \OPT$ arrives, 
we add $j$ to $\varADV$ at the same slot that $j$ occupies in $\OPT{}$ provided that $j$ is added to $\varplanP$.
Sometimes in our analysis of packet arrival and transmission events, as  $\varplanP$ changes,
we may have to modify $\varADV$ by removing or replacing packets, in which case the adversary is appropriately compensated.
(As a result of such changes, for some $\tau\ge t$, $\varADV[\tau]$ may differ from $\OPT[\tau]$, i.e.,
the former may be empty or contain a different packet than the latter, even if the latter contains a packet released at or before time $t$.)
Finally, when we analyze the transmission event at a time step $t$ and $\varADV[t]$ is defined (i.e., non-empty), we will 
remove packet $\varADV[t]$ from $\varADV$ and credit the adversary with the (current) weight of $\varADV[t]$.

The second accounting mechanism deals with the packets in $\OPT{}$ that are not in $\varplanP$. It turns out that
this can be done without explicitly keeping track of such packets. Consider 
a \emph{pending} packet $q$ that is scheduled in $\OPT$ in a future step $\tau > t$. 
If  $q$ is not in $\varplanP$ then its weight is upper bounded by $\minwt(\varplanP, \tau)$.
Since $\minwt(\varplanP, \tau)$ for a fixed $\tau$ does not decrease (by Lemma~\ref{lem:monotonicityForALG}), 
its weight will be bounded by $\minwt(\planP^\tau, \tau)$ when the time reaches $\tau$.
When it happens, we will allow the adversary to obtain a profit of $\minwt(\planP^\tau, \tau) \ge w^\tau_q$. While this may
seem generous, it does not affect the competitive ratio.
The intuition is that  in each step $\tau$ the adversary can always issue a tight packet of weight just below
$\minwt(\planP^\tau, \tau)$, and this does not change the behavior
of the algorithm as such a packet is not in $\planP^\tau$ and cannot be used as a substitute packet
due to having deadline at $\tau$.  (Section~\ref{subsubsec: adversary step} gives a complete analysis.)

Overall, at any time $t$, the adversary can receive amortized profit, also called her \emph{credit}, in three ways. 
Her credit for transmitting a packet is either $w( \varADV[t] )$,
if $\varADV[t]$ is defined, or $\minwt(\planP^t,t)$ otherwise. In addition, the adversary
receives an appropriate compensation when we decrease the total weight of packets in $\varADV$.
We describe the adversary profit amortization more precisely at the end of this section.


\myparagraph{Adversary stash invariant.}
The following invariant, 
maintained throughout the analysis,
captures properties of the adversary stash $\varADV$ that will be
crucial for our argument (see also Figure~\ref{fig: sets in analysis}): 

\begin{description}
\item{\textbf{\InvariantA}}
For any time $t$ and any snapshot $\ADV$ of the adversary stash $\varADV$ at time $t$,
$\ADV$ contains only packets from the current optimal plan $P$, i.e., $A\subseteq P$, and each 
packet $g\in A$ is scheduled in $\ADV$ in a slot in interval $[t, d^t_g]$.   
\end{description}

For $\varADV$ we adopt the same notation of snapshots as for $\varplanP$, namely,
$\ADVbefStep$ is the adversary stash before any packet arrives in step $t$
and $\ADVaftArrivals$ is the adversary stash after all packet arrivals and before a packet is transmitted in step $t$.
We ensure that invariant~{\InvariantA} is preserved after each packet arrival
and after each transmission event, possibly by changing the adversary stash.


\begin{figure}[!ht]
\begin{center}
	\includegraphics[width=4.1in]{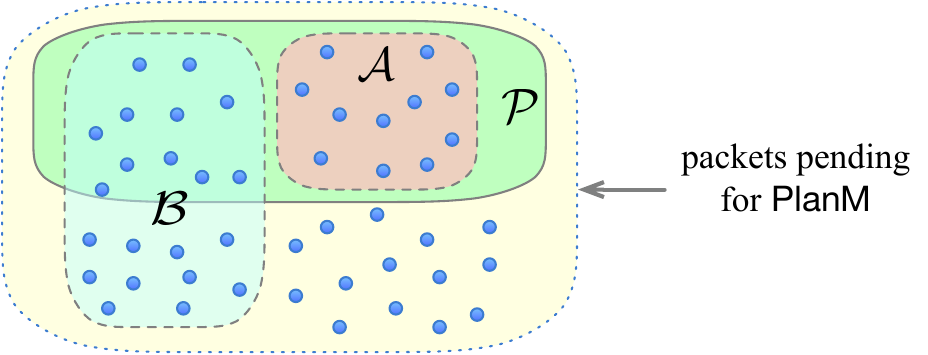}
	\caption{The sets of packets in the competitive analysis.
	The backup plan $\varbackupplan$ is introduced in Section~\ref{subsec: backup plan and the potential function}.
	It has the property that  $\varplanP \cap \varbackupplan = \varplanP\setminus \varADV$.
	}
	\label{fig: sets in analysis}
\end{center}
\end{figure}


\myparagraph{Modifications of the adversary stash.}
We now overview the principles guiding the maintenance of the adversary stash $\varADV$. These
principles are important in understanding the details of the analysis given in 
Sections~\ref{subsec: overview of the analysis}-\ref{subsec: transmitting a packet}.
Let us fix some slot $\tau$ of $\varADV$. We describe all possible changes that the packet in slot $\tau$ of $\varADV$ can
undergo in the course of the analysis and explain
how we compensate the adversary for any such change, so that the total adversary credit from
slot $\tau$ is at least $w^0(\OPT[\tau])$, as needed.


\emparagraph{Adding packet $\OPT[\tau]$ to $\varADV$.}
When packet $j = \OPT[\tau]$ arrives at a time $t$, if $j$ is added to the optimal plan $\varplanP$ then
	we also add it to $\varADV$ in slot $\tau$. Otherwise, the slot $\tau$ remains empty in $\varADV$ all the time. 
	In either case, the adversary does not get any credit for this packet at this time.
	(The credit of at least $w^0_j$ from slot $\tau$ will be awarded to the adversary later in the course of computation, possibly in smaller chunks.
	The strategy for amortizing this credit will be described shortly.)


\emparagraph{Replacing packet $\varADV[\tau]$.}
Replacement of packets in $\varADV$ may occur in an iterated leap step when, under 
some circumstances, we  replace a packet $h_i\in \varADV$ by $h_{i+1}$,
which is always lighter than $h_i$ (see Lemma~\ref{lem:leapStep}(b)). To compensate her
for this replacement, the adversary obtains credit equal to $w^t_{h_i} - w^{t+1}_{h_{i+1}}$,
which is always non-negative.

To preserve invariant~{\InvariantA}, when we make packet replacements, we need to make sure
that no packet in $\varADV$ is scheduled after its current deadline, which requires some care for
those packets whose deadlines were decreased. 
We also need to avoid adding into $\varADV$ a packet that is already in $\varADV$.
Thus, before a packet $h_i\in \varADV$ gets replaced by $h_{i+1}$,
we remove $h_{i+1}$ from its slot in $\varADV$ if it already belongs to this set.


\emparagraph{Removing packet $\varADV[\tau]$ before step $\tau$.}
As mentioned above, in some cases we remove a packet $q = \varADV[\tau]$ from $\varADV$
even though the current time has not reached $\tau$ yet.  This is done in particular
if $q$ is no longer in $\varplanP$, due to being ousted or transmitted by the algorithm.
However, in order to preserve certain invariants, we may also remove $q$ from $\varADV$ even if it remains in $\varplanP$. 
If we remove packet $q = \varADV[\tau]$ from $\varADV$ before step $\tau$, 
to compensate the adversary for this change, we give her the credit 
whose value is at least the difference (if positive) between the current weight of $q$ and the current value of
$\minwt(\varplanP, \tau)$; the precise formula for this credit will be given below.
Importantly, once we remove $\varADV[\tau]$ from $\varADV$, $\varADV[\tau]$ will remain empty forever.


\emparagraph{Removing packet $\varADV[\tau]$ at time $\tau$.}
When processing the transmission event at time $\tau$, if there is a packet $j = \ADV^\tau[\tau]$
in slot $\tau$, we remove $j$ from $\varADV$ and the adversary gains
its \emph{current} weight, i.e., it obtains credit of $w^\tau_j$.
It follows that $\varADV$ is empty after the last (transmission) event.


\emparagraph{Weight and deadline changes.}
Increasing the weight of the substitute packet $\rho$ in a leap step (line~4 of Algorithm~\PlanMonotonicity)
does not affect $\varADV$, as $\rho$ is not in $\varplanP$ and thus not in $\varADV$, by invariant~{\InvariantA}.
However, the algorithm also changes the weights and/or deadlines of some packets $h_i$ in an iterated
leap step (line~11 of Algorithm~\PlanMonotonicity), and these packets are in $\varplanP$, so $\varADV$ might be affected too.
To address this, in the analysis of an iterated leap step, we remove or replace \emph{all} packets $h_i$
that are in $\varADV$ and the adversary gets credit based on their \emph{old} weights.
Some of the packets $h_i$ may be reinserted later in the analysis of the same step, but 
they are always reinserted with their new weights and deadlines.
It follows that the weights of all packets in $\varADV$ are always current.


\myparagraph{Amortization of the adversary profit.}
We now describe how the adversary profit of $w^0(\OPT)$ is amortized. As already mentioned earlier, 
when a packet $j = \OPT[\tau]$ arrives at a time $t'$, the adversary credit for $j$ may be awarded 
in smaller payments in steps $t',t'+1,...,\tau$. All the payments except the last one are called \emph{adjustment credits}
and the last one, in step $\tau$, is called the \emph{transmission credit}. The formulas for these credits are given below.

We note that the adversary credit when processing the arrival of a packet $j$ is always $0$.
This follows from two properties of the changes in $\varADV$: First, as explained above, the adversary does not receive credit for the
arrival of $j$, whether it is added to $\varADV$ or not. Second, the only other change of $\varADV$ associated with
the arrival of $j$ and the resulting modification of $\varplanP$
 may be a removal from $\varADV$ of some packet $\varADV[\tau]$ of weight at most $\minwt(\varplanP, \tau)$,
for which the adversary does not get any credit (see below and Section~\ref{subsec: arrival of a packet}).

\smallskip

In each step $t$, we define \emph{the adversary credit for step $t$}, denoted $\advcredit^t$, as the sum of the following two values:
\begin{itemize}
\item {\emph{Transmission credit:}} This is the credit given to the adversary for her packet in slot $t$.
	More precisely, it is defined by
	\begin{equation*}
	\advcredit^t_t \;=\; \left\{\begin{array}{lcl}
					w^t(\ADVaftArrivals[t])
							& \quad& \text{if $\ADVaftArrivals[t]$ is defined (not empty)}
							\\
					\minwt(\planPaftArrivals, t) 	& \quad& \text{otherwise}
						\end{array}
						\right.
	\end{equation*}
	Recall that $w^t(\ADVaftArrivals[t])$ represents the weight of $\ADVaftArrivals[t]$ at time $t$, if $\ADVaftArrivals[t]$ is defined.
\item {\emph{Adjustment credit:}} 
This is the credit that the adversary receives as compensation for modifications in $\varADV$.
	Namely, it is the sum of the following credits, one for each adjustment of $\varADV$ performed in the analysis of step $t$:
	\begin{itemize}
	\item{\emph{Adjustment credit for weight decreases:}} The difference between previous and new weight of $\varADV[\tau]$, for each slot $\tau\ge t$
		where packet $\varADV[\tau]$ is replaced by a lighter packet.
	\item{\emph{Adjustment credit for packet removal:}} The value of $w^t_q - \minwt(\planPaftArrivals, d^t_q)$ for each packet $q = \varADV[\tau]$ removed from $\varADV$.
	\end{itemize}
		Before proceeding, we make two simple but important observations. First,
		all adjustment credits are non-negative. This is obvious for adjustment credits for weight decreases.
		As for the second type of adjustments, consider a packet
		$q = \varADV[\tau]$ getting removed from $\varADV$ at time $t$. By invariant~{\InvariantA}, we have that $q\in\planPaftArrivals$.
		Then the definition of function $\minwt()$ implies that  $w^t_q \ge \minwt(\planPaftArrivals, d^t_q)$; in other words
		the adjustment credits for packet removals are also non-negative.
		
		Second, observe that the formula above for the adjustment credit for packet removals is an upper bound on
		the actual loss of adversary's future profit due to the removal of $q=\varADV[\tau]$ from $\varADV$, which
		equals $\max\bracedintext{w^t_q - \minwt(\planP^\tau, \tau) ,0}$ (as the transmission credit in step $\tau$ is $\minwt(\planP^\tau, \tau)$).
		Indeed, we have $\minwt(\planPaftArrivals, d^t_q)\le w^t_q$ and, since $\tau \le d^t_q$, also $\minwt(\planPaftArrivals, d^t_q) \le \minwt(\planPaftArrivals, \tau)\le \minwt(\planP^\tau, \tau)$ by the slot-monotonicity property (Lemma~\ref{lem:monotonicityForALG}).
		Therefore $w^t_q - \minwt(\planPaftArrivals, d^t_q) \ge \max\bracedintext{w^t_q -  \minwt(\planP^\tau, \tau) ,0}$.
		We use the value of $w^t_q - \minwt(\planPaftArrivals, d^t_q)$ here because it is sufficient for the analysis and
		easier to work with.

	At each step $t$, these credit adjustments (if any) may be performed in multiple substeps, with each substep involving
	processing some time interval $(\zeta,\eta]$. We will use notation $\advcredit^t_{(\zeta,\eta]}$ for the
	adjustment credit resulting from modifications in the substep corresponding to time interval $(\zeta,\eta]$.
\end{itemize}


\begin{claim}\label{clm: adversary credits}
The total adversary credit for all steps covers $w^0(\OPT)$, that is,
\begin{equation}\label{eqn: adversary credits}
w^0(\OPT) \;\le\; \sum_{t} \advcredit^t\,,
\end{equation}
where the sum is over all steps $t$. 
\end{claim}

\begin{proof}
To prove~(\ref{eqn: adversary credits}), consider again a packet $j = \OPT[\tau]$ that arrived at some step $t'$.
It is sufficient to show that the sum of all adversary credits associated with slot $\tau$ for steps $t = t',t'+1,...,\tau$ is at least $w^0_j = w_j$. 

At time $t'$, if $j$ is added to $\varplanP$, then $j$ is also added to $\varADV$; otherwise $\varADV[\tau]$ remains undefined.
To streamline the proof, we think of the second case as adding $j$ to $\varADV$ in slot $\tau$ and then removing it immediately in the same step.
 
At any step $t$ when $\varADV[\tau]$ is defined, if the weight of $\varADV[\tau]$ is decreased, then, according to the
definitions above, this decrease of weight contributes to the adjustment credit at step $t$.
Consider the last step $t''$ when $\varADV[\tau]$ is defined, and let $q = \varADV[\tau]$ be the packet in slot $\tau$ of $\varADV$ at that time.
(It could be that $q\neq j$, as the packet in $\varADV[\tau]$ may change over time.)
Then the total adjustment credit in steps $t', t'+1, \dots, t''-1$ for changing $\varADV[\tau]$ equals $w^0_j - w^{t''}_q$.

There are two cases. If $t'' = \tau$, then the adversary will receive transmission credit of $\advcredit^\tau_\tau = w^{\tau}_q$ 
in step $\tau$, and the right-hand side of~(\ref{eqn: adversary credits}) associated with slot $\tau$ (the sum of adjustment credits and the transmission credit for slot $\tau$)
is $w^0_j - w^{\tau}_q +  w^{\tau}_q = w^0_j$. If $t'' < \tau$, then $q$ is removed from $\varADV$ in the analysis of step $t''$.
The adjustment credits for steps $t = t', t'+1, \dots, t''-1$ add up to $w^0_j - w^{t''}_q$, and
the adjustment credit in step $t''$ for removing $q$ from $\varADV$ is $ w^{t''}_q - \minwt(\planP'', d^{t''}_q)$, where $\planP''$ is the snapshot of $\varplanP$
when this removal of $q$ is taking place. 
Then later at step $\tau$ the adversary will receive transmission credit $\minwt(\planP^\tau, \tau)$, by the definition above.
Thus the total of adjustment and transmission credits associated with slot $\tau$ is at least
\begin{align}
(\,w^0_j - w^{t''}_q \,)  + (\, w^{t''}_q - & \minwt(\planP'', d^{t''}_q)\,)  + \minwt(\planP^\tau, \tau)
		\nonumber
		\\
	\;&\ge\;
			w^0_j - w^{t''}_q + (\,w^{t''}_q - \minwt(\planP'', \tau)\,)  + \minwt(\planP'', \tau) 
			\label{eqn: amortization inequality 1}
	\\
	\;&=\; w^0_j,
			\nonumber
\end{align}
where inequality~(\ref{eqn: amortization inequality 1}) holds because
$\minwt(\planP'', d^{t''}_q) \le \minwt(\planP'', \tau)$, by the definition of function $\minwt()$
(see also the comments on adjustments credits for packet removals before Claim~\ref{clm: adversary credits})
and because $\minwt(\planP^\tau, \tau) \ge \minwt(\planP'', \tau)$, which follows from
the slot monotonicity property, as summarized in Lemma~\ref{lem:monotonicityForALG}.
This concludes the proof of~\eqref{eqn: adversary credits}.
\end{proof}

\subsection{Backup Plan and the Potential Function}
\label{subsec: backup plan and the potential function}

In our analysis, we will maintain a set $\varbackupplan$ of pending packets called a  \emph{backup plan}.
$\varbackupplan$ contains two types of packets: all packets in $\varplanP\setminus \varADV$, and some pending packets not in $\varplanP$. 
(The relation between $\varplanP$, $\varbackupplan$ and $\varADV$ is illustrated in Figure~\ref{fig: sets in analysis}.)
The packets in $\varbackupplan\setminus\varplanP$ will typically be packets that were earlier ousted from $\varplanP$,
either as a result of arrivals of other packets or in a leap step.   These packets can also  be thought of as candidates for the
substitute packet $\substpacket^t(p)$ when the algorithm chooses a packet $p$ for transmission. 

The following invariant summarizes the essential properties of $\varbackupplan$, and it will be
maintained throughout the analysis:
\begin{description}
\item{{\textbf{\InvariantB}}}
For any snapshot $\backupplan$ of $\varbackupplan$ at any time $t$, 
(i) $\backupplan$ is a plan, i.e., a feasible set of packets pending in step $t$, and
(ii) $\backupplan \cap\planP = \planP\setminus\ADV$, where $\planP$ and $\ADV$
are the current snapshots of $\varplanP$ and $\varADV$, respectively. 
\end{description}
By Observation~\ref{obs: feasible = nonnegative slack},
invariant~{\InvariantB}(i) is equivalent to the condition that for any slot $\tau\ge t$, we have $\pslack(\backupplan, \tau) \ge 0$.
Invariant~{\InvariantB}(ii) and invariant~{\InvariantA} together imply that $\backupplan\cap \ADV = \emptyset$ and
$\varplanP = (\varplanP\cap\varbackupplan) \cup \varADV$, i.e., that
any packet in $\planP$ is either in $\ADV$ or in $\backupplan$, but not in both.
Similarly as for invariant~{\InvariantA}, preserving invariant~{\InvariantB}
in the course of the analysis will require making suitable changes in the
adversary stash $\varADV$ and the backup plan $\varbackupplan$.

The following observation is quite straightforward, but we state it explicitly here, as it is useful
later in some proofs.

\begin{observation}\label{obs: A not empty vs B-P not empty}
Consider the current optimal plan $\planP$, the backup plan $\backupplan$, and
the adversary stash $\ADV$ at time $t$. Assume that invariants~{\InvariantA} and~{\InvariantB} hold.
Let $\eta$ be a tight slot of $\planP$.
Then:
\begin{enumerate}[label=(\alph*),itemsep=2pt] 
\item $\ADV(\eta, \timehorizon]\neq\emptyset$ implies that $\backupplan(\eta, \timehorizon]\setminus \planP\neq\emptyset$, and
\item $\backupplan[t, \eta]\setminus \planP\neq\emptyset$ implies that $\ADV[t, \eta]\neq\emptyset$.
\end{enumerate}
\end{observation}

\begin{proof}
(a) From the property that all plans are full and that $\eta$ is a tight slot of $\planP$,
we have $|\planP(\eta, \timehorizon]| = \timehorizon-\eta \le |\backupplan(\eta, \timehorizon]|$,
where $\timehorizon$ is the time horizon (see Section~\ref{sec: plans and their properties}). Then, using
invariants~{\InvariantA} and~{\InvariantB}, we obtain that
$|\ADV(\eta, \timehorizon]| = |\planP(\eta, \timehorizon]\setminus\backupplan| \le |\backupplan(\eta, \timehorizon]\setminus \planP|$.

\smallskip

\noindent (b) Since $\eta$ is a tight slot of $\planP$ and $\backupplan$ is feasible,
we have $|\planP[t, \eta]| = \eta - t + 1 \ge |\backupplan[t, \eta]|$.
Therefore, if $\backupplan[t, \eta]\setminus \planP\neq\emptyset$, then 
$\planP[t, \eta]\setminus\backupplan\neq\emptyset$, which implies that $\ADV[t, \eta]\neq\emptyset$, by invariant~{\InvariantB}.
\end{proof}

The definitions of tight slots and segments apply to $\varbackupplan$ (as well as to any plan), and they will be helpful in our proofs.
We remark that $\varbackupplan$ may have a different segment structure than the current optimal plan $\varplanP$.

Before defining our potential function, we introduce a few lemmas that will be useful in showing that invariant~{\InvariantB} 
is  preserved after each step.


\begin{lemma}\label{lem:p relative changes of pslacks of P and B}
Consider the current optimal plan $\planP$, the backup plan $\backupplan$, and
the adversary stash $\ADV$ at time $t$. Assume that invariants~{\InvariantA} and~{\InvariantB} hold, and let
$\eta,\eta'$ be two time slots such that $t \le \eta \le \eta'$. Then
\begin{equation}\label{eqn:p relative changes of pslacks of P and B}
 \pslack(\planP,\eta')  - \pslack(\planP,\eta) + |\ADV(\eta,\eta']|
 	\;=\;
\pslack(\backupplan,\eta')  - \pslack(\backupplan,\eta) + |\backupplan(\eta,\eta']\setminus\planP|\,.
\end{equation}
\end{lemma}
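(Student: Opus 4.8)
The plan is to reduce both sides of \eqref{eqn:p relative changes of pslacks of P and B} to one and the same elementary count of packets with deadlines in a fixed interval, using only the definition of $\pslack$ and the set description $\backupplan = F \cup (P\setminus\ADV)$. The feasibility part of invariant~{\InvariantB} is not actually needed for this identity; what I will use is that the three pieces assemble disjointly, namely $F\cap P=\emptyset$ (furloughed packets are never in the current plan) and $F\cap\ADV=\emptyset$ (from invariant~{\InvariantA}).

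First I would unfold $\pslack$. Since $\pslack(X,\tau) = (\tau-t+1) - |X_{[t,\tau]}|$ for every $\tau\ge t$, the term $(\tau-t+1)$ cancels in a difference, so $\pslack(P,\tau) - \pslack(\backupplan,\tau) = |\backupplan_{[t,\tau]}| - |P_{[t,\tau]}|$. Next I would compute $|\backupplan_{[t,\tau]}|$: because $\backupplan$ is the disjoint union of $F$ and $P\setminus\ADV$, and because $P_{[t,\tau]}$ splits as $(P\setminus\ADV)_{[t,\tau]}\sqcup(P\cap\ADV)_{[t,\tau]}$, we get $|\backupplan_{[t,\tau]}| = |F_{[t,\tau]}| + |P_{[t,\tau]}| - |(P\cap\ADV)_{[t,\tau]}|$. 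Combining the two displays yields, for every $\tau\ge t$,
\[
\pslack(P,\tau) - \pslack(\backupplan,\tau) \;=\; |F_{[t,\tau]}| - |(P\cap\ADV)_{[t,\tau]}| .
\]
Finally I would instantiate this at $\tau=\eta$ and at $\tau=\eta'$ and invoke the telescoping identities $|F_{[t,\eta']}| = |F_{[t,\eta]}| + |F_{(\eta,\eta']}|$ and $|(P\cap\ADV)_{[t,\eta']}| = |(P\cap\ADV)_{[t,\eta]}| + |(P\cap\ADV)_{(\eta,\eta']}|$, valid since $t\le\eta\le\eta'$. Substituting, the left-hand side of \eqref{eqn:p relative changes of pslacks of P and B} becomes $|F_{[t,\eta']}| - |(P\cap\ADV)_{[t,\eta]}|$, and so does the right-hand side, which completes the argument.

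There is no genuine obstacle here; the statement is pure bookkeeping. The only point that deserves a line of care is verifying that $F$ and $P\setminus\ADV$ are truly disjoint so that cardinalities add — this is exactly where the standing invariants ($F\cap\varplanP=\emptyset$ and, via~{\InvariantA}, $F\cap\varADV=\emptyset$) enter — and noting that a packet whose deadline was modified by the algorithm is still a single set element carrying its current deadline $d^t_\cdot$, so the notation $X_{[t,\tau]}$ is unambiguous throughout.
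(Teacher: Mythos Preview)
Your proposal is correct and follows essentially the same approach as the paper: both unfold $\pslack$ to reduce everything to cardinalities of $P_{[t,\cdot]}$ and $\backupplan_{[t,\cdot]}$, then use the disjoint decomposition $\backupplan = F \sqcup (P\setminus\ADV)$ to isolate the $F$ and $P\cap\ADV$ contributions. The only cosmetic difference is that you first derive the closed form $\pslack(P,\tau)-\pslack(\backupplan,\tau)=|F_{[t,\tau]}|-|(P\cap\ADV)_{[t,\tau]}|$ for generic $\tau$ and then instantiate at $\eta,\eta'$, whereas the paper takes the difference between $\eta$ and $\eta'$ first and decomposes on the interval $(\eta,\eta']$ directly.
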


Equation~(\ref{eqn:p relative changes of pslacks of P and B}) may appear complicated, but it's
actually quite straightforward. It compares the contributions of the interval $(\eta,\eta']$ to $\pslack()$ values of $\planP$ and $\backupplan$.
These contributions differ by $-|\planP(\eta,\eta']| + |\backupplan(\eta,\eta']|$,
and canceling out the common contribution $|\planP(\eta,\eta']\cap\backupplan|$
yields~(\ref{eqn:p relative changes of pslacks of P and B}). A formal proof follows.

\begin{proof}
{\InvariantB} implies that $\ADV(\eta,\eta'] = \planP(\eta,\eta']\setminus\backupplan$.
Using the definition of $\pslack()$, and canceling out the contributions of packets with deadline at most $\eta$, we get
\begin{align*}
 \pslack(\planP,\eta')  - \pslack(\planP,\eta) + |\ADV(\eta,\eta']|
 	&=\; \eta' - \eta - |\planP(\eta,\eta']|  + |\planP(\eta,\eta']\setminus\backupplan|
	\\
	&=\; \eta' - \eta - |\planP(\eta,\eta']\cap\backupplan|
	\\
	&=\; \eta' - \eta - |\backupplan(\eta,\eta']| + |\backupplan(\eta,\eta']\setminus\planP|
	\\
	&=\;\pslack(\backupplan,\eta')  - \pslack(\backupplan,\eta) + |\backupplan(\eta,\eta']\setminus\planP|,
\end{align*}
completing the proof.
\end{proof}


\begin{lemma}\label{lem:pslackRelation}
Consider the current optimal plan $\planP$, the backup plan $\backupplan$, and
the adversary stash $\ADV$ at time $t$. Assume that invariants~{\InvariantA} and~{\InvariantB} hold, and
let $\zeta$ be a tight slot of $\planP$ (possibly, $\zeta = t-1$).  
Let $\fstar$ be the earliest-deadline packet in $\backupplan(\zeta,\timehorizon] \setminus\planP$ 
and let $\gstar$ be the latest-deadline packet in $\ADV[t,\zeta]$. (We allow here the
possibility that $\fstar$ or $\gstar$ is undefined). Then: 
\begin{enumerate}[label=(\alph*),itemsep=2pt] 
\item  If $\fstar$ is defined, then 
$\pslack(\backupplan,\tau)\ge \pslack(P,\tau) + |\ADV(\zeta, \tau]|$ for any $\tau\in (\zeta, d_\fstar)$.
Otherwise this inequality holds for any $\tau > \zeta$.

\item If $\gstar$ is defined, then 
	$\pslack(\backupplan,\tau)\ge \pslack(P,\tau) + |\backupplan(\tau, \zeta]\setminus\planP|$ for any $\tau\in [d_\gstar, \zeta]$.
	Otherwise this inequality holds for any $\tau\in [t, \zeta]$.
\end{enumerate}
\end{lemma}

\begin{proof}
We first observe that, since $\backupplan$ is feasible and $\zeta$ is a tight slot for $\planP$, 
we have $\packetslack(\backupplan,\zeta) \ge 0 = \packetslack(\planP,\zeta)$.

\smallskip

\noindent (a)
If packet $\fstar$ exists, let $\theta = d_\fstar$, otherwise let $\theta = \timehorizon+1$, where $\timehorizon$ is the time horizon, 
as defined in Section~\ref{sec: preliminaries}. Let $\tau\in (\zeta,\theta)$.
By the definition of $\fstar$, there is no packet in $\backupplan\setminus\planP$ with deadline
in $(\zeta, \theta)$; in particular $\backupplan(\zeta, \tau]\setminus\planP = \emptyset$. Using this, equation $\packetslack(\planP,\zeta) = 0$, and
Lemma~\ref{lem:p relative changes of pslacks of P and B} (with $\eta = \zeta$ and $\eta' = \tau$), we obtain 
\begin{equation*}
\pslack(\planP,\tau)   + |\ADV(\zeta,\tau]|
	\;=\;
  \pslack(\backupplan,\tau) - \pslack(\backupplan,\zeta)
  	\;\le\;
  \pslack(\backupplan,\tau) \,,
\end{equation*}
which implies claim~(a).

\smallskip

\noindent (b)
If packet $\gstar$ exists, let $\lambda = d_\gstar$, otherwise let $\lambda = t$.  Let $\tau\in [\lambda,\zeta]$.
By the definition of $\lambda$, there is no packet in $\ADV$ with deadline in $(\lambda, \zeta]$, which implies that 
$\ADV(\tau, \zeta] = \emptyset$.
Using this, equation $\packetslack(\planP,\zeta) = 0$, and
Lemma~\ref{lem:p relative changes of pslacks of P and B} (with $\eta = \tau$ and $\eta' = \zeta$),
we obtain 
\begin{equation*}
 - \pslack(\planP,\tau) 
	\;=\;   \pslack(\backupplan,\zeta) - \pslack(\backupplan,\tau) + |\backupplan(\tau,\zeta]\setminus\planP| \,.
\end{equation*}
Using $\packetslack(\backupplan,\zeta) \ge 0$, we get 
$ - \pslack(\planP,\tau) \ge - \pslack(\backupplan,\tau) + |\backupplan(\tau,\zeta]\setminus\planP|$
and claim~(b) follows.
\end{proof}

In some cases of the analysis, we will have situations when a packet $g\in \varADV$ needs to be removed from $\varADV$.  
By the definition of the backup plan, this causes $g$ to be added to $\varbackupplan$, making $\varbackupplan$ infeasible.
The lemma below shows that we can restore the feasibility of $\varbackupplan$ after such
a change by removing from $\varbackupplan$ a suitably chosen packet that is not in $\varplanP$.


\begin{lemma}\label{lem: restore feasibility of B}
Consider the current optimal plan $\planP$, the backup plan $\backupplan$, and the adversary stash $\ADV$ at time $t$.
Assume that invariants~{\InvariantA} and~{\InvariantB} hold. Let $g\in \ADV$,
$\eta = \prevtightslot(\planP,d_g)$, and $\eta' = \nexttightslot(\backupplan,d_g)$. Then 
\begin{description}
	\item{(a)} $\backupplan(\eta,\eta']\setminus\planP\neq\emptyset$.
	\item{(b)} For any $f\in \backupplan(\eta,\eta']\setminus\planP$,
the set $C  = \backupplan \setminus \braced{f} \cup \braced{g}$ is feasible and $w_f\le w_g$ (thus $w(C)\ge w(\backupplan)$).
\end{description}
\end{lemma}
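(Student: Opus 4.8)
The plan is to prove the two claims essentially by combining Lemma~\ref{lem:pslackRelation} with the definition of $\eta'$ as a tight slot of $\backupplan$, and then a short exchange argument for feasibility.

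First, for part~(i): since $g\in P\cap\ADV$, removing $g$ from $\ADV$ adds it to $\backupplan$, and we must find a packet of $F$ in the interval $(\eta,\eta']$ to compensate. Note $\eta=\prevtightslot(P,d_g)$ is a tight slot of $P$, so Lemma~\ref{lem:pslackRelation}(a) applies at $\zeta=\eta$: for the earliest-deadline packet $\fstar\in F$ with $d_\fstar>\eta$, and for $\tau\in(\eta,d_\fstar)$ we have $\pslack(\backupplan,\tau)\ge\pslack(P,\tau)+|(P\cap\ADV)_{(\eta,\tau]}|$. In particular, if $\fstar$ had deadline beyond $\eta'$, then taking $\tau=d_g\in(\eta,\eta']\subseteq(\eta,d_\fstar)$ and using $g\in(P\cap\ADV)_{(\eta,d_g]}$, we would get $\pslack(\backupplan,d_g)\ge\pslack(P,d_g)+1\ge 1$; but also $\pslack(\backupplan,\cdot)$ would be strictly positive on all of $(\eta,\eta']$, contradicting that $\eta'=\nexttightslot(\backupplan,d_g)$ is the first tight slot of $\backupplan$ at or after $d_g$ (it is tight, so $\pslack(\backupplan,\eta')=0$, but the displayed inequality at $\tau=\eta'$ would force $\pslack(\backupplan,\eta')\ge 1$). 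Hence $d_\fstar\le\eta'$, so $\fstar\in F_{(\eta,\eta']}$, proving (i). I should double-check the boundary case $\eta=t-1$, which is allowed by the lemma's hypothesis.

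For part~(ii): fix any $f\in F_{(\eta,\eta']}$ and set $C=\backupplan\setminus\{f\}\cup\{g\}$. Feasibility: for $\tau<d_g$ or for $\tau$ large we either lose nothing or the $+g,-f$ changes cancel, so the only slots where $\pslack(C,\tau)$ could drop below $\pslack(\backupplan,\tau)$ are $\tau\in[d_g,\ldots)$ before $d_f$; but since $d_f\le\eta'$ and $d_g\le\eta'$... actually the delicate range is where $g$ has been added but $f$ not yet removed, i.e.\ $\tau\in[d_g,d_f)$ if $d_g\le d_f$ (the other case $d_f< d_g$ only improves slacks in $[d_f,d_g)$ and the rest cancels). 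On $\tau\in[d_g,d_f)\subseteq[d_g,\eta']$, I want $\pslack(C,\tau)=\pslack(\backupplan,\tau)-1\ge 0$, i.e.\ $\pslack(\backupplan,\tau)\ge 1$. This again follows from Lemma~\ref{lem:pslackRelation}(a) at $\zeta=\eta$: for $\tau\in(\eta,d_g]$ we already noted $\pslack(\backupplan,\tau)\ge\pslack(P,\tau)+1\ge 1$; and for $\tau\in(d_g,\eta')$, since $\eta'$ is the \emph{first} tight slot of $\backupplan$ after $d_g$, we have $\pslack(\backupplan,\tau)>0$, i.e.\ $\ge 1$. So $C$ is feasible. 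For the weight bound $w_f\le w_g$: this is where I expect the main obstacle. Since $g\in P$ (the current plan, which is the maximum-weight feasible set), and $C=\backupplan\setminus\{f\}\cup\{g\}$ is feasible with $g\notin\backupplan$, I'd like to argue via optimality of $P$ — but $f$ need not be in $P$. Instead I should use the minimum-weight characterization: $g\in P$ with $d_g\le\eta'$ and, more to the point, $d_g$ lies in the segment of $P$ ending at $\nexttightslot(P,d_g)$; since $f\notin P$ (as $f\in F$ and $F\cap\varplanP=\emptyset$) and $d_f\le\eta'$, if additionally $d_f$ were at most $\nexttightslot(P,d_g)$ we would get $w_f<\minwt(P,d_g)\le w_g$ directly. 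The remaining worry is $d_f$ lying in a later $P$-segment than $d_g$; here I would argue that then $f\notin P$ still gives $w_f<\minwt(P,d_f)$, and a telescoping/segment-merging argument (or directly: $P\setminus\{\,\text{lightest of segment of }d_g\,\}\cup\{f\}$ feasible would contradict optimality) forces $w_f<w_g$, using that $g$ is in $P$ and is at least as heavy as $\minwt$ of its own segment. I expect this weight comparison — carefully handling which $P$-segment $d_f$ falls in relative to $d_g$ and $\eta'$ — to be the step requiring the most care; everything else is bookkeeping with $\pslack$ and the two preceding lemmas.
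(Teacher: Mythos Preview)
Your approach is correct overall, but you are overcomplicating the weight bound in (ii), which you flag as the hardest part---it is actually the easiest. The paper dispatches it in one line: since $f\notin P$ and $d_f>\eta=\prevtightslot(P,d_g)$, we have
\[
w_f \;<\; \minwt(P,d_f) \;\le\; \minwt(P,d_g) \;\le\; w_g.
\]
The first inequality is the standard fact that any pending packet not in the plan is lighter than $\minwt$ at its deadline. The middle inequality holds because either $d_f\le d_g$, in which case $d_f$ and $d_g$ lie in the same $P$-segment (no tight slot in $(\eta,d_g)$) and the two $\minwt$ values are equal, or $d_f>d_g$, in which case monotonicity of $\minwt(P,\cdot)$ gives it directly. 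Your ``remaining worry'' about $d_f$ lying in a later $P$-segment is precisely this second case, and it is immediate---no telescoping or exchange argument is needed.

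For part~(i), your contradiction argument via Lemma~\ref{lem:pslackRelation}(a) works. The paper instead applies Lemma~\ref{lem:p relative changes of pslacks of P and B} directly with $\eta$ (tight for $P$) and $\eta'$ (tight for $\backupplan$), obtaining
\[
|F_{(\eta,\eta']}| \;\ge\; \pslack(\backupplan,\eta) + |F_{(\eta,\eta']}|
\;=\; \pslack(P,\eta') + |(P\cap\ADV)_{(\eta,\eta']}| \;\ge\; 1,
\]
since $g$ belongs to the set on the right. This is slightly more direct than your route.

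For feasibility in (ii), you do not need Lemma~\ref{lem:pslackRelation} at all: since $d_f\le\eta'=\nexttightslot(\backupplan,d_g)$, every slot $\tau\in[d_g,d_f)$ lies strictly before the first tight slot of $\backupplan$ at or after $d_g$, hence $\pslack(\backupplan,\tau)\ge 1$ there, and the exchange is safe. Your extra invocation of Lemma~\ref{lem:pslackRelation}(a) on $(\eta,d_g]$ is harmless but unnecessary.
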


\begin{proof}
Note that the choice of $g\in \ADV$ implies that $g \notin \backupplan$.
We use Lemma~\ref{lem:p relative changes of pslacks of P and B} with $\eta$ and $\eta'$ as defined here. 
We can apply this lemma because $\backupplan$ satisfies invariant~{\InvariantB}. From that lemma, 
substituting $\pslack(\planP,\eta) = \pslack(\backupplan,\eta') = 0$, we get
\begin{equation*}
\pslack(\planP,\eta') + |\ADV(\eta,\eta']|
	\;=\;- \pslack(\backupplan,\eta) + |\backupplan(\eta,\eta']\setminus\planP|
	\;\le\; |\backupplan(\eta,\eta']\setminus\planP|\,,
\end{equation*}
where the last inequality follows from $\pslack(\backupplan,\eta)\ge 0$, which is a consequence of the feasibility of $\backupplan$. 
The existence of $g$ implies $\ADV(\eta,\eta']\neq\emptyset$, and hence, using the feasibility of $\planP$, we obtain that
$\backupplan(\eta,\eta']\setminus\planP\neq\emptyset$ holds as well, proving claim~(a). 

Pick any $f\in \backupplan(\eta,\eta']\setminus\planP$. From $d_f \le \eta' = \nexttightslot(\backupplan,d_g)$ we obtain that
$C$ is feasible. Inequality $d_f > \eta = \prevtightslot(\planP,d_g)$ and $f\not\in \planP$ imply that $w_f\le \minwt(\planP,d_g) \le w_g$,
and therefore $w(C)\ge w(\backupplan)$, completing the proof of claim~(b).
\end{proof}


As a corollary, we obtain that Lemma~\ref{lem: restore feasibility of B}(b) holds for $f$ equal to
the earliest-deadline packet $\fstar$ in $\backupplan\setminus\planP$ with $d_\fstar > \prevtightslot(\planP,d_g)$.

\begin{corollary}\label{cor: restore feasibility of B part two}
Consider the current optimal plan $\planP$, the backup plan $\backupplan$, and the adversary stash $\ADV$ at time $t$.
Assume that invariants~{\InvariantA} and~{\InvariantB} hold. Let $g\in \ADV$, $\eta = \prevtightslot(\planP,d_g)$, and 
let $\fstar$ be the earliest-deadline packet in $\backupplan(\eta,\timehorizon]\setminus\planP$.
Then packet $\fstar$ exists,
the set $C  = \backupplan \setminus \braced{\fstar} \cup \braced{g}$ is feasible, and $w_\fstar \le w_g$ (thus $w(C)\ge w(\backupplan)$).
\end{corollary}

\begin{proof}
Applying Lemma~\ref{lem: restore feasibility of B}(a) to packet $g$ and $\eta' = \nexttightslot(\backupplan,d_g)$ gives
us that $\backupplan(\eta,\eta']\setminus\planP\neq\emptyset$, which in turn implies that $d_\fstar\in (\eta,\eta']$.
Then, using Lemma~\ref{lem: restore feasibility of B}(b) for $f = \fstar$ implies the corollary.
\end{proof}


\myparagraph{Potential function.} 
We use the backup plan $\varbackupplan$ to define a potential function needed for the amortized analysis of Algorithm $\PlanMonotonicity$.
If $\backupplan$ is the current snapshot of $\varbackupplan$ then the potential value at time $t$ is     
\begin{equation}\label{eq:potential}   
\textstyle
\Psi(\backupplan) \;=\; 
		\frac{1}{\phi} \razy w^t(\backupplan) \,.
\end{equation} 
For brevity, we will use notation $\potentialBefStep = \Psi(\backupplanBefStep)$
for the potential at the beginning of step $t$, before any packet arrives,
and $\potentialAftArrivals = \Psi(\backupplanAftArrivals)$ for the potential just before a packet is transmitted in step $t$.

The intuition behind this definition is as follows. In order to be $\phi$-competitive, the average (per step) 
profit of Algorithm $\PlanMonotonicity$ should be at least $1/\phi$ times the
adversary's average profit. However, due to the choice of coefficients in line~1
of the algorithm, Algorithm $\PlanMonotonicity$ tends to postpone transmitting heavy packets with large deadlines.
(For example, given just two packets, a tight packet with weight $1$ and a non-tight packet
with weight $2.6 < \phi^2$, the algorithm will transmit the tight packet in the current
step.) As a result, in tight instances, $\PlanMonotonicity$'s actual profit per step, throughout most of the game, 
is often smaller than $1/\phi$ times the adversary's, and only near the end of the
instance, when delayed heavy packets are transmitted, the algorithm can make up for this deficit. 

In our amortized analysis, if there is a deficit in a given step, we pay for it with a ``loan''
that is represented by an appropriate increase of the potential function. 
Eventually, of course, these loans need to be repaid -- the potential eventually decreases
to $0$ and this decrease must be covered by excess profit.
The formula for the potential $\Psi$ is designed to guarantee such future excess profits.
To see this, imagine that no more packets arrive. Since $\varbackupplan\cap\varADV = \emptyset$, 
the packets in $\varbackupplan$ will not be transmitted in the future by the adversary.
If the algorithm does not execute any more leap steps
then it will collect all the packets in $\varplanP$, which are in total heavier than the packets in $\varbackupplan$
(as $\varbackupplan$ is feasible by invariant~{\InvariantB}(i) and
each packet $f\in \varbackupplan\setminus \varplanP$ satisfies $w^t_f\le \minwt(\varplanP, d^t_f)$ at each time step $t$).
On the other hand, if the algorithm executes a leap step, instead of a packet ousted from $\varplanP$ in this step,
the algorithm will collect a packet from $\varbackupplan\setminus\varplanP$ (or a better
packet) after it is added to the optimal plan as a substitute packet.

\subsection{Overview of the Analysis}
\label{subsec: overview of the analysis}

This section gives an overview of the analysis, states the main theorem,
and shows how it follows from results that will be established in the sections that follow.


\myparagraph{Initial and final state.}
At the beginning, per assumption~\ref{ass:virtualPackets},
we assume that the optimal plan is pre-filled with virtual $0$-weight packets, 
each in a slot equal to its deadline, and none of them scheduled by the adversary for transmission.  The adversary stash $\varADV$ is empty,
i.e., before the first step (at time $0$) we have $\widetilde{\ADV}^0 = \emptyset$, and
the backup plan is the same as the optimal plan, i.e., $\widetilde{\backupplan}^0 = \widetilde{\planP}^0$.
Thus invariants~{\InvariantA} and~{\InvariantB} clearly hold, and $\widetilde{\Psi}^0 = 0$.
At the end, after all (non-virtual) packets expire, the potential equals $0$ as well, i.e.,
$\widetilde{\Psi}^{\timehorizon+1}= 0$, where $\timehorizon$ is the time horizon (the last step).


\myparagraph{Amortized analysis.}
At the core of our analysis are bounds relating amortized profits of the algorithm and the adversary in each step $t$. 
For packet arrivals in step $t$, we will show the following \emph{packet-arrivals inequality}:
\begin{equation}\label{eq:deltaPotential-Arrival}
\potentialAftArrivals - \potentialBefStep\ge 0\,.
\end{equation} 
For the transmission event in a step $t$, we will show that the following \emph{packet-trans\-mission inequality} holds:
\begin{equation}\label{eq:oneStep}
\phi\razy w^t(\ALG[t]) - \phi\razy (\Delta^t \weights) + (\potentialBefNextStep - \potentialAftArrivals) \;\ge\; \advcredit^t,
\end{equation}
where $\ALG[t]$ is the packet in slot $t$ in the algorithm's schedule \ALG{} (thus $w^t(\ALG[t])$ is the algorithm's profit),
and $\Delta^t \weights$ is the total amount by which the algorithm increases the weights of its pending packets in step $t$.

We prove the packet-arrivals inequality~(\ref{eq:deltaPotential-Arrival}) in Section~\ref{subsec: arrival of a packet}
and the packet-transmission inequality~(\ref{eq:oneStep}) in Section~\ref{subsec: transmitting a packet}.
Assuming that these two inequalities hold, we now show our main result.


\begin{theorem}
Algorithm \PlanMonotonicity{} is $\phi$-competitive.
\end{theorem}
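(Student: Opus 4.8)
The plan is to derive the theorem by summing the two amortized inequalities—the packet-arrival inequality~\eqref{eq:deltaPotential-Arrival} and the packet-scheduling inequality~\eqref{eq:oneStep}—over all events $\sigma$ in chronological order, and then using the boundary conditions on the potential. Concretely, I would let $\sigma$ range over all events of the computation, split into arrival events and scheduling events, and add up the corresponding inequalities.

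First I would handle the telescoping of the potential terms. Since the events are linearly ordered and each event $\sigma$ transforms the plan from $P^\sigma$ to $P^{\sigma+1}$ (and likewise for $\calF$, $\varADV$, $\varbackupplan$), the quantities $\Psi^{\sigma+1}-\Psi^\sigma$ telescope: summing over all events gives $\Psi^{\text{final}}-\Psi^0$. By the ``initial and final state'' paragraph, $\Psi^0=0$ (all sets empty, plan pre-filled with $0$-weight virtual packets), and after all non-virtual packets expire $\Psi^{\text{final}}=0$ as well. Hence $\sum_\sigma (\Psi^{\sigma+1}-\Psi^\sigma)=0$. (One should note that both inequalities are maintained under the invariants \InvariantA{} and \InvariantB, whose preservation across all events is exactly what Sections~\ref{subsec: arrival of a packet} and~\ref{subsec: scheduling a packet} establish; so there is a simultaneous induction on events guaranteeing that every inequality in the sum is valid.)

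Next I would collect the remaining terms. From the arrival events we get only the nonnegative potential differences, already absorbed in the telescoping. From the scheduling events, one per time step $t$, inequality~\eqref{eq:oneStep} contributes $\phi\,w^t(\ALG[t]) - \phi\,(\Delta^t\text{Weights}) + (\Psi^{\sigma+1}-\Psi^\sigma) \ge \advgain^t$. Summing over all steps $t$ and over all arrival events, and using the telescoping identity, yields
\begin{equation*}
\phi\sum_t w^t(\ALG[t]) \;-\; \phi\sum_t (\Delta^t\text{Weights}) \;\ge\; \sum_t \advgain^t \;=\; w^0(\OPT),
\end{equation*}
where the last equality is the ``adversary gain'' identity. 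It remains to relate $\sum_t w^t(\ALG[t])$, the total weight the algorithm \emph{thinks} it gains, to the true profit $w^0(\ALG)$. The point is that whenever the algorithm increases the weight of a pending packet (in a leap step, for $\rho$ and the $h_i$'s), it is ``pre-charging'' itself for a profit it has not really earned; the term $-\phi\sum_t(\Delta^t\text{Weights})$ exactly cancels these inflations. Formally, for any packet $j$ eventually scheduled by \ALG{} at some step $t_j$, its scheduled weight $w^{t_j}_j$ equals $w^0_j$ plus the sum of all increases applied to $j$ before $t_j$; summing the identity $w^{t_j}(\ALG[t_j]) = w^0(\ALG[t_j]) + (\text{increases to that packet})$ over all scheduled packets and observing that each weight-increase event contributes its amount to exactly one not-yet-scheduled pending packet gives $\sum_t w^t(\ALG[t]) \le w^0(\ALG) + \sum_t (\Delta^t\text{Weights})$. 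Plugging this in, the $\Delta\text{Weights}$ terms cancel (the coefficient $\phi$ matches the ``penalty'' coefficient), leaving $\phi\, w^0(\ALG) \ge w^0(\OPT)$, which is exactly $\phi$-competitiveness.

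The main obstacle is not in this summation argument, which is essentially bookkeeping, but in making the weight-accounting step watertight: one must be sure that every unit of weight increase charged via $\Delta^t\text{Weights}$ is attached to a packet that is still pending (hence still ``owed'' to the algorithm's future profit), and that no packet is double-counted—this is where the convention that $\calF$ holds enough virtual packets and the care about packet duplication in $\varADV$ (discussed under ``Replacing packets'') become relevant. The truly hard work, of course, lies in the two inequalities themselves, proved in the following subsections; here I would simply invoke them.
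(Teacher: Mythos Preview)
Your proposal is correct and follows essentially the same route as the paper's own proof: telescope the potential differences to zero using the boundary conditions, sum the packet-scheduling inequalities over all steps, invoke the adversary-gain identity $\sum_t \advgain^t = w^0(\OPT)$, and use the bound $\sum_t w^t(\ALG[t]) \le w^0(\ALG) + \sum_t \Delta^t\text{Weights}$ (your observation that increases to later-dropped packets make this an inequality rather than an equality matches the paper's remark exactly). The only hard content is indeed in establishing \eqref{eq:deltaPotential-Arrival} and \eqref{eq:oneStep}, which you correctly defer to the later subsections.
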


\begin{proof}
We show that $\phi\razy w^0(\ALG) \ge w^0(\OPT)$, which implies the theorem.
First, by Claim~\ref{clm: adversary credits}, we have $w^0(\OPT) \le \sum_{t} \advcredit^t$. Second, note that
\begin{equation}\label{eq:sumPotentialsBefStep}
\sum_t \,(\, \potentialBefNextStep - \potentialBefStep \,)
	\;=\; \widetilde{\Psi}^{T+1} - \widetilde{\Psi}^0
	\;=\; 0\,,
\end{equation}
where $\widetilde{\Psi}^0 = 0$ is the initial potential and $\widetilde{\Psi}^{\timehorizon+1}= 0$ is the final potential after the last step $\timehorizon$.
Observe also that 
\begin{equation}\label{eq:sumALGsProfits}
\sum_t [\, w^t(\textsf{ALG}[t]) - \Delta^t \weights \,] \;\le\; w^0(\textsf{ALG})\,.
\end{equation}
This follows from the observation that if the weight of $\textsf{ALG}[\tau]$ was
increased by some value $\delta > 0$ at some step $t' < \tau$,
then $\delta$ also contributes to $\Delta^{t'} \weights$, so such contributions cancel out in~(\ref{eq:sumALGsProfits}).
(There may be several such $\delta$'s, as the weight of a packet may have been increased multiple times.
Note that the  bound (\ref{eq:sumALGsProfits}) may not be tight if some packets with increased weights are later dropped.)

Summarizing, using the above bounds yields
\begin{align}
w^0(\OPT) \;&\le\; \sum_{t} \advcredit^t 
			\label{eqn: amortization ineq 1}
			\\
		 \;&\le\; \sum_t \Big[ \phi\razy w^t(\textsf{ALG}[t]) - \phi\razy (\Delta^t \weights) + (\potentialBefNextStep - \potentialAftArrivals) \Big] 
			+ \sum_t (\potentialAftArrivals - \potentialBefStep)
			\label{eqn: amortization ineq 2}
		\\
		&=\; \sum_t \Big[ \phi\razy w^t(\textsf{ALG}[t]) - \phi\razy (\Delta^t \weights)\Big]
			+ \sum_t (\potentialBefNextStep - \potentialBefStep)
			\nonumber
		\\
		&\le\; \phi\razy w^0(\textsf{ALG})\,,
		\label{eqn: amortization ineq 3}
\end{align} 
where inequality~\eqref{eqn: amortization ineq 1} is~\eqref{eqn: adversary credits}, inequality~\eqref{eqn: amortization ineq 2}
follows by applying~\eqref{eq:deltaPotential-Arrival} and~\eqref{eq:oneStep} for each step $t$, and inequality~\eqref{eqn: amortization ineq 3}
 holds by~\eqref{eq:sumPotentialsBefStep} and~\eqref{eq:sumALGsProfits}.
\end{proof}

\subsection{Packet Arrivals}
\label{subsec: arrival of a packet}

Let $t$ be the current time step. Our aim in this section is to prove that invariants~{\InvariantA} and~{\InvariantB}
can be preserved as packets arrive in step $t$, using appropriate modifications of sets $\varADV$ and $\varbackupplan$.
We also prove that the packet-arrivals inequality~\eqref{eq:deltaPotential-Arrival} holds for step $t$.  
To this end, it is sufficient to show how to preserve both invariants, without decreasing the value of $w(\varbackupplan)$
(and thus $\Psi$ as well), in response to an arrival of each individual packet.

Thus, consider the arrival event of a packet $s$ at time $t$. Let $\planP$ be the optimal plan just before $s$ arrives and
let $\planQ$ be the optimal plan just after $s$ arrives.
Furthermore, let $\ADV$ and $\backupplan$ be the snapshots of $\varADV$ and $\varbackupplan$ just before $s$ arrives,
and let $\ADV_{+s}$ and $\backupplan_{+s}$ denote, respectively, the snapshots of $\varADV$ and $\varbackupplan$ 
just after $s$ arrives and the changes described below are applied.
The algorithm does not change the weights and deadlines after packet arrivals, so we will omit the
superscript $t$ in the notation for weights and deadlines, that is $w_q = w^t_q$ and $d_q = d^t_q$, for
each packet $q$. There are two cases, depending on whether or not $s\in \planQ$.


\smallskip
\noindent
\mycase{A.1}
$s$ is not added to the optimal plan $\varplanP$, i.e., $\planQ=\planP$.
Lemma~\ref{lem:plan_update-arrival} implies that $w_s<\minwt(\planP,d_s)=\minwt(\planQ,d_s)$. 
In this case, we do not change $\varADV$, i.e., $\ADV_{+s} = \ADV$,
so invariant~{\InvariantA} is preserved. The backup plan $\varbackupplan$ remains the same as well, so
$w(\varbackupplan)$ does not change and invariant~{\InvariantB} is preserved. 


\smallskip
\noindent
\mycase{A.2}
$s$ is added to the optimal plan $\varplanP$. Let $u$ be the lightest packet in $\planP$ with $d_u\le \nexttightslot(\planP,d_s)$;
by assumption \ref{ass:virtualPackets} such $u$ exists. By the choice of $u$ and Lemma~\ref{lem:plan_update-arrival}, we have 
$\planQ=\planP\cup\braced{s}\setminus\braced{u}$, $\nexttightslot(\planP,d_u) \le \nexttightslot(\planP,d_s)$,
and $w_s > w_u = \minwt(\planP, d_s) = \minwt(\planP, d_u)$.

Replacing $u$ by $s$ in $\varplanP$ can also trigger changes in $\varADV$ and $\varbackupplan$.
We describe these changes in two parts: 
\begin{description}
	\item{(i)} First, we show that if $u\in\ADV$, then we can remove $u$ from $\varADV$, 
preserving the invariants and not decreasing the potential.
	 \item{(ii)} Then, assuming that $u\notin \varADV$, we describe and analyze the remaining changes.
\end{description}
Let $\ADV_\mathrm{(i)}$ denote the intermediate adversary stash, after  the change in (i) and before the change in (ii), where we let 
$\ADV_\mathrm{(i)} = \ADV$ if $u\notin \ADV$, that is when change~(i) does not apply.
We adopt the same notation for snapshots of set $\varbackupplan$.


\smallskip
\emparagraph{(i) Dealing with $u\in\ADV$.}
In this case we need to remove $u$ from $\varADV$, to preserve invariant~{\InvariantA}. But this also forces us
to add $u$ to $\varbackupplan$, in order to preserve~{\InvariantB}(ii), which in turn requires removing some packet from $\varbackupplan$
to preserve its feasibility. 

To implement these changes, we use Lemma~\ref{lem: restore feasibility of B} with $g= u$, which gives us that there
is a packet $f_u \in \backupplan\setminus\planP$ such that the set $\backupplan\setminus\braced{f_u}\cup\braced{u}$
is feasible and $w(\backupplan\setminus\braced{f_u}\cup\braced{u})\ge w(\backupplan)$. We then let
\begin{equation*}
	 \ADV_\mathrm{(i)} \;=\; \ADV\setminus\braced{u} 
	 \quad\textrm{and}\quad
	 \backupplan_\mathrm{(i)} \;=\; \backupplan\setminus\braced{f_u}\cup\braced{u}.
\end{equation*}
Since $u$ is scheduled in $A$ in interval $[t,d_u]$ and
$w_u = \minwt(\planP,d_u)$, the adversary adjustment credit associated with removing $u$ from $\varADV$ is
$w_u - \minwt(\planP,d_u) = 0$ (see Section~\ref{subsec: adversary stash}).
As explained above, we also have that $\backupplan_\mathrm{(i)}$ is feasible and
$w(\backupplan_\mathrm{(i)}) \ge w(\backupplan)$; thus
invariant~{\InvariantB} is preserved and the change of $w(\varbackupplan)$ is non-negative. 

\smallskip
\emparagraph{(ii) Analysis of other changes.}
By (i) we can now assume that $u\notin \ADV_\mathrm{(i)}$, and invariant~{\InvariantB}(ii) together with $u\in \planP$ imply that $u\in \backupplan_\mathrm{(i)}$.
We now consider the changes of $\varbackupplan$ resulting from including $s$ in $\varplanP$.
We analyze two subcases, depending on whether or not $s\in \OPT$.

\smallskip
\noindent
\mycase{A.2.a}
$s\in \OPT$. We add $s$ to $\varADV$ in the same slot as in \OPT. Specifically, we take
$\ADV_{+s} = \ADV_\mathrm{(i)}\cup \braced{s}$ and $\backupplan_{+s} = \backupplan_\mathrm{(i)}$.
Invariant~{\InvariantA} is preserved as $s\in \ADV_{+s}$,
invariant~{\InvariantB} holds as $\varbackupplan$ remains unaffected in (ii), and the value of $w(\varbackupplan)$ does not change.

\smallskip
\noindent
\mycase{A.2.b}
$s\notin \OPT$. In this case, we simply let $\ADV_{+s} = \ADV_\mathrm{(i)}$,
so invariant~{\InvariantA} continues to hold. However, $s$ needs to get added to $\varbackupplan$ to preserve invariant~{\InvariantB}(ii), so
we need to remove some packet from  $\varbackupplan$ to maintain its feasibility, and
this packet must be lighter than $s$ to assure that the potential does not decrease.
Define $\xi = \nexttightslot(\planP,d_s)$ and $\xi_{\backupplan} = \nexttightslot(\backupplan_\mathrm{(i)},d_s)$.
We have two cases.

If $d_u\le \xi_{\backupplan}$, we replace $u$ by $s$ in $\varbackupplan$. That is, we let   
$\backupplan_{+s} = \backupplan_\mathrm{(i)}\cup \braced{s} \setminus \braced{u}$, and this satisfies invariant~{\InvariantB}(ii).
The case condition implies that invariant~{\InvariantB}(i) continues to hold, and,
as $w_u < w_s$ in Case~A.2, the value of $w(\varbackupplan)$ does not decrease.

Next, assume that $d_u > \xi_{\backupplan}$. Note that $d_u \le \xi$, by the definition of $u$.
Let $\lambda = \prevtightslot(\planP,d_s)$. In this case we  have that $\lambda < d_s \le \xi_{\backupplan} < d_u \le \xi$, 
so $d_u$ and $d_s$ are in the same segment $(\lambda,\xi]$ of $\planP$.
Since $\lambda$ is a tight slot for $\planP$, $\xi_{\backupplan}$ is a tight slot for $\backupplan_\mathrm{(i)}$, and $\pslack(\planP, \xi_{\backupplan}) > 0$,
from Lemma~\ref{lem:p relative changes of pslacks of P and B} with $\eta = \lambda$ and $\eta' = \xi_{\backupplan}$ we get that 
\begin{equation*}
	|\backupplan_\mathrm{(i)}(\lambda,\xi_{\backupplan}]\setminus\planP| 
	\;=\;
		\pslack(\backupplan_\mathrm{(i)},\lambda) + \pslack(\planP, \xi_{\backupplan})
		 + |\ADV_\mathrm{(i)}(\lambda,\xi_{\backupplan}]| 
		 \;>\; 0\,,
\end{equation*}
implying $ \backupplan_\mathrm{(i)}(\lambda,\xi_{\backupplan}]\setminus\planP \neq \emptyset$.
Choose any $f\in \backupplan_\mathrm{(i)}(\lambda,\xi_{\backupplan}]\setminus\planP$ and let
$\backupplan_{+s} = \backupplan_\mathrm{(i)}\cup \braced{s} \setminus \braced{f}$, preserving invariant~{\InvariantB}(ii).   
Invariant~{\InvariantB}(i) holds for $\backupplan_{+s}$ because
$d_f \le \xi_{\backupplan} = \nexttightslot(\backupplan_\mathrm{(i)},d_s)$.
The optimality of $\planP$ and the choice of $u$ and $f$ imply
that $w_f \le \minwt(\planP,d_s) = w_u < w_s$, thus $w(\varbackupplan)$ cannot decrease.

\medskip

Summarizing, we showed that in response to the arrival of any packet $s$ in step $t$
we can modify $\varADV$ and $\varbackupplan$ in such a way that the invariants~{\InvariantA} and~{\InvariantB}
will be preserved and the value of $\Psi$ will not decrease. 
This gives us that the packet-arrivals inequality~\eqref{eq:deltaPotential-Arrival} holds for step $t$
and that after packet arrivals both invariants~{\InvariantA} and~{\InvariantB} will hold,
concluding the analysis of packet arrivals in step $t$.

\subsection{Transmitting a Packet}
\label{subsec: transmitting a packet}

Let $t$ be the current step of the computation.
After all packets with release time equal to $t$ arrive, the algorithm transmits its packet $p = \ALG[t]$. Recall that
$\planPaftArrivals$ is the optimal plan just before transmitting $p$ and $\planPbefNextStep$ is the optimal plan
after the algorithm transmits $p$, possibly adjusts weights and deadlines, and after the time is incremented to $t+1$.

We split the analysis of the 
transmission step into two parts, called the adversary step and the algorithm's step, defined as follows:
\begin{description}
\item{\emph{Adversary step:}}
If packet $\ADVaftArrivals[t]$ is defined, say $\ADVaftArrivals[t]= j$, then we need to remove it from $\varADV$. (Recall that packets in $\varADV$ may have been removed
or replaced during the analysis, so $j$ may not be equal to $\OPT[t]$, the packet scheduled in $\OPT$ at time $t$.)
Removing $j$ from $\varADV$ could trigger a change in $\varbackupplan$, but the optimal plan $\varplanP$ remains unchanged.
We show that these changes preserve both invariants~{\InvariantA} and~{\InvariantB}. We should stress that
these changes are made without advancing the current time, which will be done in algorithm's step.

In the analysis of the adversary step, given in Section~\ref{subsubsec: adversary step}, we establish a relation (inequality~\eqref{eq:advStepCost}) 
between the adversary transmission credit and the change of the potential. By $\ADVaftAdvMove$ and $\backupplanAftAdvMove$ we will denote the snapshots of sets
$\varADV$ and $\varbackupplan$, respectively, right after the adversary step.
\item{\emph{Algorithm's step:}}
In the algorithm's step, the algorithm transmits $p$, the time is incremented to $t+1$, and the optimal plan
changes from $\planPaftArrivals$ to $\planPbefNextStep$. 

The analysis of this step assumes that the changes
described in the adversary step have already been implemented.
Using the bound~\eqref{eq:advStepCost}, invariants~{\InvariantA} and~{\InvariantB},
and other properties, we then show that the packet-transmission inequality (\ref{eq:oneStep}) holds after
the sets $\varplanP$, $\varADV$, and $\varbackupplan$ are updated to reflect the changes triggered by the packet's transmission.
We also ensure that invariants~{\InvariantA} and~{\InvariantB} are preserved.

The analysis of the algorithm's step is given in 
Sections~\ref{subsubsec: ordinary step}-\ref{subsubsec: processing (beta,gamma] in an iterated leap step}.
We first analyze the ordinary step in Section~\ref{subsubsec: ordinary step}.
We then give a roadmap for the analysis of a leap step in Section~\ref{subsubsec: leap step: a road map},
as it will be divided into substeps. We analyze the particular substeps 
in Section~\ref{subsubsec: leap step: processing S1}, which
describes the changes in the initial segment $[t,\alpha]$ of $\planPaftArrivals$,
and in Sections~\ref{subsubsec: processing (beta,gamma] in a simple leap step}-\ref{subsubsec: processing (beta,gamma] in an iterated leap step},
which contain the analysis of other changes resulting from a leap step.  
\end{description}
In Table~\ref{tab:notationSnapshots}, we summarize the notation of snapshots that we use in this section.

{\renewcommand{\arraystretch}{1.3} 
\begin{center}
\begin{table}
\begin{tabular}{|l|l|l|r|c|c|c|} \hline
\multicolumn{4}{|r|}{Set description} & \bigcell{c}{Optimal\\ plan} & \bigcell{c}{Backup\\ plan} & \bigcell{c}{Adversary\\ stash}
\\ \hline          
\multicolumn{4}{|r|}{Dynamic set} & $\varplanP$ & $\varbackupplan$ & $\varADV$ 
\\ \hline          
\multicolumn{4}{|r|}{Beginning of step $t$ (before packet arrivals)} & $\planPbefStep$ & $\backupplanBefStep$ & $\ADVbefStep$
\\ \hline       
\multicolumn{4}{|r|}{Before transmission in step $t$ (after packet arrivals) } & $\planPaftArrivals$ & $\backupplanAftArrivals$ & $\ADVaftArrivals$
\\  \hline 
\multirow{5}{*}{\bigcell{c}{Transmission\\ event\\ in step $t$}}  
    & \multicolumn{3}{r|}{After the adversary step (Sec.~\ref{subsubsec: adversary step})}	& $\planPaftArrivals$ 
													& $\backupplanAftAdvMove$ 
													& $\ADVaftAdvMove$
\\   \cline{2-7}
   &  \multirow{3}{*}{\bigcell{c}{Leap\\ step\\ (Sec.~\ref{subsubsec: leap step: a road map})}}
	& \multicolumn{2}{r|}{\bigcell{r}{After processing $[t,\alpha]$ and \\ incrementing time (Sec.~\ref{subsubsec: leap step: processing S1})}} 	& $\planPaftInitSeg$ 
	 															& $\backupplanAftInitSeg$ 
																& $\ADVaftInitSeg$
\\       \cline{3-7}    
   &&  \multirow{2}{*}{\bigcell{c}{Iterated\\ leap step\\ (Sec.~\ref{subsubsec: processing (beta,gamma] in an iterated leap step})}}
	& \bigcell{r}{Before stage\\ for group $\seggroup{a,b}$} & $\planPbefGroup{\seggroup{a,b}}$ 
										& $\backupplanBefGroup{\seggroup{a,b}}$ 
										& $\AdvBefGroup{\seggroup{a,b}}$ 
\\        \cline{4-7} 
	&&& \bigcell{r}{After stage\\ for group $\seggroup{a,b}$} & $\planPaftGroup{\seggroup{a,b}}$ 
										& $\backupplanAftGroup{\seggroup{a,b}}$ 
										& $\AdvAftGroup{\seggroup{a,b}}$ 
	\\  \hline  
\multicolumn{4}{|r|}{Local notation (after part (i) of changes)}	& $\planPaftPartI$
																& $\backupplanAftPartI$ 
																& $\ADVaftPartI$ 
\\  \hline  
\end{tabular} 
\caption{Notation of snapshots of sets $\varplanP, \varbackupplan$, and $\varADV$ used in the analysis.
The particular snapshots for each substep of the analysis of a leap step will be defined later, in the respective sections.}
\label{tab:notationSnapshots}
\end{table}
\end{center}
}

\subsubsection{Adversary Step} 
\label{subsubsec: adversary step} 

As defined in Section~\ref{subsec: adversary stash},
if $\ADVaftArrivals$ contains a packet in slot $t$, then the adversary gains the transmission credit of $\advcredit^t_t = w^t(\ADVaftArrivals[t])$.
Otherwise, the adversary's transmission credit equals $\advcredit^t_t = \minwt(\planPaftArrivals, t)$.
(The overall adversary credit for this step also includes the adjustment credit,
but in this section we focus only on the relation between the transmission credit and the potential change in the adversary step.)

Except for possibly removing packet $\ADVaftArrivals[t]$ from $\varADV$, if it is defined,
we will not make other changes to $\varADV$, so invariant~{\InvariantA} will be preserved.
Below we show that with appropriate changes invariant~{\InvariantB} will also be preserved after the adversary step.
Further, denoting by $\DeltaPsiADV$ the change of the potential in the adversary step, we  prove the following auxiliary inequality:
\begin{equation}\label{eq:advStepCost}
\DeltaPsiADV - \advcredit^t_t \;\ge\; - \phinegtwof \razy {w^t_p} - \phinegonef \razy {w^t(\substpacket^t(p))} \,.
\end{equation}
Recall that throughout Section~\ref{subsec: transmitting a packet} we denote the packet scheduled by the algorithm in step $t$ by $p$.

The proof of inequality~\eqref{eq:advStepCost} is divided into two cases, depending on whether or not $\ADVaftArrivals[t]$ is defined.
As packet weights are not changed in the adversary step, below we omit the superscript $t$ in the notation for weights.


\smallskip
\noindent
\mycase{ADV.1} $\ADVaftArrivals$ contains a packet in slot $t$. Let $j = \ADVaftArrivals[t]$.
We will now remove $j$ from the adversary stash $\varADV$ and, since by invariant~{\InvariantA} we have $j\in \planPaftArrivals$, 
we need to add packet $j$ to the backup plan $\varbackupplan$ to maintain invariant~{\InvariantB}(ii),
which in turn requires removing a packet from $\varbackupplan$ to preserve its feasibility.
To this end, we apply Lemma~\ref{lem: restore feasibility of B} (with $g= j$), which implies that there is 
a packet $f_j\in \backupplanAftArrivals\setminus\planPaftArrivals$  such that $d_{f_j} > \prevtightslot(\planPaftArrivals,d_j)$, $w_{f_j} \le w_j$, and
for which set $\backupplanAftArrivals \setminus \braced{f_j}\cup \braced{j}$ is feasible.

We thus set $\ADVaftAdvMove = \ADVaftArrivals\setminus \braced{j}$ and $\backupplanAftAdvMove = \backupplanAftArrivals \setminus \braced{f_j}\cup \braced{j}$. 
Invariant~{\InvariantA} is clearly preserved,
and by Lemma~\ref{lem: restore feasibility of B}, invariant~{\InvariantB} continues to hold as well. The potential change is
\begin{equation}
\DeltaPsiADV \;=\; \phinegonef \razy \left( w(\backupplanAftAdvMove) - w(\backupplanAftArrivals) \right) 
				\;=\; \phinegonef \razy ( - w_{f_j} + w_j ).
\label{eqn: adv step potential change}
\end{equation}
From the definition of $\substpacket^t(j)$, $d_{f_j} > \prevtightslot(\planPaftArrivals,d_j)$, and $f_j\notin \planPaftArrivals$,
we have that $f_j$ is a candidate for $\substpacket^t(j)$, and thus $w(\substpacket^t(j))\ge w_{f_j}$. 
Using~\eqref{eqn: adv step potential change} and $\advcredit^t_t = w_j$, it follows that
\begin{align}
\phi \razy (\,\DeltaPsiADV - \advcredit^t_t \,)
	\;&=\; - w_{f_j} + w_j - \phi\razy w_j
	\nonumber
	\\
	&=\;  - \phinegonef \razy {w_j} - w_{f_j}
		\nonumber
	\\
	&\ge\; - \phinegonef \razy {w_j} - {w(\substpacket^t(j))}
		\nonumber
	\\
    &\ge\; - \phinegonef \razy {w_p} - {w(\substpacket^t(p))} \,,
		\label{eqn: amortization case adv.1}
\end{align}
where inequality \eqref{eqn: amortization case adv.1} follows from the choice of $p$
in line~\ref{algLn:transmit} of the algorithm's description, using also that $j\in \planPaftArrivals$. This implies~\eqref{eq:advStepCost}.


\smallskip
\noindent
\mycase{ADV.2} Slot $t$ is empty in $\ADVaftArrivals$. In this case, we do not change $\varADV$ and $\varbackupplan$,
so $\ADVaftAdvMove = \ADVaftArrivals$ and $\backupplanAftAdvMove = \backupplanAftArrivals$.
Invariants~{\InvariantA} and~{\InvariantB} are trivially preserved. Recall that 
$\advcredit^t_t = \minwt(\planPaftArrivals, t) = w_\firstlightpacket$, where $\firstlightpacket$ denotes the lightest packet in the initial segment of $\planPaftArrivals$.
Note that $\substpacket^t(\firstlightpacket) = \firstlightpacket$ and that $\DeltaPsiADV = 0$. Then 
\begin{align}
\phi \razy (\,\DeltaPsiADV - \advcredit^t_t\,)
	\;&=\; - \phi\razy w_\firstlightpacket
	\nonumber \\
	&=\; - \phinegonef \razy {w_\firstlightpacket} - w(\substpacket^t(\firstlightpacket))
	\nonumber
	\\
	&\ge\; - \phinegonef \razy {w_p} - w(\substpacket^t(p)) \,,
	\label{eqn: adv step adv.2 last}
\end{align}
where inequality~\eqref{eqn: adv step adv.2 last} follows from the choice of $p$ again.
Thus~\eqref{eq:advStepCost} holds.
                  
\smallskip

This concludes the analysis of the adversary step. In particular, we have determined the snapshots 
$\ADVaftAdvMove$ and $\backupplanAftAdvMove$, of the adversary stash
$\varADV$ and the backup plan $\varbackupplan$, respectively, resulting from the adversary step.  
Next, in the following sections, we analyze the algorithm's step.

\subsubsection{Ordinary Step} 
\label{subsubsec: ordinary step}

We now assume that the adversary step, as described in the previous section, has already been implemented. The current optimal
plan $\planPaftArrivals$ remains unchanged in the adversary step, and the current snapshots of $\varADV$ and $\varbackupplan$ are
  $\ADVaftAdvMove$ and $\backupplanAftAdvMove$.
In this section we analyze the algorithm's move at step $t$, assuming this is an ordinary step, as defined in Section~\ref{sec: online algorithm}.

In an ordinary step a packet $p\in \planPaftArrivals[t,\alpha]$ is transmitted, 
where $\alpha =\nexttightslot^t(t)$ is the first tight slot in $\planPaftArrivals$. The algorithm makes no changes in packet weights
and deadlines, so $\Delta^t\weights=0$. Thus, to simplify notation,
for any packet $q$ we can write $w_q = w^t_q$ and $d_q = d^t_q$, omitting the superscript $t$.
As usual, $\firstlightpacket$ denotes the lightest packet in the initial segment $\planPaftArrivals[t,\alpha]$.
 
By the algorithm, $p$ is the heaviest packet in $\planPaftArrivals[t,\alpha]$.
Since $\substpacket^t(p) = \firstlightpacket$,  inequality~\eqref{eq:advStepCost} gives us that
\begin{equation}
\DeltaPsiADV - \advcredit^t_t \;\ge\; - \phinegtwof \razy {w_p} - \phinegonef \razy w_\firstlightpacket \,.
\label{eq:advStepCost ordinary}
\end{equation}
According to Lemma~\ref{lem:plan-update_transmission-1stSegment},
the new optimal plan (starting at time slot $t+1$) is $\planPbefNextStep = \planPaftArrivals \setminus\braced{p}$. 

We have two cases, depending on whether or not the adversary stash contains a packet
in the initial segment $[t,\alpha]$ of $\planPaftArrivals$.


\smallskip
\noindent
\mycase{O.1}
$\ADVaftAdvMove[t,\alpha] = \emptyset$.
In this case, $p\not\in \ADVaftAdvMove$ (as $p\in \planPaftArrivals[t,\alpha]$) and we do not further change set $\varADV$,
i.e., $\ADVbefNextStep = \ADVaftAdvMove$. So invariant~{\InvariantA} is preserved and $\advcredit^t = \advcredit^t_t$.

We now show that we can preserve invariant~{\InvariantB}. As $p\not\in \ADVaftAdvMove$ and~{\InvariantB}(ii), 
we have that $p\in \backupplanAftAdvMove$. We modify the backup plan $\varbackupplan$ by removing $p$, that is 
$\backupplanBefNextStep = \backupplanAftAdvMove \setminus \braced{ p }$. This immediately gives us that invariant~{\InvariantB}(ii) continues to hold.

Next, we show that invariant~{\InvariantB}(i) holds as well.
Since $\backupplanBefNextStep$ is with respect to time $t+1$ and $\backupplanAftAdvMove$ is feasible, 
we have that $\pslack(\backupplanBefNextStep,\tau) = \pslack(\backupplanAftAdvMove,\tau)\ge 0$ for $\tau\ge d_p$.
So it remains to consider slots $\tau\in [t+1,d_p)$.
The case condition and invariant~{\InvariantB}(ii) imply that $\planPaftArrivals[t,\alpha] \subseteq \backupplanAftAdvMove$.
This, together with the feasibility of $\backupplanAftAdvMove$ and $\alpha$ being a tight slot in $\planPaftArrivals$, 
gives us that in fact $\planPaftArrivals[t,\alpha] = \backupplanAftAdvMove[t,\alpha]$.
From this and $\pslack(\planPaftArrivals,\tau)\ge 1$ for $\tau\in [t+1,d_p)$
(as $\alpha \ge d_p$ is the first tight slot), we get that
$\pslack(\backupplanBefNextStep,\tau) = \pslack(\backupplanAftAdvMove,\tau) - 1 
							=  \pslack(\planPaftArrivals,\tau) - 1 
							\ge 0$ for all $\tau\in [t+1,d_p)$,
completing the proof that invariant~{\InvariantB}(i) continues to hold.

\smallskip

The calculation showing the packet-transmission inequality~\eqref{eq:oneStep} is quite simple. Taking into account that
the adversary credit is $\advcredit^t = \advcredit^t_t$, and that the change of the potential in the algorithm's step
is $\DeltaPsiALG = \phinegonef( w(\backupplanBefNextStep) - w(\backupplanAftAdvMove) )    = -\phinegonef w_p$, we obtain
\begin{align}
\phi\razy w^t(\textsf{ALG}[t]) &- \phi\razy (\Delta^t \weights) + (\potentialBefNextStep - \potentialAftArrivals) - \advcredit^t 
\nonumber\\
&=\; \phi \razy w_p - \phi\cdot 0 + [\, \DeltaPsiALG + \DeltaPsiADV \,] - \advcredit^t_t
\nonumber\\   
&=\; \phi \razy w_p + \DeltaPsiALG + [\, \DeltaPsiADV - \advcredit^t_t  \,]               
\nonumber\\
&\ge\; \phi \razy w_p - \phinegonef\razy {w_p} 
	+ \left[\,- \phinegtwof\razy {w_p} - \phinegonef\razy {w_\firstlightpacket}\,\right]
	\label{eqn: ordinary case O.1}	\\
	& =\; \phinegonef\razy {w_p}  - \phinegonef\razy {w_\firstlightpacket} 
	\;\ge\; 0\,,
	\label{eqn: ordinary case O.2}	
\end{align} 
where inequality~\eqref{eqn: ordinary case O.1} uses~\eqref{eq:advStepCost ordinary}, and
the inequality in the last step in line~\eqref{eqn: ordinary case O.2} uses the definition of $\firstlightpacket$,
namely that $w_p\ge w_\firstlightpacket$.


\smallskip
\noindent
\mycase{O.2}
$\ADVaftAdvMove[t,\alpha] \neq \emptyset$. (This includes the case when $p\in \ADVaftAdvMove$.)
We first describe our modifications of sets $\varADV$ and $\varbackupplan$, and then argue that with these
modifications our invariants are preserved and inequality~\eqref{eq:oneStep} is satisfied.


\indentemparagraph{Changing sets $\varADV$ and $\varbackupplan$.}
Let $\gstar$ be the latest-deadline packet in $\ADVaftAdvMove[t,\alpha]$. Packet $\gstar$ exists, by the case condition.  
Note that $d_{\gstar} \ge t+1$, because $\ADVaftAdvMove$ cannot contain a packet with deadline $t$
(such a packet would be removed from $\varADV$ when we analyze the adversary step).
Let $\fstar$ be the earliest-deadline packet in $\backupplanAftAdvMove\setminus\planPaftArrivals$.
Observation~\ref{obs: A not empty vs B-P not empty}(a) with $\eta = t-1$ implies that $\fstar$ is well-defined.
Note that possibly $d_{\fstar} = t$, in which case $\fstar$ cannot remain in $\varbackupplan$ in the next step.

If $p\in \ADVaftAdvMove$, let $g = p$; otherwise let $g = \gstar$. (In either case we have $g\in\ADVaftAdvMove$.)
We remove $g$ from $\varADV$, i.e., we set $\ADVbefNextStep = \ADVaftAdvMove\setminus \braced{g}$.
If we have $g = \gstar \neq p$, then $p\in \backupplanAftAdvMove$, so $p$ will need to be removed from $\backupplanAftAdvMove$,
and due to removing $g$ from $\varADV$ we will also need to add it to $\varbackupplan$ to satisfy invariant~{\InvariantB}(ii).
In either case, we remove $\fstar$ from $\varbackupplan$. Thus the new backup plan will be
\begin{equation*}
\backupplanBefNextStep \;=\; \left\{
								\begin{array}{lcl}
										 \backupplanAftAdvMove \setminus \braced{\fstar} &\quad& \textrm{if\ } g=p
										 \\
										  \backupplanAftAdvMove \cup\braced{\gstar} \setminus \braced{\fstar,p} && \textrm{if\ } g = \gstar \neq p
								\end{array}
								\right. \,.
\end{equation*}
Note that in both cases it holds that $\backupplanBefNextStep = \backupplanAftAdvMove \cup\braced{g} \setminus \braced{\fstar,p}$.

For a warm-up, before proving that our invariants hold, let's verify that all packets in $\backupplanBefNextStep$ have deadlines at least $t+1$.
Indeed, as already mentioned earlier, we have $d_{\gstar}\ge t+1$, and if some $q\in \backupplanBefNextStep\setminus\braced{\gstar}$ had $d_q = t$ then,
by the definitions of $\fstar$ and $\backupplanBefNextStep$, this $q$ would also be in $\planPaftArrivals$, but this implies that
$\alpha = t$ and $\planPaftArrivals[t,t] = \braced{q}$, contradicting the case condition because $q\notin \ADVaftAdvMove$, by invariant~{\InvariantB}(ii).


\indentemparagraph{Preserving the invariants.}
We now have $p\notin \ADVbefNextStep$, so invariant~{\InvariantA} is preserved.
We next show that invariant~{\InvariantB} holds after the step.
For part {\InvariantB}(ii), we need to show that $\planPbefNextStep\cap \backupplanBefNextStep  = \planPbefNextStep\setminus \ADVbefNextStep$.
This can be verified quite easily by considering how $\varplanP \cap \varbackupplan$ and $\varplanP\setminus \varADV$ change:
If $g = p$ then both sets do not change, and if $g = \gstar\neq p$ then $p$ is replaced by $\gstar$ in both sets.

To streamline the argument for part {\InvariantB}(i) (feasibility), we divide the process of updating and analyzing $\varbackupplan$ into
two parts: first we analyze the effects of replacing $\fstar$ by $g$ in $\varbackupplan$, and then we show that we can remove $p$
from  $\varbackupplan$ (and increment $t$), preserving its feasibility.

So in the first part we consider the auxiliary set $\backupplanAftPartI = \backupplanAftAdvMove\cup\braced{g}\setminus\braced{\fstar}$.
Directly from Corollary~\ref{cor: restore feasibility of B part two}, 
we obtain that $\backupplanAftPartI$ is a feasible set of packets at time $t$.
Further, we also show that it has the following property:


\begin{claim}\label{cla: slack positive in B[t,dp] in case O2}
$\pslack(\backupplanAftPartI,\tau)\ge 1$ for any $\tau \in [t,d_p)$. 
\end{claim}

\begin{proof}
We first observe that for $\tau\in [t,d_\fstar)$, by invoking Lemma~\ref{lem:pslackRelation}(a) with $\zeta = t-1$,
$B = \backupplanAftAdvMove$, $A = \ADVaftAdvMove$, and with the current plan $\planPaftArrivals$, we can obtain that
\begin{equation}\label{eqn:caseO2_step1 - fstar}
\pslack(\backupplanAftAdvMove,\tau) \;\ge\; \pslack(\planPaftArrivals,\tau) + |\ADVaftAdvMove[t, \tau]|\,.
\end{equation}
Now we consider three cases depending on the value of $\tau$. The first case is for $\tau\in [t, \min(d_{\fstar}, d_g))$.
Then
\begin{equation*}
\pslack(\backupplanAftPartI,\tau) \;=\; \pslack(\backupplanAftAdvMove,\tau) \;\ge\; \pslack(\planPaftArrivals,\tau)\ge 1\,,
\end{equation*}
where the equality uses that $\backupplanAftPartI[t, \tau] = \backupplanAftAdvMove[t, \tau]$
(as $\tau < \min(d_{\fstar}, d_g)$), the first inequality uses~\eqref{eqn:caseO2_step1 - fstar}, and
the last inequality follows from the fact that $\tau$ is not a tight slot of $\planPaftArrivals$, as $\tau \in [t, \alpha)$.

In the second case, for slots $\tau\in [d_g, \min(d_{\fstar},d_p))$, we note that $g\in \ADVaftAdvMove[t, \tau]$, so
equation~\eqref{eqn:caseO2_step1 - fstar} and $\pslack(\planPaftArrivals,\tau)\ge 1$ imply
\begin{equation*}
\pslack(\backupplanAftPartI,\tau) \;=\; \pslack(\backupplanAftAdvMove,\tau) - 1 
								\;\ge\; \pslack(\planPaftArrivals,\tau) + |\ADVaftAdvMove[t, \tau]| - 1 
								\;\ge\; 1\,.
\end{equation*}

In the third case, we deal with $\tau\in [d_{\fstar}, d_p)$. For $\tau\in [d_{\fstar}, d_g)$,
replacing $\fstar$ by $g$ increases the value of $\pslack(\varbackupplan, \tau)$ by $1$,
implying $\pslack(\backupplanAftPartI,\tau) = \pslack(\backupplanAftAdvMove,\tau) + 1\ge 1$, as $\backupplanAftAdvMove$ is feasible.
In particular, we are done if $g = p$. It thus remains to consider the sub-case when $g = \gstar\neq p$ and $\tau\in [\max(d_{\fstar}, d_{\gstar}), d_p)$,
for which we use  Lemma~\ref{lem:pslackRelation}(b) with $\zeta = \alpha$, $B = \backupplanAftAdvMove$,
$A = \ADVaftAdvMove$, and with the current plan $\planPaftArrivals$, to get
\begin{equation*}
\pslack(\backupplanAftPartI,\tau) \;=\; \pslack(\backupplanAftAdvMove,\tau) \;\ge\; \pslack(\planPaftArrivals,\tau) \;\ge\; 1\,,
\end{equation*}
where the equality uses that $\backupplanAftPartI[t, \tau] = \backupplanAftAdvMove[t, \tau]\cup\braced{g}\setminus\braced{\fstar}$
(as $\tau\ge \max(d_{\fstar}, d_g)$) and
the last inequality follows from the fact that $\tau$ is not a tight slot of $\planPaftArrivals$, as $\tau \in [t, \alpha)$. 
\end{proof}

Next, note that we can express $\backupplanBefNextStep$ as $\backupplanBefNextStep = \backupplanAftPartI\setminus\braced{p}$ 
--- indeed, this works no matter whether $g=p$ or $g = \gstar$. 
It remains to show that removing $p$ from $\backupplanAftPartI$ and incrementing the current time to $t+1$ preserves feasibility.
These changes decrease the value of $\pslack(\varbackupplan, \tau)$ for $\tau \in [t+1, d_p)$, but 
Claim~\ref{cla: slack positive in B[t,dp] in case O2} ensures that we still have $\pslack(\backupplanBefNextStep, \tau)\ge 0$ for such $\tau$.
For $\tau \ge d_p$, the value of $\pslack(\varbackupplan, \tau)$ is not affected.
Hence, invariant~{\InvariantB}(i) holds after the step.


\indentemparagraph{Deriving inequality~(\ref{eq:oneStep}).}
Let $\DeltaPsiALG$ be the change of $\Psi$ in the algorithm's step, so
$\DeltaPsiALG = \frac{1}{\phi} (w(\backupplanBefNextStep) - w( \backupplanAftAdvMove ) )  =  \frac{1}{\phi}(- w_p - w_{\fstar} + w_g)$.
The adversary adjustment credit is $\advcredit^t_{(t,\alpha]} = w_g - \minwt(\planPaftArrivals, d^t_g) = w_g - w_\firstlightpacket$,
since we removed $g$ from $\varADV$ and $\minwt(\planPaftArrivals, d^t_g) = w_\firstlightpacket \le w_g$.
Thus, the total adversary credit for step $t$ is $\advcredit^t = \advcredit^t_t + \advcredit^t_{(t,\alpha]} = \advcredit^t_t + w_g - w_\firstlightpacket$.
(See Section~\ref{subsec: adversary stash} for the definition of $\advcredit^t$.)
Note that $w_{\fstar} \le w_\firstlightpacket$ as $\fstar\not\in \planPaftArrivals$ and
that $w_g\le w_p$ as $p$ is the heaviest packet in $ \planPaftArrivals[t,\alpha]$ and $g\in \planPaftArrivals[t,\alpha]$.
We show the packet-transmission inequality~(\ref{eq:oneStep}) by summing these changes:   
\begin{align}
\phi\razy w^t(\textsf{ALG}[t]) &- \phi\razy (\Delta^t \weights) + (\potentialBefNextStep - \potentialAftArrivals) - \advcredit^t 
\nonumber\\
&=\; \phi \razy w_p - \phi\cdot 0 + [\,\DeltaPsiALG + \DeltaPsiADV \,]
				- [\, \advcredit^t_t + w_g - w_\firstlightpacket \,]
\nonumber\\ 
&= \;\phi \razy w_p + \DeltaPsiALG - w_g + w_\firstlightpacket + [\, \DeltaPsiADV - \advcredit^t_t \,]
\nonumber\\ 
&\ge\; \phi \razy w_p 
	+  \phinegonef \razy \left[\,- {w_p} - {w_{\fstar}} + {w_g}  \,\right] - w_g + w_\firstlightpacket 
	 	+ \left[\, - \phinegtwof\razy {w_p} - \phinegonef \razy{w_\firstlightpacket} \,\right] 
\label{eqn: ordinary case O.2 third}\\
&=\; \phinegonef \razy {w_p} - \phinegonef \razy {w_{\fstar}} - \phinegtwof\razy {w_g} +  \phinegtwof\razy {w_\firstlightpacket} 
\nonumber\\
&\ge\; \phinegonef \razy {w_p}  - \phinegonef \razy {w_\firstlightpacket}  - \phinegtwof\razy {w_p} +  \phinegtwof\razy {w_\firstlightpacket}
\label{eqn: ordinary case O.2 penultimate}
\\
			&=\; \phinegthreef \razy (\,{w_p} - {w_\firstlightpacket}\,) \;\ge\; 0\,,
\label{eqn: ordinary case O.2 last}
\end{align}
where inequality~\eqref{eqn: ordinary case O.2 third} uses~\eqref{eq:advStepCost ordinary},
inequality \eqref{eqn: ordinary case O.2 penultimate} holds because $w_{\fstar} \le w_\firstlightpacket$ and $w_g\le w_p$,
and the last inequality \eqref{eqn: ordinary case O.2 last} follows from $w_p\ge w_\firstlightpacket$.
This concludes the analysis of an ordinary step.   

\subsubsection{Leap Step: a Roadmap}
\label{subsubsec: leap step: a road map} 

In the remainder of the analysis, we focus on a leap step of Algorithm~$\PlanMonotonicity(t)$, when it transmits a packet
$p \in \planPaftArrivals(\alpha,\timehorizon]$, where, as defined earlier,
$\planPaftArrivals$ represents the current optimal plan, already updated to take into account packet arrivals.    
As in this case some packet weights change, we use notation $w^t_a$ and $w^{t+1}_a$ 
(or  $w^t(a)$ and $w^{t+1}(a)$) for the weights of a packet $a$ before and after $p$ is transmitted, respectively.
For packets whose weight does not change, we may occasionally omit the superscript $t$. 
We apply the same convention to deadlines.

By Lemma~\ref{lem:leapStep}(a), the new optimal plan (starting at time $t+1$) is
\begin{equation*}
\planPbefNextStep \;=\; \planPaftArrivals \setminus \braced{p, \firstlightpacket} \cup \braced{\rho}\,,
\end{equation*}
where, as usual, $\rho=\substpacket^t(p)$ and $\firstlightpacket$ is the lightest packet in $\planPaftArrivals[t,\alpha]$.

All changes in the optimal plan resulting from transmitting $p$ are within two intervals of the plan: the initial segment $[t,\alpha]$
and the interval $(\beta,\gamma]$, where $\beta = \prevtightslot^t(d^t_p)$ and $\gamma = \nexttightslot^t(d^t_\rho)$
(possibly, $\alpha = \beta$). In segment $[t,\alpha]$, packet $\firstlightpacket$ is ousted.
In interval $(\beta,\gamma]$ (that could consist of several segments), packet $p$ gets replaced by $\rho$, and Algorithm~$\PlanMonotonicity$
increases the weight of $\rho$ to $\mu = \minwt(\planPaftArrivals,d^t_\rho)$.
Furthermore, if this is an iterated leap step (i.e., $k > 1$ in the algorithm), $\PlanMonotonicity$ also modifies weights and deadlines of some packets $h_i$. 
These changes in the optimal plan  may change some values of $\packetslack(\tau)$ and may force changes in $\varADV$ or $\varbackupplan$,
which might in turn trigger additional changes in these sets to restore invariants~{\InvariantA} and {\InvariantB}.

We divide the analysis of a leap step into two \emph{substeps}, namely (i) processing the initial segment $[t,\alpha]$ and (ii) processing interval $(\beta,\gamma]$.
These are outlined later in this section and described in detail in Sections~\ref{subsubsec: leap step: processing S1},
\ref{subsubsec: processing (beta,gamma] in a simple leap step}, and~\ref{subsubsec: processing (beta,gamma] in an iterated leap step}.
Before proceeding to that, we introduce two key inequalities bounding the potential change in these substeps, and we show how to derive from them
the packet-transmission inequality~\eqref{eq:oneStep}.



\myparagraph{Two key inequalities.}
We now define several quantities that we will use in our estimates and in the two key inequalities below: 
\begin{description}
\item{$\Delta_{[t,\alpha]} \Psi$\,:}    
	The change of the potential due to the modifications in $\varADV$ and $\varbackupplan$ implemented when processing interval $[t, \alpha]$.
	These modifications include the removal of $\firstlightpacket$ from $\planPaftArrivals[t,\alpha]$, as well as other modifications that are triggered by it.
\item{$\Delta_{(\beta,\gamma]} \Psi$\,:} 
	The change of the potential due to the modifications in $\varADV$ and $\varbackupplan$ implemented when processing interval $(\beta, \gamma]$.
	These modifications include the replacement of $p$ by $\rho$ in $(\beta,\gamma]$, the increases of weights and possible decreases of deadlines of some packets, 
	as well as other modifications that are triggered by these changes.
\item{$\advcredit^t_{(\beta,\gamma]}$\,:} The adversary adjustment credit 
for modifications of packets in $\varADV$ implemented when processing interval $(\beta,\gamma]$,
namely for removing them or replacing them by lighter packets.
\item{$\Delta^t \weights$\,:}
	As already defined earlier, this is the total adjustment of weights of packets in $\planPaftArrivals$ 
	that is done by the algorithm in step $t$
	(always strictly positive as the algorithm increases the weight of $\rho$ by a non-zero amount).
\end{description}

Note that we do not define the adversary adjustment credit for processing the initial segment $[t, \alpha]$,
as in that substep of the analysis the adversary does not need to be compensated for any change in $\varADV$.

We will derive the proof of the packet-transmission inequality~\eqref{eq:oneStep} from the two
key inequalities below, that bound the changes of the potential resulting from processing intervals $[t,\alpha]$ and $(\beta,\gamma]$,
respectively: 
\begin{align}
\Delta_{[t,\alpha]} \Psi &\ge  - \phinegonef\razy {w_\firstlightpacket^t}
	\label{eq:leap-S1changes}
\\
\Delta_{(\beta,\gamma]} \Psi - \phi\razy (\Delta^t \weights) - \advcredit^t_{(\beta,\gamma]}
 				&\ge -  {w_p^t} + {w_\rho^t}  
	\label{eq:leap-laterSegments}
\end{align}


\myparagraph{Deriving the packet-transmission inequality.}
We now derive~\eqref{eq:oneStep} from~\eqref{eq:leap-S1changes} and~\eqref{eq:leap-laterSegments}.
We start with two simple but useful bounds. First, using inequality~\eqref{eq:advStepCost}  and the definition of $\rho$, we have
\begin{equation}\label{eq:advStepCost with rho}
\DeltaPsiADV - \advcredit^t_t
			\;\ge\; -  \phinegtwof\razy w_p^t - \phinegonef\razy w_\rho^t\,.
\end{equation}
Also,  for any $\tau\ge t$, we have
\begin{equation}\label{eq:leapStepVsOmega}
\phinegtwof\razy {w_p^t} + \phinegonef\razy {w_\rho^t} 
			\;\ge\; w_\firstlightpacket^t 
			\;\ge\; \minwt(\planPaftArrivals,\tau)\,,
\end{equation}   
where the first inequality follows from the choice of $p$
in line~\ref{algLn:transmit} of the algorithm (specifically, because the  
algorithm chose $p$ over $\firstlightpacket$ and because $\firstlightpacket = \substpacket(\planPaftArrivals,\firstlightpacket)$),
and the second one follows from $w_\firstlightpacket^t = \minwt(\planPaftArrivals, t)\ge \minwt(\planPaftArrivals, \tau)$, that is
the monotonicity of $\minwt(\planPaftArrivals, \tau)$ with respect to $\tau$.

So assume now that \eqref{eq:leap-S1changes} and \eqref{eq:leap-laterSegments} hold. The total potential change in this step is
\begin{equation*}
\potentialBefNextStep - \potentialAftArrivals \;=\; 
 	\DeltaPsiADV  + \Delta_{[t,\alpha]} \Psi +  \Delta_{(\beta,\gamma]} \Psi\,,
\end{equation*}
because all changes in the optimal plan and in sets $\varADV$ and $\varbackupplan$ are accounted for (uniquely)
in the terms on the right-hand side. (The changes during the adversary step, accounted for in $\DeltaPsiADV$,
were discussed in Section~\ref{subsubsec: adversary step}. The changes associated with the algorithm's step, that contribute to 
$\Delta_{[t,\alpha]} \Psi$ and $\Delta_{(\beta,\gamma]} \Psi$, will be detailed in
Sections~\ref{subsubsec: leap step: processing S1}-\ref{subsubsec: processing (beta,gamma] in an iterated leap step}.)
The total adversary credit is the sum of the transmission credit, that was analyzed in Section~\ref{subsubsec: adversary step},
and the adjustment credits for decreasing some packet weights in $\varADV$, so
\begin{equation*}
	\advcredit^t \;=\; \advcredit^t_t + \advcredit^t_{(\beta,\gamma]}  \,,
\end{equation*}
as the changes in $[t,\alpha]$ will not involve any weight decreases in the adversary stash $\varADV$.
Combining it all together, we have
\begin{align}
\phi\razy &w^t(\textsf{ALG}[t]) - \phi\razy (\Delta^t \weights) + (\potentialBefNextStep - \potentialAftArrivals) - \advcredit^t
\nonumber 
\\
&=\; 
	\phi\razy w_p^t -  \phi\razy (\Delta^t \weights)  + 
		 [\,  \DeltaPsiADV + \Delta_{[t,\alpha]} \Psi  + \Delta_{(\beta,\gamma]} \Psi \,]	 
	 -  [\, \advcredit^t_t  + \advcredit^t_{(\beta,\gamma]} \,]
\nonumber\\
&= 
	\phi\razy {w_p^t} + [\, \DeltaPsiADV - \advcredit^t_t \,] + \Delta_{[t,\alpha]} \Psi 
			+ [\, \Delta_{(\beta,\gamma]} \Psi - \phi\razy (\Delta^t \weights) - \advcredit^t_{(\beta,\gamma]}  \,]
\nonumber\\
&\ge\; 
	\phi\razy w_p^t  + \left[\,  -\phinegtwof\razy  {w_p^t} -  \phinegonef\razy {w_\rho^t} \,\right] 
	 + \left[\, -  \phinegonef\razy {w_\firstlightpacket^t} \,\right]  
	 + \left[\, - {w_p^t} + {w_\rho^t} \,\right]
\label{eqn: roadmap ineq 3rd step}
\\
&=  \;  
 	\phinegthreef\razy w_p^t + \phinegtwof\razy w_\rho^t - \phinegonef\razy {w_\firstlightpacket^t}    
	\;\ge\; 0 \,, 
\label{eqn: roadmap ineq last step}
\end{align}
where in inequality \eqref{eqn: roadmap ineq 3rd step} we use, in this order, 
inequalities~\eqref{eq:advStepCost with rho}, \eqref{eq:leap-S1changes}, and~\eqref{eq:leap-laterSegments},
and in line~\eqref{eqn: roadmap ineq last step} the equality is obtained by
repeatedly applying the definition of $\phi$, while the last inequality follows from~\eqref{eq:leapStepVsOmega}.

\smallskip

Therefore, to complete the analysis, it is now sufficient to show 
that the two key inequalities~\eqref{eq:leap-S1changes} and~\eqref{eq:leap-laterSegments} hold,
and that invariants~{\InvariantA} and~{\InvariantB} are preserved after the step.
As mentioned above, we divide the proof into two substeps, more precisely defined as follows:
\begin{description}
	\item{\emph{Processing the initial segment $[t,\alpha]$:}} 
	In this substep, presented in Section~\ref{subsubsec: leap step: processing S1} below,
	we assume that the changes described in the adversary step
			have already been implemented. We describe changes in $\varADV$ and $\varbackupplan$
			triggered by the removal of $\firstlightpacket$ from the optimal plan and by incrementing the time to $t+1$.
			We show that inequality~\eqref{eq:leap-S1changes} holds and that
			invariants~{\InvariantA} and~{\InvariantB} are preserved after these changes. More precisely,	
			let $\planPaftInitSeg = \planPaftArrivals\setminus \braced{\firstlightpacket}$ be plan $\varplanP$
			after removing $\firstlightpacket$ and incrementing the current time to $t+1$,
			and let $\ADVaftInitSeg$ and $\backupplanAftInitSeg$ be the snapshots
			of sets $\varADV$  and $\varbackupplan$ after implementing necessary changes resulting from modifying $\varplanP$. 
			We prove that invariants~{\InvariantA} and~{\InvariantB} hold for plan $\planPaftInitSeg$
			and sets $\ADVaftInitSeg$, $\backupplanAftInitSeg$. (As indicated by superscript $t+1$, these
			sets are defined with respect to time $t+1$.)

	\item{\emph{Processing interval $(\beta,\gamma]$:}}
				In this substep, we assume that the changes described in the adversary step and in the processing of $[t,\alpha]$
				have already been implemented. 
				We describe changes in $\varADV$ and $\varbackupplan$ triggered by the replacement of $p$ by $\rho$ and,
				for an iterated leap step, by modifications of packets $h_i$.
				We prove that inequality~\eqref{eq:leap-laterSegments} holds and that invariants~{\InvariantA} and~{\InvariantB}
				are preserved after these changes.  
				In other words, we show that both invariants hold after the transmission event, i.e., for plan $\planPbefNextStep$ and sets 
				$\ADVbefNextStep$ and $\backupplanBefNextStep$.	
								
				The proof will be divided into two cases, depending on whether it is a simple
				or an iterated leap step (see Sections~\ref{subsubsec: processing (beta,gamma] in a simple leap step} 
				and~\ref{subsubsec: processing (beta,gamma] in an iterated leap step}, respectively).
				The proof for an iterated leap step is further divided into a number of smaller stages.
				Each such stage consists of modifications of $\varplanP$, $\varbackupplan$ and
				$\varADV$ that involve packets with deadlines in some time interval $(\zeta, \eta] \subseteq (\beta,\gamma]$,
				where  $\zeta$ and $\eta$ are tight slots of $\varplanP$.

\end{description}
Throughout the process, while we gradually make changes in sets $\varplanP$, $\varbackupplan$ and
$\varADV$, we will ensure that the following two properties hold:
				
\begin{description}

\item{\emph{Locality of changes:}} For a substep or stage in which an interval $(\zeta, \eta]$ of slots is processed,
let $\dot{\planP}$ be the snapshot of $\varplanP$ just before processing interval $(\zeta, \eta]$
and let $\ddot{\planP}$ be the optimal plan after this subset or stage; $\zeta$ and $\eta$ will always be tight slots of $\dot{\planP}$.
We will maintain the property that the changes of $\varplanP$ are confined to the set $\varplanP(\zeta, \eta]$ of packets,
or more precisely,
\begin{equation}\label{eqn: roadmap locality}
\dot{\planP} \setminus \dot{\planP}(\zeta, \eta] = \ddot{\planP} \setminus \ddot{\planP}(\zeta, \eta]\,.
\end{equation}
We also do not change the deadlines and weights of packets in $\dot{\planP} \setminus \dot{\planP}(\zeta, \eta]$.
Equation~\ref{eqn: roadmap locality} implies that $|\dot{\planP}(\zeta, \eta]| = |\ddot{\planP}(\zeta, \eta]|$, since all plans are full.
Therefore, $\eta$ and $\zeta$ are tight slots of $\ddot{\planP}$ and all values of  $\pslack(\varplanP, \tau)$ remain unaffected for $\tau\notin (\zeta, \eta)$.
We remark that substeps and stages in the analysis of each case will have pairwise disjoint intervals $(\zeta, \eta]$.

\item{\emph{Monotonicity of $\varbackupplan \setminus \varplanP$:}}
No packets will be added to $\varbackupplan(\alpha,\timehorizon] \setminus \varplanP$ throughout the analysis.
(In fact, the only packet that may be added to $\varbackupplan$ is $\firstlightpacket$. This could happen
when segment $[t,\alpha]$ is processed.) 

\end{description}

\smallskip

We remark that some ``intermediate snapshots'' of $\varplanP$ in the analysis of a leap step, such as $\planPaftInitSeg$, mentioned above, 
are only auxiliary concepts used to capture the changes of $\varplanP$ after each substep or stage of the analysis.
These snapshots satisfy the definition of plans and are full, but they may not be optimal for the current set of pending packets.
As we do not rely on these intermediate plans being optimal, this presents no issue for the analysis.

\subsubsection{Leap Step: Processing the Initial Segment}
\label{subsubsec: leap step: processing S1} 

We now focus on the first substep of the analysis of the leap step, namely on
processing segment $[t,\alpha]$ of the optimal plan $\varplanP$.
As explained earlier in Section~\ref{subsubsec: leap step: a road map},  
we assume that the changes in sets $\varADV$ and $\varbackupplan$ described in the adversary step
(Section~\ref{subsubsec: adversary step}) have already been implemented. 
Recall that $\ADVaftAdvMove$ and $\backupplanAftAdvMove$ represent the snapshots of sets
$\varADV$ and $\varbackupplan$, respectively, resulting from the adversary step.
As the adversary step does not affect the optimal plan, the current plan is still $\planPaftArrivals$.

In this substep, we advance the time to $t+1$ and remove $\firstlightpacket$ from $\varplanP$;
that is, plan $\planPaftArrivals$ is changed to $\planPaftInitSeg = \planPaftArrivals\setminus\braced{\firstlightpacket}$.
$\planPaftInitSeg$ is a plan for the set of packets pending at time $t+1$, that consists of
the packets that were pending at time $t$, but excluding $\firstlightpacket$ and packets that expired at time $t$.
Since $d_\firstlightpacket\le \alpha = \nexttightslot(\planPaftArrivals,t)$,
the values of $\pslack(\varplanP, \tau)$ may only change for $\tau < \alpha$,
i.e., $\pslack(\planPaftInitSeg, \tau) = \pslack(\planPaftArrivals, \tau)$ for $\tau \ge \alpha$.
We may also need to make changes in sets $\varADV$ and $\varbackupplan$, in order to preserve the invariants.
We refer to this substep as ``processing the initial segment'', although some modifications of $\varbackupplan$
may involve pending packets with deadlines after $\alpha$. We divide this substep into two parts: 
\begin{description}
	\setlength{\itemsep}{0in}
	\item{(i)} first we show that if $\firstlightpacket \in \ADVaftAdvMove$
		then we can remove it from $\varADV$, preserving the invariants and not decreasing the potential, 
		and then 
	\item{(ii)} assuming that $\firstlightpacket \notin \varADV$, we analyze the effect of 
removing $\firstlightpacket$ from $\varplanP$ and incrementing the time.
\end{description}
Let $\ADVaftPartI$ denote the intermediate adversary stash, after 
the change in (i) and before the change in (ii), where we let 
$\ADVaftPartI = \ADVaftAdvMove$ if $\firstlightpacket\notin \ADVaftAdvMove$,
that is when change~(i) does not apply. We adopt the same notation for set $\varbackupplan$.
Note that $\varplanP$ does not change in (i), i.e., $\planPaftPartI = \planPaftArrivals$.


\indentemparagraph{(i) Dealing with the case $\firstlightpacket \in \ADVaftAdvMove$.}
We now consider the case when $\firstlightpacket\in\ADVaftAdvMove$. Since $\firstlightpacket\notin \planPaftInitSeg$,
we need to remove $\firstlightpacket$ from $\varADV$ to preserve invariant~{\InvariantA}.
In order to satisfy invariant~{\InvariantB}, we are then forced to add $\firstlightpacket$ to $\varbackupplan$,
which in turn requires us to remove some packet from $\varbackupplan\setminus\varplanP$ to restore feasibility of $\varbackupplan$.
To identify this packet we use Lemma~\ref{lem: restore feasibility of B} (with $g = \firstlightpacket$),
which shows that there is a packet $f_\firstlightpacket \in \backupplanAftAdvMove\setminus\planPaftArrivals$ for which the
set $\backupplanAftAdvMove \setminus \braced{f_\firstlightpacket} \cup \braced{\firstlightpacket}$
is feasible and satisfies $w(\backupplanAftAdvMove \setminus \braced{f_\firstlightpacket} \cup \braced{\firstlightpacket})\ge w(\backupplanAftAdvMove)$.

We thus set $\ADVaftPartI = \ADVaftAdvMove \setminus \braced{\firstlightpacket}$
and $\backupplanAftPartI = \backupplanAftAdvMove \setminus \braced{f_\firstlightpacket}\cup \braced{\firstlightpacket}$,
and this preserves both invariants~{\InvariantA} and~{\InvariantB}.
Note that the adversary does not get any adjustment credit for the removal of $\firstlightpacket$,
as $w_\firstlightpacket=\minwt(\planPaftArrivals, d_\firstlightpacket)$, by the definition of $\firstlightpacket$.
Further, since the contribution of this change of $\varbackupplan$ to $\Delta_{[t,\alpha]} \Psi$ is positive, when estimating $\Delta_{[t,\alpha]} \Psi$
we will account for this contribution without an explicit reference.


\indentemparagraph{(ii) Removing $\firstlightpacket$ from $\varplanP$ and incrementing the time.}
By (i) we have $\firstlightpacket\notin \ADVaftPartI$, and thus also $\firstlightpacket\in \backupplanAftPartI$, by invariant~{\InvariantB}(ii).
Now, the plan is changed from $\planPaftArrivals$ to $\planPaftInitSeg = \planPaftArrivals\setminus\braced{\firstlightpacket}$.
As we do not further change $\varADV$, i.e., $\ADVaftInitSeg = \ADVaftPartI$, invariant~{\InvariantA} holds. 
(Note that after the adversary step, analyzed in Section~\ref{subsubsec: adversary step},
$\ADVaftPartI$ cannot contain a packet that expires at time $t$.)  However, we need to preserve invariant~{\InvariantB}.
Denoting by $f_1$ the earliest-deadline packet in $\backupplanAftPartI\setminus\planPaftArrivals$, we consider two cases.

\smallskip
\noindent
\mycase{L.InSeg.1} $d_{f_1}\ge d_\firstlightpacket$.
We let $\backupplanAftInitSeg = \backupplanAftPartI\setminus \braced{\firstlightpacket}$. Inequality~\eqref{eq:leap-S1changes} holds as
$\Delta_{[t,\alpha]} \Psi \ge  - \phinegonef\razy {w_\firstlightpacket^t}$.
(This inequality could be strict, due to the contribution from replacing $f_\firstlightpacket$ by $\firstlightpacket$, as discussed in part~(i) above.)
Invariant~{\InvariantB}(ii) holds because $\firstlightpacket\notin \planPaftInitSeg$ and $\varADV$ does not change in~(ii).
To show that {\InvariantB}(i), i.e., the feasibility of $\varbackupplan$, is preserved, 
note that $\packetslack(\backupplanAftInitSeg, \tau) = \packetslack(\backupplanAftPartI, \tau)\ge 0$ for $\tau\ge d_\firstlightpacket$.
Furthermore, $\packetslack(\backupplanAftPartI, \tau)\ge \packetslack(\planPaftArrivals, \tau)\ge 1$ 
for $\tau\in [t,d_\firstlightpacket)$, where the first inequality follows from the definition of $f_1$,
$d_{f_1}\ge d_\firstlightpacket > \tau$, and Lemma~\ref{lem:pslackRelation}(a) with $\zeta = t-1$,
and the second inequality follows from the fact that $\tau$ is before the first tight slot $\alpha$ of $\planPaftArrivals$.
Hence, for $\tau\in [t+1,d_\firstlightpacket)$ we have $\packetslack(\backupplanAftInitSeg, \tau) = \packetslack(\backupplanAftPartI, \tau) - 1\ge 0$,
concluding the proof that invariant~{\InvariantB} is maintained.

\smallskip
\noindent
\mycase{L.InSeg.2} $d_{f_1} < d_\firstlightpacket$. In this case, we 
let $\backupplanAftInitSeg = \backupplanAftPartI\setminus \braced{f_1}$.
Then $\Delta_{[t,\alpha]} \Psi \ge - \phinegonef\razy {w_{f_1}^t}\ge - \phinegonef\razy {w_\firstlightpacket^t}$,
where the second inequality follows from $\firstlightpacket\in \planPaftArrivals[t,\alpha]$ and $f_1\notin \planPaftArrivals$.
Thus inequality~\eqref{eq:leap-S1changes} holds. Regarding invariant {\InvariantB},
its part~(ii) follows from the changes of sets $\varADV$ and $\varbackupplan$. 
To show invariant {\InvariantB}(i),
note that $\packetslack(\backupplanAftInitSeg, \tau) = \packetslack(\backupplanAftPartI, \tau)\ge 0$ for $\tau\ge d_{f_1}$.
For $\tau\in [t,d_{f_1})$, we first get $\packetslack(\backupplanAftPartI, \tau)\ge \packetslack(\planPaftArrivals, \tau)\ge 1$ 
by Lemma~\ref{lem:pslackRelation}(a) with $\zeta = t-1$, using the fact that $\tau$ is before the first tight slot $\alpha$ of $\planPaftArrivals$,
as $\tau < d_{f_1} < d_\firstlightpacket \le \alpha$ (here, we use the case condition in the second inequality).
Then, for $\tau\in [t+1,d_{f_1})$, we have $\packetslack(\backupplanAftInitSeg, \tau)\ge \packetslack(\backupplanAftPartI, \tau) - 1\ge 0$.
Thus, invariant~{\InvariantB} is preserved after processing the initial segment $[t,\alpha]$.

\subsubsection{Processing $(\beta,\gamma]$ in a Simple Leap Step}
\label{subsubsec: processing (beta,gamma] in a simple leap step} 

We now address the second substep of analyzing a leap step, namely processing the interval $(\beta,\gamma]$, focusing on
the case of a simple leap step. (The case of an iterated leap step will be handled separately,
in Section~\ref{subsubsec: processing (beta,gamma] in an iterated leap step}.) Recall that
the algorithm replaces $p$ by $\rho$ in $\varplanP$, and that in a simple leap step we have $k=0$,
that is $(\beta,\gamma]$ is a segment of $\planPaftArrivals$ and $d_p,d_\rho \in (\beta,\gamma]$ (see Section~\ref{sec: online algorithm}).

We assume that the changes in  $\varplanP$ and sets $\varADV$ and $\varbackupplan$ described in Sections~\ref{subsubsec: adversary step}
and~\ref{subsubsec: leap step: processing S1} have already been implemented. 
(We note that these changes might have involved some packets considered in this section;  for example, it is possible that
$\rho$ is the same as packet $f_j$ removed from $\varbackupplan$ when processing the adversary step in Section~\ref{subsubsec: adversary step}.)
In particular, the current time is already advanced to $t+1$, and
we assume that both invariants~{\InvariantA} and~{\InvariantB} hold for sets $\ADVaftInitSeg$,
$\backupplanAftInitSeg$, and $\planPaftInitSeg$, all three defined with respect to time $t+1$.
It now remains to describe how we process the interval $(\beta,\gamma]$. In this substep, the plan changes from 
$\planPaftInitSeg$ to $\planPbefNextStep = \planPaftInitSeg \setminus \braced{p}\cup \braced{\rho}$ and the weight of $\rho$ is increased.
The replacement of $p$ by $\rho$ may trigger modifications in $\varADV$ and $\varbackupplan$, in order to restore the invariants. 

$\planPaftInitSeg$ only differs from $\planPaftArrivals$ in that in $\planPaftInitSeg$ the current time is
incremented to $t+1$, packet $\firstlightpacket$ is removed, and the interval $[t+1,\alpha]$ of $\planPaftArrivals$ may be split into multiple segments.
Slot $\alpha = \nexttightslot(\planPaftArrivals,t)$ is also tight in $\planPaftInitSeg$, and there were no changes in $\varplanP$ after slot $\alpha$;
that is $\planPaftArrivals(\alpha,\timehorizon] = \planPaftInitSeg(\alpha,\timehorizon]$. 
Therefore $d_\rho$ and $d_p$ are still in the same segment $(\beta, \gamma]$ of $\planPaftInitSeg$,
so $\prevtightslot(\planPaftInitSeg, d_\rho) = \prevtightslot(\planPaftInitSeg, d_p) = \beta$ and
$\nexttightslot(\planPaftInitSeg, d_\rho) = \nexttightslot(\planPaftInitSeg, d_p) = \gamma$. 

\smallskip

In order to build some intuition behind our modifications of $\varADV$ and $\varbackupplan$, before we give a formal argument,
let us consider a few illustrative scenarios, focusing on maintaining invariant~{\InvariantB}.
The simplest case is when $p\notin \backupplanAftInitSeg$ and $\rho \in \backupplanAftInitSeg$,
as then we can simply let $\backupplanBefNextStep = \backupplanAftInitSeg$.

Suppose next that $p\in \backupplanAftInitSeg$ and $\rho \notin \backupplanAftInitSeg$.
In this case we are forced to remove $p$ from $\varbackupplan$, and to add $\rho$ to $\varbackupplan$.
If $d_\rho\ge d_p$, then we don't need to make any more changes,
that is we can let $\backupplanBefNextStep = \backupplanAftInitSeg \setminus\braced{p}\cup \braced{\rho}$, and
$\backupplanBefNextStep$ will be feasible.
In fact, even when $d_\rho < d_p$, using the properties of $\varbackupplan$
established in Section~\ref{subsec: backup plan and the potential function},
we can take $\backupplanBefNextStep = \backupplanAftInitSeg \setminus\braced{p}\cup \braced{\rho}$ as long as $\ADVaftInitSeg(\beta, \gamma]= \emptyset$.

In the remaining situations, some additional
modifications of $\varbackupplan$ are needed. In essence, what will happen is this: If $p\in \backupplanAftInitSeg$, we will
replace it in $\varbackupplan$ by the latest-deadline packet $\gstar$ in $\ADVaftInitSeg[t+1, \gamma]$, that will
be removed from $\varADV$.
Similarly, if $\rho \notin \backupplanAftInitSeg$ then $\rho$ will replace in $\varbackupplan$
the earliest-deadline packet $\fstar$ in $\backupplanAftInitSeg(\beta, T]\setminus\planPaftInitSeg$.
Using the results from Section~\ref{subsec: backup plan and the potential function}, we show that
such packets exist and that these changes will indeed preserve the feasibility of $\varbackupplan$.

\smallskip

We now give a formal argument. The proof is organized into
two subcases, depending on whether some changes in $\varADV$ are needed or not.


\medskip
\noindent
\mycase{L.$(\beta,\gamma]$.S.1}
$\rho\notin \backupplanAftInitSeg$ and $\ADVaftInitSeg(\beta, \gamma]= \emptyset$.
(In particular, the case condition implies that $p\not\in \ADVaftInitSeg$, as $p\in \planPaftInitSeg(\beta, \gamma]$,
and that $\planPaftInitSeg(\beta, \gamma]\subseteq \backupplanAftInitSeg$; it is possible though that this
inclusion is strict.) Then we do not further change set $\varADV$, i.e., $\ADVbefNextStep = \ADVaftInitSeg$. 
Since $p\not\in \ADVbefNextStep$ and deadlines of all packets remain unchanged, invariant~{\InvariantA} holds.

Regarding $\varbackupplan$, note that $\rho\notin \ADVaftInitSeg$ by invariant~{\InvariantA}, and 
$p\in \backupplanAftInitSeg$, by invariant~{\InvariantB}(ii) and $p\notin \ADVaftInitSeg$.
Thus, we let $\backupplanBefNextStep = \backupplanAftInitSeg \setminus\braced{p}\cup \braced{\rho}$,
which preserves invariant~{\InvariantB}(ii).
Next, we need to show the feasibility of $\backupplanBefNextStep$, i.e., invariant~{\InvariantB}(i).
If $d_\rho \ge d_p$, then replacing $p$ by $\rho$ in $\varbackupplan$
preserves feasibility and thus, as $\backupplanAftInitSeg$ is feasible,
the new backup plan $\backupplanBefNextStep$ is feasible as well.
Otherwise, $d_\rho < d_p$. For $\tau\notin [d_\rho, d_p)$, it holds that
$\packetslack(\backupplanBefNextStep, \tau) = \packetslack(\backupplanAftInitSeg, \tau)\ge 0$,
by the feasibility of $\backupplanAftInitSeg$. Next, consider some $\tau\in [d_\rho, d_p) \subseteq (\beta, \gamma)$.
By the case condition, $\ADVaftInitSeg(\beta, \gamma] = \emptyset$,
thus using Lemma~\ref{lem:pslackRelation}(b) with $\zeta = \gamma$ we get that 
$\packetslack(\backupplanAftInitSeg, \tau)\ge \packetslack(\planPaftInitSeg, \tau)\ge 1$,
where the second inequality holds because $\tau\in (\beta, \gamma)$ is not a tight slot in $\planPaftInitSeg$.
Hence, $\packetslack(\backupplanBefNextStep, \tau) = \packetslack(\backupplanAftInitSeg, \tau) - 1\ge 0$
for $\tau\in [d_\rho, d_p)$, and we can conclude that invariant~{\InvariantB} holds after the step.

Finally, we prove inequality~(\ref{eq:leap-laterSegments}). According to the algorithm, we have $w^{t+1}_\rho = \minwt(\planPaftArrivals,d^t_\rho)$.
Then the terms on the left-hand side of~(\ref{eq:leap-laterSegments}) are
\begin{equation*} 
	\Delta_{(\beta,\gamma]}\Psi = \phinegonef \razy ( - w^t_p + w^{t+1}_\rho) \;,\quad
 	\Delta^t \weights = w^{t+1}_\rho - w^t_\rho \;, \quad\textrm{and}\quad
	\advcredit^t_{(\beta,\gamma]} = 0\;. 
\end{equation*}
Plugging these in, we obtain
\begin{align}
\Delta_{(\beta,\gamma]} \Psi - \phi\razy (\Delta^t \weights) - \advcredit^t_{(\beta,\gamma]}
		\;&=\; \phinegonef \razy ( - w^t_p + w^{t+1}_\rho) - \phi\razy ( w^{t+1}_\rho - w^t_\rho)  - 0
		\nonumber\\
		&=\; - \phinegonef \razy w^t_p - w^{t+1}_\rho + \phi\razy w^t_\rho
		\nonumber\\
		&\ge\; - \phinegonef\razy w^t_p - ( \phinegtwof\razy w^t_p + \phinegonef\razy w^t_\rho ) + \phi\razy w^t_\rho
		\label{eqn:processing beta-gamma S.1 step 3}
		\\
		&=\; - w^t_p + w^t_\rho\,, \nonumber
\end{align}
where inequality~\eqref{eqn:processing beta-gamma S.1 step 3} follows from
$w^{t+1}_\rho = \minwt(\planPaftArrivals,d^t_\rho) \le \phinegtwof w^t_p + \phinegonef w^t_\rho$,
by \eqref{eq:leapStepVsOmega}. This shows that~\eqref{eq:leap-laterSegments} holds.


\medskip
\noindent
\mycase{L.$(\beta,\gamma]$.S.2}
$\rho\in \backupplanAftInitSeg$ or $\ADVaftInitSeg(\beta, \gamma] \neq \emptyset$.
(Thus, there is a packet in $(\backupplanAftInitSeg\setminus\planPaftInitSeg)\cup \ADVaftInitSeg$
which has deadline in $(\beta, \gamma]$.)

\smallskip


\indentemparagraph{Changing sets $\varADV$ and $\varbackupplan$.} 
Let $\gstar$ be the latest-deadline packet in $\ADVaftInitSeg[t+1, \gamma]$.  
Packet $\gstar$ is well-defined: this is trivially true if the second condition of the case is
satisfied, and otherwise we have $\rho\in \backupplanAftInitSeg\setminus\planPaftInitSeg$,
in which case Observation~\ref{obs: A not empty vs B-P not empty}(b) with $\eta = \gamma$ implies that
$\ADVaftInitSeg[t+1, \gamma] \neq \emptyset$.
(We remark that $d_{\gstar}\le \beta$ if $\ADVaftInitSeg(\beta, \gamma] = \emptyset$.) 
Similarly, let $\fstar$ be the earliest-deadline packet in $\backupplanAftInitSeg(\beta, T]\setminus\planPaftInitSeg$.
We show that packet $\fstar$ is well-defined. This is trivially true if $\rho\in \backupplanAftInitSeg$, 
while otherwise it holds that $\ADVaftInitSeg(\beta, \gamma] \neq \emptyset$, and then Observation~\ref{obs: A not empty vs B-P not empty}(a)
with $\eta = \beta$ gives us that $\backupplanAftInitSeg(\beta, T]\setminus\planPaftInitSeg \neq \emptyset$.
(It is possible that $d_\fstar > \gamma$ if $\rho\notin \backupplanAftInitSeg$.)

We now define packets $g$ and $f$, and we modify $\varbackupplan$ and $\varADV$ as follows.
If $p\in \ADVaftInitSeg$, let $g = p$; otherwise let $g = \gstar$.
If $\rho\in \backupplanAftInitSeg$, let $f = \rho$; otherwise let $f = \fstar$.
We remove $g$ from $\varADV$, i.e., we set $\ADVbefNextStep = \ADVaftInitSeg\setminus\braced{g}$.
To preserve invariant~{\InvariantB}(ii), we must add $\rho$ to $\varbackupplan$ (if it's not there already) because $\rho\in \planPbefNextStep\setminus\ADVbefNextStep$. Hence, we let
$\backupplanBefNextStep = \backupplanAftInitSeg \cup\braced{g} \setminus \braced{f,p} \cup \braced{\rho}$;
this definition applies no matter whether $g=p$ or $g=\gstar$ and whether $f=\rho$ or $f=\fstar$.
(In particular, if $g=p$ and $f = \rho$ then $\varbackupplan$ will not change.)

\smallskip


\indentemparagraph{Deriving inequality~(\ref{eq:leap-laterSegments}).}
We first claim that $w_f^t\le w_\rho^t$. As $f\in\braced{\fstar,\rho}$, it is enough to prove that $w_\fstar^t\le w_\rho^t$. 
We have $d_\fstar > \beta \ge \alpha$ and, by the monotonicity property of $\varbackupplan \setminus \varplanP$
(see Section~\ref{subsubsec: leap step: a road map}), no packets were added to $\varbackupplan(\alpha,\timehorizon] \setminus \varplanP$
when processing segment $[t,\alpha]$, implying that $\fstar\in \backupplanAftArrivals(\beta,T]\setminus\planPaftArrivals$. 
Then the inequality $w_\fstar^t\le w_\rho^t$ follows from the choice of
$\rho = \substpacket^t(p)$ as the heaviest pending packet not in $\planPaftArrivals$ with deadline after slot $\beta$.

We also claim that $w_g^t\le w_p^t$. As $g\in\braced{\gstar,p}$, it is enough to prove that $w_\gstar^t\le w_p^t$.
Indeed, as we do not add any packet to $\varADV$ when processing segment $[t,\alpha]$,
it holds that $\gstar\in \ADVaftArrivals$ and thus, $\gstar\in \planPaftArrivals$ by invariant~{\InvariantA}.
The choice of $\gstar$ implies that  $\prevtightslot(\planPaftArrivals,d^t_{\gstar}) \le \prevtightslot(\planPaftArrivals,d^t_p)$, which
in turn implies that $w(\substpacket(\planPaftArrivals,\gstar)) \ge w_\rho^t$. Therefore, if we had $w_\gstar^t > w_p^t$ then
the algorithm would transmit $\gstar$ instead of $p$. 
(Strictly speaking, here were rely on the ``locality of changes'' invariant from Section~\ref{subsubsec: leap step: a road map}, namely that $\varplanP(\alpha,T]$ does not change when processing segment $[t,\alpha]$.)

According to the algorithm, $w^{t+1}_\rho = \minwt(\planPaftArrivals, d^t_\rho)$. 
Then the terms on the left-hand side of~(\ref{eq:leap-laterSegments}) are
\begin{align}
\Delta_{(\beta,\gamma]}\Psi \;&=\; \phinegonef \razy ( w_g^t - w^t_p - w_f^t + w^{t+1}_\rho) \,,
	\nonumber
\\
\Delta^t \weights \;&=\; w^{t+1}_\rho - w^t_\rho \,, \; \textrm{and}
			\nonumber
\\
\advcredit^t_{(\beta,\gamma]} \;&=\; w^t_g - \minwt(\planPaftArrivals, d^t_g) 
 \nonumber
\\
	 \;&\le\; w^t_g - \minwt(\planPaftArrivals, d^t_\rho) \;=\; w^t_g - w^{t+1}_\rho,
	\label{eqn: simple leap beta gamma 1}
\end{align}
where the inequality in line~\eqref{eqn: simple leap beta gamma 1} follows 
from $\nexttightslot(\planPaftArrivals, d^t_g) \le \gamma = \nexttightslot(\planPaftArrivals, d^t_\rho)$.
(For simplicity, the estimate for $\advcredit^t_{(\beta,\gamma]}$ is given above
with respect to $\planPaftArrivals$, as described in Section~\ref{subsec: adversary stash}.
To be completely formal, in this substep this credit would be with respect to the intermediate plan $\planPaftInitSeg$.
However, using $\planPaftArrivals$ instead of $\planPaftInitSeg$ cannot make the adversary credit smaller 
since $\minwt(\planPaftInitSeg, d^t_g)\ge \minwt(\planPaftArrivals, d^t_g)$, which holds as $\planPaftInitSeg = \planPaftArrivals\setminus\braced{\firstlightpacket}$
and as any tight slot of $\planPaftArrivals$ is also tight in $\planPaftInitSeg$.) Thus the changes described above give us that
\begin{align}
\Delta_{(\beta,\gamma]} \Psi &- \phi\razy (\Delta^t \weights) - \advcredit^t_{(\beta,\gamma]}
		\nonumber \\
	\;&\ge\; \phinegonef \razy ( w_g^t - w^t_p - w_f^t + w^{t+1}_\rho) - \phi\razy ( w^{t+1}_\rho - w^t_\rho) - ( w^t_g - w^{t+1}_\rho )
		\nonumber \\ 
	&=\; - (\, \phinegtwof\razy {w^t_g} + \phinegonef\razy {w^t_p} \,)
	 			+ (\, - \phinegonef\razy {w^t_f} + \phi\razy {w^t_\rho} \,)
		\nonumber \\
	&\ge\; - {w^t_p} + {w^t_\rho}\,,
	\label{eqn: simple leap beta gamma 2}
\end{align}
using bounds $w_f^t\le w_\rho^t$ and $w_g^t\le w_p^t$ in step~\eqref{eqn: simple leap beta gamma 2}.
This shows \eqref{eq:leap-laterSegments}.


\indentemparagraph{Preserving the invariants.}
Since after the changes it holds that $p\notin\ADVbefNextStep$, invariant~{\InvariantA} is maintained.
We now show that invariant~{\InvariantB} holds. That condition~{\InvariantB}(ii) is true follows
directly from the way we update $\varADV$ and $\varbackupplan$. Therefore, we will focus on
condition~{\InvariantB}(i), that is the feasibility of $\varbackupplan$.
If we have both $f=\rho$ and $g=p$ then $\backupplanBefNextStep = \backupplanAftInitSeg$, and
the feasibility of $\backupplanBefNextStep$ is trivial. Hence, from now on we assume that
$f = \fstar\neq\rho$ or  $g = \gstar \neq p$ (or both). 

\smallskip

We split the changes in $\varbackupplan$ into two parts (see Figure~\ref{fig: 5-5-5 two parts}):
\begin{enumerate}[nosep,label=(\roman*)]
\item First, we show that invariant~{\InvariantB} holds for the intermediate
backup plan $\backupplanLeapAftPartI = \backupplanAftInitSeg \cup\braced{g} \setminus \braced{f}$,
which we consider with respect to adversary stash $\ADVleapAftPartI = \ADVbefNextStep = \ADVaftInitSeg \setminus\braced{g}$
and plan $\planPleapAftPartI = \planPaftInitSeg$.
In other words, we first implement the removal of $g$ from $\varADV$
together with the corresponding changes in $\varbackupplan\setminus\varplanP$, but postpone the change in the plan.
\item Second, we replace $p$ by $\rho$ in $\varplanP$, so $\planPbefNextStep = \planPleapAftPartI \setminus\braced{p}\cup \braced{\rho}$
and $\backupplanBefNextStep = \backupplanLeapAftPartI \setminus \braced{p} \cup \braced{\rho}$,
and we show that $\backupplanBefNextStep$ is feasible as well.
\end{enumerate}


\begin{figure}[!ht]
\centering
\medskip
\includegraphics[width=5.5in]{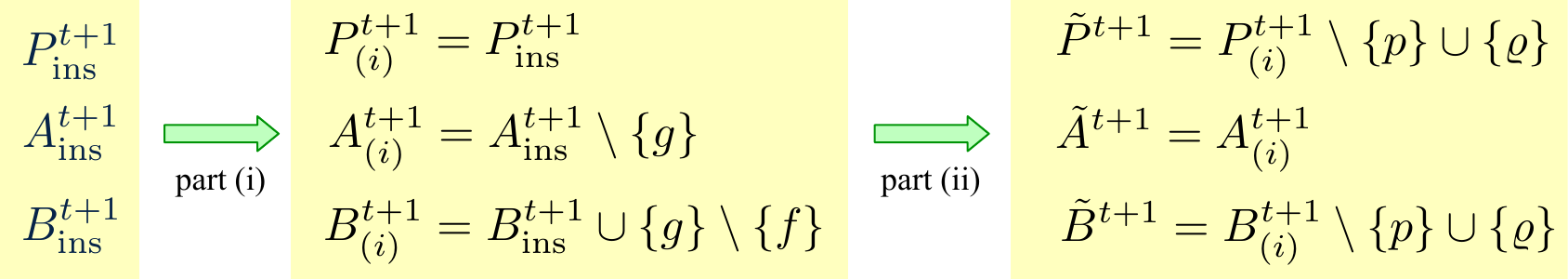}
\caption{Changing $\varbackupplan$ in two parts.
}
\label{fig: 5-5-5 two parts}
\end{figure}



\begin{claim}\label{clm:caseL1B-feasibilityOfC}
Assume that $f=\fstar\neq\rho$ or $g=\gstar\neq p$.
Set $\backupplanLeapAftPartI = \backupplanAftInitSeg \cup\braced{g} \setminus \braced{f}$ is a feasible
set of packets with respect to time $t+1$.
\end{claim}

\begin{proof} 
The claim clearly holds if $d_f\le d_g$, so for the rest of the proof we assume that $d_g < d_f$.
Recall that, according to the assumptions of our Case~{L.$(\beta,\gamma]$.S.2}, we have that either
$\rho\in \backupplanAftInitSeg$ or $\ADVaftInitSeg(\beta, \gamma] \neq \emptyset$ holds.

It is sufficient to show that $\pslack(\backupplanAftInitSeg, \tau)\ge 1$ for any $\tau\in [d_g, d_f)$,
as this will imply that $d_f \le \nexttightslot(\backupplanAftInitSeg, d_g)$, which in turn implies
the feasibility of $\backupplanLeapAftPartI$. We consider two cases.

First, assume that $\rho\notin \backupplanAftInitSeg$. This implies $f = \fstar$ and $\ADVaftInitSeg(\beta, \gamma] \neq \emptyset$,
which means that $d_g\in (\beta, \gamma]$, no matter whether $g = p$ or $g = \gstar$.
Then, for $\tau\in [d_g, d_f)$, Lemma~\ref{lem:pslackRelation}(a) with $\zeta = \beta$ gives us
\begin{equation*} 
\pslack(\backupplanAftInitSeg, \tau) \;\ge\; \pslack(\planPaftInitSeg, \tau) + |\ADVaftInitSeg(\beta,\tau]|
										\;\ge\; 1,
\end{equation*} 
since $g\in \ADVaftInitSeg(\beta,\tau]$ and $\pslack(\planPaftInitSeg, \tau)\ge 0$.

In the second case we have $\rho\in \backupplanAftInitSeg\setminus\planPaftInitSeg$, so $f = \rho$.
We then have $g = \gstar$ (as we assume that $f \neq\rho$ or $g \neq p$).
For $\tau\in [d_g, d_f)$, we use Lemma~\ref{lem:pslackRelation}(b) with $\zeta = \gamma$ to obtain
\begin{equation*}
\pslack(\backupplanAftInitSeg, \tau) \;\ge\; \pslack(\planPaftInitSeg, \tau) + |\backupplanAftInitSeg(\tau,\gamma]\setminus\planPaftInitSeg| 
						\;\ge\; 1,
\end{equation*}
where we use $f = \rho\in \backupplanAftInitSeg(\tau,\gamma]\setminus\planPaftInitSeg$ in the second inequality.
\end{proof}

From the definitions of $\planPleapAftPartI$, $\ADVleapAftPartI$, and $\backupplanLeapAftPartI$, and using
Claim~\ref{clm:caseL1B-feasibilityOfC}, invariants~{\InvariantA} and~{\InvariantB} hold for these three
intermediate sets.

Before replacing $p$ by $\rho$ in the optimal plan and in $\varbackupplan$, we prove the following claim:


\begin{claim}\label{clm:caseL1B-slackInC}
Assume that $f=\fstar\neq\rho$ or $g=\gstar\neq p$.
If $d_\rho < d_p$, then $\pslack(\backupplanLeapAftPartI, \tau)\ge 1$ for any $\tau\in [d_\rho, d_p)$.
\end{claim}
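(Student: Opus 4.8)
The plan is to reduce the claim to the feasibility of $\backupplan\setindexB$ (which holds with respect to time $t+1$ by invariant~{\InvariantB}, as established in Section~\ref{subsubsec: leap step: processing S1}) together with the two estimates of Lemma~\ref{lem:pslackRelation}. Since this is a simple leap step, $d_\rho$ and $d_p$ lie in a common segment $(\delta,\gamma]$ of $P\setindexB$, so every $\tau\in[d_\rho,d_p)$ satisfies $\tau\in(\delta,\gamma)$ and hence is not tight in $P\setindexB$; thus $\pslack(P\setindexB,\tau)\ge 1$. Recalling that $\backupplan\setindexC=\backupplan\setindexB\cup\braced{g}\setminus\braced{f}$, removing $f$ and inserting $g$ changes $\pslack(\cdot,\tau)$ by $+1$ when $d_f\le\tau<d_g$, by $-1$ when $d_g\le\tau<d_f$, and by $0$ otherwise. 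So it suffices to show $\pslack(\backupplan\setindexB,\tau)\ge 0$ in the first case (immediate from feasibility of $\backupplan\setindexB$), $\pslack(\backupplan\setindexB,\tau)\ge 1$ in the third case, and $\pslack(\backupplan\setindexB,\tau)\ge 2$ in the second case.

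For the ``shift $0$'' case I would first use $d_\rho\le\tau<d_p$: this forces $d_f\le\tau$ whenever $f=\rho$ and $d_g>\tau$ whenever $g=p$. Hence if $d_f>\tau$ then necessarily $f=\fstar$, so $\tau\in(\delta,d_\fstar)$ and Lemma~\ref{lem:pslackRelation}(a) with $\zeta=\delta$ gives $\pslack(\backupplan\setindexB,\tau)\ge\pslack(P\setindexB,\tau)\ge 1$ (discarding the nonnegative $|(P\setindexB\cap\ADV\setindexB)_{(\delta,\tau]}|$ term); and if $d_g\le\tau$ then necessarily $g=\gstar$, so $\tau\in[d_\gstar,\gamma]$ and Lemma~\ref{lem:pslackRelation}(b) with $\zeta=\gamma$ gives the same bound (discarding $|F\setindexB_{(\tau,\gamma]}|\ge 0$). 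Note that for these choices of $\zeta$ the packets named $\fstar$ and $\gstar$ in Lemma~\ref{lem:pslackRelation} coincide with those defined in Case~L.$(\delta,\gamma]$.S.2, and $\gstar$ is well-defined there.

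The ``shift $-1$'' case, i.e.\ $d_g\le\tau<d_f$, is the one I expect to require the most care, since it demands the stronger bound $\pslack(\backupplan\setindexB,\tau)\ge 2$. Here $d_g\le\tau<d_p$ forces $g=\gstar$, and $d_\rho\le\tau<d_f$ forces $f=\fstar$; in particular $\rho\notin F\setindexB$, so the hypothesis of Case~L.$(\delta,\gamma]$.S.2 yields $(\ADV\setindexB\cap P\setindexB)_{(\delta,\gamma]}\neq\emptyset$, which forces $d_\gstar>\delta$ because $\gstar$ is the latest-deadline packet of $\ADV\setindexB\cap P\setindexB$ with deadline at most $\gamma$. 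Then $\gstar\in(P\setindexB\cap\ADV\setindexB)_{(\delta,\tau]}$, so Lemma~\ref{lem:pslackRelation}(a) with $\zeta=\delta$ now strengthens to $\pslack(\backupplan\setindexB,\tau)\ge\pslack(P\setindexB,\tau)+1\ge 2$, exactly what is needed. The delicate step throughout is identifying which of $\braced{p,\gstar}$ and $\braced{\rho,\fstar}$ the packets $g$ and $f$ are, purely from the inequalities $d_\rho\le\tau<d_p$, and — in the last case only — invoking the case hypothesis to place $d_\gstar$ strictly after $\delta$; once these are pinned down, feasibility of $\backupplan\setindexB$ and Lemma~\ref{lem:pslackRelation} close all three cases.
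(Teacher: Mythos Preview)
Your proposal is correct. The approach differs from the paper's only in organization. The paper applies Lemma~\ref{lem:pslackRelation} directly to $\backupplan\setindexC$ (relying on Claim~\ref{clm:caseL1B-feasibilityOfC} for feasibility) and then shows the interval covering $[d_\rho,d_p)\subseteq (\delta,d_\fstar)\cup[d_f,d_g)\cup[d_\gstar,\gamma)$, handling the middle interval via the $+1$ shift from $\backupplan\setindexB$ to $\backupplan\setindexC$. You instead apply Lemma~\ref{lem:pslackRelation} to $\backupplan\setindexB$ and case-split on the sign of the shift $\pslack(\backupplan\setindexC,\tau)-\pslack(\backupplan\setindexB,\tau)$. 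Both routes use the same ingredients (feasibility of $\backupplan\setindexB$, non-tightness of $\tau$ in $P\setindexB$, and the two parts of Lemma~\ref{lem:pslackRelation} with $\zeta=\delta$ and $\zeta=\gamma$); your version has the small advantage of not invoking Claim~\ref{clm:caseL1B-feasibilityOfC}, at the price of needing the sharper bound $\pslack(\backupplan\setindexB,\tau)\ge 2$ in the shift~$-1$ case, which you obtain correctly by keeping the $|(P\setindexB\cap\ADV\setindexB)_{(\delta,\tau]}|$ term and noting it contains $\gstar$.
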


\begin{proof}
Let $\hatf$ be the earliest-deadline packet in $\backupplanLeapAftPartI(\beta, T] \setminus \planPleapAftPartI$ and
let $\hatg$ be the latest-deadline packet in $\ADVleapAftPartI[t+1, \gamma]$.
These definitions are similar to the definitions of $\fstar$ and $\gstar$, but they are not equivalent
because in part~(i) we could have removed $\fstar$ from $\varbackupplan$ and $\gstar$ from $\varADV$.
For this reason,
packets $\hatf$ and $\hatg$ may not actually exist, in which case, abusing notation, we will introduce artificial notations
for their deadlines:
if $\hatf$ does not exist, we let $d_\hatf = T$ and, similarly, if $\hatg$ does not exist, we let $d_\hatg = t+1$. 
We now observe that
\begin{equation}
\pslack(\backupplanLeapAftPartI,\tau)\ge \pslack(\planPleapAftPartI,\tau)\ge 1
\quad\textrm{for}\; \tau\in (\beta, \min(d_\hatf,\gamma))\cup [\max(d_\hatg,\beta+1),\gamma).
\label{eqn: obs 1 of caseL1B-slackInC}
\end{equation}
Indeed, the first inequality follows from Lemma~\ref{lem:pslackRelation}, which we can apply here 
because both invariants~{\InvariantA} and~{\InvariantB} are true for $\planPleapAftPartI$, $\ADVleapAftPartI$,
and $\backupplanLeapAftPartI$. Specifically, apply Lemma~\ref{lem:pslackRelation}(a) with $\zeta = \beta$,
obtaining the first inequality for $\tau\in (\beta, d_\hatf)$, and 
Lemma~\ref{lem:pslackRelation}(b) with $\zeta = \gamma$, to obtain the first inequality
for  $\tau\in (d_\hatg,\gamma)$. To justify the second inequality in~\eqref{eqn: obs 1 of caseL1B-slackInC},
note that, as $(\beta, \gamma]$ is a segment of $\planPleapAftPartI = \planPaftInitSeg$, we have
$\pslack(\planPleapAftPartI,\tau)\ge 1$ for $\tau\in (\beta, \gamma)$.

Since $[d_\rho, d_p)\subseteq (\beta,\gamma)$,
inequality~\eqref{eqn: obs 1 of caseL1B-slackInC} directly implies Claim~\ref{clm:caseL1B-slackInC}
when $d_\hatf \ge \gamma$ or $d_\hatg \le \beta$.
Therefore, for the rest of the argument, we will assume that $d_\hatf < \gamma$ and $d_\hatg > \beta$.

Next, observe that, since $\backupplanLeapAftPartI$ is obtained from $\backupplanAftInitSeg$ by replacing
$f$ by $g$, we have
\begin{equation}
\pslack(\backupplanLeapAftPartI,\tau) = \pslack(\backupplanAftInitSeg,\tau) + 1 \ge 1
\quad\textrm{for}\;  \tau\in [d_f, d_g),
\label{eqn: obs 2 of caseL1B-slackInC}
\end{equation}
where the inequality follows from the feasibility of $\backupplanAftInitSeg$. (It is possible that $d_g\le d_f$, in which case this interval is empty.)

From these two observations, \eqref{eqn: obs 1 of caseL1B-slackInC} and \eqref{eqn: obs 2 of caseL1B-slackInC},
it is now sufficient to show that
\begin{equation}
[d_\rho, d_p) \;\subseteq\; (\beta, d_\hatf)\cup  [d_f, d_g)  \cup  [d_\hatg,\gamma).
\label{eqn: last eqn of caseL1B-slackInC}
\end{equation}
Note that $d_\hatg \leq d_\gstar$
and $d_\hatf \geq d_\fstar$, by the definitions of these packets and of $\backupplanLeapAftPartI$ and $\ADVleapAftPartI$
(note that $g$ is not a candidate for $\hatf$ as $g\in \planPleapAftPartI$).
Now we consider three cases.

If $f = \rho$, then it holds that $g = \gstar$ by the claim's assumption, and interval $[d_\rho, d_p)$ is covered by
$[d_f, d_g) = [d_\rho, d_\gstar)$ and $[d_\hatg,\gamma)$, because $d_\hatg \le d_\gstar$ and $d_p\le \gamma$.

Similarly, if $g = p$, then we have $f = \fstar$, and interval $[d_\rho, d_p)$ is covered by
$[d_f, d_g) = [d_\fstar, d_p)$ and $(\beta, d_\hatf)$, because $d_\hatf \ge d_\fstar$.

The last case is when $f = \fstar$ and $g = \gstar$. Then interval $[d_\rho, d_p)\subseteq (\beta, \gamma)$ is covered by
$(\beta, d_\fstar)\cup [d_\fstar, d_\gstar)\cup [d_\gstar,\gamma)$,
which implies~\eqref{eqn: last eqn of caseL1B-slackInC}
because $d_\hatf \ge d_\fstar$ and $d_\hatg \le d_\gstar$.
\end{proof}

Finally, we show that the new backup plan $\backupplanBefNextStep = \backupplanLeapAftPartI \setminus \braced{p} \cup \braced{\rho}$
is feasible. If $d_\rho \ge d_p$, then the feasibility of $\varbackupplan$ is clearly preserved
after replacing $p$ by $\rho$. Otherwise, $d_\rho < d_p$, and for $\tau\in [d_\rho, d_p)$
we have $\pslack(\backupplanBefNextStep,\tau) = \pslack(\backupplanLeapAftPartI,\tau) - 1 \ge 0$ by Claim~\ref{clm:caseL1B-slackInC},
while $\pslack(\backupplanBefNextStep,\tau) = \pslack(\backupplanLeapAftPartI,\tau)\ge 0$ for $\tau\notin [d_\rho, d_p)$
by Claim~\ref{clm:caseL1B-feasibilityOfC}. Hence, invariant~{\InvariantB} holds after the step.

\subsubsection{Processing $(\beta,\gamma]$ in an Iterated Leap Step}
\label{subsubsec: processing (beta,gamma] in an iterated leap step} 

Here we address the last (and by far most involved) part of our argument, namely
the analysis of an iterated leap step. As a reminder, this step occurs when $k \ge 1$ in the algorithm, or,
equivalently, when $d_\rho$ is in a later segment of $\planPaftArrivals$ than $d_p$. 
The initial comments in Section~\ref{subsubsec: processing (beta,gamma] in a simple leap step} 
apply here as well. To recap: We assume that $\varADV$ and $\varbackupplan$ have already been modified as  
described in Sections~\ref{subsubsec: adversary step} (the adversary step)
and~\ref{subsubsec: leap step: processing S1} (processing segment $[t,\alpha]$). 
In particular, the time step has already been incremented to $t+1$ and the
current snapshots of ${\varplanP}$, ${\varbackupplan}$ and ${\varADV}$
are denoted $\planPaftInitSeg$, $\backupplanAftInitSeg$ and $\ADVaftInitSeg$, respectively. 
Recall that $\planPaftInitSeg= \planPaftArrivals \setminus \braced{\firstlightpacket}$.
We assume that both invariants~{\InvariantA} and~{\InvariantB} hold for these sets (with respect to time step $t+1$).

As before, $\beta = \prevtightslot^t(d_p)$ and $\gamma = \nexttightslot^t(d_\rho)$.  
As this is an iterated leap step, we have $d_\rho > \nexttightslot^t(d_p)$, so the
interval $(\beta,\gamma]$ of $\planPaftInitSeg$ is a union of two or more consecutive segments of $\planPaftInitSeg$. 
(The content of this interval is identical to the same interval in $\planPaftArrivals$; in fact, $\planPaftInitSeg$ and $\planPaftArrivals$ are
identical starting right after $\alpha = \nexttightslot^t(t)$, the first tight slot of $\planPaftArrivals$.)
Let $h_0 = p, h_1, \dots, h_k$ be the packets from Algorithm~{\PlanMonotonicity} (line~\ref{algLn:choosing_h_i})
and let $h_{k+1} = \rho$. All packets $h_0, h_1, \dots, h_k$ are in different segments of $\planPaftInitSeg$, not necessarily
consecutive, while $h_{k+1} = \rho$ has deadline in the same segment as $h_k$, but $\rho\notin \planPaftInitSeg$.
The deadlines and weights of packets $\rho$ and $h_1, \dots, h_k$ are still unchanged before processing $(\beta,\gamma]$
(that is, they are the same as in step $t$). 

We need to estimate the potential change due to weight increases and the changes
triggered by the replacement of $p$ by $\rho$ and by
the ``shifting'' of $h_i$'s. We also need to prove key inequality~\eqref{eq:leap-laterSegments}
and that invariants~{\InvariantA} and~{\InvariantB} hold after the step.
 
The overall structure of the argument in this case is similar to 
Case~L.$(\beta,\gamma]$.S (simple leap step), with the analysis split according to
whether changes in $\varADV$ are needed (Case~L.$(\beta,\gamma]$.I.2) or not (Case~L.$(\beta,\gamma]$.I.1).
In Case~L.$(\beta,\gamma]$.I.2, the analysis will be divided into several stages corresponding to
processing of different groups of segments.

For $i=0,\dots,k$, let $\mu_i = \minwt(\planPaftArrivals, d^t_{h_i})$; note that 
$w^t_\firstlightpacket \ge \mu_0 \ge \mu_1\ge \dots \ge \mu_k = \minwt(\planPaftArrivals, d_\rho)$ and
the algorithm ensures that $w^{t+1}_{h_i} \ge \mu_{i-1}$ for $i=1,\dots,k+1$.
In both cases below, we use the following simple bound on the change of weights of $h_i$'s.


\begin{lemma}\label{lem:changeOfWeightsOfHs}
For any $1\le a' \le b' \le k$, let $\Delta^t w({h_{a'}}, \dots, {h_{b'}})$
be the total amount by which the algorithm increases the weights of packets $h_{a'}, \dots, h_{b'}$ in step $t$.
Suppose that there exists $i\in [a',b']$ such that $w^t_{h_i} < \mu_{i-1}$, i.e.,
the algorithm increases the weight of $h_i$ in Line~11 (and thus $\Delta^t w({h_{a'}}, \dots, {h_{b'}}) > 0$).
Then $\Delta^t w({h_{a'}}, \dots, {h_{b'}}) \le \mu_{a'-1} - w^t_{h_{b'}}$.
\end{lemma}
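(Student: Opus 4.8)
The plan is to begin by writing the weight increases explicitly. By line~\ref{algLn:h_i-setWeightAndDdln} of Algorithm~\PlanMonotonicity{} we have $w^{t+1}_{h_i} = \max(w^t_{h_i},\, \minwt^t(\tau_{i-1}))$ for $i = 1,\dots,k$, where $\tau_{i-1} = \nexttightslot^t(d^t_{h_{i-1}})$ (recall $h_0 = p$ and $\tau_0 = \nexttightslot^t(d^t_p)$). Since $\minwt^t(\cdot)$ is constant on each segment and $\tau_{i-1}$ lies in the segment of $P^t$ containing $d^t_{h_{i-1}}$, we get $\minwt^t(\tau_{i-1}) = \minwt^t(d^t_{h_{i-1}}) = \mu_{i-1}$, so the amount by which $h_i$'s weight grows is exactly $\Delta^t w(h_i) = \max(0,\, \mu_{i-1} - w^t_{h_i})$, and hence $\Delta^t w(h_{a'},\dots,h_{b'}) = \sum_{i=a'}^{b'} \max(0,\, \mu_{i-1} - w^t_{h_i})$. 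Alongside the two monotonicities already available --- $\mu_0 \ge \mu_1 \ge \cdots \ge \mu_k$ and $w^t_{h_0} > w^t_{h_1} > \cdots > w^t_{h_k}$ --- I would record one more elementary fact: $w^t_{h_i} \ge \mu_i$ for every $i$, which holds because $h_i \in P^t$ and $d^t_{h_i} \le \nexttightslot^t(d^t_{h_i})$, so $h_i$ is among the packets over which the minimum defining $\mu_i = \minwt^t(d^t_{h_i})$ is taken.

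The core of the argument is then a short telescoping estimate. Let $S = \{\, i \in [a',b'] : w^t_{h_i} < \mu_{i-1}\,\}$ be the set of indices at which the weight genuinely increases; by hypothesis $S \neq \emptyset$, so write $S = \{i_1 < i_2 < \cdots < i_m\}$ with $m \ge 1$. Then
\[
\Delta^t w(h_{a'},\dots,h_{b'}) \;=\; \sum_{\ell=1}^{m} \bigl(\mu_{i_\ell - 1} - w^t_{h_{i_\ell}}\bigr) \;=\; \mu_{i_1-1} \;+\; \sum_{\ell=2}^{m} \bigl(\mu_{i_\ell-1} - w^t_{h_{i_{\ell-1}}}\bigr) \;-\; w^t_{h_{i_m}}\,.
\]
I would bound the three groups of terms separately. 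For the leading term, $i_1 \ge a'$ and monotonicity of the $\mu$'s give $\mu_{i_1-1} \le \mu_{a'-1}$. Each middle term is non-positive: since $i_\ell > i_{\ell-1}$ we have $i_\ell - 1 \ge i_{\ell-1}$, hence $\mu_{i_\ell-1} \le \mu_{i_{\ell-1}} \le w^t_{h_{i_{\ell-1}}}$ by the fact $w^t_{h_j} \ge \mu_j$, so $\mu_{i_\ell-1} - w^t_{h_{i_{\ell-1}}} \le 0$. For the trailing term, $i_m \le b'$ and monotonicity of the $w^t_{h_i}$'s give $-w^t_{h_{i_m}} \le -w^t_{h_{b'}}$. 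Summing the three bounds yields $\Delta^t w(h_{a'},\dots,h_{b'}) \le \mu_{a'-1} - w^t_{h_{b'}}$, as desired; the case $m=1$ is subsumed (no middle terms appear).

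I do not expect a serious obstacle here: the telescoping inequality is elementary once the right facts are in place. The two points that merit a sentence of justification are the identity $\minwt^t(\tau_{i-1}) = \mu_{i-1}$ (immediate from $\minwt^t(\cdot)$ being constant on segments, as recalled in Section~\ref{sec: plans}) and the bound $w^t_{h_i} \ge \mu_i$ (immediate from $h_i \in P^t$). I would also emphasize that the hypothesis that some $h_i$ with $i \in [a',b']$ actually has its weight increased is essential: without it the left-hand side is $0$ while the right-hand side $\mu_{a'-1} - w^t_{h_{b'}}$ may well be negative, since there is no a priori relation forcing $\mu_{a'-1} \ge w^t_{h_{b'}}$.
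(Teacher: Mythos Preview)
Your proof is correct and follows essentially the same approach as the paper: both arguments express the total weight increase as a sum of $\max(0,\mu_{i-1}-w^t_{h_i})$ terms and telescope using the facts $\mu_{i-1}\ge\mu_i$, $w^t_{h_i}\ge\mu_i$, and $w^t_{h_i}\ge w^t_{h_{b'}}$ for $i\le b'$. The only cosmetic difference is that the paper keeps all indices up to the largest $c$ with $w^t_{h_c}<\mu_{c-1}$ and bounds each term by $\mu_{i-1}-\mu_i$ before telescoping, whereas you restrict to the set $S$ of indices with an actual increase and pair consecutive elements of $S$; both routes yield $\mu_{a'-1}-w^t_{h_{b'}}$ in the same way.
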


\begin{proof}
Let $c \in [a', b']$ be the maximum index such that $w^t_{h_{c}} < \mu_{c-1}$;
such $c$ exists by the assumption of the lemma. 
We show the claim as follows:
\begin{align}
\Delta^t w({h_{a'}}, \dots, {h_{b'}}) &=   \sum_{i = a'}^{c} \max(\mu_{i-1}-w^t_{h_i}, 0) 
\nonumber                         
\\           
&\le   \sum_{i = a'}^{c - 1} \max(\mu_{i-1}-\mu_i, 0) + \max(\mu_{c-1}- w^t_{h_c}, 0)
\label{eqn: changeOfWeightsOfHs, ineq 1}                         
\\
&= \sum_{i = a'}^{c - 1} (\mu_{i-1} - \mu_i ) + \mu_{c-1} - w^t_{h_c}
\label{eqn: changeOfWeightsOfHs, ineq 2} 
\\ 
&=  \mu_{a'-1} - w^t_{h_c} 
\nonumber
\\
&\le \mu_{a'-1} - w^t_{h_{b'}}\,,
\label{eqn: changeOfWeightsOfHs, ineq 3} 
\end{align}  
where inequality~\eqref{eqn: changeOfWeightsOfHs, ineq 1} follows from $w^t_{h_i} \ge \mu_i$,
equality~\eqref{eqn: changeOfWeightsOfHs, ineq 2} from $\mu_{i-1}\ge \mu_i$ and
from $\mu_{c-1} > w^t_{h_c}$ (by the choice of $c$),
and inequality~\eqref{eqn: changeOfWeightsOfHs, ineq 3} from $w^t_{h_c} \ge w^t_{h_{b'}}$ by Lemma~\ref{lem:leapStep}(b) and $c\le b'$.
\end{proof}


\medskip
\noindent
\mycase{L.$(\beta,\gamma]$.I.1}  
$\rho\notin \backupplanAftInitSeg$ and $\ADVaftInitSeg(\beta, \gamma] = \emptyset$.
In this case, we do not make any changes in $\varADV$, i.e. $\ADVbefNextStep = \ADVaftInitSeg$.
From the case condition, $p = h_0,h_1,...,h_k\notin \ADVaftInitSeg$, because each $h_i$ is in $\planPaftInitSeg(\beta, \gamma]$. 
It follows that $\varADV$ is indeed not affected by the decrease of the deadlines of $h_i$'s.
Thus invariant~{\InvariantA} holds. Note that $\rho\notin \ADVaftInitSeg$ by invariant~{\InvariantA}.
From the case condition and $p\in \planPaftInitSeg(\beta,\gamma]$, we have $p\notin \ADVaftInitSeg$;
therefore  $p\in \backupplanAftInitSeg$ by invariant~{\InvariantB}(ii). This means that in order to preserve invariant~{\InvariantB}(ii) we
need to remove $p$ from $\varbackupplan$ and add $\varrho$ to it.
Thus we set $\backupplanBefNextStep = \backupplanAftInitSeg \setminus\braced{p}\cup \braced{\rho}$.


\indentemparagraph{Deriving inequality~(\ref{eq:leap-laterSegments}).}
To prove this inequality, we first show  that 
\begin{equation}
	\Delta^t w({h_1}, \dots, {h_k}) \;\le\; \mu_0 - \mu_k.
	\label{eqn: case L.betagamma.I.1 Delta weights bound}
\end{equation}
(The weight increase $\Delta^t w({h_1}, \dots, {h_k})$ was defined in Lemma~\ref{lem:changeOfWeightsOfHs}.)
Indeed, if there is no $i\in [1,k]$ such that $w^t_{h_i} < \mu_{i-1}$,
then $\Delta^t w({h_1}, \dots, {h_k}) = 0 \le \mu_0 - \mu_k$ as $\mu_0\ge \mu_k$.
Otherwise, we use Lemma~\ref{lem:changeOfWeightsOfHs} with $a'=1$ and $b'=k$ to get 
$\Delta^t w({h_1}, \dots, {h_k}) \le \mu_0 - w^t_{h_k} \le   \mu_0 - \mu_k$, where the last inequality follows from $w^t_{h_k} \ge \mu_k$.

In this case, the terms on the left-hand side of inequality~(\ref{eq:leap-laterSegments}) are
\begin{align*}
	\Delta_{(\beta,\gamma]} \Psi \;&=\; \phinegonef \razy \big( - w^t_p + w^{t+1}_\rho + \Delta^t w({h_1}, \dots, {h_k})\big)
	\\
	\Delta^t \weights \;&=\; w^{t+1}_\rho - w^t_\rho + \Delta^t w({h_1}, \dots, {h_k})
	\\
	\advcredit^t_{(\beta,\gamma]} \;&=\; 0
\end{align*}
The equation for $\Delta_{(\beta,\gamma]} \Psi$ is true because
all packets $h_1,...,h_k$ are both in $\planPaftInitSeg$ and in $\planPbefNextStep$, by Lemma~\ref{lem:leapStep}(a),
and none of them is in $\ADVaftInitSeg = \ADVbefNextStep$, so we keep all of them in $\varbackupplan$ to preserve invariant~{\InvariantB}(ii).
The last equation holds because  we have not changed the adversary stash. 
Plugging in these formulas, taking into account that $w_\rho^{t+1} = \mu_k$, and using inequality~\eqref{eqn: case L.betagamma.I.1 Delta weights bound}, we can obtain inequality~(\ref{eq:leap-laterSegments}) with an easy calculation:
\begin{align}
\Delta_{(\beta,\gamma]} \Psi &- \phi\razy (\Delta^t \weights) - \advcredit^t_{(\beta,\gamma]}
\nonumber
\\
&= \; \phinegonef \razy ( - w^t_p + w^{t+1}_\rho + \Delta^t w({h_1}, \dots, {h_k}))
\nonumber
\\
&\qquad\qquad
	- \phi\razy ( w^{t+1}_\rho - w^t_\rho + \Delta^t w({h_1}, \dots, {h_k}))  - 0
\nonumber
\\
&= \;- \phinegonef \razy w^t_p - \mu_k + \phi \razy w^t_\rho - \Delta^t w({h_1}, \dots, {h_k})
\nonumber
\\
&\ge \; - \phinegonef \razy w^t_p - \mu_k + \phi \razy w^t_\rho - (\mu_0 - \mu_k)
\nonumber
\\
&= \; - \phinegonef \razy w^t_p + \phi w^t_\rho - \mu_0
\nonumber
\\
&\ge\;  - \phinegonef \razy w^t_p + \phi \razy w^t_\rho - (\phinegtwof\razy w_p^t +  \phinegonef\razy w_\rho^t)
\label{eqn: case L.2.A, ineq 3} 
\\
&=\; - w^t_p + w^t_\rho\,,
\nonumber
\end{align}  
where inequality~\eqref{eqn: case L.2.A, ineq 3} follows from 
${\mu_0} = \minwt(\planPaftArrivals, d_p) \le \phinegtwof\razy w_p^t +  \phinegonef\razy w_\rho^t$, 
which is inequality~\eqref{eq:leapStepVsOmega} with $\tau = d_p$.


\indentemparagraph{Preserving invariant~{\InvariantB}.}
That invariant~{\InvariantB}(ii) is preserved follows immediately from the definitions of $\planPbefNextStep$,
$\ADVbefNextStep$, and $\backupplanBefNextStep$.
Next, we show invariant~{\InvariantB}(i), that is the feasibility of $\backupplanBefNextStep$,
and we split the argument into two parts. First, we 
show that the invariant holds for the intermediate backup plan resulting from replacing $p$ by $\rho$,
and then later we show that modifying the deadlines of the $h_i$'s preserves the invariant as well.

So consider an intermediate plan $\planPleapAftPartI = \planPaftInitSeg \setminus\braced{p}\cup \braced{\rho}$
and backup plan $\backupplanLeapAftPartI =  \backupplanAftInitSeg \setminus\braced{p}\cup \braced{\rho}$,
both with respect to time $t+1$ but with the deadlines of packets $h_1,h_2,...,h_k$ still unchanged.
Since this is an iterated leap step, we have that $d_\rho > d_p$, so
replacing $p$ by $\rho$ in $\varbackupplan$ preserves feasibility. Therefore $\backupplanLeapAftPartI$ is feasible.

Now, we consider $\backupplanBefNextStep$, that is the backup plan at step $t+1$, which
is obtained from $\backupplanLeapAftPartI$ by decreasing the deadlines of $h_i$'s, as in the algorithm.
As slots outside $(\beta,\gamma)$ are not affected, we have
$\pslack(\backupplanBefNextStep,\tau) = \packetslack(\backupplanLeapAftPartI, \tau)\ge 0$ for $\tau\notin (\beta, \gamma)$.
So it remains to consider slots $\tau\in (\beta, \gamma)$.

Recall that packets $h_1, \dots, h_k$ belong to both $\backupplanAftInitSeg$ and $\backupplanBefNextStep$.
We also have that $\pslack(\planPleapAftPartI, \tau)\ge 1$ for any $\tau\in (\beta, \gamma)$,
which holds as replacing $p$ by $\rho$ merges all segments between $\beta = \prevtightslot^t(d_p)$
and $\gamma = \nexttightslot^t(d_\rho)$. Since $\ADVleapAftPartI(\beta, \gamma] = \ADVaftInitSeg(\beta, \gamma] = \emptyset$,
Lemma~\ref{lem:pslackRelation}(b) with $\zeta = \gamma$ gives us that for any $\tau\in (\beta, \gamma)$,
\begin{equation}\label{eqn: case L.2.I.1, invB, ineq 1}
\packetslack(\backupplanLeapAftPartI, \tau)\ge \packetslack(\planPleapAftPartI, \tau)\ge 1\,.
\end{equation}

Next, we claim that for $\tau\in (\beta, \gamma)$,
\begin{equation}\label{eqn: case L.2.I.1, invB, ineq 2}
\pslack(\backupplanBefNextStep,\tau) \ge \packetslack(\backupplanLeapAftPartI, \tau) - 1\,.
\end{equation}
To justify this inequality, we observe that  $d^{t+1}_{h_i} = \tau_{i-1}\ge d^t_{h_{i-1}}$ for $i=1, \dots, k$,
which gives us that intervals $[d^{t+1}_{h_i}, d^t_{h_i})\subset (\beta, \gamma)$ are disjoint. This implies in turn
that, for $\tau\in (\beta, \gamma)$, decreasing the deadlines of all $h_i$'s can decrease $\packetslack(\calB,\tau)$ at most by one,
which yields inequality~\eqref{eqn: case L.2.I.1, invB, ineq 2}.

Inequalities~\eqref{eqn: case L.2.I.1, invB, ineq 1} and~\eqref{eqn: case L.2.I.1, invB, ineq 2} yield
$\pslack(\backupplanBefNextStep,\tau) \ge \packetslack(\backupplanLeapAftPartI, \tau) - 1 \ge 0$ for $\tau\in (\beta, \gamma)$,
completing the proof that $\backupplanBefNextStep$ satisfies invariant~{\InvariantB}(i).


\bigskip
\noindent
\mycase{L.$(\beta,\gamma]$.I.2}
$\rho\in \backupplanAftInitSeg$ or $\ADVaftInitSeg(\beta, \gamma] \neq \emptyset$.
In this case, both sets $\varADV$ and $\varbackupplan$ will be changed. We focus on the segments of $\planPaftInitSeg$ which
contain the packets $h_0 = p, h_1,...,h_k$. These are the segments of the current optimal plan that are modified by the algorithm
when $p$ is transmitted.
Specifically, for $i=0,\dots,k$, let $S_i$ be the segment of $\planPaftInitSeg$ that ends at $\tau_i = \nexttightslot^t(d^t_{h_i})$,
that is the segment containing $d^t_{h_i}$. (See Figure~\ref{fig:shift}.) 
Recall that $\prevtightslot^t(d_p)= \beta$, $\tau_k = \gamma$, and that we defined $h_{k+1} = \rho$.
We first organize segments $S_i$ into groups and then split the sequence of changes in this case into stages, having one stage for each group.


\paragraph{Groups.}
We start by defining a packet $g \in\ADVaftInitSeg$.
Let $\gstar$ be the latest-deadline packet in $\ADVaftInitSeg[t+1,\gamma]$.
Observe that packet $\gstar$ is well-defined. This is trivially true
if the second part of the case condition holds, while in the case when $\rho\in \backupplanAftInitSeg\setminus\planPaftInitSeg$,
we use Observation~\ref{obs: A not empty vs B-P not empty}(b) with $\eta = \gamma$ to get that
$\ADVaftInitSeg[t+1,\gamma]\neq \emptyset$. (It is possible that $d_{\gstar}\le \beta$ in the second case.) 

We now define $g$ as follows: If $d^t_{\gstar}$ is in a segment $S_i$ for which $h_i\in \ADVaftInitSeg$, then let $g = h_i$;
otherwise let $g= \gstar$. Observe that, in particular, if $h_k\in \ADVaftInitSeg$ then $g = h_k$.

We will process segments $S_i$ in \emph{groups}, where each group is specified by some non-empty
interval of indices $[a,b]\subseteq \braced{ 0,\dots,k}$ of segments $S_i$.
We will denote such a group by $\seggroup{a,b}$ (to avoid confusion with the notation for segments and time
intervals).
These intervals will be chosen so that they partition the set of all indices $\braced{ 0,\dots,k}$.
Roughly, we have a group for each $h_i\in \ADVaftInitSeg$ (that needs to be 
replaced in or removed from $\varADV$ because its deadline needs to be decreased), a special last group,
and possibly a special group at the beginning.
Let $i_1 < i_2 < \cdots < i_l$ be the indices of those packets $h_0 = p, h_1,...,h_k$ that are in $\ADVaftInitSeg$.   
Note that, if $l>0$, then $d^t_g \in [d^t(h_{i_l}),\gamma]$, because $h_{i_l}\in\ADVaftInitSeg$ is a
candidate for $\gstar$. In particular, since $g\in\ADVaftInitSeg$, we have that
$g\notin \braced{h_0, ... , h_k} \setminus \braced{h_{i_l}}$; that is, among 
\emph{all} packets $h_0,...,h_k$, $g$ may be possibly equal only to $h_{i_l}$. 
The definition of these groups depends on whether some packet $h_i$ is in $\ADVaftInitSeg$ (that is $l > 0$)
or none of them is in  $\ADVaftInitSeg$ (that is $l = 0$):

\begin{figure}
\begin{center}
	\includegraphics[width = 6in]{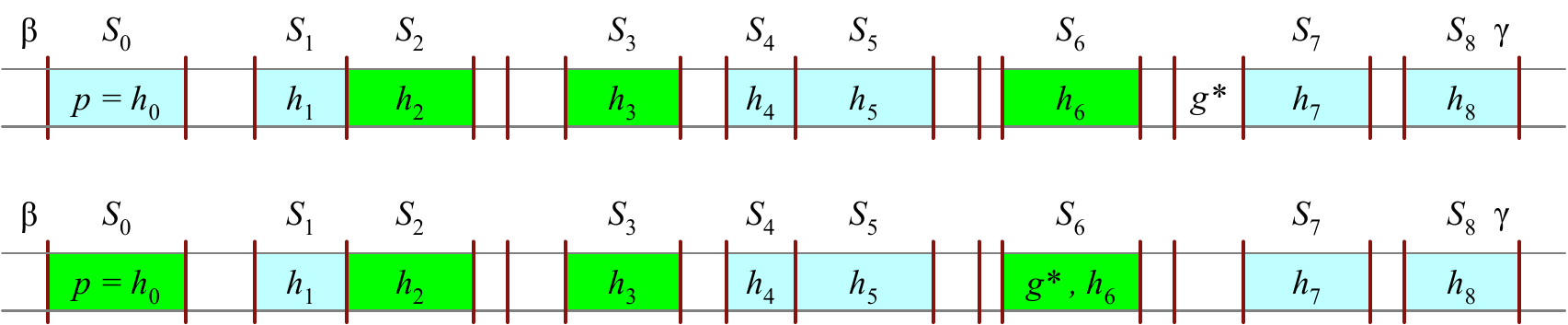}
	\caption{Two examples that illustrate assigning segments into groups. In both examples we have $k=8$.
	Segments are separated by vertical bars.
	Segments $S_i$ are colored (shaded). Green (or dark shaded) segments $S_i$ are those for which packet $h_i$ is in $\ADVaftInitSeg$.
	In the first example, we have $l = 3$, $(i_1,i_2,i_3) = (2,3,6)$, $g=\gstar$ (as $\gstar$ is not in $S_i$ for any $i$),
	and $\terminala = 7$. We have 5 groups: the initial group  $\seggroup{0,1}$, middle groups 
	$\seggroup{2,2}$, $\seggroup{3,5}$, and $\seggroup{6,6}$,
	and the terminal group $\seggroup{7,8}$.
	In the second example we have $l = 4$, $(i_1,i_2,i_3,i_4) = (0,2,3,6)$, and $g = h_6$ (because $\gstar$ is in $S_6$ and
	$h_6$ is in $\ADVaftInitSeg$). Since $i_1= 0$ there is no
	initial group. We have 4 groups: middle groups $\seggroup{0,1}$, $\seggroup{2,2}$, $\seggroup{3,5}$, and the terminal group $\seggroup{6,8}$.
	}
	\label{fig:groups}
\end{center}
\end{figure}

\begin{description}

\item{\mycase{$l > 0$}} If $l>1$, then for each $c =1, \dots, l-1$, the interval $\seggroup{i_c, i_{c+1} - 1}$ is a \emph{middle group}.
If $i_1 > 0$ (meaning that $h_0 = p\not\in \ADVaftInitSeg$),
then there is a special \emph{initial group} $\seggroup{0, i_1-1}$. If $i_1 = 0$ then there is no initial group.
Next, we assign the indices in $[i_l, k]$ to one or two groups and define index $\terminala$.
If $g = h_{i_l}$ (in particular, if $i_l=k$), then $\seggroup{i_l,k}$ is the \emph{terminal group} and we let $\terminala = i_l$. 
Otherwise, $g \neq h_{i_l}$ and
we let $\terminala$ be the smallest index in $0,...,k$ for which $\tau_\terminala \ge d^t_g$.   
The assumption that $g\neq h_{i_l}$ implies that $\terminala > i_l$.
Then $\seggroup{\terminala, k}$ is the terminal group and $\seggroup{i_l, \terminala - 1}$ is a new middle group. 
(See Figure~\ref{fig:groups} for two examples that illustrate the case when $l > 0$.)

\item{\mycase{$l = 0$}} We define one or two groups only.
There is the terminal group $\seggroup{\terminala, k}$, where $\terminala$ is again the smallest index 
in $0,...,k$ with $\tau_\terminala\ge d^t_g$. If $\terminala > 0$, we also have the initial group $\seggroup{0, \terminala - 1}$ .

\end{description}

Observe that we always define $\terminala$ so that $\seggroup{\terminala, k}$ is the terminal group.
We further define index $\initialb$ so that the first group is $\seggroup{0, \initialb}$;
for example, if $l>0$ and $i_1 > 0$, then $\initialb = i_1-1$.
Note that in almost every group $\seggroup{a,b}$ packet $h_a$ is in $\ADVaftInitSeg$; the only
two possible exceptions are (i) the initial group, and (ii) the terminal group in case when
either $l = 0$ or $l\ge 1$ and $\terminala \neq i_l$. 
On the other hand, for any group $\seggroup{a, b}$ packets $h_{a+1}, \dots, h_b$ are never in $\ADVaftInitSeg$
and thus these packets are in $\backupplanAftInitSeg$ by invariant~{\InvariantB}(ii).

%

 
\paragraph{Stages.}
The changes implemented during the processing of interval $(\beta, \gamma]$ involve the
replacement of $p$ by $\rho$ in $\varplanP$, weight increases of $\rho$ and $h_i$'s, and
deadline decreases of $h_i$'s, as described in the algorithm.
These changes will also trigger changes of $\varbackupplan$ and $\varADV$ necessary to preserve the invariants.
We will organize all these changes (of $\varplanP$, $\varbackupplan$, $\varADV$, and the instance itself)
into a sequence of \emph{stages}, with one stage for each group $\seggroup{a,b}$. 
The stages will be implemented in the reverse order of the groups,
that is, starting from the stage for the terminal group (which is defined in all cases),
continuing with middle groups in the reverse order of their indices,
and ending by the group containing index $0$, which may be of any type.

To understand better the partition into stages, as described below,
it helps to think of the substitution of $p$ by $\rho = h_{k+1}$ in the optimal
plan as a chain of substitutions: first replace $h_k$ by $\rho$, then replace $h_{k-1}$ by $h_k$, and so on, and
eventually replace $p = h_0$ by $h_1$. In our process here, rather than processing these substitutions individually,
we process the substitutions within each group $\seggroup{a,b}$ simultaneously as one stage, in which $h_a$ is replaced by $h_{b+1}$.
(See Figure~\ref{fig:group shift} for illustration.)

\begin{figure}[ht]
\begin{center}
	\includegraphics[width = 5.5in]{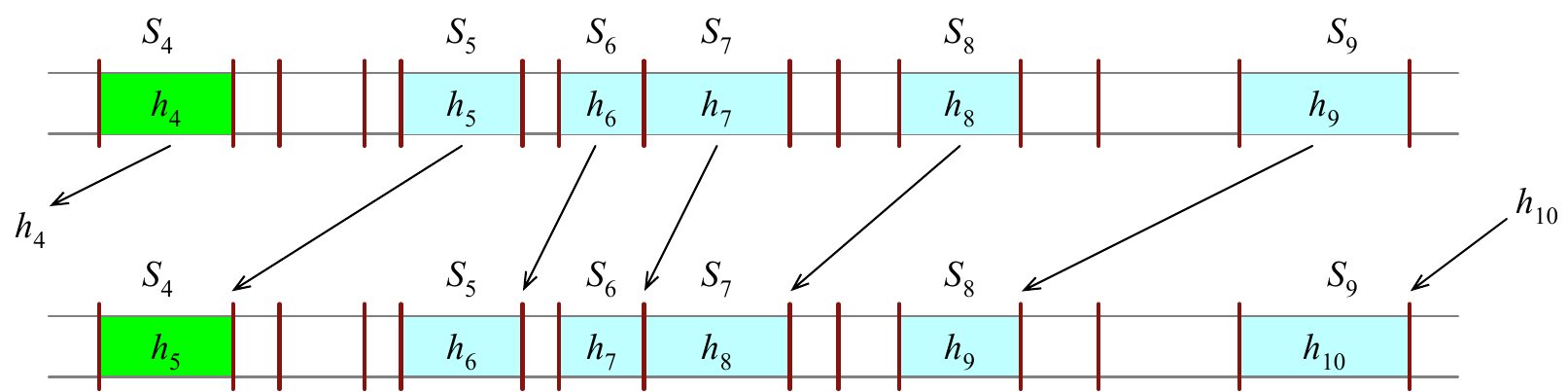}
	\caption{An illustration of the changes in $\varplanP$ in the stage when a middle group $\seggroup{a,b} = \seggroup{4,9}$
	is processed. 
	}
	\label{fig:group shift}
\end{center}
\end{figure}

Formally, let $\planPbefGroup{\seggroup{a,b}}$ be the snapshot of the intermediate plan $\varplanP$ just before the stage
for group $\seggroup{a,b}$ and let $\planPaftGroup{\seggroup{a,b}}$ be the snapshot of the intermediate plan $\varplanP$ just after the stage
for group $\seggroup{a,b}$.  (We remark that $\planPbefGroup{\seggroup{a,b}}$ is not necessarily an optimal plan. It is only meant to be
an ``intermediate plan'' used to capture gradual changes in the optimal plan as stages are executed. 
It is, however, feasible and it will satisfy
appropriate invariants.)  Naturally, if $\seggroup{a,b}$ and $\seggroup{b+1,c}$ are two consecutive groups, then 
$\planPaftGroup{\seggroup{b+1,c}} =\planPbefGroup{\seggroup{a,b}}$. 
We adopt the same conventions for sets $\varbackupplan$ and $\varADV$. In this notation
	(see Figure~\ref{fig:notation for stages} for illustration):
\begin{itemize}
\item
In the first stage, that corresponds to the terminal group $\seggroup{\terminala,k}$, we have 
$\planPbefGroup{\seggroup{\terminala,k}} = \planPaftInitSeg$,
$\backupplanBefGroup{\seggroup{\terminala,k}} = \backupplanAftInitSeg$ and $\AdvBefGroup{\seggroup{\terminala,k}} = \ADVaftInitSeg$. 
Note that in this stage, it holds that $h_{k+1} = \rho\notin \planPbefGroup{\seggroup{\terminala,k}}$.
\item
When describing the stage for each group $\seggroup{a,b}$, we assume that the changes in stages
for groups $\seggroup{a',b'}$ with $a' > b$ have already been implemented.
As an invariant we will ensure that $h_{b+1}\notin  \planPbefGroup{\seggroup{a,b}}$
(for $b = k$, this holds by $h_{k+1} = \rho \notin \planPbefGroup{\seggroup{\terminala,k}} = \planPaftInitSeg$).
In this stage, the plan changes from $\planPbefGroup{\seggroup{a,b}}$ to
$\planPaftGroup{\seggroup{a,b}} = \planPbefGroup{\seggroup{a,b}} \setminus \braced{h_a} \cup \braced{h_{b+1}}$, which
will imply that $h_a\notin \planPaftGroup{\seggroup{a,b}}$ just after this stage is processed.
Further, the weights and deadlines of packets $h_{a+1},...,h_{b+1}$ are adjusted according to the algorithm
(for $b=k$, the deadline of $h_{k+1} = \rho$ remains the same).
This stage also involves changes in $\varADV$ and $\varbackupplan$ triggered by the changes in $\varplanP$ and by the
adjustments of weights and deadlines.  These changes will convert
$\backupplanBefGroup{\seggroup{a,b}}$ and $\AdvBefGroup{\seggroup{a,b}}$ to $\backupplanAftGroup{\seggroup{a,b}}$ and $\AdvAftGroup{\seggroup{a,b}}$,
respectively.
\item
After the last stage, corresponding to the first group $\seggroup{0,\initialb}$ (which could be an initial, middle, or terminal group),
we have $\planPaftGroup{\seggroup{0,\initialb}} = \planPbefNextStep$, $\backupplanAftGroup{\seggroup{0,\initialb}} = \backupplanBefNextStep$,
and $\AdvAftGroup{\seggroup{0,\initialb}} = \ADVbefNextStep$.
\end{itemize}

\begin{figure}[ht]
\begin{center}
	\includegraphics[width = 5.7in]{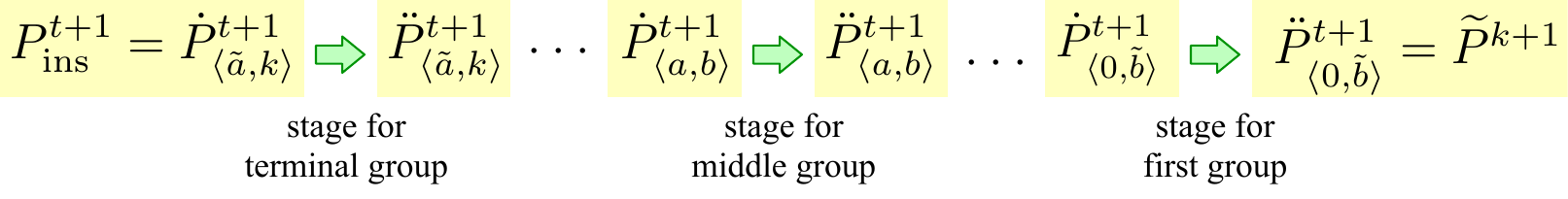}
	\caption{Notation for stages. 
	}
	\label{fig:notation for stages}
\end{center}
\end{figure}

By the results of Section~\ref{subsubsec: leap step: processing S1}, invariants~{\InvariantA} and {\InvariantB} hold for 
$\planPbefGroup{\seggroup{\terminala,k}}$, $\backupplanBefGroup{\seggroup{\terminala,k}}$, and $\AdvBefGroup{\seggroup{\terminala,k}}$.
In the stage for each group $\seggroup{a,b}$, we will show that these invariants are
preserved and that an appropriate bound on amortized profits holds. This will give us
that inequality~(\ref{eq:leap-laterSegments}) holds when processing interval $(\beta,\gamma]$
and that invariants~{\InvariantA} and {\InvariantB} hold afterwards, that is for
sets $\planPbefNextStep$, $\backupplanBefNextStep$, and $\ADVbefNextStep$.

Note that the changes of $\varplanP$ implemented by the algorithm
in the stage for group $\seggroup{a,b}$ satisfy
$$ \planPbefGroup{\seggroup{a,b}} \setminus \planPbefGroup{\seggroup{a,b}}(\eta, \tau_b]
= \planPaftGroup{\seggroup{a,b}} \setminus \planPaftGroup{\seggroup{a,b}}(\eta, \tau_b]$$
for $\eta = \prevtightslot(\planPbefGroup{\seggroup{a,b}}, d^t_{h_a})$,
which is the ``locality of changes'' invariant~\eqref{eqn: roadmap locality} from Section~\ref{subsubsec: leap step: a road map}. The $\pslack(\varplanP, \tau)$
values are actually affected only for $\tau\in [d^t_{h_a}, \tau_b)$.
(We remark that changes of $\varplanP$ involve packet $h_{b+1}$ with $d^t_{h_{b+1}} > \tau_b$ (if $b < k$),
however, $h_{b+1}\notin \planPbefGroup{\seggroup{a,b}}$ and we change the deadline of $h_{b+1}$ to $\tau_b$
during the stage for group $\seggroup{a,b}$.)

Finally, we will maintain the property that
after the stage for each group $\seggroup{a,b}$, packet $h_a$ will not be in $\AdvAftGroup{\seggroup{a,b}}$ ---
even though it may have been in $\varADV$ earlier. This is necessary for invariant~{\InvariantA}, because
$h_a\notin \planPaftGroup{\seggroup{a,b}}$. As also $\rho = h_{k+1}\notin \AdvBefGroup{\seggroup{\terminala,k}} = \ADVaftInitSeg$, this will give us that
$h_{b+1}\notin \AdvBefGroup{\seggroup{a,b}}$ for any group $\seggroup{a,b}$.
Recall that, importantly, packets $h_{a+1}, \dots, h_b$ are also not in $\AdvBefGroup{\seggroup{a,b}}$.


\indentemparagraph{Deriving inequality~(\ref{eq:leap-laterSegments}).}
We split the potential changes and the adversary credit for replacing packets in $\varADV$ among groups in a natural way.
Let $\Delta_{\seggroup{a,b}} \Psi$ be the total change of the potential due to the modifications
implemented in the stage for group $\seggroup{a,b}$, and let $\advcredit^t_{\seggroup{a,b}}$ be the adversary adjustment credit
for the changes of $\varADV$ implemented in the stage for group $\seggroup{a,b}$
(that is, for removing a packet or for replacing $h_a$ by a lighter packet).
Recall that $\Delta^t w({h_{a+1}}, \dots, {h_{b+1}})$
is the total amount by which the algorithm increases the weights of $h_{a+1}, \dots, h_{b+1}$.
Our goal is to prove for each group $\seggroup{a,b}$ that
\begin{equation}\label{eq:leap-group}
\Delta_{\seggroup{a,b}} \Psi - \phi\razy \Delta^t w({h_{a+1}}, \dots, {h_{b+1}}) - \advcredit^t_{\seggroup{a,b}} 
				\ge - w^t_{h_a} + w^t_{h_{b+1}}\,.
\end{equation}
(Note that the right-hand side of \eqref{eq:leap-group} is negative, by Lemma~\ref{lem:leapStep}(b).)

It remains to observe that the sum of~\eqref{eq:leap-group} over all groups gives us exactly the
key inequality~\eqref{eq:leap-laterSegments}. Indeed, the right-hand sides sum to
$- w^t_{h_0} + w^t_{h_{k+1}} = - w^t_p + w^t_{\rho}$.
Regarding the left-hand side of~\eqref{eq:leap-group}, $\Delta_{(\beta,\gamma]} \Psi$ equals the sum of $\Delta_{\seggroup{a,b}} \Psi$
over all groups $\seggroup{a,b}$, and similarly for $\Delta^t \weights$ and $\advcredit^t_{(\beta,\gamma]}$.
(Note that the increase of the weight of $\rho = h_{k+1}$ is accounted for in inequality~\eqref{eq:leap-group}
for the terminal group $\seggroup{\terminala,k}$.)


\paragraph{Stage for the terminal group.}
Let $\seggroup{\terminala,k}$ be the terminal group of segments.
In this stage, we change plan $\varplanP$ by removing $h_\terminala$ and adding $\rho$,
we increase the weights of packets $h_{\terminala+1}, \dots, h_{k+1} = \rho$, and we decrease the deadlines of 
packets $h_{\terminala+1}, \dots, h_k$.
We thus have $\planPaftGroup{\seggroup{\terminala,k}} = \planPbefGroup{\seggroup{\terminala,k}} \setminus \braced{h_\terminala} \cup \braced{\rho}$.

Let $\fstar$ be the earliest-deadline packet in $\backupplanBefGroup{\seggroup{\terminala,k}}(\beta,\timehorizon]\setminus\planPbefGroup{\seggroup{\terminala,k}}$. 
We argue that packet $\fstar$ is well-defined, which is trivially true if
$\rho\in \backupplanAftInitSeg(\beta,\timehorizon] = \backupplanBefGroup{\seggroup{\terminala,k}}(\beta,\timehorizon]$.
Otherwise, $\ADVaftInitSeg(\beta, \gamma] = \AdvBefGroup{\seggroup{\terminala,k}}(\beta,\gamma] \neq \emptyset$
and then Observation~\ref{obs: A not empty vs B-P not empty}(a) with $\eta = \beta$ implies that 
$\backupplanBefGroup{\seggroup{\terminala,k}}(\beta,\timehorizon]\setminus\planPbefGroup{\seggroup{\terminala,k}}\neq \emptyset$.
(Possibly, $d_\fstar > \gamma$ in the second case.)

We then define a packet $f\in \backupplanBefGroup{\seggroup{\terminala,k}}\setminus\planPbefGroup{\seggroup{\terminala,k}}$, letting $f = \rho$ if
$\rho\in \backupplanBefGroup{\seggroup{\terminala,k}}$, and  $f = \fstar$ otherwise.  

Next, we modify sets $\varADV$ and $\varbackupplan$.
We remove $g$ from $\varADV$, i.e., $\AdvAftGroup{\seggroup{\terminala,k}} = \AdvBefGroup{\seggroup{\terminala,k}} \setminus \braced{g}$. This
ensures that invariant~{\InvariantA} is preserved,
because $h_\terminala\notin \AdvAftGroup{\seggroup{\terminala,k}}$ and
none of packets $h_{\terminala+1}, \dots, h_k$ is in $\AdvAftGroup{\seggroup{\terminala,k}}$ (by the definition of groups), thus
the decrease of their deadlines, implemented in this stage, does not affect $\varADV$.

Regarding $\varbackupplan$, note that $\rho\in \planPaftGroup{\seggroup{\terminala,k}}\setminus \AdvAftGroup{\seggroup{\terminala,k}}$, so we need
to include $\rho$ in $\backupplanAftGroup{\seggroup{\terminala,k}}$ to preserve invariant~{\InvariantB}(ii). Hence, we let
$\backupplanAftGroup{\seggroup{\terminala,k}} = \backupplanAftInitSeg \cup\braced{g} \setminus \braced{f,h_\terminala} \cup \braced{\rho}$;
this definition applies no matter whether $g=h_\terminala$ or $g=\gstar$ and whether $f=\rho$ or $f=\fstar$.  

\smallskip


\indentemparagraph{Deriving~\eqref{eq:leap-group} for the terminal group.}
Apart from the changes in the paragraph above,
the algorithm changes the weights of packets $h_{\terminala+1}, \dots, h_{k+1} = \rho$, which results in the
increase of the weight of the optimal plan.  Taking all these changes into account, the terms
on the left-hand side of~\eqref{eq:leap-group} are:
\begin{align}
\Delta_{\seggroup{\terminala,k}}\Psi 
			\;&=\; \phinegonef\razy \left( w^t_{g} - w^t_f - w^t_{h_\terminala} + w^{t+1}_\rho + \Delta^t w({h_{\terminala+1}}, \dots, {h_k}) \right)
			\nonumber
	 \\
	    	&=\; \phinegonef\razy \left( w^t_{g} - w^t_f - w^t_{h_\terminala} + \mu_k + \Delta^t w({h_{\terminala+1}}, \dots, {h_k}) \right)
			\label{eqn: 5.5.6 terminal delta psi}
	 \\
	\Delta^t w({h_{\terminala+1}}, \dots, {h_{k+1}}) 
			\;&=\; \Delta^t w({h_{\terminala+1}}, \dots, {h_k}) + w^{t+1}_\rho - w^t_\rho
			\nonumber
	\\
			\;&=\; \Delta^t w({h_{\terminala+1}}, \dots, {h_k}) + \mu_k - w^t_\rho
			\label{eqn: 5.5.6 terminal delta w}
	\\
			\advcredit^t_{\seggroup{\terminala,k}} \;&=\; w^t_g - \minwt(\planPaftArrivals, d^t_g)
			\label{eqn: 5.5.6 terminal advgain}
\end{align}
Equation~\eqref{eqn: 5.5.6 terminal delta psi} follows from the definition of $\backupplanAftGroup{\seggroup{\terminala,k}}$,
as $\rho$ is added to $\varbackupplan$ with its new weight $w^{k+1}_\rho = \mu_k$ and 
packets ${h_{\terminala+1}}, \dots, {h_k}$ remain in $\varbackupplan$, although their weights are changed.
(Recall that for each group $\seggroup{a,b}$, packets $h_{a+1},...,h_b$ are always in $\varbackupplan$.)
Equation~\eqref{eqn: 5.5.6 terminal delta w} holds because $w^{k+1}_\rho = \mu_k$. 
In $\varADV$, we just removed $g$, implying~\eqref{eqn: 5.5.6 terminal advgain}.
(See Section~\ref{subsec: adversary stash} for the definition of the adversary adjustment credit associated with the
removal of packets from  $\varADV$.)

We further claim that the following inequalities hold:
\begin{align}
	\Delta^t w({h_{\terminala+1}}, \dots, {h_k}) \;&\le\; \mu_\terminala - \mu_k
		\label{eqn: 5.5.6 terminal delta w inequality}
	\\
	 w^t_g \;&\le\; w^t_{h_\terminala}
	 \label{eqn: 5.5.6 terminal w(g) inequality}
	 \\
	 w^t_f\; &\le\; w^t_\rho
	  \label{eqn: 5.5.6 terminal w(f) inequality}
	  \\
	  \minwt(\planPaftArrivals, d^t_g) \;&\ge\; \mu_\terminala
		\label{eqn: 5.5.6 terminal minwt inequality}
\end{align}
To show~\eqref{eqn: 5.5.6 terminal delta w inequality}, we consider two
simple cases. If there is no $i\in \seggroup{\terminala+1,k}$ such that $w^t_{h_i} < \mu_{i-1}$,
then $\Delta^t w({h_{\terminala+1}}, \dots, {h_k}) = 0 \le \mu_\terminala - \mu_k$ as $\mu_\terminala\ge \mu_k$.
Otherwise, we use Lemma~\ref{lem:changeOfWeightsOfHs} with $a'=\terminala+1$ and $b'=k$ to get 
$\Delta^t w({h_{\terminala+1}}, \dots, {h_k}) \le \mu_\terminala - w^t_{h_k} \le \mu_\terminala - \mu_k$,
where the last inequality follows from $w^t_{h_k} \ge \mu_k$. 

Inequality~\eqref{eqn: 5.5.6 terminal w(g) inequality} is trivial if $g=h_{i_l}$, in which case also $\terminala = i_l$.
Otherwise, $\terminala > i_l$. By the definition of the terminal group,
$\terminala$ is the smallest index with $\tau_\terminala \ge d^t_g$,  and thus $d^t_g > \tau_{\terminala-1}$.
We also have that $h_\terminala$ is the heaviest packet in plan $\planPbefGroup{\seggroup{\terminala,k}} = \planPaftInitSeg$ with deadline in $(\tau_{\terminala-1},\gamma]$.
(This property holds for plan $\planPaftArrivals$, by the algorithm. However,
the changes in Section~\ref{subsubsec: leap step: processing S1}, where we process $[t,\alpha]$,
do not affect packets in the optimal plan after the first tight slot $\alpha\le \beta$;
see also the ``locality of changes'' invariant in Section~\ref{subsubsec: leap step: a road map}.)
As $g\in \planPbefGroup{\seggroup{\terminala,k}}$ and $d^t_g\in (\tau_{\terminala-1}, \gamma]$ 
(by the definition of $\terminala$), we conclude that $w^t_g\le w^t_{h_\terminala}$. 

Next, we justify~\eqref{eqn: 5.5.6 terminal w(f) inequality}. Indeed, since $d^t_f > \beta$
and the only packet possibly added to $\varbackupplan$ in the analysis of the transmission step
is $\firstlightpacket$ with $d^t_\firstlightpacket \le \alpha \le \beta$,
we have that $f\notin \planPaftArrivals$ and then the inequality holds by the definition of $\rho$.

Finally, inequality~\eqref{eqn: 5.5.6 terminal minwt inequality} follows from $\tau_\terminala\ge d^t_g$
and the monotonicity of the $\minwt()$ function.

Using the above equations and inequalities, we derive inequality~\eqref{eq:leap-group} as follows:
\begin{align}
\Delta_{\seggroup{\terminala,k}}\Psi \;&- \phi\razy \Delta^t w({h_{\terminala+1}}, \dots, h_{k+1}) - \advcredit^t_{\seggroup{\terminala,k}}
\nonumber\\
& =\; \phinegonef\razy \left( w^t_{g} - w^t_f - w^t_{h_\terminala} + \mu_k + \Delta^t w({h_{\terminala+1}}, \dots, {h_k}) \right) 
\nonumber\\
&\quad\quad	 - \phi\razy (\, \Delta^t w({h_{\terminala+1}}, \dots, {h_k}) + \mu_k - w^t_\rho \,) - (\, w^t_g - \minwt(\planPaftArrivals, d^t_g) \,)
\nonumber\\
& =\; - \phinegtwof\razy {w^t_g} - \phinegonef\razy {w^t_f} - \phinegonef\razy w^t_{h_\terminala}
	- \Delta^t w({h_{\terminala+1}}, \dots, {h_k}) 
\nonumber\\
&\qquad\qquad
	 - \mu_k + \phi\razy w^t_\rho + \minwt(\planPaftArrivals, d^t_g)
\nonumber\\
& \ge\;  - \phinegtwof\razy {w^t_{h_\terminala}} - \phinegonef\razy {w^t_\rho} - \phinegonef\razy w^t_{h_\terminala}
	 - (\mu_\terminala - \mu_k) - \mu_k + \phi\razy w^t_\rho + \mu_\terminala
\label{eqn: beta gamma terminal 1}
\\
& =\; - w^t_{h_\terminala} + w^t_{h_{k+1}}\,,
\label{eqn: beta gamma terminal 2}
\end{align}
where in inequality~\eqref{eqn: beta gamma terminal 1} we use~\eqref{eqn: 5.5.6 terminal delta w inequality},
\eqref{eqn: 5.5.6 terminal w(g) inequality}, \eqref{eqn: 5.5.6 terminal w(f) inequality}, 
and~\eqref{eqn: 5.5.6 terminal minwt inequality}, and in step~\eqref{eqn: beta gamma terminal 2}  we substituted $w^t_\rho = w^t_{h_{k+1}}$.

\smallskip

 
\subparagraph{Invariant~{\InvariantB} after the stage for the terminal group}
We claim that after the stage for the terminal group, invariant~{\InvariantB} holds for backup plan $\backupplanAftGroup{\seggroup{\terminala,k}}$.
(The proof given here is an extension of the one for the case~L.$[\beta,\gamma)$.S.2
in Section~\ref{subsubsec: processing (beta,gamma] in a simple leap step}.) 
To show this, we split all changes of this stage into three parts (see Figure~\ref{fig: 5.5.6 terminal three parts}):
\begin{enumerate}[nosep,label=(\roman*)]
\item First, we  implement the changes in $\varADV$ and $\varbackupplan\setminus\varplanP$, but postpone 
changes in $\varplanP$, that is, we consider sets
$\planPleapAftPartI = \planPbefGroup{\seggroup{\terminala,k}}$,
$\backupplanLeapAftPartI = \backupplanBefGroup{\seggroup{\terminala,k}} \cup\braced{g} \setminus \braced{f}$, and 
$\ADVleapAftPartI  = \AdvBefGroup{\seggroup{\terminala,k}} \setminus \braced{g} = \AdvAftGroup{\seggroup{\terminala,k}}$.
The deadlines of packets $h_{\terminala+1}, \dots, h_k$ are yet unchanged.
\item Second, we replace $h_\terminala$ by $\rho$ in $\varplanP$, but still not implement the deadline changes,
so $\planPleapAftPartII = \planPleapAftPartI \setminus\braced{h_\terminala}\cup \braced{\rho}$,
$\backupplanLeapAftPartII = \backupplanLeapAftPartI \setminus \braced{h_\terminala} \cup \braced{\rho}$, 
and  $\ADVleapAftPartII = \ADVleapAftPartI$.
\item Third, we decrease the deadlines of packets $h_{\terminala+1}, \dots, h_k$, ending up with plan $\planPaftGroup{\seggroup{\terminala,k}}$
and backup plan $\backupplanAftGroup{\seggroup{\terminala,k}}$. 
(Note that $\varADV$ and $\varbackupplan\setminus\varplanP$ are not affected by the deadline decreases,
since packets $h_{\terminala+1}, \dots, h_k$ are in $\planPleapAftPartII\setminus \ADVleapAftPartII$.)
\end{enumerate}
%

\begin{figure}[ht]
\begin{center}
	\includegraphics[width = 6.4in]{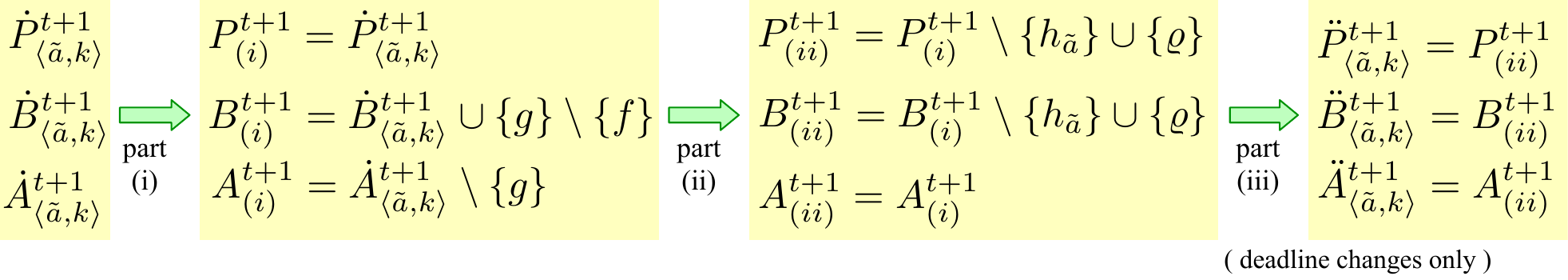}
	\caption{Partition of the stage for the terminal group into three parts in the
		proof that invariant~{\InvariantB} is preserved.
	}
	\label{fig: 5.5.6 terminal three parts}
\end{center}
\end{figure}

In the argument for the first two parts, we can assume that $f=\fstar\neq \rho$ or $g=\gstar\neq h_\terminala$ (or both),
since if both $f = \rho$ and $g = h_\terminala$ then 
$\backupplanLeapAftPartII= \backupplanBefGroup{\seggroup{\terminala,k}} = \backupplanAftInitSeg$,
in which case invariant~{\InvariantB} holds after part~(ii), as~{\InvariantB}(i) is trivial and~{\InvariantB}(ii)
follows from the way $\varplanP$ and $\varADV$ are modified in parts~(i) and~(ii).
(Recall that $g \in\braced{ \gstar, h_{\terminala} }$ and that $g = h_{\terminala}$ iff $\terminala = i_l$.)


\begin{claim}\label{clm:caseL2B-feasibilityOfC}
Assume that $f=\fstar\neq \rho$ or $g=\gstar\neq h_\terminala$.
Then set $\backupplanLeapAftPartI = \backupplanBefGroup{\seggroup{\terminala,k}} \cup\braced{g} \setminus \braced{f}$ is a feasible
set of packets pending at time $t+1$.
\end{claim}

\begin{proof} 
Recall that $\planPbefGroup{\seggroup{\terminala,k}} = \planPaftInitSeg$ and that the same relations hold for
sets $\varbackupplan$ and $\varADV$. Also, we are currently in Case~L.$(\beta,\gamma]$.I.2, in which
either $\rho\in \backupplanAftInitSeg$ or $\ADVaftInitSeg(\beta, \gamma] \neq \emptyset$.

The claim clearly holds if $d^t_f\le d^t_g$.
Otherwise, by the definition of tight slots, it suffices to prove $d^t_f \le \nexttightslot(\backupplanAftInitSeg, d^t_g)$, from which
the feasibility of $\backupplanLeapAftPartI$ follows.
In particular, we show that $\pslack(\backupplanAftInitSeg, \tau)\ge 1$ for any $\tau\in [d^t_g, d^t_f)$.

We now consider two cases.
First, assume that $\rho\notin \backupplanAftInitSeg$.
This implies $f = \fstar$ and $\ADVaftInitSeg(\beta, \gamma] \neq \emptyset$,
which means that $d^t_g\in (\beta, \gamma]$, no matter whether $g = h_\terminala$ or $g = \gstar$.
Then, for $\tau\in [d^t_g, d^t_f)$, Lemma~\ref{lem:pslackRelation}(a) with $\zeta = \beta$ gives us 
\begin{equation*}
\pslack(\backupplanAftInitSeg, \tau) 
		\;\ge\; \pslack(\planPaftInitSeg, \tau) + |\ADVaftInitSeg(\beta,\tau]|
		\;\ge\; 1,
\end{equation*}
since $\ADVaftInitSeg(\beta,\tau]$ contains $g$ and $\pslack(\planPaftInitSeg, \tau)\ge 0$.

Otherwise, $\rho\in \backupplanAftInitSeg\setminus\planPaftInitSeg$, so $f = \rho$.
In this case, $g = \gstar$ holds (as we assume in this part that $f = \fstar\neq \rho$ or $g = \gstar\neq h_\terminala$).
For $\tau\in [d^t_g, d^t_f)$, we use Lemma~\ref{lem:pslackRelation}(b) with $\zeta = \gamma$ to obtain
\begin{equation*}
\pslack(\backupplanAftInitSeg, \tau) 
			\;\ge\; \pslack(\planPaftInitSeg, \tau) + |\backupplanAftInitSeg(\tau,\gamma]\setminus\planPaftInitSeg| 
			\;\ge\; 1,  
\end{equation*}
where we use $f = \rho\in \backupplanAftInitSeg(\tau,\gamma]\setminus\planPaftInitSeg$ in the second inequality.
\end{proof}

From the definitions of $\planPleapAftPartI$, $\ADVleapAftPartI$, and $\backupplanLeapAftPartI$, and using
Claim~\ref{clm:caseL2B-feasibilityOfC}, invariants~{\InvariantA} and~{\InvariantB} hold for these three
intermediate sets.

Before considering part~(ii) (that is, replacing $h_\terminala$ by $\rho$ in $\varplanP$ and $\varbackupplan$),
we prove the following claim:


\begin{claim}\label{clm:caseL2B-slackInC}
Assume that $f=\fstar\neq \rho$ or $g=\gstar\neq h_\terminala$. If $d^t_\rho < d^t_{h_\terminala}$, then $\terminala = k$ and
$\pslack(\backupplanLeapAftPartI, \tau)\ge 1$ for any $\tau\in [d^t_\rho, d^t_{h_\terminala})$.
\end{claim}

\begin{proof}
Let $\chi = \prevtightslot^t(d^t_\rho)$.
By the definitions of $\gamma$ and $\rho$ and since the tight slots of $\planPaftInitSeg$ are the same as those of $\planPaftArrivals$
starting with slot $\alpha$, $d^t_\rho$ is in the segment $(\chi,\gamma]$ of $\planPleapAftPartI = \planPaftInitSeg$; 
consequently, $d^t_{h_\terminala}$ also belongs to this segment. So $\terminala = k$, proving the first part of the claim.
Moreover, it follows that $\pslack(\planPleapAftPartI,\tau)\ge 1$ for $\tau\in [d^t_\rho, d^t_{h_k})$.

We now make two observations that will be used later to prove Claim~\ref{clm:caseL2B-slackInC}.
First, observe that replacing $f$ by $g$ in $\varbackupplan$ results in
\begin{equation}
\pslack(\backupplanLeapAftPartI,\tau) = \pslack(\backupplanAftInitSeg,\tau) + 1 \ge 1 \quad\textrm{for}\; \tau\in [d^t_f, d^t_g) \,,
\label{eqn: claim caseL2B-slackInC 1}
\end{equation}
where the inequality follows from the feasibility of $\backupplanAftInitSeg$.   
(It is possible that $d^t_g \le d^t_f$, in which case this interval is empty.)

Next, let $\hatf$ be the earliest-deadline packet in $\backupplanLeapAftPartI(\chi, T] \setminus \planPleapAftPartI $ and
let $\hatg$ be the latest-deadline packet in $\ADVleapAftPartI[t+1, \gamma]$;
if $\hatf$ is undefined, we let $d^t_\hatf = T$, and similarly if $\hatg$ is undefined, we let $d^t_\hatg = t+1$.  
By Lemma~\ref{lem:pslackRelation}
(which we can apply because invariants~{\InvariantA} and~{\InvariantB} are true for $\planPleapAftPartI$, $\ADVleapAftPartI$, and $\backupplanLeapAftPartI$),
it holds that
\begin{equation}
\pslack(\backupplanLeapAftPartI,\tau)\ge  \pslack(\planPleapAftPartI,\tau)\ge 1
\quad\textrm{for}\; \tau\in (\chi, \min(d^t_\hatf,\gamma))\cup [\max(d^t_\hatg,\chi+1),\gamma).
\label{eqn: claim caseL2B-slackInC 2}
\end{equation}
Specifically, 
we apply Lemma~\ref{lem:pslackRelation}(a) with $\zeta = \chi$ to derive the first inequality for $\tau\in (\chi, d^t_\hatf)$,
and Lemma~\ref{lem:pslackRelation}(b) with $\zeta = \gamma$ to derive the first inequality for $\tau\in [d^t_\hatg,\gamma)$.
The second inequality holds because $(\chi,\gamma]$ is a segment of $\planPleapAftPartI$.

If $d^t_\hatf \ge \gamma$ or $d^t_\hatg \le \chi$, then inequality~\eqref{eqn: claim caseL2B-slackInC 2} directly
implies the lemma, because $[d^t_\rho,d^t_{h_k}) \subseteq (\chi,\gamma)$.
Therefore, for the rest of the proof, we will assume that $d^t_\hatf < \gamma$ and $d^t_\hatg > \chi$.

To complete the proof, 
it is sufficient to show that inequalities~\eqref{eqn: claim caseL2B-slackInC 1} and~\eqref{eqn: claim caseL2B-slackInC 2} cover the whole 
range of $\tau\in [d^t_\rho, d^t_{h_k})$, that is
\begin{equation}
[d^t_\rho, d^t_{h_k}) \subseteq (\chi, d^t_\hatf )\cup  [d^t_f, d^t_g) \cup [ d^t_\hatg,\gamma).
\label{eqn: claim caseL2B-slackInC 3}
\end{equation}

Note that $d^t_\hatg \leq d^t_\gstar$
and $d^t_\hatf \geq d^t_\fstar$ by the definitions of these packets and of $\backupplanLeapAftPartI$ and $\ADVleapAftPartI$ (note that $g$ is not a candidate for $\hatf$ as $g\in \planPleapAftPartI$).
If $f = \rho$, then it holds that $g = \gstar$ by the assumption, and the interval $[d^t_\rho, d^t_{h_k})$ is covered by
$[d^t_f, d^t_g) = [d^t_\rho, d^t_\gstar)$ and $[d^t_\hatg,\gamma)$, because $d^t_\hatg \le d^t_\gstar$ and $d^t_{h_k}\le \gamma$.
Similarly, if $g = h_k$, we have $f = \fstar$ and $[d^t_\rho, d^t_{h_k})$ is covered by
$[d^t_f, d^t_g) = [d^t_\fstar, d^t_{h_k})$ and $(\chi, d^t_\hatf)$, since $d^t_\hatf \ge d^t_\fstar$.
Finally, consider the case of $f = \fstar$ and $g = \gstar$. Then
the interval $[d^t_\rho, d^t_{h_k})\subseteq (\chi, \gamma)$ is covered by
$(\chi, d^t_\fstar)\cup [d^t_\fstar, d^t_\gstar)\cup [d^t_\gstar,\gamma)$,
which implies~\eqref{eqn: claim caseL2B-slackInC 3}, because
$d^t_\hatf \ge d^t_\fstar$ and $d^t_\hatg \le d^t_\gstar$.
\end{proof}


\smallskip

We now address the second part, where $\backupplanLeapAftPartI$ is changed into  
$\backupplanLeapAftPartII = \backupplanLeapAftPartI \setminus \braced{h_\terminala} \cup \braced{\rho}$,
and we show that $\backupplanLeapAftPartII$ satisfies invariant~{\InvariantB}.
We consider the feasibility of $\backupplanLeapAftPartII$ first.
Recall that in this part we assume $f = \fstar$ or $g = \gstar$.
If $d^t_\rho \ge d^t_{h_\terminala}$, then the feasibility of $\varbackupplan$ is clearly preserved after replacing $h_\terminala$ by $\rho$. 
Otherwise, $d^t_\rho < d^t_{h_\terminala}$, in which case, by Claim~\ref{clm:caseL2B-slackInC},
we have $\terminala=k$ and $\pslack(\backupplanLeapAftPartII,\tau) = \pslack(\backupplanLeapAftPartI,\tau) - 1 \ge 0$ for $\tau\in [d^t_\rho, d^t_{h_\terminala})$;
while $\pslack(\backupplanLeapAftPartII,\tau) = \pslack(\backupplanLeapAftPartI,\tau)\ge 0$ for $\tau\notin [d^t_\rho, d^t_{h_\terminala})$,
by Claim~\ref{clm:caseL2B-feasibilityOfC}. Therefore, $\backupplanLeapAftPartII$ is indeed feasible.
In this part, we replace $h_\tildea$ by $\varrho$ in both $\varplanP$ and $\varbackupplan$ and $\varADV$ is unchanged,
so invariant~{\InvariantB}(ii) is preserved as well.

\smallskip

Finally, we consider the third part of this stage
where the deadline decreases of $h_{\terminala+1}, \dots, h_k$ are implemented,
converting $\backupplanLeapAftPartII$ into $\backupplanAftGroup{\seggroup{\terminala,k}}$.
(Unlike in the first two parts, we perform this part also in the case $f=\rho$ and $g=h_\terminala$.)
We only need to show that $\backupplanAftGroup{\seggroup{\terminala,k}}$ remains feasible, since
invariant~{\InvariantB}(ii) is not affected.

There is no deadline decrease in this part if $\terminala=k$, so it remains to deal with the case when $\terminala < k$.
Recall that the algorithm will change the deadline $d^t_{h_i}$ to $d^{t+1}_{h_i} = \tau_{i-1}$, for each $i = \terminala+1,...,k$.
These changes result in decreasing $\pslack(\varbackupplan)$
by one in each interval $[\tau_\terminala, d^t_{h_{\terminala+1}}), \dots, [\tau_{k-1}, d^t_{h_k})$. Other slots in
$\varbackupplan$ are not affected.
These intervals are disjoint, because $d^t_{h_i}\le \tau_i$ for $i=0,\dots,k$. Thus, it suffices to show the following claim.


\begin{claim}\label{clm:caseL2B-part(iii) slack in B}
	Assume that $\terminala < k$. Then
$\pslack(\backupplanLeapAftPartII, \tau)\ge 1$ for any $\tau\in [\tau_\terminala, \gamma)$.
\end{claim}


\begin{proof}
We first show that $\pslack(\planPleapAftPartII, \tau)\ge 1$ for any $\tau\in [\tau_\terminala, \gamma)$.
To this end, we consider two cases.

The first case is when $\tau \in [\tau_\terminala,d^t_\rho)$. For such $\tau$ we have
$\pslack(\planPleapAftPartII, \tau) \ge \pslack(\planPleapAftPartI, \tau) + 1\ge 1$,
because replacing $h_\terminala$ by $\rho$ causes  $\pslack(\varplanP, \tau)$ to increase by 1 for 
$\tau\in [d^t_{h_\terminala}, d^t_\rho)\supseteq [\tau_\terminala, d^t_\rho)$, and because $\planPleapAftPartI = \planPaftInitSeg$ is feasible.

The second case is when $\tau \in [d^t_\rho,\gamma)$. In this case we argue as follows.
Since for $\terminala < k$ replacing $h_\terminala$ by $\rho$ in $\varplanP$
does not decrease any value $\pslack(\varplanP, \tau)$ (i.e., there are no new tight slots in $\varplanP$)
and each slot $\tau\in [d^t_\rho,\gamma)$ belongs to the segment of $\planPleapAftPartII$ ending at $\gamma$,
we indeed have $\pslack(\planPleapAftPartII, \tau)\ge 1$ for any $\tau \in [d^t_\rho,\gamma)$. 

We thus have $\pslack(\planPleapAftPartII, \tau)\ge 1$ for all $\tau\in [\tau_\terminala, \gamma)$, as claimed.
To finish the proof,
note that $d^t_\gstar \le \tau_\terminala$ by the definitions of $\gstar$ and the terminal group $\seggroup{\terminala,k}$,
and that $\ADVleapAftPartII(\tau_\terminala, \gamma] = \emptyset$. Thus
for $\tau\in [\tau_\terminala, \gamma)$, we can apply Lemma~\ref{lem:pslackRelation}(b) with $\zeta = \gamma$
to obtain $\pslack(\backupplanLeapAftPartII, \tau)\ge \pslack(\planPleapAftPartII, \tau) \ge 1$,
completing the proof.
\end{proof}

Putting it all together and using Claim~\ref{clm:caseL2B-part(iii) slack in B}, we obtain that
$\pslack(\backupplanAftGroup{\seggroup{\terminala,k}}, \tau)\ge \pslack(\backupplanLeapAftPartII, \tau) - 1\ge 0$
for any $\tau\in [\tau_\terminala, \gamma)$, while
$\pslack(\backupplanAftGroup{\seggroup{\terminala,k}}, \tau) = \pslack(\backupplanLeapAftPartII, \tau)\ge 0$
for $\tau\notin [\tau_\terminala, \gamma)$. This concludes the proof that invariant~{\InvariantB} holds after
the stage for the terminal group.

\paragraph{Stage for a middle group.}
We now analyze the stage in which the segments represented by a middle group $\seggroup{a,b}$ of indices are processed.
The current intermediate plan, backup plan, and adversary stash are $\planPbefGroup{\seggroup{a,b}}$,
$\backupplanBefGroup{\seggroup{a,b}}$, and $\AdvBefGroup{\seggroup{a,b}}$, respectively, and we assume that they satisfy
invariants~{\InvariantA} and~{\InvariantB}. At this point, it holds that:
\begin{enumerate}[label=(\alph*),itemsep=2pt] 
\item $h_a\in \AdvBefGroup{\seggroup{a,b}}$ by the definition of the middle group,
\item $h_{b+1}\notin  \planPbefGroup{\seggroup{a,b}}$ as a result of the previous stage (recall that $b < k$ for a middle group),
which implies that $h_{b+1}\notin \AdvBefGroup{\seggroup{a,b}}$ by invariant~{\InvariantA}, and
\item $h_{b+1}\notin  \backupplanBefGroup{\seggroup{a,b}}$, because $h_{b+1}$ was removed from $\calP$ in the previous stage
(that is, we had $h_{b+1}\in \planPaftArrivals(\alpha, \timehorizon]$, but now $h_{b+1}\notin \planPbefGroup{\seggroup{a,b}}$),
and because we do not add any packets to $\varbackupplan(\alpha, \timehorizon]\setminus\varplanP$ while processing a leap step 
(see the ``monotonicity of $\varbackupplan\setminus\varplanP$'' invariant in Section~\ref{subsubsec: leap step: a road map}).
\end{enumerate}

The changes of plan $\varplanP$ in this stage involve removing $h_a$ and adding back $h_{b+1}$;
that is $\planPbefGroup{\seggroup{a,b}}$ is changed to $\planPaftGroup{\seggroup{a,b}} = \planPbefGroup{\seggroup{a,b}} \setminus \braced{h_a} \cup \braced{h_{b+1}}$.
We also decrease the deadlines of packets $h_{a+1}, \dots, h_{b+1}$ and
increase the weights of a subset of packets $h_{a+1}, \dots, h_{b+1}$, according to the algorithm.
Since $h_a$ is removed from $\varplanP$, we also need to remove it from  $\varADV$ to preserve invariant~{\InvariantA}.  
Similarly, as $h_{b+1}$ is added back to $\varplanP$ (and is not in $\varADV$), it needs to be added
to $\varbackupplan$ or to $\varADV$ so that invariant~{\InvariantB}(ii) continues to hold.
The changes of $\varADV$ and $\varbackupplan$ depend on two cases, which are described below.


\medskip
\noindent
\mycase{L.$(\beta,\gamma]$.M.1}
There is an index $i\in [a+1,b+1]$ such that $w^t_{h_{i}} < \mu_{i-1}$, i.e., the algorithm (in line~11) increases the weight of $h_{i}$.
We will remove $h_a$ from $\varADV$ and add $h_{b+1}$ to $\varbackupplan$.
To restore the feasibility of $\varbackupplan$,
we apply Lemma~\ref{lem: restore feasibility of B} for packet $g = h_a$
to identify a packet $f_a\in \backupplanBefGroup{\seggroup{a,b}} \setminus \planPbefGroup{\seggroup{a,b}}$ for which set 
$\backupplanBefGroup{\seggroup{a,b}}\setminus \braced{f_a}\cup \braced{h_a}$ is feasible
and $d^t_{f_a} > \prevtightslot(\planPbefGroup{\seggroup{a,b}}, d^t_{h_a})$. We then let
\begin{equation*}
\AdvAftGroup{\seggroup{a,b}} \;=\; \AdvBefGroup{\seggroup{a,b}}\setminus\braced{h_a} 
\quad\textrm{and}\quad
\backupplanAftGroup{\seggroup{a,b}} \;=\; \backupplanBefGroup{\seggroup{a,b}}\setminus \braced{f_a}\cup \braced{h_{b+1}}    \,.
\end{equation*}
With these changes, {\InvariantA} is preserved, since also none of packets $h_{a+1}, \dots, h_{b+1}$,
whose deadlines are decreased, is in $\AdvAftGroup{\seggroup{a,b}}$.
We will show below that $\backupplanAftGroup{\seggroup{a,b}}$ is feasible,
even with the deadlines of $h_{a+1}, \dots, h_{b+1}$ decreased by the algorithm.

\smallskip


\indentemparagraph{Deriving \eqref{eq:leap-group} in Case~L.$(\beta,\gamma]$.M.1.}
We need to take into account the addition of $h_{b+1}$ in $\varbackupplan$ and
possible change of weights of some packets $h_{a+1}, \dots, h_{b+1}$. Taking these changes into account, the terms
on the left-hand side of~\eqref{eq:leap-group} can be expressed or estimated as follows:
\begin{align}
	\Delta_{\seggroup{a,b}}\Psi 
			\;&=\; \phinegonef \razy \left( \Delta^t w( {h_{a+1}}, \dots, {h_{b+1}} ) + w^t_{h_{b+1}}  - w^t_{f_a} \right)
			\label{eqn: 5.5.6 middle 1 psi}
	 \\
	\Delta^t w({h_{a+1}}, \dots, {h_{b+1}}) \;&\le\; \mu_a - w^t_{h_{b+1}}
			\label{eqn: 5.5.6 middle 1 delta w}
	\\
	\advcredit^t_{\seggroup{a,b}} \;&=\; w^t_{h_a} - \minwt(\planPbefGroup{\seggroup{a,b}}, d^t_{h_a}) \;\le\; w^t_{h_a} - \mu_a
			\label{eqn: 5.5.6 middle 1 advgain}
\end{align}
To justify~(\ref{eqn: 5.5.6 middle 1 psi}), note that while
$h_{b+1}\notin \backupplanBefGroup{\seggroup{a,b}}$ is added to $\varbackupplan$ with its new weight $w^{t+1}_{h_{b+1}}$, its
weight increase of $w^{t+1}_{h_{b+1}} - w^t_{h_{b+1}}$ is already accounted for in $\Delta^t w( {h_{a+1}}, \dots, {h_{b+1}} )$. 

Inequality~(\ref{eqn: 5.5.6 middle 1 delta w}) can be derived from the case condition,
which says that there is $i\in [a+1,b+1]$ such that $w^t_{h_{i}} < \mu_{i-1}$, so we just
need to apply Lemma~\ref{lem:changeOfWeightsOfHs} with $a'=a+1$ and $b'=b+1$
to get~(\ref{eqn: 5.5.6 middle 1 delta w}).

Inequality (\ref{eqn: 5.5.6 middle 1 advgain}) follows from the definition of $\mu_a$ as
$\mu_a = \minwt(\planP^t, d^t_{h_a})$ and from the observation that
$\minwt(\planPbefGroup{\seggroup{a,b}}, d^t_{h_a}) \ge  \minwt(\planP^t, d^t_{h_a})$, which in turn follows from
$\nexttightslot(\planPbefGroup{\seggroup{a,b}}, d^t_{h_a}) = \nexttightslot(\planP^t, d^t_{h_a})$
and $\planPbefGroup{\seggroup{a,b}}[t+1, \eta] = \planP^t[t, \eta]\setminus\braced{\firstlightpacket}$,
where $\eta = \nexttightslot(\planPbefGroup{\seggroup{a,b}}, d^t_{h_a})$.
(The equality $\nexttightslot(\planPbefGroup{\seggroup{a,b}}, d^t_{h_a}) = \nexttightslot(\planP^t, d^t_{h_a})$ holds
because of the ``locality of changes'' invariant from Section~\ref{subsubsec: leap step: a road map}).

Next, we claim that 
\begin{equation}
		w^t_{f_a} \;\le\; w^t_{h_{b+1}}. 
		\label{eqn: 5.5.6 bound on w-f-a}
\end{equation}
Indeed, $f_a$ is not in $\planP^t$ since, by its definition, $f_a\notin \planPbefGroup{\seggroup{a,b}}$ and
$d^t_{f_a} > \prevtightslot(\planPbefGroup{\seggroup{a,b}} d^t_{h_a}) \ge \beta$, so $f_a$ could not have been
removed from $\varplanP$ in this step. (Note that $h_{b+1}$ is temporarily removed from $\varplanP$, but it
cannot be equal $f_a$ because $h_{b+1}\notin \backupplanBefGroup{\seggroup{a,b}}$).
Thus $w^t_{f_a} \le w^t_\rho$, because $\rho$ is the heaviest pending packet not in $\planPaftArrivals$ with deadline after $\beta$
and $f_a\notin \planPaftArrivals$ is pending in step $t$.
By Lemma~\ref{lem:leapStep}(b) we have $w^t_\rho \le w^t_{h_{b+1}}$, which implies that
$w^t_{f_a} \le w^t_{h_{b+1}}$. (As an additional clarification, the argument in this paragraph relies tacitly on two properties
of earlier changes in processing the transmission event: One, no packets with deadlines after
$\alpha$ are added to $\varbackupplan\setminus\varplanP$, and two, 
the tight slots of $\planPbefGroup{\seggroup{a,b}}$ in the interval $(\beta, \tau_b]$ are the same as tight slots in $\planPaftArrivals$;
cf.\ Section~\ref{subsubsec: leap step: a road map}.)

Then we prove \eqref{eq:leap-group} as follows:
\begin{align}
\Delta_{\seggroup{a,b}}\Psi &- \phi\razy \Delta^t w({h_{a+1}}, \dots, {h_{b+1}}) - \advcredit^t_{\seggroup{a,b}}
\nonumber\\
& \ge\; \phinegonef \razy \left( \Delta^t w( {h_{a+1}}, \dots, {h_{b+1}} ) + w^t_{h_{b+1}}  - w^t_{f_a} \right)
\nonumber\\
&\qquad\qquad
	 - \phi\razy \Delta^t w({h_{a+1}}, \dots, {h_{b+1}}) - (w^t_{h_a} - \mu_a)
\nonumber\\
& =\; \phinegonef \razy \left( w^t_{h_{b+1}} -  w^t_{f_a} \right) - \Delta^t w({h_{a+1}}, \dots, {h_{b+1}}) - w^t_{h_a} + \mu_a
\nonumber\\
& \ge \; - (\mu_a - w^t_{h_{b+1}}) - w^t_{h_a} + \mu_a
\label{eqn: iterated middle M1}
\\
&=\;  - w^t_{h_a} + w^t_{h_{b+1}} \,,
\nonumber
\end{align}
where inequality~\eqref{eqn: iterated middle M1} follows from~(\ref{eqn: 5.5.6 bound on w-f-a}) and~(\ref{eqn: 5.5.6 middle 1 delta w}).

\smallskip


\indentemparagraph{Invariant~{\InvariantB} in Case~L.$(\beta,\gamma]$.M.1.}
We claim that after the stage for the middle group $\seggroup{a,b}$, invariant~{\InvariantB} holds,
which we show by partitioning this stage into three parts:
\begin{description}
	\item{(i)} We first let $\backupplanLeapAftPartI = \backupplanBefGroup{\seggroup{a,b}}\setminus \braced{f_a}\cup \braced{h_a}$.
	\item{(ii)} Second, we modify $\backupplanLeapAftPartI$ to obtain set $\backupplanLeapAftPartII = \backupplanLeapAftPartI\setminus \braced{h_a}\cup\braced{h_{b+1}}$.
	\item{(iii)} Finally, we implement the decrease of the deadlines of packets $h_{a+1}, \dots, h_{b+1}$,
changing $\backupplanLeapAftPartII$ to $\backupplanAftGroup{\seggroup{a,b}}$. 
\end{description}
In each part, we also appropriately modify the sets $\varplanP$ and $\varADV$, as illustrated in
Figure~\ref{fig: 5.5.6 middle three parts}.

\begin{figure}[ht]
\begin{center}
	\includegraphics[width = 6.35in]{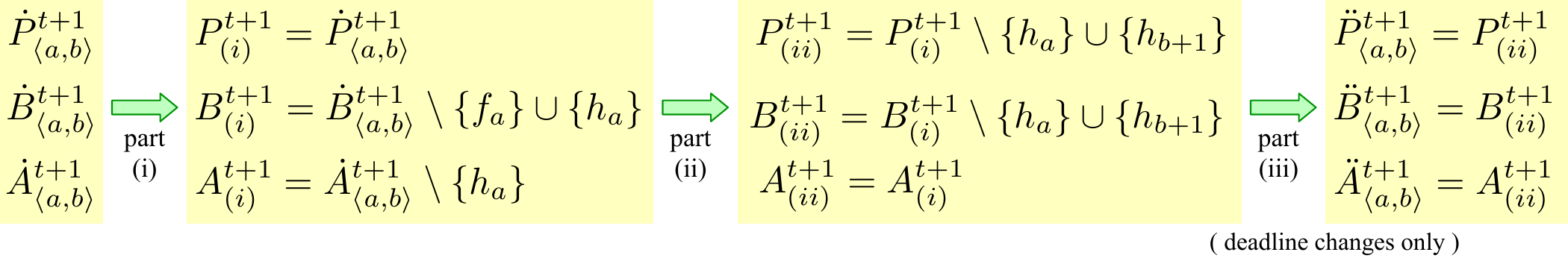}
	\caption{Partition of the stage for the middle group in Case~L.$(\beta,\gamma]$.M.1 into three parts.
	}
	\label{fig: 5.5.6 middle three parts}
\end{center}
\end{figure}

That $\backupplanLeapAftPartI$ is feasible follows from the choice of $f_a$ and Lemma~\ref{lem: restore feasibility of B}.
Since $d^t_{h_a} < d^t_{h_{b+1}}$, $\backupplanLeapAftPartII$ is also a feasible set of packets,
and moreover, $\pslack(\backupplanLeapAftPartII, \tau) =  \pslack(\backupplanLeapAftPartI, \tau) + 1\ge 1$ for any $\tau\in [d^t_{h_a}, d^t_{h_{b+1}})$, by the feasibility of $\backupplanLeapAftPartI$.
Finally, we argue that $\backupplanAftGroup{\seggroup{a,b}}$ remains feasible. The deadline decreases for packets $h_{a+1}, \dots, h_{b+1}$
decrease $\pslack(\varbackupplan)$ by one for slots
in each of the disjoint intervals $[\tau_a, d^t_{h_{a+1}}), \dots, [\tau_b, d^t_{h_{b+1}})$.
As $d^t_{h_a}\le \tau_a$, all these intervals are contained in $[d^t_{h_a}, d^t_{h_{b+1}})$; hence, by the earlier observation,
it holds that $\pslack(\backupplanAftGroup{\seggroup{a,b}}, \tau) \ge \pslack(\backupplanLeapAftPartII, \tau) - 1\ge 0$ 
for any $\tau\in [d^t_{h_a}, d^t_{h_{b+1}})$.   For $\tau\notin [d^t_{h_a}, d^t_{h_{b+1}})$, the values of $\pslack(\varbackupplan)$ do not change.
It follows that $\backupplanAftGroup{\seggroup{a,b}}$ is feasible; that is, invariant~{\InvariantB}(i) holds after the stage.
That invariant~{\InvariantB}(ii) holds as well follows directly from the definitions of
$\planPaftGroup{\seggroup{a,b}}$, $\backupplanAftGroup{\seggroup{a,b}}$, and $\AdvAftGroup{\seggroup{a,b}}$.


\medskip
\noindent
\mycase{L.$(\beta,\gamma]$.M.2}
For all indices $i\in [a+1,b+1]$ we have $w^t_{h_{i}} \ge \mu_{i-1}$. Thus in this case
the algorithm does not increase the weights of packets $h_{a+1},...,h_{b+1}$, i.e.,
$\Delta^t w({h_{a+1}}, \dots, {h_{b+1}}) = 0$.

Before describing the changes in $\varADV$, we argue that $h_{a+1}\notin \AdvBefGroup{\seggroup{a,b}}$.
This is trivial if $b > a$, because then packets $h_{a+1}, \dots, h_b$ are not in $\varADV$
before processing the groups. Otherwise, we have $a=b$. Since in the stages, we
process the groups backwards with respect to time slots,
the stage for the group $\seggroup{b+1,b'}$ containing index $b+1$ is already completed, and
it guarantees that $h_{b+1}$ is not in $\planPaftGroup{\seggroup{b+1,b'}} = \planPbefGroup{\seggroup{a,b}}$
and thus not in $\AdvAftGroup{\seggroup{b+1,b'}} = \AdvBefGroup{\seggroup{a,b}}$ by invariant~{\InvariantA}, as claimed. 
(See also the overview of the partition into stages, earlier in the description of Case~L.$(\beta,\gamma]$.I.2.)

We now change $\varADV$ as follows:
we remove $h_a$ from $\varADV$, and we put $h_{a+1}$ in the former slot of $h_a$ in $\AdvBefGroup{\seggroup{a,b}}$,
that is  $\AdvAftGroup{\seggroup{a,b}} = \AdvBefGroup{\seggroup{a,b}}\setminus\braced{h_a}\cup \braced{h_{a+1}}$.
This is valid because, as we showed above, $h_{a+1}\notin \AdvBefGroup{\seggroup{a,b}}$, and
the new deadline of $h_{a+1}$ is $\tau_a$.
Then $\advcredit^t_{\seggroup{a,b}} = w^t_{h_a} - w^t_{h_{a+1}}$, as the weight of $h_{a+1}$ does not change. 
Since packets $h_{a+2},\dots,h_{b+1}$ are not in $\AdvAftGroup{\seggroup{a,b}}$,
the decreases of their deadlines do not affect $\varADV$.
After these changes, invariant~{\InvariantA} will be preserved.

Regarding $\varbackupplan$, recall that $h_{b+1}\notin \backupplanBefGroup{\seggroup{a,b}}$,
$h_a\in \AdvBefGroup{\seggroup{a,b}}$, and $h_a\notin \planPaftGroup{\seggroup{a,b}}$. Also,
if $b > a$ then $h_{a+1}\in \backupplanBefGroup{\seggroup{a,b}}$.
Then, since $h_{a+1}$ is newly added to $\varADV$, 
we let $\backupplanAftGroup{\seggroup{a,b}} = \backupplanBefGroup{\seggroup{a,b}} \cup \braced{h_{b+1}} \setminus \braced{h_{a+1}}$;
it is possible here that $h_{a+1} = h_{b+1}$, in which case $\varbackupplan$ does not change.

\smallskip


\indentemparagraph{Deriving~\eqref{eq:leap-group} in Case~L.$(\beta,\gamma]$.M.2.}
In this case, the terms on the left-hand side of~\eqref{eq:leap-group} are
\begin{align*}
	\Delta_{\seggroup{a,b}}\Psi 
			\;&=\; \phinegonef \razy \left( - w^t_{h_{a+1}} + w^t_{h_{b+1}} \right)
	 \\
	\Delta^t w({h_{a+1}}, \dots, {h_{b+1}}) \;&=\; 0
	\\
	\advcredit^t_{\seggroup{a,b}} \;&=\; w^t_{h_a} - w^t_{h_{a+1}}
\end{align*}
with all formulas following directly from the definition and the fact that 
the weights remain unchanged. Plugging these in, we obtain:
\begin{align}
\Delta_{\seggroup{a,b}}\Psi &- \phi \razy \Delta^t w({h_{a+1}}, \dots, {h_{b+1}}) - \advcredit^t_{\seggroup{a,b}}
	\nonumber\\
	& =  \; \phinegonef \razy \left( - w^t_{h_{a+1}} + w^t_{h_{b+1}} \right) - \phi\cdot 0 - ( w^t_{h_a} - w^t_{h_{a+1}} )
	\nonumber\\
	& = \; - w^t_{h_a}  + \phinegtwof \razy w^t_{h_{a+1}} + \phinegonef \razy w^t_{h_{b+1}}
	\nonumber\\
	& \ge \; - w^t_{h_a}  + \phinegtwof \razy w^t_{h_{b+1}} + \phinegonef \razy w^t_{h_{b+1}}
	\label{eqn: iterated middle M2}
	\\
	& = \; -  w^t_{h_a}  + w^t_{h_{b+1}} \,,
	\nonumber
\end{align}
where inequality~\eqref{eqn: iterated middle M2} follows from $w^t_{h_{a+1}} \ge w^t_{h_{b+1}}$, by Lemma~\ref{lem:leapStep}(b) and $a\le b$.

\smallskip


\subparagraph{Invariant~{\InvariantB} in Case~L.$(\beta,\gamma]$.M.2}
We show that $\backupplanAftGroup{\seggroup{a,b}}$ is feasible  by considering first an intermediate backup plan
$\backupplanLeapAftPartI = \backupplanBefGroup{\seggroup{a,b}}\cup \braced{h_{b+1}}\setminus\braced{h_{a+1}}$,
computed with packets $h_{a+1}, \dots, h_{b+1}$ having their deadlines unchanged. Since
$\backupplanLeapAftPartI$ is obtained by replacing $h_{a+1}$ by $h_{b+1}$ in $\varbackupplan$,
and $d^t_{h_{a+1}} \le d^t_{h_{b+1}}$, $\backupplanLeapAftPartI$ is feasible;
moreover, it satisfies $\pslack(\backupplanLeapAftPartI, \tau) = \pslack(\backupplanBefGroup{\seggroup{a,b}}, \tau) + 1\ge 1$
for any $\tau\in [d^t_{h_{a+1}}, d^t_{h_{b+1}})$. Next, to prove the feasibility of $\backupplanAftGroup{\seggroup{a,b}}$,
we decrease the deadlines of packets $h_{a+1}, \dots, h_{b+1}$.  
This decreases $\pslack(\varbackupplan)$ by one for slots in disjoint intervals
$[\tau_{a+1}, d^t_{h_{a+2}}), \dots, [\tau_b, d^t_{h_{b+1}})$.
(We do not consider $h_{a+1}$ here because this packet was already removed from $\varbackupplan$.)
These intervals are contained in $[d^t_{h_{a+1}}, d^t_{h_{b+1}})$,
and we thus have $\pslack(\backupplanAftGroup{\seggroup{a,b}}, \tau)\ge \pslack(\backupplanLeapAftPartI,\tau) -1 \ge 0$
for $\tau\in [d^t_{h_{a+1}}, d^t_{h_{b+1}})$. For   $\tau\notin [d^t_{h_{a+1}}, d^t_{h_{b+1}})$, the values of
$\pslack(\varbackupplan)$ do not change. Therefore $\backupplanAftGroup{\seggroup{a,b}}$ is feasible, showing {\InvariantB}(i).
Part (ii) of invariant~{\InvariantB} is implied by the changes of sets $\varplanP, \varADV$, and $\varbackupplan$.

\smallskip

This completes the proof that invariants~{\InvariantA} and~{\InvariantB} are
preserved after the stage for the middle group $\seggroup{a,b}$, and that
inequality~\eqref{eq:leap-group} holds for this stage.

\paragraph{Stage for the initial group.}
The initial group $\seggroup{0,\initialb}$ is defined when either $l>0$ and $i_1 > 0$ or if $l = 0$ and $\terminala > 0$. In either case, 
we have $h_0, ..., h_\initialb \not\in \AdvBefGroup{\seggroup{0,\initialb}}$.  

In the stage in which this group is processed, we change plan $\varplanP$ by removing $h_0 = p$ and adding $h_{\initialb+1}$,
so $\planPbefNextStep = \planPbefGroup{\seggroup{0,\initialb}} \setminus \braced{p} \cup \braced{h_{\initialb+1}}$.
(Recall that $\planPbefNextStep = \planPaftGroup{\seggroup{0,\initialb}}$ and the corresponding relation holds
for $\varADV$ and $\varbackupplan$, since we process the stage for the initial group after all other groups.)
We also decrease the deadlines of packets $h_{1}, \dots, h_{\initialb+1}$, and for some
we increase their weights, according to the algorithm. We do not change $\varADV$;
thus $\ADVbefNextStep = \AdvBefGroup{\seggroup{0,\initialb}}$ and $\advcredit^t_{\seggroup{0,\initialb}} = 0$.
Since packets $h_1, \dots, h_{\initialb+1}$ are not in $\AdvBefGroup{\seggroup{0,\initialb}}$, it follows that 
invariant~{\InvariantA} is preserved. To update $\varbackupplan$, we let
$\backupplanBefNextStep = \backupplanBefGroup{\seggroup{0,\initialb}}\setminus \braced{p} \cup \braced{h_{\initialb+1}}$
(recall that $h_{\initialb+1}\notin \backupplanBefGroup{\seggroup{0,\initialb}}$).

\smallskip


\indentemparagraph{Deriving~\eqref{eq:leap-group} for the initial group.}
The terms on the left-hand side of~\eqref{eq:leap-group} can be expressed or estimated as follows:
\begin{align}
	\Delta_{\seggroup{0,\initialb}}\Psi 
			\;&=\; \phinegonef \razy \left( \Delta^t w({h_1}, \dots, {h_{\initialb+1}}) - w^t_p + w^t_{h_{\initialb+1}} \right)
			\label{eqn: 5.5.6 initial psi}
	 \\
	\Delta^t w({h_1}, \dots, {h_{\initialb+1}}) \;&\le\; \phinegtwof \razy ( w^t_p -  w^t_{h_{\initialb+1}} )
			\label{eqn: 5.5.6 initial delta w}
	\\
	\advcredit^t_{\seggroup{0,\initialb}} \;&=\; 0
			\label{eqn: 5.5.6 initial advgain}
\end{align}
Equations~\eqref{eqn: 5.5.6 initial psi} and~\eqref{eqn: 5.5.6 initial advgain} follow directly from the
definitions and the case assumption.

To justify~\eqref{eqn: 5.5.6 initial delta w}, we consider two cases.
If the algorithm does not increase the weight of any of the packets $h_1, \dots, h_{\initialb+1}$,
then $\Delta^t w({h_1}, \dots, {h_{\initialb+1}}) = 0\le \phinegtwof \razy ( w^t_p - \razy w^t_{h_{\initialb+1}})$,
by Lemma~\ref{lem:leapStep}(b). Otherwise, there is $i\in [1,\initialb+1]$ such that $w^t_{h_{i}} < \mu_{i-1}$,
and applying Lemma~\ref{lem:changeOfWeightsOfHs} with $a'=1$ and $b'=\initialb+1$, we obtain
\begin{align}
\Delta^t w({h_1}, \dots, {h_{\initialb+1}}) 
	\;&\le\; \mu_0 - w^t_{h_{\initialb+1}}
	\nonumber\\
	&\le\; \phinegtwof \razy w^t_p  + \phinegonef \razy w^t_\rho - w^t_{h_{\initialb+1}}
	\label{eqn: iterated initial delta weights 1}\\
	&\le\; \phinegtwof \razy ( w^t_p -  w^t_{h_{\initialb+1}} ) \,,
	\label{eqn: iterated initial delta weights 2}
\end{align} 
where inequality~\eqref{eqn: iterated initial delta weights 1} uses the bound
$\mu_0\le  \phinegtwof \razy w^t_p  + \phinegonef \razy w^t_\rho$, that follows from (\ref{eq:leapStepVsOmega}),
and inequality~\eqref{eqn: iterated initial delta weights 2}  uses $w^t_\rho \le w^t_{h_{\initialb+1}}$, that follows from Lemma~\ref{lem:leapStep}(b).
Thus~\eqref{eqn: 5.5.6 initial delta w} holds.

The calculation showing \eqref{eq:leap-group} is now quite simple:
\begin{align}
\Delta_{\seggroup{0,\initialb}}\Psi &- \phi\razy \Delta^t w({h_1}, \dots, {h_{\initialb+1}}) - \advcredit^t_{\seggroup{0,\initialb}}
	\nonumber\\
	\;&=\; \phinegonef \razy \left( \Delta^t w({h_1}, \dots, {h_{\initialb+1}}) - w^t_p + w^t_{h_{\initialb+1}} \right)
			 		- \phi\razy \Delta^t w({h_1}, \dots, {h_{\initialb+1}}) - 0 
	\nonumber\\
	&=\; \phinegonef \razy (- w^t_p + w^t_{h_{\initialb+1}}) -\Delta^t w({h_1}, \dots, {h_{\initialb+1}})
	\nonumber\\
	&\ge\; \phinegonef \razy (- w^t_p + w^t_{h_{\initialb+1}}) - \phinegtwof \razy ( w^t_p -  w^t_{h_{\initialb+1}})
	\label{eqn: iterated initial delta phi}
	\\
	&=\; - w^t_p + w^t_{h_{\initialb+1}} \,,
	\nonumber
\end{align} 
where inequality\eqref{eqn: iterated initial delta phi} holds by~\eqref{eqn: 5.5.6 initial delta w}.

\smallskip


\indentemparagraph{Invariant~{\InvariantB} after the stage for the initial group.}
That invariant~{\InvariantB}(ii) is preserved follows directly from the formulas for $\planPbefNextStep$,
$\backupplanBefNextStep$, and $\ADVbefNextStep$.

To show invariant~{\InvariantB}(i), that is the feasibility of $\backupplanBefNextStep$,
we consider an intermediate backup plan $\backupplanLeapAftPartI = \backupplanBefGroup{\seggroup{0,\initialb}}\setminus\braced{p}\cup \braced{h_{\initialb+1}}$,
where the deadlines of packets  $h_1, \dots, h_{\initialb+1}$ remain unchanged.
Replacing $p$ by $h_{\initialb+1}$ in $\varbackupplan$ preserves feasibility as $d^t_p < d^t_{h_{\initialb+1}}$.
So $\backupplanLeapAftPartI$ is feasible; moreover, it satisfies 
$\pslack(\backupplanLeapAftPartI, \tau) = \pslack(\backupplanBefGroup{\seggroup{0,\initialb}}, \tau) + 1\ge 1$
for any $\tau\in [d^t_p, d^t_{h_{\initialb+1}})$.  

We now decrease the deadlines of packets $h_1, \dots, h_{\initialb+1}$, as in the algorithm.
This decreases $\pslack(\varbackupplan, \tau)$ by one for slots $\tau$ in disjoint intervals
$[\tau_0, d^t_{h_1}), \dots, [\tau_\initialb, d^t_{h_{\initialb+1}})$. These intervals are contained in $[d^t_p, d^t_{h_{\initialb+1}})$,
and we thus have $\pslack(\backupplanBefNextStep, \tau)\ge \pslack(\backupplanLeapAftPartI,\tau) -1 \ge 0$ for $\tau\in [d^t_p, d^t_{h_{\initialb+1}})$. 
The values of $\pslack(\varbackupplan)$ are not affected for slots $\tau\notin [d^t_p, d^t_{h_{\initialb+1}})$.
Therefore  $\backupplanBefNextStep$ is feasible;
that is, invariant~{\InvariantB}(i) holds after the stage for the initial group and after the algorithm's step as well.
 
\paragraph{Summary of Case~L.$\boldsymbol{(\beta,\gamma]}$.I.2.}
In this case, we processed changes in the interval $(\beta,\gamma]$ when  
$\rho\in \backupplanAftInitSeg$ or $\ADVaftInitSeg(\beta, \gamma] \neq \emptyset$.
We divided the segments in this interval that contain packets $h_i$ into groups,
and each group was processed in a separate stage, in the reverse order (i.e., backwards with respect to time slots).
For each stage, we proved that (i) invariants~{\InvariantA} and~{\InvariantB} are preserved,
and that (ii) inequality~\eqref{eq:leap-group} holds for this stage.
The last stage corresponds to a group $\seggroup{0,\initialb}$ (that may be of any type),
and we proved that after this last stage the backup plan
$\backupplanBefNextStep = \backupplanAftGroup{\seggroup{0,\initialb}}$  satisfies invariant~{\InvariantB},
and that the adversary stash  $\ADVbefNextStep = \AdvAftGroup{\seggroup{0,\initialb}}$  satisfies invariant~{\InvariantA}.
All inequalities~\eqref{eq:leap-group} (one for each group) imply the key
inequality~\eqref{eq:leap-laterSegments}, needed for amortized analysis.

\bigskip
  
This concludes the analysis of an iterated leap step and thus also the analysis of an
event of packet transmission. Together with the analysis of packet arrivals given in Section~\ref{subsec: arrival of a packet},
this completes the proof of $\phi$-competitiveness of Algorithm~$\PlanMonotonicity$, as described in Section~\ref{subsec: overview of the analysis}.

\section{Final Comments}
\label{sec: final comments}

Our result establishes a tight bound of $\phi$ on the competitive ratio of
{\BDPS} in the deterministic case, settling a long-standing open problem. 
There is still a number of intriguing open problems about online 
packet scheduling that deserve further study.

Of these open problems, the most prominent one is to establish tight bounds for randomized algorithms 
for {\BDPS}. The best known upper bound to date is 
$\e/(\e-1) \approx 1.582$~\cite{bartal_online_competitive_04,chin_weighted_throughput_06,bienkowski_randomized_algorithms_11,Jez13}. 
This ratio is achieved by a memoryless algorithm and it holds
even against an adaptive adversary. No better upper bound for the oblivious adversary is known.
(In fact, against the oblivious adversary, the same ratio can be attained for a more general problem of
online vertex-weighted bipartite matching~\cite{AggarwalGKM11,DevanurJK13}.)
The best lower bounds, to our knowledge,
are $4/3\approx 1.333$~\cite{bienkowski_randomized_algorithms_11} against the
adaptive adversary, and $1.25$~\cite{chin_partial_job_values_03} against the oblivious one, 
respectively. (Both lower bounds use $2$-bounded instances and are in fact 
tight for $2$-bounded instances.)
Closing these gaps would provide insight into the power of randomization in packet scheduling.

In practical settings, the choice of the packet to transmit needs to be made at speed matching the link's rate,
so the running time and simplicity of the scheduling algorithm are important factors.
This motivates the study of \emph{memoryless} algorithms for {\BDPS}, as those algorithms
tend to be easy to implement and fast. All known upper bounds for competitive
randomized algorithms we are aware of are achieved by memoryless algorithms (see~\cite{goldwasser_survey_10}).   
For deterministic algorithms, the only memoryless one that beats ratio $2$ is the 
$1.893$-competitive algorithm in~\cite{englert_suppressed_packets_12}. 
The main question here is whether the ratio of $\phi$ can be achieved by a memoryless algorithm.

Among other models for packet scheduling, optimal competitiveness for the FIFO model is
still wide open, both in the deterministic and randomized cases. We refer the reader to
the (still mostly current) survey of Goldwasser~\cite{goldwasser_survey_10}, who provides a thorough
discussion of various models for online packet scheduling and related open problems.


\subsection*{Acknowledgements}
Work done in part while P.~Vesel\'{y} was at the University of Warwick.
P.~Vesel\'{y} and J.~Sgall were partially supported by GA \v{C}R project
19-27871X.
P.~Vesel\'{y} was also partially supported by European Research Council grant ERC-2014-CoG 647557 and by Charles University project UNCE/SCI/004.
M.~Chrobak was partially supported by NSF grant CCF-1536026 and CCF-2153723.
Ł.~Jeż was partially supported by NCN grants 2016/21/D/ST6/02402, 2016/22/E/ST6/00499, and 2020/39/B/ST6/01679.
We are all grateful to Martin B\"ohm for useful discussions.

\bibliographystyle{plainurl}
\bibliography{packets_references}

\begin{thebibliography}{10}

\bibitem{AggarwalGKM11}
Gagan Aggarwal, Gagan Goel, Chinmay Karande, and Aranyak Mehta.
\newblock Online vertex-weighted bipartite matching and single-bid budgeted
  allocations.
\newblock In {\em Proc.\ of the 22nd Annual {ACM-SIAM} Symposium on Discrete
  Algorithms ({SODA '11})}, pages 1253--1264, 2011.

\bibitem{aiello_competitive_queue_05}
William~A. Aiello, Yishay Mansour, S.~Rajagopolan, and Adi Ros{\'e}n.
\newblock Competitive queue policies for differentiated services.
\newblock {\em J. Algorithms}, 55(2):113--141, 2005.

\bibitem{andelman_queueing_policies_03}
Nir Andelman, Yishay Mansour, and An~Zhu.
\newblock Competitive queueing policies for {QoS} switches.
\newblock In {\em Proc.\ of the 14th Annual ACM-SIAM Symposium on Discrete
  Algorithms (SODA '03)}, pages 761--770, 2003.

\bibitem{bartal_online_competitive_04}
Yair Bartal, Francis Y.~L. Chin, Marek Chrobak, Stanley P.~Y. Fung, Wojciech
  Jawor, Ron Lavi, Ji{\v{r}}{\'i} Sgall, and Tom{\'a}{\v{s}} Tich{\'y}.
\newblock Online competitive algorithms for maximizing weighted throughput of
  unit jobs.
\newblock In {\em Proc.\ of the 21st Symposium on Theoretical Aspects of
  Computer Science (STACS '04)}, volume 2996 of {\em LNCS}, pages 187--198.
  Springer Berlin Heidelberg, 2004.

\bibitem{bienkowski_collecting_weighted_13}
Marcin Bienkowski, Marek Chrobak, Christoph D\"{u}rr, Mathilde Hurand, Artur
  Jeż, Łukasz Jeż, and Grzegorz Stachowiak.
\newblock Collecting weighted items from a dynamic queue.
\newblock {\em Algorithmica}, 65(1):60--94, 2013.

\bibitem{bienkowski_phi-competitive_13}
Marcin Bienkowski, Marek Chrobak, Christoph Dürr, Mathilde Hurand, Artur Jeż,
  Łukasz Jeż, and Grzegorz Stachowiak.
\newblock A $\phi$-competitive algorithm for collecting items with increasing
  weights from a dynamic queue.
\newblock {\em Theoretical Computer Science}, 475:92 -- 102, 2013.

\bibitem{bienkowski_randomized_algorithms_11}
Marcin Bienkowski, Marek Chrobak, and Łukasz Jeż.
\newblock Randomized competitive algorithms for online buffer management in the
  adaptive adversary model.
\newblock {\em Theoretical Computer Science}, 412(39):5121--5131, 2011.

\bibitem{bohm_bounded_lookahead_16}
Martin Böhm, Marek Chrobak, Łukasz Jeż, Fei Li, Jiří Sgall, and Pavel
  Veselý.
\newblock Online packet scheduling with bounded delay and lookahead.
\newblock {\em Theoretical Computer Science}, 776:95 -- 113, 2019.

\bibitem{chin_weighted_throughput_06}
Francis Y.~L. Chin, Marek Chrobak, Stanley P.~Y. Fung, Wojciech Jawor,
  Ji\v{r}\'{\i} Sgall, and Tom\'{a}\v{s} Tich\'{y}.
\newblock Online competitive algorithms for maximizing weighted throughput of
  unit jobs.
\newblock {\em J. of Discrete Algorithms}, 4(2):255--276, 2006.

\bibitem{chin_partial_job_values_03}
Francis Y.~L. Chin and Stanley P.~Y. Fung.
\newblock Online scheduling with partial job values: Does timesharing or
  randomization help?
\newblock {\em Algorithmica}, 37(3):149--164, 2003.

\bibitem{chrobak_improved_buffer_04}
Marek Chrobak, Wojciech Jawor, Ji{\v{r}}{\'i} Sgall, and Tom{\'a}{\v{s}}
  Tich{\'y}.
\newblock Improved online algorithms for buffer management in {QoS} switches.
\newblock In {\em Proc. of the 12th Annual European Symposium on Algorithms
  (ESA '04)}, volume 3221 of {\em LNCS}, pages 204--215, 2004.

\bibitem{chrobak_improved_buffer_07}
Marek Chrobak, Wojciech Jawor, Ji\v{r}\'{\i} Sgall, and Tom\'{a}\v{s}
  Tich\'{y}.
\newblock Improved online algorithms for buffer management in {QoS} switches.
\newblock {\em ACM Trans. Algorithms}, 3(4), 2007.

\bibitem{DevanurJK13}
Nikhil~R. Devanur, Kamal Jain, and Robert~D. Kleinberg.
\newblock Randomized primal-dual analysis of {RANKING} for online bipartite
  matching.
\newblock In {\em Proc.\ of the 24th Annual {ACM-SIAM} Symposium on Discrete
  Algorithms ({SODA '13})}, pages 101--107, 2013.

\bibitem{englert_suppressed_packets_12}
Matthias Englert and Matthias Westermann.
\newblock Considering suppressed packets improves buffer management in quality
  of service switches.
\newblock {\em SIAM Journal on Computing}, 41(5):1166--1192, 2012.

\bibitem{goldwasser_survey_10}
Michael~H. Goldwasser.
\newblock A survey of buffer management policies for packet switches.
\newblock {\em SIGACT News}, 41(1):100--128, 2010.

\bibitem{hajek_unit_packets_01}
Bruce Hajek.
\newblock On the competitiveness of on-line scheduling of unit-length packets
  with hard deadlines in slotted time.
\newblock In {\em Proc. of the 35th Conference on Information Sciences and
  Systems}, pages 434--438, 2001.

\bibitem{Jez_optimal_agreeable_12}
Łukasz Jeż, Fei Li, Jay Sethuraman, and Clifford Stein.
\newblock Online scheduling of packets with agreeable deadlines.
\newblock {\em ACM Trans. Algorithms}, 9(1):5:1--5:11, 2012.

\bibitem{kesselman_buffer_overflow_04}
Alexander Kesselman, Zvi Lotker, Yishay Mansour, Boaz Patt-Shamir, Baruch
  Schieber, and Maxim Sviridenko.
\newblock Buffer overflow management in {QoS} switches.
\newblock {\em SIAM Journal on Computing}, 33(3):563--583, 2004.

\bibitem{li_optimal_agreeable_05}
Fei Li, Jay Sethuraman, and Clifford Stein.
\newblock An optimal online algorithm for packet scheduling with agreeable
  deadlines.
\newblock In {\em Proc. of the 16th Annual ACM-SIAM Symposium on Discrete
  Algorithms (SODA '05)}, pages 801--802, 2005.

\bibitem{Li_better_online_07}
Fei Li, Jay Sethuraman, and Clifford Stein.
\newblock Better online buffer management.
\newblock In {\em Proc.\ of the 18th Annual ACM-SIAM Symposium on Discrete
  Algorithms ({SODA '07})}, pages 199--208, 2007.

\bibitem{vesely21_sigact_news}
Pavel Vesel\'{y}.
\newblock Packet scheduling: Plans, monotonicity, and the golden ratio.
\newblock {\em SIGACT News}, 52(2):72–84, June 2021.

\bibitem{this-conference}
Pavel Vesel{\'{y}}, Marek Chrobak, Lukasz Jez, and Jir{\'{\i}} Sgall.
\newblock A {\(\phi\)}-competitive algorithm for scheduling packets with
  deadlines.
\newblock In Timothy~M. Chan, editor, {\em Proceedings of the Thirtieth Annual
  {ACM-SIAM} Symposium on Discrete Algorithms, {SODA} 2019, San Diego,
  California, USA, January 6-9, 2019}, pages 123--142. {SIAM}, 2019.
\newblock \href {https://doi.org/10.1137/1.9781611975482.9}
  {\path{doi:10.1137/1.9781611975482.9}}.

\bibitem{this-journal}
Pavel Vesel{\'{y}}, Marek Chrobak, Lukasz Jez, and Jir{\'{\i}} Sgall.
\newblock A {\textbackslash}({\textbackslash}boldsymbol\{{\textbackslash}phi
  \}{\textbackslash}) -competitive algorithm for scheduling packets with
  deadlines.
\newblock {\em {SIAM} J. Comput.}, 51(5):1626--1691, 2022.
\newblock \href {https://doi.org/10.1137/21m1469753}
  {\path{doi:10.1137/21m1469753}}.

\bibitem{Zhu04}
An~Zhu.
\newblock Analysis of queueing policies in {QoS} switches.
\newblock {\em J. Algorithms}, 53(2):137--168, 2004.

\bibitem{Jez13}
Łukasz Jeż.
\newblock A universal randomized packet scheduling algorithm.
\newblock {\em Algorithmica}, 67(4):498--515, 2013.

\end{thebibliography}

\vfill
\eject

\appendix

\section{Appendix: Properties of Optimal Plans}
\label{app: more on properties of plans}

A computation of an online algorithm can be thought of as a sequence 
of events, with each event being either a packet arrival or transmitting a packet (which includes 
incrementing the current time). In this appendix, we give formal statements of ``plan-update lemmas'', 
that explain how the structure of the optimal plan changes in response to these events.

The proofs are based on analyzing Algorithm~{\ComputePlan},   given in Section~\ref{sec: plans and their properties},
for computing the optimal plan in a greedy fashion. In the proofs, we will refer to 
the packets $j$ added to the plan in line~5 of this algorithm as \emph{admitted} to the plan, 
and to the remaining packets as \emph{rejected}.

We will use notation $\pendpackets$ for the current set of pending packets,
which is the default argument for Algorithm~{\ComputePlan}.
For a set $Z \subseteq \pendpackets$ of pending packets and a
packet $j$, by $Z_{\gtrweight j}$ we will denote the set of packets in $Z$
that are (strictly) heavier than $j$ and by $Z_{\geqweight j}$ the set of packets in $Z$
that are at least as heavy as $j$. (In most cases we will use this notation when
$Z$ is the optimal plan and $j\in\pendpackets$, but occasionally we apply it in other contexts.)
If $\planP$ is an optimal plan, then $\planP_{\gtrweight j}$ can be thought of, equivalently, as the set of
all packets that are admitted to set $X$ in Algorithm~{\ComputePlan}
when all packets heavier than $j$ have already been considered (and similarly for $\planP_{\geqweight j}$).

In the condition of admitting $j\in\pendpackets$ in Algorithm~{\ComputePlan} (line~4) it is
sufficient to check if $\packetslack(X\cup\braced{j}, \tau) \ge 0$ only for $\tau\ge d_j$, because
for smaller values of $\tau$ the addition of $j$ does not change the value of $\packetslack(\tau)$.
This condition can be equivalently stated as ``$\packetslack(X, \tau) \ge 1$ for each $\tau\ge d_j$''.
Note also that, letting $\planP$ be the computed optimal plan, right before processing $j$ we have $X = P_{\gtrweight j}$,
and right after processing $j$ we have $X = P_{\geqweight j}$.

\smallskip

The following observation summarizes some simple but useful properties of plans.


\begin{observation}\label{obs: plan properties}
Let $\pendpackets$ be the set of packets pending at time $t$ and let $\planP$ be the optimal plan at time $t$. Then
\begin{enumerate}[label=(\alph*),itemsep=2pt]
	\item $\planP$ depends only on the ordering of packets in $\pendpackets$ with respect to weights
			(not on the actual values of the weights).
	\item $\planP$ does not change if the weight of any packet in $\planP$ is increased.
			It also does not change if the weight of any packet in $\pendpackets\setminus\planP$
			is decreased.
	\item For a packet $j\in \planP$, let $\beta = \prevtightslot(\planP,d_j)$. 
			Then for any slot $\xi\in (\beta,d_j]$ there is a schedule of $\planP$ in 
			which $j$ is scheduled in slot $\xi$.
	\item If $j\in\pendpackets$ is a packet and $\zeta$ a slot such that $\packetslack({\planP}_{\geqweight j},\zeta) = 0$,
		then ${\planP}_{\geqweight j}[t,\zeta] = {\planP}[t,\zeta]$, and consequently
		$\packetslack({\planP}_{\geqweight j},\tau) = \packetslack({\planP},\tau)$ for all $\tau \in [t,\zeta]$.
\end{enumerate}
\end{observation}

\begin{proof}
Part~(a) follows directly from the correctness of Algorithm~{\ComputePlan} that computes $\planP$.
Part~(b) is also straightforward: $\planP$ is heavier than any other plan (feasible set of pending packets) for $\pendpackets$.
If the weight of $j\in \planP$  is increased by 
$\delta > 0$, the weight of $\planP$ will increase by $\delta$, 
and as the weight of any other plan $X$ increases by at most $\delta$, plan $\planP$
remains to be optimal.
The proof for the second part of~(b) is similar.

To show (c), compute the desired schedule as follows. First, schedule packets in $\planP$
in the canonical order. In this schedule, $j$ will be scheduled at some slot $\xi'\in (\beta,d_j]$.
If $\xi' = \xi$, we are done.
If $\xi' \in (\beta,\xi)$, shift the packets in $[\xi'+1,\xi]$ to the left by one slot and schedule $j$ at $\xi$.
The last case is when $\xi'\in (\xi,d_j]$. In this case, we use the fact that 
$\packetslack(\planP,\tau)\ge 1$ for all $\tau\in (\beta,d_j)$, which means that for any $\tau\in (\beta,d_j)$
the number of packets in $\planP$ with deadline at most $\tau$ is strictly less than $\tau-t+1$,
the number of slots in $[t,\tau]$. Then, by the canonical ordering,  we have that
none of the packets scheduled in $[\xi,\xi'-1]$ is scheduled at its deadline.
Therefore, we can shift these packets to the right by one slot, which allows us to schedule $j$ at slot $\xi$.

Part~(d) also follows from the correctness of {\ComputePlan}: Consider its run that produces $\planP$.
Immediately after packet $j$ is processed we have $\packetslack({X}_{\geqweight j},\zeta) = 0$,
i.e., slot $\zeta$ is already tight.
Then no packet lighter than $j$ with deadline at most $\zeta$  will be later admitted to $X$.  
Thus, $\planP[t,\zeta] = X[t,\zeta]$ is already fixed at that point, and so are its slack values.
\end{proof}

\smallskip

In the rest of this section  we now deal with plan-update lemmas.
First, in Section~\ref{sec: updating a plan after a packet arrival} we explain how the optimal
plan changes in response to packet arrivals. 
In Section~\ref{sec: updating the plan after transmitting a packet} we detail how
to update the optimal plan after some arbitrary packet from the plan is transmitted. The
plan properties in these two sections are not specific to our algorithm -- they address general 
properties of optimal plans. Later in Section~\ref{app: leap step of algorithm planmonotonicity},
we give the plan-update lemma specifically for a leap step of Algorithm~{\PlanMonotonicity}.


\subsection{Updating the Optimal Plan after a Packet Arrival}
\label{sec: updating a plan after a packet arrival}

In this section, we address the question of how the optimal plan changes when a new packet is 
released. So consider some step $t$, and let $\pendpackets$ be the current set of pending
packets and $\planP$ the optimal plan for $\pendpackets$. Suppose that now
another packet $s$ is released, and let $\pendpackets' = \pendpackets\cup\braced{s}$ be
the new set of pending packets. Denote by $\planQ$ the optimal plan for $\pendpackets'$.
The lemma below explains how to determine $\planQ$ from $\planP$ and 
it gives its important properties. (See Figure~\ref{fig:packets-plan-update-arrival} for illustration.)


\begin{lemma}\label{lem:plan_update-arrival}
(The Plan-Update Lemma for Packet Arrival.)
Consider the scenario above, involving the arrival of a new packet $s$.
Let $\gamma =\nexttightslot(\planP,d_s)$ and define $f$ to be the packet in $\planP[t,\gamma]$ with $w_f = \minwt(\planP,d_s)$;
in other words, the minimum-weight packet in $\planP[t,\gamma]$. Also, let $\beta = \prevtightslot(\planP,d_f)$.
By definition, we have $\beta < \gamma$ and $d_f,d_s\in (\beta,\gamma]$.

If $w_s< w_f$, then $\planQ = \planP$, i.e., $s$ is not added to the optimal plan. 

If $w_s > w_f$, then $\planQ = \planP\cup\braced{s} \setminus \braced{f}$, i.e.,
	$s$ is added to the optimal plan replacing $f$. In this case the following properties hold:  
\begin{enumerate}[label=(\alph*),itemsep=2pt]
	\item $\packetslack(\planQ,\tau) = \packetslack(\planP,\tau)$ for
		$\tau \in [t, \min(d_f, d_s)) \cup [\max(d_f, d_s),\timehorizon]$. 
	
	\item For slots $\tau \in [ \min(d_f, d_s) , \max(d_f, d_s))$, there are two cases:      
		\begin{enumerate}[label=(b\arabic*),itemsep=1pt]  
			
			\item If $d_s > d_f$, then
			 $\packetslack(\planQ,\tau) = \packetslack(\planP,\tau)+1$ for $\tau\in [d_f,d_s)$.			 
			 It follows that all segments of $\planP$ in the interval $(\beta,\gamma]$
			get merged into one segment $(\beta,\gamma]$ of $\planQ$.

			\item If $d_s \le d_f$,
			then $\packetslack(\planQ,\tau) = \packetslack(\planP,\tau)-1$ for $\tau\in [d_s,d_f)$,
			and thus there might be new tight slots in $[d_s,d_f)$.
			This case happens only when $(\beta,\gamma]$ is itself a segment of $\planP$ and it contains
			 both $d_f$ and $d_s$. This segment may get split into multiple segments of $\planQ$.
			
		\end{enumerate}
		
	\item For any slot $\tau \ge t$, $\minwt(\planQ,\tau) \ge \minwt(\planP,\tau)$.
\end{enumerate}
\end{lemma}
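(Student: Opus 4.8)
The plan is to compare the run of Algorithm~\ComputePlan on the set $\pendpackets$ of packets pending just before $k$ arrives with its run on $\pendpackets\cup\{k\}$, and then to read off the structural claims from the resulting changes of $\pslack$. Since both runs scan packets by decreasing weight, they agree step for step until $k$ itself is reached, at which moment the current set $X$ equals $P_{\gtrweight k}$ in either run (this is the description of $P_{\gtrweight k}$ recalled just before Observation~\ref{obs: plan properties}). Using the identity $\pslack(P_{\gtrweight k},\tau)=\pslack(P,\tau)+|\{\,j\in P\suchthat d_j\le\tau,\ w_j<w_k\,\}|$, the admission test for $k$ (that this be $\ge1$ for every $\tau\ge d_k$) becomes easy to check. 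If $w_k<w_f$, then at $\tau=\gamma$ no packet of $P$ with deadline $\le\gamma$ is lighter than $k$ (they all have weight $\ge\minwt(P,\gamma)=w_f$), so the test fails, $k$ is rejected, the rest of the scan is unchanged, and $Q=P$. If $w_k>w_f$, then whenever $\pslack(P,\tau)=0$ with $\tau\ge d_k$ one has $\tau\ge\gamma$, so $f$ (with $d_f\le\gamma\le\tau$ and $w_f<w_k$) witnesses the test; hence $k$ is admitted and $k\in Q$. (As $k\ne f$, the case $w_k=w_f$ does not arise.)

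In the admitted case I would first record, for all $\tau\ge t$, that $\pslack(P\cup\{k\}\setminus\{f\},\tau)=\pslack(P,\tau)+[d_f\le\tau]-[d_k\le\tau]$; checking this is $\ge0$ everywhere (Observation~\ref{obs: feasible = nonnegative slack}) is exactly parts~(a) and~(b), the only nontrivial range being $d_k\le\tau<d_f$ (when $d_k\le d_f$), handled by noting $\tau\in[d_k,\gamma)$ is not tight in $P$. In particular $P\cup\{k\}\setminus\{f\}$ is feasible. Next, by the matroid property of feasible sets, $Q$ arises from $P$ by replacing at most one packet; since $k\in Q\setminus P$ and $|Q|=|P|$ (as $P\cup\{k\}$ is infeasible: $\pslack(P\cup\{k\},\gamma)=-1$), we have $Q=P\cup\{k\}\setminus\{g\}$ for some $g\in P$. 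Optimality of $Q$ against the feasible set $P\cup\{k\}\setminus\{f\}$ gives $w_g\le w_f$; and if $w_g<w_f$ then $g\in P$ together with $w_g<\minwt(P,\gamma)$ forces $d_g>\gamma$, whence $\pslack(Q,\gamma)=\pslack(P,\gamma)-[d_k\le\gamma]=-1$, contradicting feasibility of $Q$. Hence $g=f$, so $Q=P\cup\{k\}\setminus\{f\}$, and parts~(a),(b) hold. The remarks on segments in~(b1),(b2) then follow from the sign pattern above and the definitions $\delta=\prevtightslot(P,d_f)$, $\gamma=\nexttightslot(P,d_k)$; for~(b2) one checks $d_k>\delta$ (otherwise $\gamma\le\delta<d_f$ would contradict $d_f\le\gamma$), so $(\delta,\gamma]$ is one segment of $P$ containing $d_f$ and $d_k$.

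The step I expect to require the most care is part~(c). The crucial observation is that $\minwt(P,\cdot)$ changes only on $(\delta,\gamma]$, and that on all of $(\delta,\gamma]$ it equals $w_f$: because $d_f\in(\delta,\gamma]$ and $\delta=\prevtightslot(P,d_f)$, every $\tau\in(\delta,\gamma]$ satisfies $d_f\le\nexttightslot(P,\tau)\le\gamma$, so $f$ belongs to $\{j\in P\suchthat d_j\le\nexttightslot(P,\tau)\}$ and is its lightest member. For $\tau\le\delta$ nothing changes. For $\tau>\gamma$, the tight slots that are $\ge\tau$ coincide for $P$ and $Q$ (since $\pslack$ agrees beyond $\gamma$), and the set defining $\minwt(Q,\tau)$ is obtained from the one for $P$ by deleting $f$ and inserting the heavier $k$, with $w_k>w_f=\minwt(P,\gamma)\ge\minwt(P,\tau)$, so the minimum does not drop. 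For $\tau\in(\delta,\gamma]$, the set $\{j\in Q\suchthat d_j\le\nexttightslot(Q,\tau)\}$ is nonempty (its defining slot is tight in $Q$) and is contained in $(\{j\in P\suchthat d_j\le\gamma\}\setminus\{f\})\cup\{k\}$, all of whose members are strictly heavier than $f$; hence $\minwt(Q,\tau)>w_f=\minwt(P,\tau)$. What remains is to make these ``delete $f$, insert $k$'' comparisons fully rigorous and to keep the subcases $d_f<d_k$ and $d_k\le d_f$ straight when determining which tight slots of $P$ persist in $Q$ --- routine, but the bookkeeping is where errors would creep in.
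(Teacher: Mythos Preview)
Your argument is correct. The main place where you diverge from the paper is in establishing $Q = P\cup\{k\}\setminus\{f\}$ when $w_k > w_f$. The paper argues by observing (via Observation~\ref{obs: plan properties}(b)) that one may assume $w_k = w_f + \varepsilon$, so that $k$ immediately precedes $f$ in the weight order; it then traces the two parallel greedy runs explicitly, splitting into the cases $d_k \ge d_f$ and $d_k < d_f$, and shows that after $k$ is admitted and $f$ rejected the runs coincide again. You instead first verify $k\in Q$ via the identity for $\pslack(P_{\gtrweight k},\cdot)$, then exhibit $P\cup\{k\}\setminus\{f\}$ as a feasible competitor, and finally use the matroid exchange property (both $P$ and $Q$ are bases of the same matroid on $\pendpackets\cup\{k\}$) together with optimality to pin down the ejected packet as $f$. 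This is a genuinely different and somewhat more conceptual route: it avoids the weight-perturbation trick and the case split on the relative order of $d_f,d_k$ in the greedy trace. The trade-off is that you are implicitly invoking strong basis exchange (to get $|Q\setminus P|=1$ from $|Q|=|P|$ and $k\in Q\setminus P$), which you gloss as ``the matroid property''; the paper does state this consequence in Section~\ref{sec: plans} without proof, so relying on it is consistent with the paper's conventions, but in a self-contained write-up you would want one line spelling it out (e.g., via strong exchange applied to the two bases $P$ and $Q$, which forces $Q-k+g=P$). Your treatments of (a), (b), and (c) match the paper's in substance; your observation that $\minwt(P,\tau)=w_f$ throughout $(\delta,\gamma]$ is the same key point the paper uses for (c).
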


We remark that packet $f$ satisfies $w_f\neq w_s$, by assumption~\ref{ass:diffWeights}.

\begin{figure}[!ht]
\centering
\includegraphics[width=6in]{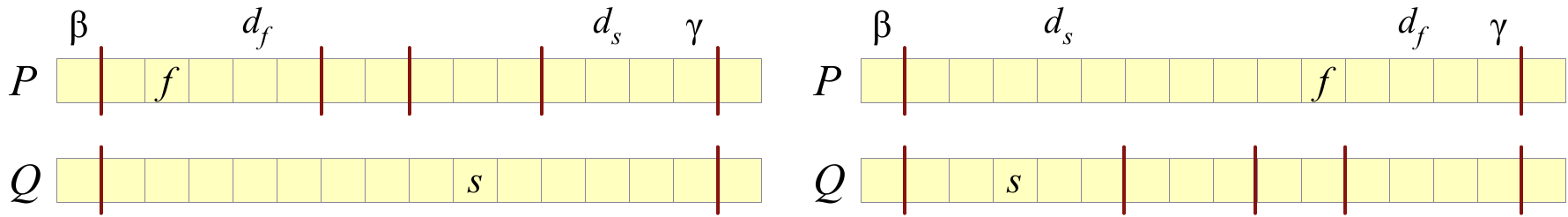}
\caption[Illustration of changes of tight slots and segments after arrival of a packet]{An
illustration of changes of tight slots and segments
in Case~(b1) of Lemma~\ref{lem:plan_update-arrival} on the left and in Case~(b2) on the right. 
The segments are separated by vertical dark-red line segments.
}
\label{fig:packets-plan-update-arrival}
\end{figure}

\begin{proof}
We start by proving how the set of packets in the optimal plan changes.
Let $\pendpackets$ be the set of packets pending before $s$ is released and 
$\pendpackets' = \pendpackets\cup \braced{s}$ be the set of pending packets after the release of $s$.
We consider two ``parallel'' runs of Algorithm~{\ComputePlan}, one on $\pendpackets$ to compute $\planP$
and the other on $\pendpackets'$ to compute $\planQ$.

Suppose first that $w_s < w_f$. Then both runs will be identical until right after the time
when $f$ is considered. Thus $\planP_{\geqweight f} = \planQ_{\geqweight f}$.
Moreover, as $f$ is the minimum-weight packet in $\planP$
with $d_f\le \gamma =\nexttightslot(\planP,d_s)$, we have $\planP[t,\gamma]\subseteq \planP_{\geqweight f}$, that is,
all packets in $\planP$ with deadline at most $\gamma$ are in $\planP_{\geqweight f}$.
Thus $\pslack(\planQ_{\geqweight f},\gamma) = \pslack(\planP_{\geqweight f},\gamma) = \pslack(\planP,\gamma) = 0$.
Therefore, as $d_s\le \gamma$ and as $s$ is considered
after $f$, packet $s$ will not be admitted, which implies that both runs produce the same plan $\planP=\planQ$.

For the rest of the proof, we assume that $w_s > w_f$. To prove that $\planQ = \planP\cup\braced{s} \setminus \braced{f}$, 
based on Observation~\ref{obs: plan properties}(b), without loss of generality we can assume that 
$w_s = w_f + \varepsilon$ for a tiny $\varepsilon>0$,
so that $s$ immediately precedes $f$ in the ordering of $\pendpackets'$ by decreasing weights.
Specifically, since $f \in \planP$, $\planP$ does not change if $f$'s weight is increased to
$w_s - \varepsilon$.  We will show that $\planQ = \planP\cup\braced{s} \setminus \braced{f}$
with this increased weight of $f$.  But as then $f \notin \planQ$, decreasing $f$'s weight
back to its original value does not change $Q$.

With the assumption about $w_f$, the run on  $\pendpackets'$ before $s$ is considered will be identical to the run  on $\pendpackets$
before $f$ is considered, thus $\planQ_{\gtrweight s} = \planP_{\gtrweight f}$. In the next step of these
runs, we consider packet $f$ for $\planP$ and $s$ for $\planQ$. We now have two cases.

Suppose that $d_s > d_f$. Then, 
since $f$ is admitted to $\planP$, we have  $\pslack(\planP_{\gtrweight f},\tau) \ge 1$ for $\tau\ge d_f$,
which, by the paragraph above, implies that  $\pslack(\planQ_{\gtrweight s},\tau) \ge  1$ for $\tau \ge d_s$; and
thus $s$ will be admitted to $\planQ$. As $d_f,d_s\le \gamma$, we also have
$\pslack(\planQ_{\geqweight s},\gamma) = \pslack(\planP_{\geqweight f},\gamma) = 0$. This means that in the
run on $\pendpackets'$ packet $f$, which is considered next, will be rejected.
From this point on, both runs are again the same, because no packet with deadline at most $\gamma$
is admitted, and because for $\tau > \gamma$ and any packet $g$ lighter than $f$ we have
$\pslack(\planP_{\gtrweight g},\tau) = \pslack(\planQ_{\gtrweight g},\tau)$, so
$g$ gets admitted to $\planQ$ if and only if $g$ is admitted to $\planP$.

Next, consider the case when $d_s \le d_f$. In this case, since also $d_f\le \gamma$, we have
$\nexttightslot(\planP,d_f) = \gamma$, so the interval $(\beta,\gamma]$ is in fact a single segment of $\planP$.
(This also proves the second claim in Case~(b2), namely that $(\beta,\gamma]$ is a segment of $\planP$.)
This gives us that $\pslack(\planP_{\gtrweight f},\tau) \ge \pslack(\planP,\tau) \ge 1$ for all $\tau\in (\beta,\gamma)$.
Since $\planP$ admits $f$, we also have $\pslack(\planP_{\gtrweight f},\tau) \ge 1$ for all $\tau \ge d_f$.
Therefore $\pslack(\planQ_{\gtrweight s},\tau) =  \pslack(\planP_{\gtrweight f},\tau) \ge 1$ for all $\tau \ge d_s$,
which implies that $\planQ$ will admit $s$. As in the previous case, we will then have
$\pslack(\planQ_{\geqweight s},\gamma) = \pslack(\planP_{\geqweight f},\gamma) = 0$, so the run for $\planQ$
will not admit $f$ (which is next in the ordering), and the remainders of both runs will be identical.

\smallskip

Next, to prove parts~(a)--(c) in the case $w_s > w_f$, we
analyze the changes of $\packetslack()$ and $\minwt()$ caused by the arrival of $s$.

\smallskip\noindent
(a) This part is straightforward, as replacing $f$ by $s$ in the plan does not change any value of 
$|\planP[t,\tau]|$ for slots $\tau\in [t, \min(d_f, d_s)) \cup [\max(d_f, d_s),\timehorizon]$.
In particular, $\beta$ and $\gamma$ remain tight in $\planQ$, and
slots in $(\beta,\min(d_f, d_s))\cup [\max(d_f, d_s),\gamma)$ are not tight in $\planQ$.

\smallskip\noindent
(b) 
Recall that $\planQ = \planP\cup\braced{s} \setminus \braced{f}$.
Suppose that $d_s > d_f$.  Then $\packetslack(\planQ,\tau) = \packetslack(\planP,\tau)+1$
for $\tau \in [ d_f,d_s)$ as $|\planQ[t,\tau]| = |\planP[t,\tau]|-1$ for such $\tau$.
So there are no tight slots in the interval $[d_f,d_s)$ in $\planQ$. This shows (b1).

The argument for (b2) is similar. Suppose that $d_s \le d_f$. As already justified
earlier, $(\beta,\gamma]$ is a segment of $\planP$. Hence, as $\beta < d_s \le d_f \le \gamma$,
the slack values for $\tau\in [d_s,d_f)$ are strictly positive in $\planP$.
Further, $\packetslack(\planQ,\tau) = \packetslack(\planP,\tau) - 1 \geq 0$ for $\tau \in [d_s,d_f)$.
This means that in $\planQ$ there may appear additional tight slots in $[d_s,d_f)$.

\smallskip\noindent
(c) For $\tau \in [t,\beta]$, the set of packets with deadline at most $\tau$
does not change and tight slots up to $\beta$ remain the same.
Hence, $\minwt(\planQ,\tau) = \minwt(\planP,\tau)$ for such $\tau$.
For $\tau\in (\beta, \gamma]$, it holds that $\minwt(\planP,\tau) = w_f$
and, as $s$ replaces $f$ and $\gamma$ remains a tight slot,
all packets in $\planQ$ with deadlines at most $\nexttightslot(\planQ,\tau)\le\gamma$ are heavier than $f$, so
we actually have $\minwt(\planQ,\tau) > \minwt(\planP,\tau)$.
Finally, consider $\tau > \gamma$. Since tight slots after $\gamma$ are unchanged,
the set of packets considered in the definition of
$\minwt(\tau)$ changes only by replacing $f$ by $s$ with $w_s > w_f$, which implies $\minwt(\planQ,\tau) \ge \minwt(\planP,\tau)$.
\end{proof}

The observation below provides a characterization of optimal plans in terms of the  $\minwt()$ function.
Its $(\Rightarrow)$ implication follows from the first claim of Lemma~\ref{lem:plan_update-arrival}.
The $(\Leftarrow)$ implication follows by noting that if a plan $\planP$ satisfies the condition
in the observation then running Algorithm~{\ComputePlan} on $\pendpackets$ will produce
exactly the set of packets in $\planP$. (The formal proof proceeds by induction, showing
that packets from $\planP$ will be admitted to $X$ because $\planP$ is feasible, but
packets from $\pendpackets\setminus\planP$ will not, as the condition in line~(4) will be violated
at the time they are considered.)


\begin{observation}\label{ref: obs: characterization of optimal plans}
A plan $\planP$ is optimal for a set $\pendpackets$ of pending packets if and only if
$w_j < \minwt(\planP,d_j)$ for all packets $j\in \pendpackets\setminus\planP$.
\end{observation}



\subsection{Updating the Optimal Plan after Transmitting a Packet}
\label{sec: updating the plan after transmitting a packet}

Next, we analyze how the optimal plan evolves after an algorithm transmits a packet from the current
optimal plan. (Only such transmissions are relevant to our analysis as in our algorithm \PlanMonotonicity{}
packets from outside the optimal plan are not considered for transmission.)
We stress that in this section we only analyze the effects of the event of a packet being transmitted, and the
properties we establish are independent of the algorithm. 
In Appendix~\ref{app: leap step of algorithm planmonotonicity} we will deal with 
the changes in the plan specific to  {\PlanMonotonicity{}}, namely with packet transmissions that
also involve adjustments of the instance.

So let $\pendpackets$ be the set of packets pending at a time $t$ and denote by
$\planP = \planP^t$ the optimal plan for $\pendpackets$. Suppose that we choose some
packet $p\in \planP$, transmit it at time $t$, and advance the time to $t+1$. 
The set $\pendpackets'$ of packets pending at time $t+1$ is obtained from 
$\pendpackets$ by removing $p$ and all packets that expire at time $t$.
In this section, $\planQ$ will denote the new optimal plan for $\pendpackets'$ and time step $t+1$;
that is $\planQ = {\widetilde{\planP}}^{t+1}$ in the notation from Section~\ref{subsec: transmitting a packet}.

We now explain how to obtain $\planQ$ from $\planP$.
There are two cases, depending on whether or not $p$ is in the initial segment $[t,\alpha]$ of $\planP$,
where $\alpha = \nexttightslot(\planP,t)$.
We start with the case when $p$ is in the initial segment, which is quite straightforward.
See Figure~\ref{fig:packets-plan-update-sch1} for an illustration.


\begin{lemma}\label{lem:plan-update_transmission-1stSegment}
(The Plan-Update Lemma for Transmitting $p\in \planP[t,\alpha]$.)
Suppose that at time $t$ we transmit a packet $p\in \planP[t,\alpha]$, 
where $\alpha = \nexttightslot(\planP, t)$. Then:
\begin{enumerate}[label=(\alph*)] 
\item $\planQ = \planP \setminus \braced{p}$. Thus the new optimal plan $\planQ$ is obtained from $\planP$
		simply by removing $p$.
\item  $\packetslack(\planQ, \tau) = \packetslack(\planP, \tau) - 1$ for $\tau \in [t+1, d_p)$
   and $\packetslack(\planQ, \tau) = \packetslack(\planP, \tau)$ for $\tau \ge d_p$.
   Thus new tight slots may appear in $[t+1,d_p)$.
\item For any slot $\tau\ge t+1$, $\minwt(\planQ,\tau) \ge \minwt(\planP,\tau)$.
\end{enumerate}
\end{lemma}

\begin{figure}[!ht]
\centering
\includegraphics[width =3in]{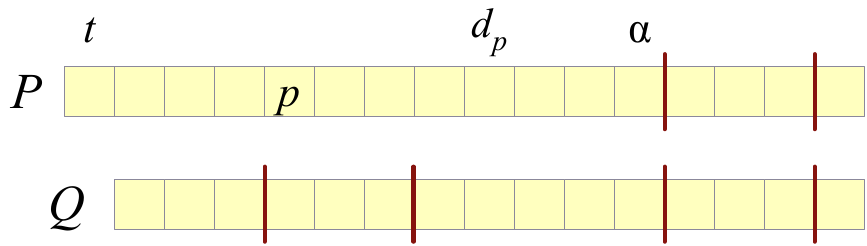}
\caption[Illustration of changes of tight slots and segments after transmitting a packet
from the initial segment]{An illustration of changes of tight slots
and segments in Lemma~\ref{lem:plan-update_transmission-1stSegment}.}
\label{fig:packets-plan-update-sch1}
\end{figure}

Note that part~(b) is meaningful in the special case when $d_p = t$, as according to our definition 
of $\pslack()$, $\pslack(\planQ,t)$ is defined and equal to $0$, that is slot $t$ is considered tight at time $t+1$.

\begin{proof}
Observe first that $\planP\setminus \braced{p}\subseteq \pendpackets'$. Indeed, if some
$q\in \planP\setminus \braced{p}$ had $d_q = t$, it would mean that the first tight slot $\alpha$ is $t$,
so $q$ would have to be equal to $p$.

\smallskip\noindent
(a) According to Observation~\ref{obs: plan properties}(c), $\planP$ has a schedule where $p$ is scheduled in slot $t$. 
This implies directly that $\planP\setminus \braced{p}$ is a feasible set of packets at time $t+1$. If $\planQ' \neq \planP\setminus \braced{p}$ is any other 
feasible subset of $\pendpackets'$ then $\planQ'$ would have a schedule starting at time $t+1$, so
$\planP'= \planQ'\cup\braced{p}$ would have a schedule starting at time $t$, and thus would be feasible at time $t$.
Since $\planP$ is heavier than $\planP'$, $\planP\setminus \braced{p}$ is heavier than $\planQ'$; thus proving that the optimal plan
at time $t+1$ is indeed $\planQ = \planP\setminus \braced{p}$.

\smallskip\noindent
(b) 
For $\tau \in [t+1,d_p)$, $\packetslack(\planQ, \tau) = \packetslack(\planP, \tau) - 1$ as in $\planQ$ the time is incremented
and $|\planQ[t+1,\tau]| = |\planP[t+1,\tau]|$. 
For $\tau \ge d_p$, incrementing the time for $\planQ$ is compensated by $p$ not contributing to $\planQ[t+1,\tau]$
(that is $|\planQ[t+1,\tau]| = |\planP[t+1,\tau]|-1$), and thus $\packetslack(\planQ, \tau) = \packetslack(\planP, \tau)$.

\smallskip\noindent
(c) By~(b), the value of $\nexttightslot(\planP, \tau)$ may only decrease.
Since also no packet is added to the plan, in the definition of $\minwt(\planQ,\tau)$ 
we consider a subset of packets used to define $\minwt(\planP,\tau)$, which shows~(c).
\end{proof}

For the case $p\in \planP(\alpha,\timehorizon]$, recall that the substitute packet $\substpacket(\planP, p)$ for $p$
is the heaviest pending packet $\rho\in \pendpackets\setminus \planP$ satisfying $d_\rho > \prevtightslot(\planP, d_p)$.
We prove that $\substpacket(\planP, p)$ really appears in the plan when $p$ is transmitted.
See Figure~\ref{fig:packets-plan-update-schLater} for an illustration.


\begin{lemma}\label{lem:plan-update_transmission-LaterSegment}
(The Plan-Update Lemma for Transmitting $p\ \in \planP(\alpha,\timehorizon]$.)
Suppose that at time $t$ we transmit a packet $p\in \planP(\alpha,\timehorizon]$, where $\alpha = \nexttightslot(\planP, t)$.
Let $\rho = \substpacket(\planP, p)$, $\beta = \prevtightslot(\planP,d_p)$, $\gamma =\nexttightslot(\planP,d_\rho)$,
and denote by $\firstlightpacket$ the lightest packet in $\planP[t,\alpha]$, i.e., in the initial segment of $\planP$. Then:
\begin{enumerate}[label=(\alph*)]	
	\item $\planQ = \planP\setminus \braced{p,\firstlightpacket}\cup \braced{\rho}$. 
		Thus the new optimal plan $\planQ$ is obtained from $\planP$ by removing $\firstlightpacket$ and
		replacing $p$ by $\rho$.
	\item $\packetslack(\planQ, \tau) = \packetslack(\planP, \tau) - 1$ for $\tau\in [t+1, d_\firstlightpacket)$.
	\item $\packetslack(\planQ, \tau) = \packetslack(\planP, \tau)$ 
				for $\tau\in [d_\firstlightpacket, \min(d_p, d_\rho))\cup [\max(d_p, d_\rho) , \timehorizon]$.
	\item For $\tau\in [ \min(d_p, d_\rho),\max(d_p, d_\rho) )$ there are two cases:
		\begin{enumerate}[label=(d\arabic*)]
			\item If $d_\rho > d_p$, then	
				$\packetslack(\planQ, \tau) = \packetslack(\planP, \tau) + 1$ for $\tau \in [d_p, d_\rho)$.
				This means that all segments of $\planP$ in $(\beta,\gamma]$ get merged into one segment $(\beta,\gamma]$ of $\planQ$.
			\item If $d_\rho < d_p$, then
				$\packetslack(\planQ, \tau) = \packetslack(\planP, \tau) - 1$ for $\tau\in [d_\rho, d_p)$.
				In this case $(\beta,\gamma]$ is itself a segment of $\planP$, and in $\planQ$ 
				there might be new tight slots in $[d_\rho, d_p)$ that partition $(\beta,\gamma]$ into smaller segments.
		\end{enumerate}
		
\end{enumerate}
\end{lemma}

\begin{figure}[!ht]
\centering
\includegraphics[width=5in]{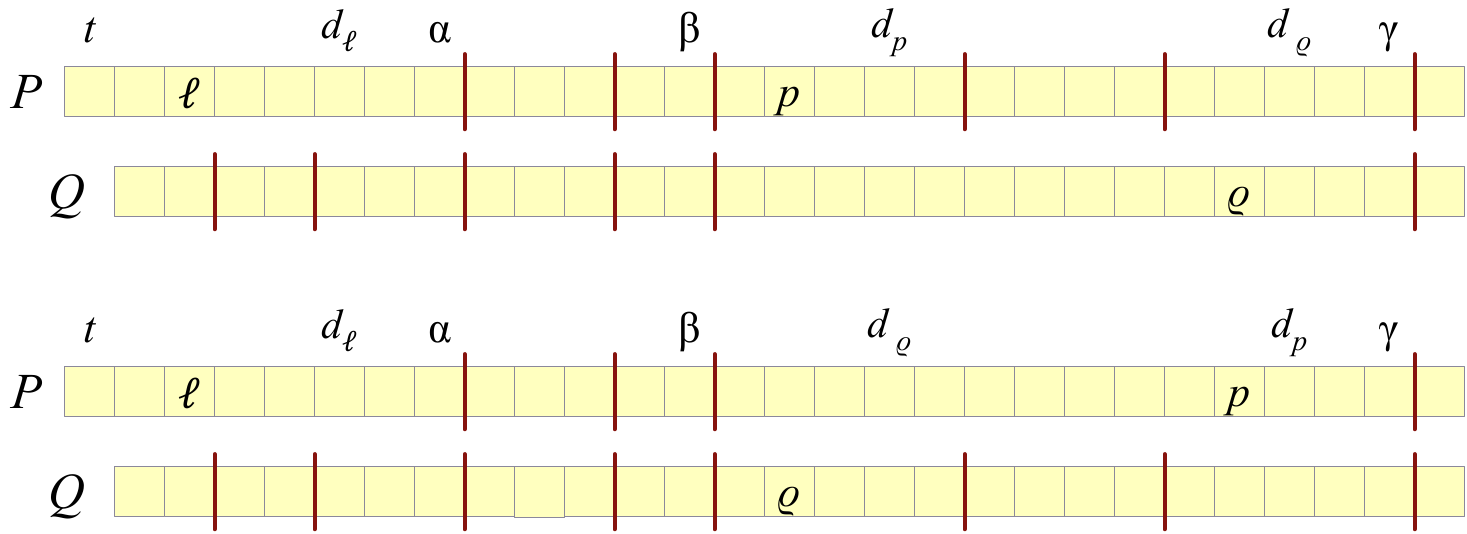}
\caption[Illustration of changes of tight slots and segments after transmitting a packet
from a later segment]{An illustration of changes of tight slots
and segments, in Case (d1) of Lemma~\ref{lem:plan-update_transmission-LaterSegment} on the top
and in Case (d2) on the bottom.}
\label{fig:packets-plan-update-schLater}
\end{figure}

\begin{proof}
(a) Let $\pendpackets$ be the set of packets pending at time $t$ before transmitting $p$ and
$\pendpackets'$ be the packets that remain pending at time $t+1$ after $p$ is transmitted;
that is, $\pendpackets'$ is obtained from $\pendpackets$ by removing $p$ and all packets that expire at time $t$.
We observe first that $\planP\setminus\braced{p,\firstlightpacket}\subseteq \pendpackets'$.
Indeed, if there was some $q \in \planP\setminus\braced{p,\firstlightpacket}$ with $d_q = t$, this would
imply that $\alpha = t$, so $d_\firstlightpacket = t$ and thus $\firstlightpacket = q$, contradicting the choice of $q$.
Obviously, $p\notin\pendpackets'$.
As for $\firstlightpacket$, it may or may not be in $\pendpackets'$, depending on whether or not $d_\firstlightpacket > t$.
We also have $\rho\in \pendpackets'$, because $d_\rho> \beta \ge t$.

We prove that  $\planQ=\planP\setminus \braced{p,\firstlightpacket}\cup \braced{\rho}$ in two substeps:
\begin{itemize}
	\item First, we increment the time to $t+1$ but without transmitting $p$. After this substep the set
of pending packets is $\barpendpackets = \braced{a\in \pendpackets \suchthat d_a \ge t+1}$, i.e.,
we exclude packets that expire at time $t$. We show that the optimal plan for $\barpendpackets$ is
$\barplanQ = \planP\setminus\braced{\firstlightpacket}$. 
	\item In the second substep we remove
$p$ from $\barpendpackets$ (and transmit it in time slot $t$), so the new set of pending packets
will be  $\pendpackets' = \barpendpackets\setminus\braced{p}$, and we prove that the new optimal plan
will be $\planQ = \barplanQ\setminus\braced{p}\cup\braced{\rho}$.
(Recall that we have $p\neq\firstlightpacket$
as $p\not\in \planP[t,\alpha]$ and $\firstlightpacket\in \planP[t,\alpha]$.)
\end{itemize}

The intuition is that these two substeps are largely independent: incrementing $t$ to $t+1$ will
squeeze out $\firstlightpacket$ from $\planP$, no matter what packets with deadlines after $\alpha$
are removed from $\pendpackets$. 


\smallskip
\emph{First substep.}
For the first substep, we use an argument similar to the one in Lemma~\ref{lem:plan-update_transmission-1stSegment}.
Observe that $\planP\setminus\braced{\firstlightpacket}$ is a feasible set of packets at time $t+1$,
since $\planP$ has a schedule where $\firstlightpacket$ is in slot $t$, by Observation~\ref{obs: plan properties}(c).
Let $\planQ' \neq \planP\setminus \braced{\firstlightpacket}$ be any other subset of $\barpendpackets$ that is feasible at time $t+1$;
we show that $w(\planQ') < w(\planP\setminus \braced{\firstlightpacket})$.
Since $\planP$ contains exactly $\alpha - t + 1$ packets with deadline at most $\alpha$,
there is a packet $q'\in \planP[t,\alpha]$ such that $q'\notin \planQ'$
(possibly $q' = \firstlightpacket$). As $\planQ'$ has a schedule starting at time $t+1$, 
$\planP'= \planQ'\cup\braced{q'}$ has a schedule starting at time $t$ and thus is feasible at time $t$.
Since $\planP$ is at least as heavy as $\planP'$ and $\firstlightpacket$ is the lightest packet in $\planP[t,\alpha]$, 
implying $w_\firstlightpacket\le w_{q'}$, we conclude that indeed $w(\planP\setminus \braced{\firstlightpacket}) \ge w(\planQ')$.
Hence, the plan after incrementing the time is indeed $\barplanQ = \planP\setminus \braced{\firstlightpacket}$.

Thus we can think of $\barplanQ$ as being obtained from  $\planP$ by advancing the time to $t+1$ and removing $\firstlightpacket$. 
It follows that $\pslack(\barplanQ,\tau) = \pslack(\planP,\tau)$ for  $\tau \geq d_\firstlightpacket$. 
In particular, $\beta$ is a tight slot of $\barplanQ$ and $\beta = \prevtightslot(\barplanQ,d_p)$.
Further, as $\pendpackets$ and $\barpendpackets$ contain the same packets with deadline after $\beta$
and as $\beta \ge d_\firstlightpacket$, we get that
$\rho$ is the heaviest packet in $\barpendpackets\setminus\barplanQ$ with deadline greater than $\beta$. 
Summarizing, we have
\begin{align}
    \beta \;&=\;   \prevtightslot(\barplanQ,d_p)
	 \label{eq: QP prop 1}
	\\
	\rho \;&=\; \argmax_j
			\braced{ w_j  \suchthat j\in \barpendpackets\setminus\barplanQ \;\&\; d_j > \beta }
	 \label{eq: QP prop 2} 
	\\
    \gamma \;&=\;   \nexttightslot(\barplanQ,d_\rho) 
	 \label{eq: QP prop 3} 	   
\end{align}	


\smallskip
\emph{Second substep.}
In the second substep we establish a relationship between $\barplanQ$ and $\planQ$.
To this end, we consider two ``parallel'' runs of the greedy algorithm to compute plans: one run
of $\ComputePlan(\barpendpackets, t+1)$ which yields $\barplanQ$ and another run 
of $\ComputePlan(\pendpackets', t+1)$ which yields $\planQ$. (Recall that $\pendpackets' = \barpendpackets \setminus\braced{p}$.)    
The contents of the set $X$ in these two runs after processing packets heavier than some packet $j$ is equal to
$\barplanQ_{\gtrweight j}$ and ${\planQ}_{\gtrweight j}$, respectively. To avoid ambiguity, we will
be using these notations when referring to the contents of $X$ during these runs.

We divide these two runs into three phases. In the first phase we consider the two runs for packets $j$ with $w_j > w_p$
(that is, before $p$ gets admitted to $\barplanQ$). Since $\barpendpackets_{\gtrweight p} = \pendpackets'_{\gtrweight p}$, 
in this phase the two runs are exactly the same. This gives us that
\begin{equation*}
\barplanQ_{\gtrweight p} \;=\; \planQ_{\gtrweight p}.   
\end{equation*}

\smallskip

The second phase involves packets $j$ with $w_p \geq w_j > w_\rho$.
We claim that during this phase the following invariants will be maintained: 
\begin{align}
	\planQ_{\geqweight j} \;&=\; {\barplanQ}_{\geqweight j} \setminus\braced{p}  
		\label{eq: QQ phase 2 inv 1}
	\\
\packetslack(\planQ_{\geqweight j},\tau) \;&=\;
						\begin{cases} 
							\packetslack(\barplanQ_{\geqweight j},\tau) & \quad \text{for } \tau \in [t+1,\beta] \\
							\packetslack(\barplanQ_{\geqweight j},\tau) \ge 1 & \quad\text{for } \tau \in (\beta,d_p) \\        
							 \packetslack(\barplanQ_{\geqweight j},\tau)+1 & \quad\text{for } \tau\ge d_p  \\  
						\end{cases} 
		\label{eq: QQ phase 2 inv 2} 
\end{align}
We first show that these properties hold for $j=p$. We have $p\in \barplanQ$ and $\pendpackets' = \barpendpackets \setminus\braced{p}$;
this means that $p$ will be admitted to $\barplanQ_{\geqweight p}$ in the run for $\barpendpackets$
but it will not be considered in the run for  $\pendpackets'$. Thus Property~(\ref{eq: QQ phase 2 inv 1}) follows.
Invariant~(\ref{eq: QQ phase 2 inv 1}) and the definition of $\packetslack()$ imply all relations
between $\packetslack(\planQ_{\geqweight p},\tau)$ and $\packetslack(\barplanQ_{\geqweight p},\tau)$
in invariant~(\ref{eq: QQ phase 2 inv 2}), and the inequality  
$\packetslack(\barplanQ_{\geqweight p},\tau) \ge 1$ for  $\tau \in (\beta,d_p)$ follows immediately from~(\ref{eq: QP prop 1}).
(Note that $\packetslack(\barplanQ_{\geqweight a},\tau) \ge \packetslack(\barplanQ,\tau)$ holds for any packet $a$ and slot $\tau$.)

Next, consider packets $j$ with $w_p > w_j > w_\rho$. If $j$ gets admitted to
$\barplanQ_{\geqweight j}$ then invariant~(\ref{eq: QQ phase 2 inv 2}) implies that 
it also gets admitted to ${\planQ}_{\geqweight j}$, preserving~(\ref{eq: QQ phase 2 inv 1}).
That~(\ref{eq: QQ phase 2 inv 2}) is preserved follows from the same argument as for $j=p$ above.

So consider the second case, when $j$ does not get admitted to $\barplanQ_{\geqweight j}$.    
We claim that $j$ will also not be admitted to ${\planQ}_{\geqweight j}$. To justify this, we argue as follows.
By~(\ref{eq: QP prop 2}), $\rho$ is the heaviest packet in $\barpendpackets$ with deadline after $\beta$
that will not be admitted to $\barplanQ$. Therefore, since $w_j > w_\rho$, we have $d_j \le \beta$.
Suppose for a contradiction that $j$ is admitted to ${\planQ}_{\geqweight j}$.
As $j$ is not admitted to $\barplanQ_{\geqweight j}$, there must be $\tau\ge d_j$ such that
$\packetslack(\barplanQ_{\gtrweight j}, \tau) = 0$, although $\packetslack(\planQ_{\gtrweight j}, \tau) > 0$ as $j$ is admitted to ${\planQ}_{\geqweight j}$.
Due to invariant~(\ref{eq: QQ phase 2 inv 2}) for $\tau\in [t+1,\beta]$,
we must have $\tau>\beta$. As $\tau$ is already tight, by Observation~\ref{obs: plan properties}(d)
we get that $\packetslack(\barplanQ_{\gtrweight j}, \beta) = \packetslack(\barplanQ, \beta) = 0$,
where the second step holds by~\eqref{eq: QP prop 1}.
Using invariant~(\ref{eq: QQ phase 2 inv 2}) for $\tau\in [t+1,\beta]$, 
we obtain $\packetslack(\planQ_{\gtrweight j}, \beta) = \packetslack(\barplanQ_{\gtrweight j}, \beta) = 0$,
contradicting that we must have $\packetslack(\planQ_{\gtrweight j}, \beta) > 0$
as $j$ is admitted to ${\planQ}_{\geqweight j}$ and $d_j \le \beta$.
Thus we indeed have that $j$ will not be admitted to ${\planQ}_{\geqweight j}$.
As in this case $j$ is not admitted to both $\barplanQ_{\geqweight j}$ and ${\planQ}_{\geqweight j}$,
both invariants~(\ref{eq: QQ phase 2 inv 1}) and~(\ref{eq: QQ phase 2 inv 2}) are obviously preserved.

\smallskip

In the third phase we analyze both runs for packets $j$ with $w_j \leq w_\rho$. We claim that for these
packets the following invariants hold:
\begin{align}
	\planQ_{\geqweight j} \;&=\; {\barplanQ}_{\geqweight j} \setminus\braced{p} \cup \braced{\rho} 
		\label{eq: QQ phase 3 inv 1}
	\\
\packetslack(\planQ_{\geqweight j},\tau) \;&=\;
							\packetslack(\barplanQ_{\geqweight j},\tau) \quad \text{for } \tau \ge \gamma 
		\label{eq: QQ phase 3 inv 2}
		\\     
\packetslack(\planQ_{\geqweight j},\gamma) \;&=\;
							\packetslack(\barplanQ_{\geqweight j},\gamma) \;=\; 0
		\label{eq: QQ phase 3 inv 3} 
\end{align}
Note that property~(\ref{eq: QQ phase 3 inv 2}) follows directly from~(\ref{eq: QQ phase 3 inv 1}), because
$d_p, d_\rho \le \gamma$. Thus we only need to show that~(\ref{eq: QQ phase 3 inv 1}) and~(\ref{eq: QQ phase 3 inv 3})
hold for all $j$ in this phase. 
	
We first consider $j=\rho$. As $\rho\notin \barplanQ$, $\rho$ will not be admitted to $\barplanQ_{\geqweight\rho}$.
From~(\ref{eq: QQ phase 2 inv 2}) and~(\ref{eq: QP prop 2}) we have that $\packetslack(\planQ_{\gtrweight\rho},\tau) \ge 1$ for
$\tau\ge d_\rho$, and thus $\rho$ will be admitted to $\planQ_{\geqweight\rho}$. This gives us~(\ref{eq: QQ phase 3 inv 1}) and~(\ref{eq: QQ phase 3 inv 2}). 
Since $\rho\notin \barplanQ$, all packets in $\barplanQ[t+1, \gamma]$ are heavier than $\rho$; in other words,
we have $\barplanQ[t+1, \gamma]  = \barplanQ_{\geqweight\rho}[t+1, \gamma]$, which together with~(\ref{eq: QP prop 3}) implies that
$\gamma =   \nexttightslot(\barplanQ_{\geqweight \rho},d_\rho)$. Then,
from~(\ref{eq: QQ phase 3 inv 2}) (that we already established for $\rho$), we obtain that 
$\packetslack(\planQ_{\geqweight \rho},\gamma) =  \packetslack(\barplanQ_{\geqweight \rho},\gamma) = 0$. 

In the rest of the third phase, for each packet $j$ with $w_j < w_\rho$, condition~(\ref{eq: QQ phase 3 inv 3}) implies
that if $d_j \le \gamma$ then $j$ will not be admitted to  $\barplanQ_{\geqweight j}$ and also to $\planQ_{\geqweight j}$.
On the other hand, if $d_j > \gamma$, then condition~(\ref{eq: QQ phase 3 inv 2}) implies that 
$j$ will be either admitted to  both $\barplanQ_{\geqweight j}$ and $\planQ_{\geqweight j}$ or to none.
This completes the proof that the above invariants are preserved, and the proof of~(a) as well.

\smallskip\noindent
(b)
We now analyze the changes in the values of $\packetslack()$. If $d_\firstlightpacket > t$, it holds that 
$\packetslack(\planQ, \tau) = \packetslack(\planP, \tau)-1$ for $\tau \in [t+1,d_\firstlightpacket)$, as in the transmission step
the time was incremented and there is no change in the set of packets taken into account.

\smallskip\noindent
(c)
For $\tau \in [d_\firstlightpacket, \min(d_\rho, d_p))$ we have $\packetslack(\planQ, \tau) = \packetslack(\planP, \tau)$
as in the transmission step the time was incremented, but $\firstlightpacket$ was forced out. (In particular,
since $d_\firstlightpacket\le \alpha < \min(d_\rho, d_p)$, $\alpha$ is a tight slot in $\planQ$, including the special case when $\alpha = t$.)
Similarly, for $\tau\ge \max(d_\rho, d_p)$, it holds that
$\packetslack(\planQ, \tau) = \packetslack(\planP, \tau)$, since in the transmission step the time was incremented, $\firstlightpacket$ was forced out,
$p$ was transmitted, and $\rho$ appeared in $\planQ$.

\smallskip\noindent
(d)
In Case~(d1), for $\tau \in [d_p,d_\rho)$, we have
$\packetslack(\planQ, \tau) = \packetslack(\planP, \tau) + 1$ as in the transmission step the time was increased,
but $\firstlightpacket$ was forced out and $p$ was transmitted.
It follows that $\planQ$ has no tight slots in $[d_p, d_\rho)$, so
$\prevtightslot(\planQ,d_\rho) = \beta$ and $\nexttightslot(\planQ,d_p) = \gamma$, which means that
in $\planQ$ the interval $(\beta,\gamma]$ forms one segment.

In Case~(d2), the definitions of $\beta = \prevtightslot(\planP,d_p)$ and $\gamma =  \nexttightslot(\planP,d_\rho)$, together with
$d_\rho < d_p$, imply that $(\beta,\gamma]$ is indeed a segment of $\planP$. In particular, this means that 
$\packetslack(\planP, \tau) \ge 1$ for $\tau\in [d_\rho, d_p)$. For such slots $\tau$ we thus obtain that
$\packetslack(\planQ, \tau) = \packetslack(\planP, \tau) - 1$ because in the transmission step the time was increased, $\firstlightpacket$ was forced out,
and $\rho$ was added to $\planQ$. 
\end{proof}

Note that after transmitting $p \in \planP(\alpha,\timehorizon]$ (without adjusting packet weights or deadlines appropriately),
the slot monotonicity property does not hold. This is because $\minwt(d_\rho)$ decreases, 
as $\minwt(\planP, d_\rho) > w_\rho$ but $\minwt(\planQ, d_\rho) = w_\rho$.

\section{Appendix: Leap Step of Algorithm~{\PlanMonotonicity}}
\label{app: leap step of algorithm planmonotonicity}

Algorithm~$\PlanMonotonicity$ artificially adjusts weights and deadlines of some pending packets in leap steps,
and therefore the results from the previous section are not sufficient
to fully characterize how the optimal plan evolves during its computation.
Lemmas~\ref{lem:plan_update-arrival} and~\ref{lem:plan-update_transmission-1stSegment} 
remain valid for Algorithm~$\PlanMonotonicity$, but we still need a variant of 
Lemma~\ref{lem:plan-update_transmission-LaterSegment} that will characterize the changes in
the optimal plan after a leap step. We provide such a characterization below, see Lemma~\ref{lem:leapStep}. 
Then we use this lemma to prove Lemma~\ref{lem:monotonicity-LeapStep}, the slot-monotonicity property for leap steps.

In the following, similar to the previous section, we use $\planP$ for $\planP^t$ (the optimal
plan after all packets at time $t$ have been released) and by $\planQ = {\widetilde{\planP}}^{t+1}$
we denote the optimal plan just after the event of transmitting $p$, incrementing the time,
and changing weights and deadlines (but before new packets at time $t+1$ are released).


\begin{lemma}\label{lem:leapStep}
Suppose that $t$ is a leap step of Algorithm~\PlanMonotonicity{} in which $p$ was transmitted,
and let $\rho = \substpacket(\planP,p)$ be the substitute packet of $p$.
Let packets $h_0 = p, h_1, \dots, h_k$ and tight slots
$\tau_0, \dots, \tau_k = \gamma = \nexttightslot(\planP,d^t_\rho)$ be as defined in the algorithm.
Furthermore, let $\beta = \tau_{-1} =\prevtightslot(\planP,d^t_p)$.
Then:  
\begin{enumerate}[label=(\alph*)]
\item $\planQ = \planP \setminus \braced{p, \firstlightpacket} \cup \braced{\rho}$; in particular, if $k\ge 1$, all packets $h_1, h_2, \dots, h_k$ are in $\planQ$.
\item $w^t_p = w^t_{h_0} > w^t_{h_1} > w^t_{h_2} > \cdots > w^t_{h_k} > w^t_\rho$.
\item $h_k$'s deadline is in the segment of $\planP$ ending at $\gamma$, 
		that is, $\prevtightslot(\planP,d^t_\rho) < d^t_{h_k} \le \gamma$.
\item The $\pslack()$ values change as follows (see Figure~\ref{fig:leapStep-alg-pslack}):

\begin{enumerate}[label=(d.\roman*)]
\item $\packetslack(\planQ, \tau) = \packetslack(\planP, \tau) - 1$ for $\tau\in [t+1, d^t_\firstlightpacket)$. 
\item  $\packetslack(\planQ, \tau) = \packetslack(\planP, \tau)$ for $\tau\in [ d^t_\firstlightpacket, \beta]$.
\item If $k \ge 1$, 
then for $i=0,\dots, k-1$ we have the following changes
in $(\tau_{i-1}, \tau_i]$:
   \begin{enumerate}[label=(d.iii.\arabic*)]
   \item $\pslack(\planQ, \tau) = \pslack(\planP, \tau)$ for $\tau\in (\tau_{i-1},d^t_{h_i}) \cup\braced{\tau_i}$.
   \item $\pslack(\planQ, \tau) = \pslack(\planP, \tau) + 1$ for $\tau\in [d^t_{h_i}, \tau_i)$.
\end{enumerate}
\item $\packetslack(\planQ, \tau) = \packetslack(\planP, \tau)$ for $\tau\in (\tau_{k-1}, \min(d^t_{h_k}, d^t_\rho))$.
\item For $\tau\in [\min(d^t_{h_k}, d^t_\rho), \max(d^t_{h_k}, d^t_\rho))$, there are two cases:
	\begin{enumerate}[label=(d.v.\arabic*)]
	\item If $d^t_{h_k} < d^t_\rho$, then $\pslack(\planQ, \tau) = \pslack(\planP, \tau) + 1$
	for $\tau\in [d^t_{h_k} , d^t_\rho)$.
	\item If $d^t_\rho < d^t_{h_k}$, then $\pslack(\planQ, \tau) = \pslack(\planP, \tau) - 1$
	for $\tau\in [d^t_\rho , d^t_{h_k})$.
\end{enumerate}
\item $\pslack(\planQ, \tau) = \pslack(\planP, \tau)$ for $\tau \ge \max(d^t_{h_k}, d^t_\rho)$.
\end{enumerate}
\item Any tight slot of $\planP$ is a tight slot of $\planQ$, but there might be new tight slots of $\planQ$   
in $[t+1,d^t_\firstlightpacket)$ and, in Case~(d.v.2), also in $[d^t_\rho, d^t_{h_k})$. Thus, in general, we have
$\nexttightslot(\planQ,\tau)\le \nexttightslot(\planP,\tau)$ for all $\tau\ge t+1$.
\item $w^{t+1}_{h_i}\ge \minwt(\planP, d^{t+1}_{h_i})$ for any $i=1, \dots, k$, and $w^{t+1}_\rho\ge \minwt(\planP, d^{t+1}_\rho)$.
\end{enumerate}
\end{lemma}

\begin{proof}
(a) The claim clearly holds for a simple leap step by Lemma~\ref{lem:plan-update_transmission-LaterSegment},
i.e., when $k=0$, because in this case the algorithm does not decrease deadlines, and 
increasing the weight of $\rho$ does not change the new optimal plan $\planQ$, by Observation~\ref{obs: plan properties}(b).

Thus consider an iterated leap step.
Let $\overline{\planQ}$ be the optimal plan after $p$ is transmitted and the time is increased,
but before the adjustment of weights and deadlines in line~\ref{algLn:h_i-setWeightAndDdln}
of the algorithm is taken into account. From Lemma~\ref{lem:plan-update_transmission-LaterSegment} we have
$\overline{\planQ}=\planP\setminus \braced{p,\firstlightpacket}\cup \braced{\rho}$;
in particular, packets $h_1, h_2, \dots, h_k$ are in $\overline{\planQ}$.
Our goal is to prove that $\overline{\planQ} = \planQ$.

By Observation~\ref{obs: plan properties}(b), increasing the weights of packets in the optimal
plan cannot change the plan, so it is sufficient to show that the feasibility and optimality of the
plan is not affected by the decrease of the deadlines of packets $h_1, h_2, \dots, h_k$.

By Lemma~\ref{lem:plan-update_transmission-LaterSegment},  all segments of $\overline{\planQ}$ 
between $\beta = \prevtightslot(\planP,d^t_p)$ and $\gamma = \nexttightslot(\planP,d^t_\rho)$
get merged into one, which means that $\pslack(\overline{\planQ}, \tau)\ge 1$ for any $\tau\in (\beta, \gamma)$.
For $i=1,\dots,k$, decreasing the deadline of $h_i$ from $d^t_{h_i}$ to $d^{t+1}_{h_i} = \tau_{i-1}$
decreases $\pslack(\overline{\planQ},\tau)$ by $1$ for $\tau\in [\tau_{i-1}, d^t_{h_i})$.
All $k$ intervals where these decreases occur
are contained in $(\beta, \gamma)$ and, since $d^t_{h_i}\le \tau_i$ for all $i$, these intervals do not overlap.
Thus, after decreasing the deadlines of the $h_i$'s, 
all values of $\pslack()$ will remain non-negative for $\overline{\planQ}$, so $\overline{\planQ}$
remains a feasible set of packets.
Decreasing these deadlines cannot make previously unfeasible sets of packets feasible,
implying that $\planQ$ remains optimal. Thus $\overline{\planQ} = \planQ$, as claimed.
In particular, packets $h_1, h_2, \dots, h_k$ remain in $\planQ$.

\smallskip\noindent
(b) By the choice of $p$ in the algorithm,
$p$ is the heaviest packet in the segments of $\planP$ in $(\beta, \gamma]$, because $\rho$ is the substitute packet
for any packet in $\planP$ with deadline in $(\beta, \gamma]$. Thus for each $i=1,\dots,k$,
since $d^t_{h_i} \in (\beta , \gamma]$, we get $w^t_p > w^t_{h_i}$. 
The ordering of weights of $h_i$'s follows from the definition of $h_i$'s in
line~\ref{algLn:choosing_h_i} of the algorithm's description.
Finally, for any $i=0,\dots,k$ inequality $w^t_{h_i} > w^t_\rho$ holds, because $\rho\notin \planP$
and $d^t_{h_i} < \gamma = \nexttightslot(\planP,d^t_\rho)$.

\smallskip\noindent
(c) This holds by the definition of $h_i$'s in line~\ref{algLn:choosing_h_i}
and by the condition of the while loop in line~\ref{algLn:whileCycle}, which stops the loop
when $\tau_i = \nexttightslot(\planP,d^t_{h_i}) = \gamma$.

\smallskip\noindent
(d) For any slot $\tau \in [t+1, \beta]\cup (\gamma,\timehorizon]$ 
the value of $\pslack(\tau)$ is not affected by the decrease of the deadlines of $h_i$'s,
since $h_i$'s are both in $\planP$ and in $\planQ$ and since their old and new deadlines are in $(\beta, \gamma]$.
We thus get exactly the same changes of the $\pslack()$ values outside $(\beta,\gamma]$ as 
in Lemma~\ref{lem:plan-update_transmission-LaterSegment}.
Regarding slots in $(\beta, \gamma]$, Lemma~\ref{lem:plan-update_transmission-LaterSegment}(d) shows that $\pslack(\tau)$
decreases by 1 for $\tau\in [d^t_\rho, d^t_p)$ if $d^t_\rho < d^t_p$,
and increases by 1 for $\tau\in [d^t_p, d^t_\rho)$ if $d^t_p < d^t_\rho$.
In the former case, we have $\tau_0 = \gamma$ as $d^t_\rho$ and $d^t_p$ are in the same segment, that is $k=0$.
In the latter case, we sum the increase of values $\pslack(\tau)$ for  $\tau\in [d^t_p, d^t_\rho)$
with the changes of the $\pslack()$ values due to decreasing the deadlines of $h_i$'s, analyzed in~(a),
and we get the changes summarized in~(d); see Figure~\ref{fig:leapStep-alg-pslack}.

\smallskip\noindent
Part~(e) follows from~(c) and~(d), which imply 
that the value of $\pslack(\planP,\tau)$ does not increase for any tight slot $\tau$ in $\planP$.

\smallskip\noindent
Part~(f) follows from the way the weights are changed in lines~\ref{algLn:incrWeightRho} and~\ref{algLn:h_i-setWeightAndDdln}
of the algorithm.
\end{proof}

\begin{figure}[!ht]
\centering
\includegraphics[width=\textwidth]{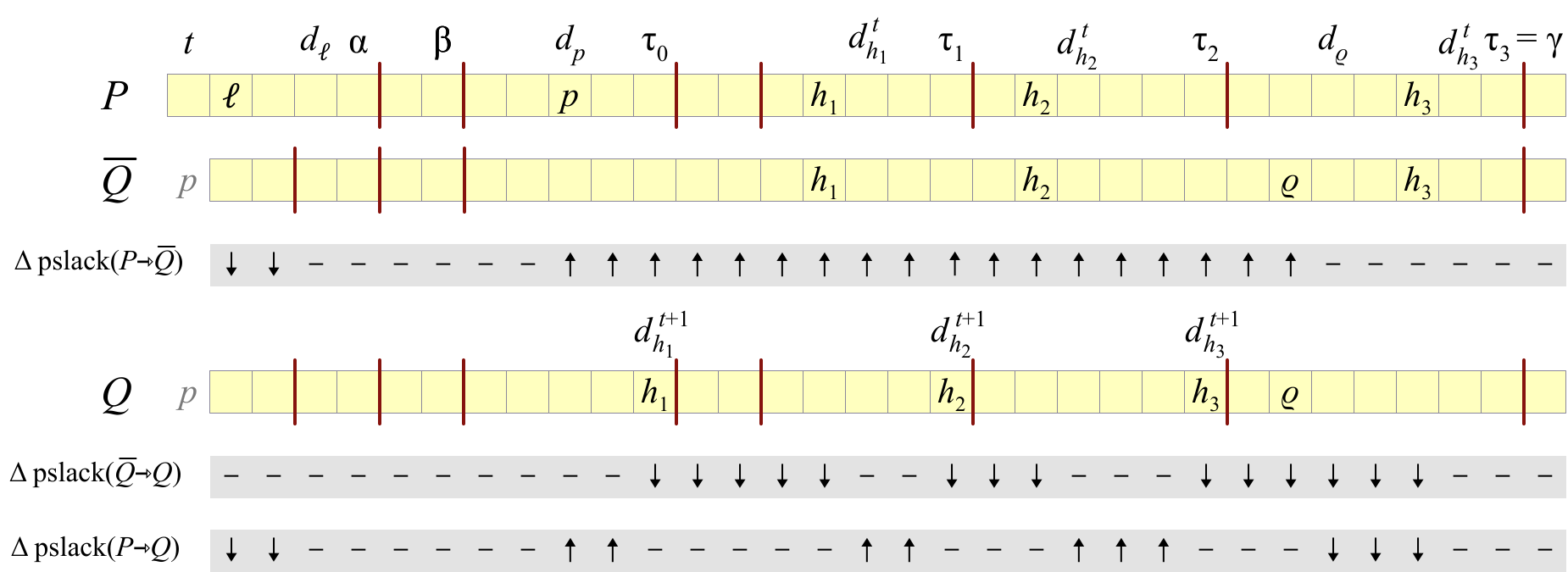}
\caption{An example with $k=3$ illustrating the proof of Lemma~\ref{lem:leapStep}(d), by
showing changes of the $\packetslack()$ values in an iterated leap step.
Here, notations $\Delta \packetslack (\planP \rightarrow \overline{\planQ})$,
$\Delta \packetslack (\overline{\planQ} \rightarrow {\planQ})$, and
$\Delta \packetslack (\planP \rightarrow \planQ)$ represent the changes of the $\packetslack()$
in the two substeps (from $\planP$ to $\overline{\planQ}$ and from $\overline{\planQ}$ to ${\planQ}$), 
and the overall change from $\planP$ to $\planQ$, respectively.
Up and down arrows represent increases and decreases of the values of $\packetslack()$ in individual slots. } 
\label{fig:leapStep-alg-pslack}
\end{figure}

Next, we show that, thanks to the deadline adjustments in Algorithm~\PlanMonotonicity{},
the slot monotonicity property for optimal plans is preserved in leap steps.


\begin{lemma}\label{lem:monotonicity-LeapStep}
If step $t$ is a leap step of Algorithm~\PlanMonotonicity{}, then
$\minwt(\planQ, \tau)\ge \minwt(\planP, \tau)$ for any $\tau \ge t+1$.
\end{lemma}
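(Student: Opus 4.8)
The plan is a short case analysis on which packet of $Q$ attains the minimum defining $\minwt(Q,\tau)$. Fix a slot $\tau\ge t+1$ and let $j^\ast\in Q$ be a packet with $d^{t+1}_{j^\ast}\le\nexttightslot(Q,\tau)$ and $w^{t+1}_{j^\ast}=\minwt(Q,\tau)$; it suffices to show $w^{t+1}_{j^\ast}\ge\minwt(P,\tau)$. Write $\gamma=\nexttightslot(P,\tau)$. I will use two facts. First, Lemma~\ref{lem:leapStep}(e) gives $\nexttightslot(Q,\tau)\le\gamma$ (every tight slot of $P$ survives in $Q$), hence $d^{t+1}_{j^\ast}\le\gamma$. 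Second, $\minwt(P,\cdot)$ is non-increasing and constant on each segment of $P$ (as recalled in Section~\ref{sec: plans}), so $\minwt(P,\tau)=\minwt(P,\gamma)$, and every packet of $P$ with deadline at most $\gamma$ has weight at least $\minwt(P,\tau)$.

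Now split according to whether $j^\ast$ is one of the packets the algorithm modified in this leap step. If $j^\ast\notin\{\rho,h_1,\dots,h_k\}$, then by Lemma~\ref{lem:leapStep}(a) we have $j^\ast\in P$ with unchanged weight and deadline (it cannot be $p$ or $\firstlightpacket$, which are absent from $Q$, nor a newly arrived packet, since no arrivals intervene between $P$ and $Q$); since $d^t_{j^\ast}=d^{t+1}_{j^\ast}\le\gamma=\nexttightslot(P,\tau)$, packet $j^\ast$ is a candidate in the definition of $\minwt(P,\tau)$, so $w^{t+1}_{j^\ast}=w^t_{j^\ast}\ge\minwt(P,\tau)$. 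If instead $j^\ast\in\{\rho,h_1,\dots,h_k\}$, then Lemma~\ref{lem:leapStep}(f) gives $w^{t+1}_{j^\ast}\ge\minwt^t(d^{t+1}_{j^\ast})=\minwt(P,d^{t+1}_{j^\ast})$, and combining with $d^{t+1}_{j^\ast}\le\gamma$ and monotonicity of $\minwt(P,\cdot)$ yields $w^{t+1}_{j^\ast}\ge\minwt(P,\gamma)=\minwt(P,\tau)$. In both cases $\minwt(Q,\tau)=w^{t+1}_{j^\ast}\ge\minwt(P,\tau)$, as required.

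I do not expect a real obstacle here: the lemma is a bookkeeping corollary of Lemma~\ref{lem:leapStep}, whose parts (e) and (f) were tailored exactly for this purpose — part (e) says the set of tight slots only grows (so $\nexttightslot$ can only move earlier), and part (f) certifies that the algorithm's weight increases on $\rho$ and the $h_i$'s are large enough to land above the old segment minimum. The only point requiring a little care is making the case split exhaustive, i.e.\ noting that the minimizer $j^\ast\in Q$ is either an untouched packet of $P$ or one of $\rho,h_1,\dots,h_k$; this is immediate from the explicit description $Q=P\setminus\{p,\firstlightpacket\}\cup\{\rho\}$ in Lemma~\ref{lem:leapStep}(a) together with the fact that in a leap step the algorithm changes weights and deadlines of no packets other than $\rho$ and the $h_i$'s.
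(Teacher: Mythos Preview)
Your argument is correct and matches the paper's approach: both pick the minimizer of $\minwt(Q,\tau)$ and use Lemma~\ref{lem:leapStep}(e) and~(f) to bound it below by $\minwt(P,\tau)$; your version simply collapses the paper's three-way case split on the position of $\tau$ (relative to $\delta$ and the algorithm's $\gamma$) into a single unified argument via the monotonicity of $\minwt(P,\cdot)$. One cosmetic issue: your symbol $\gamma$ for $\nexttightslot(P,\tau)$ collides with the $\gamma=\nexttightslot^t(d^t_\rho)$ already fixed in the leap-step context, so pick a different letter.
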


\begin{proof}
We use notation from Lemma~\ref{lem:leapStep}. By Lemma~\ref{lem:leapStep}(e) all tight slots of $\planP$
are tight slots of $\planQ$ (in particular, $\beta$ and $\gamma$ remain tight slots);
thus $\nexttightslot(\planQ,\tau)\le \nexttightslot(\planP,\tau)$ for all $\tau\ge t+1$.
Fix some $\tau\ge t+1$, and let $a$ be the packet that realizes $\minwt(\planQ, \tau)$, that is
the minimum-weight packet in $\planQ$ with $d^{t+1}_a \le \nexttightslot(\planQ,\tau)$.
We need to show that $w^{t+1}_a \ge \minwt(\planP, \tau)$. We have three cases.

\smallskip
\noindent
\mycase{1}
 $\tau \le \beta$.
Lemma~\ref{lem:leapStep}(a) shows that $\firstlightpacket$ is forced out of the optimal plan,
thus not in $\planQ$, and otherwise the set of packets in the optimal plan with deadline at most $\beta$
does not change, that is, $\planQ[t+1,\beta] = \planP[t,\beta] \setminus \braced{\firstlightpacket}$.
It follows that $a \in \planP$ and its weight and deadline remain the same.
We thus have $d^t_a = d^{t+1}_a \le \nexttightslot(\planQ,\tau)\le \nexttightslot(\planP,\tau)$,
so $a$ is considered in the definition of $\minwt(\planP, \tau)$, which implies
$w^{t+1}_a = w^t_a \ge \minwt(\planP, \tau)$.

\smallskip
\noindent
\mycase{2}
$\tau \in (\beta, \gamma]$.
If $a\notin \braced{h_1,...,h_{k+1}}$ (in particular, this means that $a\neq\rho=h_{k+1}$),
then $a$ is also in $\planP$ with the same deadline and weight. 
As $\nexttightslot(\planQ,\tau)\le \nexttightslot(\planP,\tau)$, we get $d^t_a\le \nexttightslot(\planP,\tau)$,
hence $w^{t+1}_a \ge \minwt(\planP,\tau)$ follows as in Case~1.

Next, suppose that $a=h_i$ for some $i\in\braced{ 1,\dots,k}$ (excluding the case $a=h_{k+1}=\rho$). 
Recall that, as a result of changes in line~11 in the algorithm, we have
$w^{t+1}_a \ge \minwt(\planP,\tau_{i-1})$ and $d^{t+1}_a = \tau_{i-1}$. 
The definition of $\minwt(\planQ,\tau)$ gives us that $\tau_{i-1} = d^{t+1}_a = \nexttightslot(\planQ,\tau)$,
so $\tau > \prevtightslot(\planQ,\tau_{i-1})$. Thus,
since tight slots of $\planP$ are tight also in $\planQ$, implying $\prevtightslot(\planQ,\tau_{i-1})\ge \prevtightslot(\planP,\tau_{i-1})$, we get that $\tau > \prevtightslot(\planP,\tau_{i-1})$.
So $\tau$ is in the segment $(\prevtightslot(\planP,\tau_{i-1}),\tau_{i-1}]$ of $P$.
By definition, for each segment of $\planP$ the value of $\minwt(\planP,\tau')$ is constant 
for all $\tau'$ in this segment. Therefore $\minwt(\planP,\tau_{i-1}) = \minwt(\planP,\tau)$,
and we conclude that $w^{t+1}_a \ge \minwt(\planP,\tau_{i-1}) = \minwt(\planP,\tau)$.

Finally, consider the case when $a=\rho$. Recall that $w^{t+1}_\rho = \minwt(\planP,d^t_\rho)$ and $d^t_\rho = d^{t+1}_\rho$.
We have $\tau > \prevtightslot(\planQ,d^t_\rho)$, which implies that $\tau > \prevtightslot(\planP,d^t_\rho) = \prevtightslot(\planP,\gamma)$.
Thus $\tau$ is in the segment $(\prevtightslot(\planP,\gamma),\gamma]$ of $P$, just like $\rho$.
It follows that $w^{t+1}_\rho = \minwt(\planP,d^t_\rho) = \minwt(\planP,\tau)$.

\smallskip
\noindent
\mycase{3}
$\tau > \gamma$. The set of packets in the plan with deadline strictly after $\gamma$ does not change
and also their weights and deadlines remain the same.
Thus if $d^t_a > \gamma$, then using  $\nexttightslot(\planQ,\tau)\le \nexttightslot(\planP,\tau)$ again, 
we obtain that $w^{t+1}_a \ge \minwt(\planP,\tau)$.
Otherwise, $d^t_a \le \gamma$, thus  $w^{t+1}_a\ge \minwt(\planQ,\gamma)\ge \minwt(\planP,\gamma)\ge \minwt(\planP,\tau)$,
where the second inequality follows from Case~2 and the third one from $\gamma < \tau$.
\end{proof}


\section{Appendix: Simpler Variants of Algorithm \PlanMonotonicity}
\label{app: simpler_algs}


\newcommand{\AlgSim}{\textsf{PlanM-Simpler}\xspace}
\newcommand{\AlgEvSim}{\textsf{PlanM-EvenSimpler}\xspace}

In this appendix, we provide some intuition on why the iterative choice of packets $h_i$ in Algorithm~\PlanMonotonicity{}
(line~\ref{algLn:choosing_h_i}) is essentially needed in an iterated leap step. 
We first give an example of an instance showing that a memoryless variant of our algorithm does not achieve ratio $\phi$;
this variant simply transmits packet $p$ according to Line~1 of Algorithm~\PlanMonotonicity{} (and does not make any
changes of packet weights and deadlines). 

This memoryless algorithm does not maintain the slot-monotonicity property, so this still leaves open
the possibility that perhaps some simpler way of maintaining this property is sufficient to
achieve $\phi$-competitiveness. To address this, we further give an instance which 
shows a lower bound higher than $\phi$ on the competitive ratio 
of two simpler variants of Algorithm~\PlanMonotonicity{} that maintain the slot-monotonicity property.


\subsection{Counterexample for the Memoryless Variant of~\PlanMonotonicity{}} 
\label{subsec: counterexample for memoryless}

The memoryless variant of Algorithm~\PlanMonotonicity{}, denoted \textsf{PlanM-Memoryless}, is very simple:
In each step $t$, transmit packet $p\in P^t$ that maximizes $w_p + \phi\cdot w(\substpacket^t(p))$.
That is, it chooses the packet for transmission in the same way as Algorithm~\PlanMonotonicity{}, but it does
not make any modifications of the instance. We now show that \textsf{PlanM-Memoryless} is not $\phi$-competitive.

Recall that slot-monotonicity is preserved if Algorithm~\PlanMonotonicity{} 
executes a packet from the initial segment in each step (see Lemma~\ref{lem:plan-update_transmission-1stSegment}).
For such instances, \textsf{PlanM-Memoryless} is equivalent to Algorithm~\PlanMonotonicity{} and thus it achieves ratio $\phi$.
Hence, a hard example for \textsf{PlanM-Memoryless} needs to involve some leap steps.

We give an example of an instance for which \textsf{PlanM-Memoryless} is not $\phi$-competitive. 
Our instance uses three interleaved infinite sequences of packets with weights increasing geometrically by
factor $\phi^2$, similar to the examples in Section~\ref{sec: online algorithm}. These packets will now have span $4$. 
The basic idea is that
the competitive ratio in each step (i.e., the ratio of the weights of packets
transmitted by the optimum and by the algorithm) will be $\phi^2$ on these geometric sequences.
In the example in Section~\ref{sec: online algorithm}, the online algorithm's ratio on the geometric sequence is also $\phi^2$, 
but it can catch up with the optimum at the end, as in the last slots
it still has some heavy packets pending that are not pending for the optimum. The trick behind our construction is to
add several packets near the end that will force \textsf{PlanM-Memoryless} to make some leap steps in which these new
packets will be transmitted, while the older packets will expire.


\begin{figure}[!ht]
\begin{center}
	\renewcommand{\arraystretch}{1.2}
	{\small
	\begin{tabular}{|c|c|} \hline
	packet $q$ & weight $w_q$ \\ \hline
	$\rho$  		& $1-\epsilon$ \\ \hline
	$\rho_i$  		& $\phi^{-2i}w_\rho$ \\ \hline
	$x$ 			& 	$1$ \\ \hline
	$x_i$ 			&  $\phi^{-2i}w_x$ \\ \hline
	 $z$ 			& $\phi$ \\ \hline
	$z_i$   		& $\phi^{-2i}w_z$ \\ \hline
	 $\dotz$ 		&  $\phi-\epsilon$ \\ \hline
	 $s'$ 			& $2+2\epsilon$ \\ \hline
	  $s$    		& $\phi^2+2\epsilon$ \\ \hline
	$\dotx$ 		&   $1-\epsilon$    \\ \hline
	$\rho'$ 		&  $\phi^{-2}$ \\ \hline
	$e$				& $\epsilon$ \\ \hline
	\end{tabular}
	}
	\hspace{0.15in}
	\includegraphics[valign = c,width=4.75in]{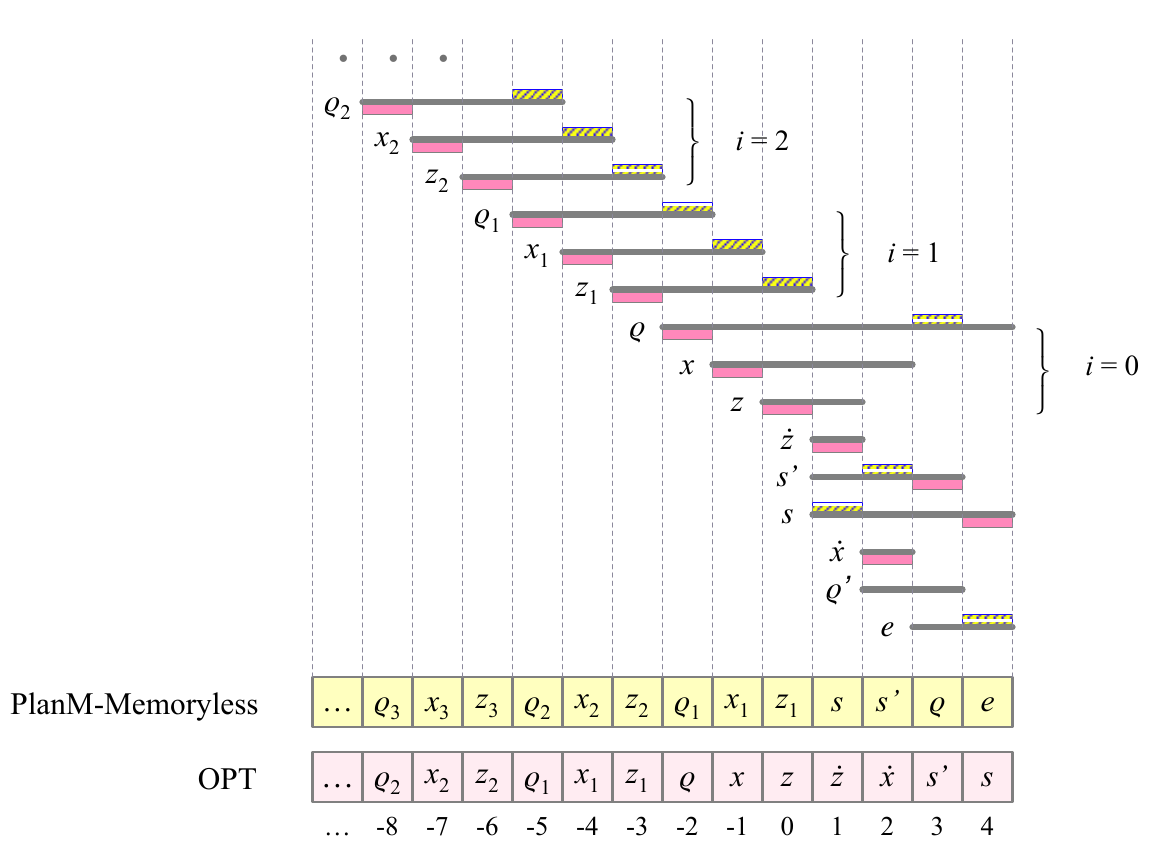}
	\caption{
	An instance on which \textsf{PlanM-Memoryless} has ratio larger than $\phi$. The table on the left shows
	the weights, the figure on the right shows release times and deadlines.
	The figure also shows the schedules of \textsf{PlanM-Memoryless} and the optimal schedule.
	The schedules are outlined at the bottom, and they are also depicted using unit-width rectangles (pink for the optimum and yellow and striped for
	the algorithm) attached to the line segments representing packet spans.
	}
	\label{fig:memoryless-counterEx}
\end{center}
\end{figure}


The instance is defined in  Figure~\ref{fig:memoryless-counterEx}.
To ensure consistent tie-breaking,
we use a small parameter $\eps > 0$ in packet weights.
The three sequences of packets with span $4$ and geometrically increasing weights are
$\braced{\rho_i}_{i=1}^\infty$, $\braced{x_i}_{i=1}^\infty$ and $\braced{z_i}_{i=1}^\infty$.
When estimating profits,  we will also include packets $\rho_0 = \rho$, $x_0 = x$, and $z_0 = z$ in these sequences, even
though the spans of $\rho$ and $z$ are not equal to $4$.
This figure also shows the schedule computed by \textsf{PlanM-Memoryless}
and the optimal schedule. The steps of \textsf{PlanM-Memoryless} are explained below:

\begin{description} 

\item{\emph{Steps $t = ...,-2,-1,0$:}}
As in the example in Section~\ref{sec: online algorithm}, when scheduling
the packets in steps $t = ...,-2,-1,0$, the algorithm will always transmit the packet that is tight at the given step.
This is because at each step $t = ...,-4,-3$ there are four packets pending $a,b,c,a'$, with respective deadlines $t$, $t+1$, $t+2$ and $t+3$,
and with weights that satisfy $w_a < w_b < w_c < w_{a'} = \phi^2 w_a$. Thus, breaking the tie between $w_a$ and $w_{a'}$, the
algorithm will transmit $a$. For times $t=-2,-1,0$ the deadlines do not satisfy the above condition; nevertheless, by inspection, the
algorithm will still schedule the tight packet $w_a$ in these steps. In other words, the three interleaved packet sequences do not interfere with each other.

\item{\emph{Step~$1$:}}
The pending packets are $\rho,x, z, \dotz, s', s$ and the
optimal plan is  $P^1= \braced{z, x, s', s}$ (with each packet in its own segment); see Figure~\ref{fig:memoryless-counterEx plans}.
We also have $\substpacket^1(s) = \substpacket^1(s') = \substpacket^1(x) = \rho$,  $\substpacket^1(z) = z$,
$w_s> w_{s'},w_x$, and 
\begin{equation*}
w_s + \phi\razy w_{\rho} \;=\; (\phi^2+ 2\epsilon) + \phi(1-\epsilon) 
						\;=\;  (1+\phi) w_z + (2 - \phi)\epsilon
						\;>\; w_z + \phi\razy w_z,
\end{equation*}						
so the algorithm will transmit $p=s$.


\begin{figure}[!ht]
\begin{center}
	\renewcommand{\arraystretch}{1.1}
	{\small
	\begin{tabular}{|c|c|} \hline
	packet $q$ & weight $w_q$ \\ \hline
	$\rho$  		& $1-\epsilon$ \\ \hline
	$x$ 			& 	$1$ \\ \hline
	 $z$ 			& $\phi$ \\ \hline
	$\dotz$ 		&  $\phi-\epsilon$ \\ \hline
	 $s'$ 			& $2+2\epsilon$ \\ \hline
	  $s$    		& $\phi^2+2\epsilon$ \\ \hline
	$\dotx$ 		&   $1-\epsilon$    \\ \hline
	$\rho'$ 		&  $\phi^{-2}$ \\ \hline
	$e$				& $\epsilon$ \\ \hline
	\end{tabular}
	}
	\hspace{0.25in}
	\includegraphics[valign = c,width=4in]{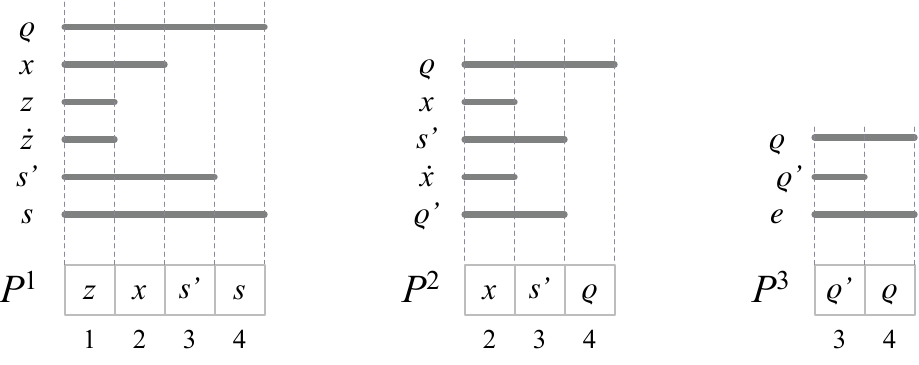}
	\caption{
	Optimal plans of \textsf{PlanM-Memoryless} in steps $1$, $2$ and $3$. 
	}
	\label{fig:memoryless-counterEx plans}
\end{center}
\end{figure}


\item{\emph{Step~$2$:}}
The pending packets are $\rho, x, s',\dotx,\rho'$ and the optimal plan is
$P^2 = \braced{x, s', \rho}$. Also, $\substpacket^2(s') = \rho'$, $\substpacket^2(x) = x$, 
$\substpacket^2(\rho)$ is a packet of weight $0$, and 
\begin{align*}
w_{s'} + \phi\razy w_{\rho'} \;&=\; 2+ 2\epsilon + \phi\cdot \phi^{-2} 
	 					\;=\; (1+\phi) w_x   + 2\epsilon
						\;>\; w_x + \phi\razy w_x,
						\\
w_{s'} + \phi\razy w_{\rho'} \;&>\; w_\rho + \phi\razy 0,
\end{align*}
so the algorithm will transmit $p=s'$.

\item{\emph{Step~$3$:}}
The pending packets are $\rho', \rho,e$, the optimal plan is $P^3 = \braced{\rho', \rho}$,
$\substpacket^3(\rho) = e$, $\substpacket^3(\rho') = \rho'$, and 
\begin{equation*}
w_{\rho} + \phi\razy w_e \;=\; (1+\phi)w_{\rho'} + (\phi-1) \epsilon 
			\;>\; w_{\rho'} + \phi\razy w_{\rho'},
\end{equation*}
so the algorithm will transmit $p = \rho$.

\item{\emph{Step~$4$:}}
The only (non-zero weight) pending packet for the algorithm in this step is $e$, so this is the packet transmitted.

\end{description}

It remains to calculate the profits of \textsf{PlanM-Memoryless} and of \OPT. As $\eps$  can be made
arbitrarily small, in these calculations we assume that  $\eps = 0$. 
We also assume for simplicity that the three interleaved geometric sequences are backwards-infinite.
Recall that $\sum_{i = 0}^\infty \phi^{-2i} = \phi$.

For \OPT, the profit in steps $...-2,-1,0$ from the three interleaved geoemtric sequences,
is $\phi (w_{\rho} + w_x + w_z) = \phi(1 + 1 + \phi) = 3\phi + 1$.
Its profit in steps $1,\dots,4$ is $w_{\dotz} + w_{\dotx} + w_{s'} + w_s = \phi + 1 + 2 + \phi^2 = 2\phi + 4$.
In total, $w(\OPT) = 5\phi + 5$.

The profit of the algorithm in steps $...,-2,-,1,0$ is $\phi^{-2}$ times that of $\OPT$,
that is, $(3\phi + 1)/\phi^2 = \phi + \phi^{-1}$.
In steps $1,\dots,4$, its profit is $w_s + w_{s'} + w_{\rho}+ w_e = \phi^2 + 2 + 1 + 0 = \phi+4$.
Thus, $w(\ALG) = \phi + \phi^{-1} + \phi + 4 = 3\phi + 3$, and thus the ratio
equals $w(\OPT) / w(\ALG) = 5 / 3 > \phi$.
(In fact, a slightly higher ratio can be achieved by having four interleaved geometric sequences,
which would allow the optimal schedule to also include the packet corresponding to $\rho'$.)

Finally, we remark that even if we parameterize \textsf{PlanM-Memoryless} by $\alpha$
and select a packet $p$ maximizing $w_p + \alpha\cdot w(\substpacket^t(p))$, 
we do not obtain a $\phi$-competitive algorithm. It would be interesting to know the competitive ratio
of \textsf{PlanM-Memoryless} for the best choice of $\alpha$. 
The currently best memoryless deterministic algorithm is $1.893$-competitive~\cite{englert_suppressed_packets_12}.


\medskip
\myparagraph{Comparison to Algorithm~{\PlanMonotonicity}.}
For comparison, consider what will our Algorithm~{\PlanMonotonicity} do on this instance.
The computation up to and including step~0 will be identical to that of \textsf{PlanM-Memoryless}.
In step~1 the same packet $s$ will be transmitted, but the weight of its substitute packet $\rho$ will be increased to $1$ in Line~4 of the algorithm.
Similarly, in step~2 $s'$ will be transmitted, but the weight of its substitute packet $\rho'$ will be increased to $1$.
Then, in step~3 {\PlanMonotonicity} will transmit $\rho'$ (whose weight is now $1$) and in step~4 it will transmit $\rho$.
This will increase the profit of the algorithm by $w^0_{\rho'} = 1/\phi^2$ (the original weight of $\rho'$), and its competitive ratio on
this instance will be $\frac{5\phi + 5}{3\phi+3+1/\phi^2} < \phi$.


\subsection{Counterexample for Simpler Variants of~\PlanMonotonicity{}} 

As explained in Appendix~\ref{app: more on properties of plans}, the value of $\minwt^t(\tau)$ may decrease only
if a packet $p$ from a non-initial segment of $P^t$ is transmitted. This is because in such a case
$p$ is replaced in the optimum plan by a packet $\rho$ with $w_\rho < \minwt^t(d_\rho)$.
\PlanMonotonicity{} maintains the slot-monotonicity property by adjusting the weights and deadlines of some pending packets.
First, it increases the weight of the substitute packet $\rho$ to $\minwt^t(d_\rho)$. The example in the
previous section showed how this weight increase of $\rho$ helps in reducing the overall ratio on some instances.

Second, if $\nexttightslot^t(d_p) < \nexttightslot^t(d_\rho)$ (i.e., in an iterated leap step),
then \PlanMonotonicity{} also makes an intricate shift of iteratively chosen packets $h_1, \dots, h_k$.
This shift is implemented by decreasing the deadlines and it prevents the algorithm from merging segments of $P^t$.
In this section we address the question whether this shift process is necessary. To this end,
we consider two simpler variants of \PlanMonotonicity{} that maintain
the slot-monotonicity property as well\footnote{The slot-monotonicity property for both variants of \PlanMonotonicity{} follows             
from arguments similar to the proofs in Appendices~\ref{app: more on properties of plans} and~\ref{app: leap step of algorithm planmonotonicity}.
We do not include the proofs here as our only aim is to show that these algorithms are not $\phi$-competitive.},
but do not perform an iterative shift 
of packets $h_1, \dots, h_k$. Both variants only differ from \PlanMonotonicity{}
in iterated leap steps, i.e., they use the same rule for choosing a packet to transmit and
also increase the weight of $\rho$ to $\minwt^t(d_\rho)$ in each leap step.

In the first variant, called~\AlgSim and defined in Algorithm~\ref{alg:planmonotonicity-simpler},
instead of choosing packets $h_1, \dots, h_k$, we identify the heaviest packet $\hstar$
in the segment of $P^t$ containing $d^t_\rho$, we decrease its deadline to
$\nexttightslot^t(d_p)$, and, if its weight is too small to preserve slot monotonicity, then we appropriately increase it.
Recall that $\alpha = \nexttightslot^t(t)$ is the first tight slot of $P^t$.

\begin{algorithm}[ht]
\caption{Algorithm~\AlgSim($t$)}
\label{alg:planmonotonicity-simpler}
\begin{algorithmic}[1]
\State{transmit packet $p\in \planPaftArrivals$ that maximizes  $w^t_p + \phi\cdot w^t(\substpacket^t(p))$} 
\If{$d^t_p > \alpha$} 
\Comment{``leap step''}    
	\State{$\rho \assign \substpacket^t(p)$} 
	\State{$w^{t+1}_{\rho} \assign \minwt^t(d^t_\rho)$} 
	\Comment{increase $w_\rho$}
    \State{$\eta \assign \nexttightslot^t(d^t_p)$ and $\gamma \assign \nexttightslot^t(d^t_\rho)$}
	\If{$\gamma > \eta$}         
	      \State{$\hstar \assign$ heaviest packet in $P^t(\prevtightslot^t(d^t_\rho), \gamma]$} 
		  \State{$d^{t+1}_{\hstar} \assign \eta$ and $w^{t+1}_{\hstar} \assign \max\big(w^t_{\hstar}, \minwt^t(\eta)\big)$} 
	\EndIf
\EndIf
\end{algorithmic}
\end{algorithm}

The second variant is even simpler as we just decrease the deadline of $\rho$
if it is after $\nexttightslot^t(d^t_p)$ and increase $w_\rho$ to match the minimum of $\minwt()$
in its new segment. We call this algorithm~\AlgEvSim and summarize it in Algorithm~\ref{alg:planmonotonicity-toosimple}.

\begin{algorithm}[ht]
\caption{Algorithm~\AlgEvSim($t$)}
\label{alg:planmonotonicity-toosimple}
\begin{algorithmic}[1]
\State{transmit packet $p\in \planPaftArrivals$ that maximizes  $w^t_p + \phi\cdot w^t(\substpacket^t(p))$} 
\If{$d^t_p > \alpha$}
\Comment{``leap step''}    
	\State{$\rho \assign \substpacket^t(p)$}
	\State{$d^{t+1}_{\rho} \assign \min\big(\,d^t_\rho, \nexttightslot^t(d^t_p)\,\big)$} 
	\State{$w^{t+1}_{\rho} \assign \minwt^t(d^{t+1}_\rho)$} 
		\Comment{increase $w_\rho$}
\EndIf
\end{algorithmic}
\end{algorithm}


\paragraph{Instance with ratio above $\phi$.}
We now present a hard instance, for which the ratio of both algorithms exceeds $\phi$.
The weights of some packets in this instance are parametrized by a parameter $\epsilon >0$, that can be made arbitrarily small.

Our instance is an extension of the lower-bound instance for \textsf{PlanM-Memoryless}
described in Section~\ref{subsec: counterexample for memoryless}. In this instance 
we will use \emph{five} (instead of just three) backwards-infinite interleaving sequences of packets whose weights increase geometrically 
by factor $\phi^2$. These packets will have span $6$. The principle is the same as in Section~\ref{subsec: counterexample for memoryless}:
At each step $t$ the algorithm will have six packets $a_0,a_1,...,a_5$ pending, with each $a_i$ having its deadline at $t+i$.
The weights will satisfy $w_{a_0}\le w_{a_i} \le \phi^2 w_{a_0}$ for $i=2,3,4$ and $w_{a_5} = \phi^2 w_{a_0}$.
Thus, breaking the tie, the algorithm will transmit $a_0$ at each such step.
This way, at each step the current ratio between the optimum profit and the algorithm's profit will be $\phi^2$,
although the algorithm will always have some remaining heavy packets pending that are not pending in the optimum schedule.
At the end of the instance we release several new packets that will be transmitted by the algorithm instead of
these older pending packets (which will expire).


Figure~\ref{fig:simpler-counterEx} shows the complete instance. It also shows the schedules of \AlgSim
(also represented by striped yellow rectangles), of \AlgEvSim, and of an optimal solution (pink rectangles).
Note that the schedules of \AlgSim and \AlgEvSim are essentially the same, as the only difference is that slots of $h$ and $\varrho$ are swapped.

We now explain the behavior of \AlgSim on instance in Figure~\ref{fig:simpler-counterEx}
(for \AlgEvSim, it is nearly the same). As explained earlier, the instance starts with five interleaved sequences of packets
$\braced{\rho_i}_{i=1}^\infty$, $\braced{x_i}_{i=1}^\infty$, $\braced{h_i}_{i=1}^\infty$, $\braced{r_i}_{i=1}^\infty$, 
and $\braced{z_i}_{i=1}^\infty$, all of span $6$, with weights forming a geometric sequence with factor $\phi^2$.
We think about these as being grouped into batches of $5$.
Both algorithms \AlgSim and \AlgEvSim will transmit tight packets in each step, while the adversary will always choose the
heaviest packet to schedule. 
Later we have one more batch of five packets. These packets have different spans but their weights extend these geometric sequences,
so we call them $\rho = \rho_0$, $x=x_0$, $h = h_0$, $r = r_0$ and $z = z_0$. For these packets the same basic principle applies, 
and in steps $-5,-4,-3,-2,-1$ the algorithm's schedule will follow the same pattern.
In step~0, for the same reason again, the tight packet $r$ will be transmitted.


\begin{figure}[!ht]
\begin{center}
	\renewcommand{\arraystretch}{1.1}
	{\small
	\begin{tabular}{|c|c|} \hline
	\bigcell{c}{packet\\ $q$} & \bigcell{c}{initial\\ weight $w_q$} \\ \hline
	$\rho$  		& $1-\epsilon$ \\ \hline
	$\rho_i$  		& $\phi^{-2i}w_\rho$ \\ \hline
	$x$ 			& 	$1$ \\ \hline
	$x_i$ 			&  $\phi^{-2i}w_x$ \\ \hline
	$h$ 			& 	$1$ \\ \hline
	$h_i$ 			&  $\phi^{-2i}w_h$ \\ \hline
	 $r$ 			& $1+2\epsilon$ \\ \hline
	$r_i$   		& $\phi^{-2i}w_r$ \\ \hline
	 $z$ 			& $\phi$ \\ \hline
	$z_i$   		& $\phi^{-2i}w_z$ \\ \hline
	  $s$    		& $\phi^2+2\epsilon$ \\ \hline	
	  $g$    		& $\phi^2$ \\ \hline	  
	 $z'$ 		&  $\phi-\epsilon$ \\ \hline
	 	 $y$    		& $\phi - \epsilon$ \\ \hline			 
	 $g'$ 			& $\phi^2-\epsilon$ \\ \hline
	$\rho'$ 		&  $1-\epsilon$ \\ \hline
	\end{tabular}
	}
	\hspace{0.05in}
	\includegraphics[valign = c,width=4.8in]{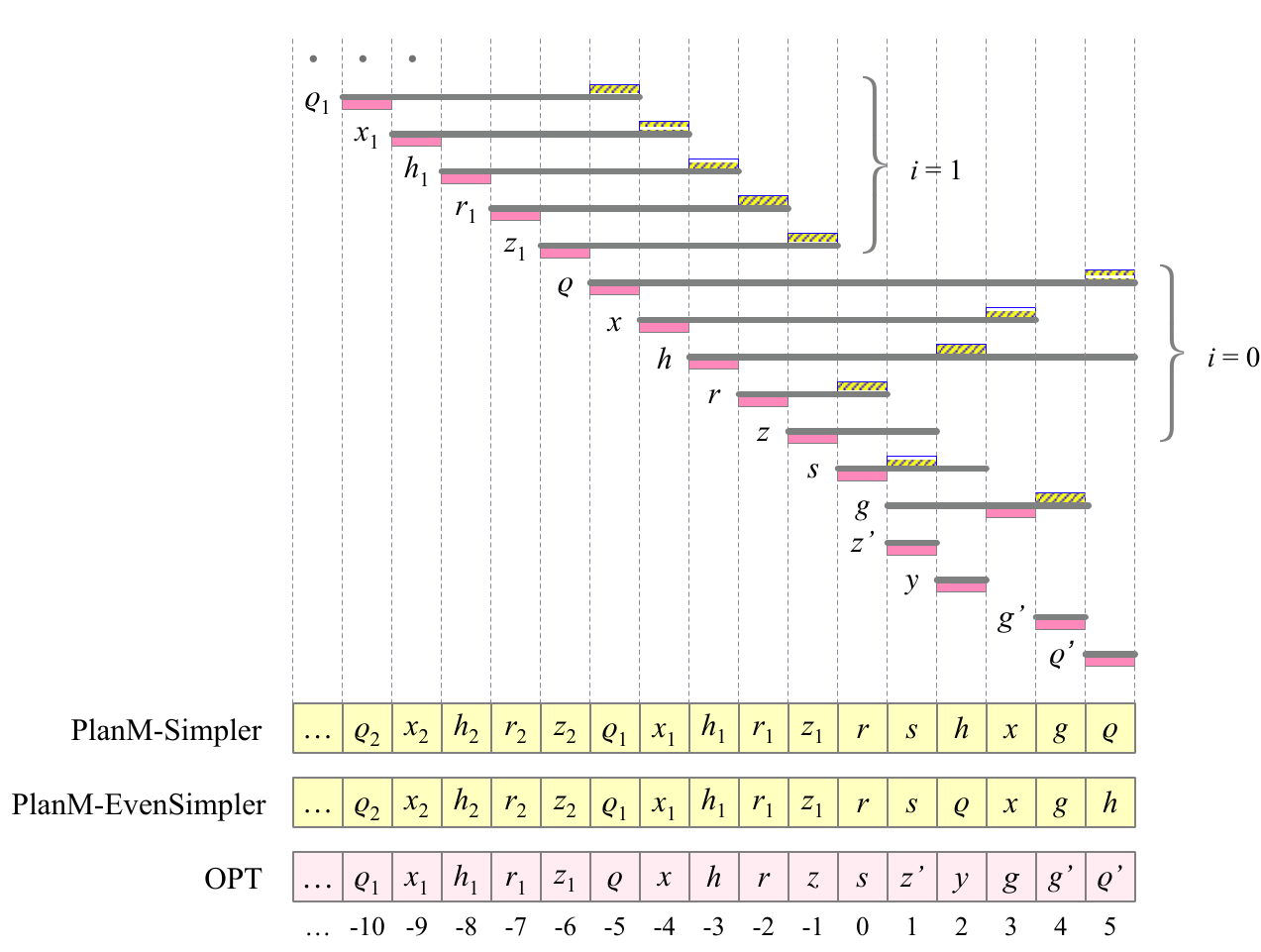}
	\caption{
	An example for algorithms \AlgSim and \AlgEvSim. The original packet weights are in the table on the left;
	packet spans are depicted in the figure, represented by a thick line segment.
	At the bottom, the figure shows the schedules of algorithms \AlgSim and \AlgEvSim, as well as the 
	optimum schedule. 
	The schedules of \AlgSim are also represented by unit-length thin rectangles (pink for the optimum and yellow and striped for
		\AlgSim) attached to the line segments representing packet spans.
	}
	\label{fig:simpler-counterEx}
\end{center}
\end{figure}


The crucial step is step~1. In this step \AlgSim transmits packet $p = s$, since $s$ is the heaviest pending packet,
$\rho$ is the only substitute packet
available, and $w_s + \phi w_\rho > \phi^2 w_z$, which holds for all $\epsilon>0$.
\AlgSim increases the weight of $\rho$ to $\minwt(5) = 1$,
decreases the deadline of packet $h$ to $2$, and increases its weight of $h$ to $\minwt(2) = \phi$.
This is illustrated in Figure~\ref{fig:simpler-counterEx plans}, that shows the optimal plans of \AlgSim
in step~1 and step~2.
(\AlgEvSim also transmits $p=s$, decreases the deadline of $\rho$ to $2$, and increases its weight to $\phi$.)


\begin{figure}[!ht]
\begin{center}
	{\small
	\renewcommand{\arraystretch}{1.1}
	\begin{tabular}{|c|c|} \hline
	packet & weight $w^1$ \\ \hline
	$\rho$  		& $1-\epsilon$ \\ \hline
	$x$ 			& 	$1$ \\ \hline
	 $h$ 			& $1$ \\ \hline
	 $z$ 			& $\phi$ \\ \hline
	 $s$ 			& $\phi^2+2\epsilon$ \\ \hline
 	 $g$ 			& $\phi^2$ \\ \hline
 	 $z'$ 			& $\phi-\epsilon$ \\ \hline
	\end{tabular}
	}
	\hspace{0.1in}
	\includegraphics[valign = c,width=1.45in]{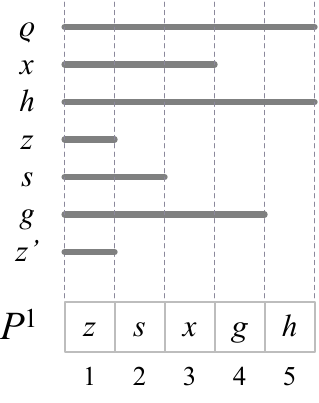}
	\hspace{0.3in}
	{\small
	\renewcommand{\arraystretch}{1.1}
	\begin{tabular}{|c|c|} \hline
	packet & weight $w^2$ \\ \hline
	$\rho$  		& $1$ \\ \hline
	$x$ 			& 	$1$ \\ \hline
	 $h$ 			& $\phi$ \\ \hline
	$g$ 		&  $\phi^2$ \\ \hline
	 $y$ 			& $\phi-\epsilon$ \\ \hline
	\end{tabular}
	}
	\hspace{0.1in}
	\includegraphics[valign = c,width=1.45in]{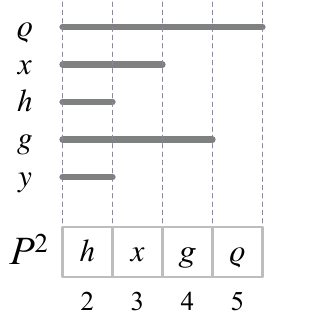}
	\caption{
	Optimal plans of \AlgSim in steps $1$ and $2$. The tables show packet weights.
	Note that the weight of $\rho$ changed, and for $h$ both the deadline and weight changed.
	}
	\label{fig:simpler-counterEx plans}	
\end{center}
\end{figure}


After step~1, there are no leap steps in the remaining four steps as there won't be any non-expiring packet
outside the optimal plan (that could serve as a non-zero weight substitute packet) and at each step the weights of pending packets are within
a factor of $\phi^2$. (Note that packet $y$ will not appear in the plan in step 2, even though it is
heavier than the \emph{original} weight of $h$.)

Next, we calculate the profits. Since $\epsilon$ can be made arbitrarily small, in the calculations we can assume that $\epsilon = 0$.
For the optimal schedule, adding up the weights of the five interleaved packet sequences, up to step -1,
their total weight is $\phi(w_\rho + w_x + w_h + w_r+ w_z) = 5\phi+1$. 
(Recall that $\sum_{i=0}^\infty \phi^{-2i} = \phi$.)
The optimal profit in steps $0,1,...,5$ is $w_s + w_{z'} + w_y + w_g + w_{g'} + w_{\rho'}
						=  \phi^2 + \phi + \phi + \phi^2 + \phi^2 + 1 = 5\phi+4$.
Thus $w(\OPT) =10\phi+5$.

The profit of either algorithm before step~$0$ (i.e., from the five interleaved sequences) is smaller by factor of $\phi^2$ than the
corresponding profit in the optimum schedule, so it equals $(5\phi+1)/\phi^2 = 4\phi -3$.
The algorithm's profit in steps $0,1,...,5$ is
$w_r + w_s + w_h + w_x + w_g + w_\rho = 1 +  \phi^2 + 1 + 1 + \phi^2 + 1 = 2\phi+6$.
(Here, of course, we take the original weights into account).
Thus its total profit is $w(\ALG) = 6\phi+3$, giving us ratio $w(\OPT) / w(\ALG) = 5/3 \approx 1.666$.


\medskip

\paragraph{What goes wrong in the analysis.}
We explain where the analysis in the paper fails for Algorithm \AlgSim (the reason is basically the same for \AlgEvSim). 
Consider step~1, and assume that $\epsilon = 0$.
In this step, using the profit of $\phi\cdot w_s = \phi^3$, the algorithm needs to ``pay'' for all of the following:
(i) the adversary profit from $z'$, equal to $\phi$, (ii) removing $z$ and $s$ from $\varbackupplan$, which costs $1 + \phi$, 
(iii) increasing the weight of $\rho$, which is negligible, 
and (iv) increasing the weight of $h$, which costs $\phi^{-1}$, as the potential increases as well.
The total cost thus sums to $\phi^3 + \phi^{-1}$;
it is no coincidence that the excess of $\phi^{-1}$ over the profit equals $w(\OPT) - \phi\cdot w(\ALG)$,
since all other steps are essentially tight (in terms of calculations in the analysis).


\medskip
\myparagraph{Comparison to Algorithm~{\PlanMonotonicity}.}
For comparison, consider what our Algorithm~{\PlanMonotonicity} does on this instance.
The computation up to and including step~0 remains identical to that of \AlgSim.
We focus on step~1. In step~1 the same packet $s$ gets transmitted, and the weight of $\rho$
is also increased to $1$.
However, {\PlanMonotonicity} further performs its iterative shift, with $k=2$,
$h_1 = g$, and $h_2 = h$. Figure~\ref{fig:algM plans} shows the effect of this change on the
plan in step~2, in which {\PlanMonotonicity} transmits $g$, unlike \AlgSim which transmits
$h$ in this step.


\begin{figure}[!ht]
\begin{center}
	{\small
	\renewcommand{\arraystretch}{1.1}
	\begin{tabular}{|c|c|} \hline
	packet & weight $w^1$ \\ \hline
	$\rho$  		& $1-\epsilon$ \\ \hline
	$x$ 			& 	$1$ \\ \hline
	 $h$ 			& $1$ \\ \hline
	 $z$ 			& $\phi$ \\ \hline
	 $s$ 			& $\phi^2+2\epsilon$ \\ \hline
 	 $g$ 			& $\phi^2$ \\ \hline
 	 $z'$ 			& $\phi-\epsilon$ \\ \hline
	\end{tabular}
	}
	\hspace{0.1in}
	\includegraphics[valign = c,width=1.45in]{9-C-2_example_simpler_algs_plan1.pdf}
	\hspace{0.3in}
	{\small
	\renewcommand{\arraystretch}{1.1}
	\begin{tabular}{|c|c|} \hline
	packet & weight $w^2$ \\ \hline
	$\rho$  		& $1$ \\ \hline
	$x$ 			& 	$1$ \\ \hline
	 $h$ 			& $1$ \\ \hline
	$g$ 		&  $\phi^2$ \\ \hline
	 $y$ 			& $\phi-\epsilon$ \\ \hline
	\end{tabular}
	}
	\hspace{0.1in}
	\includegraphics[valign = c,width=1.45in]{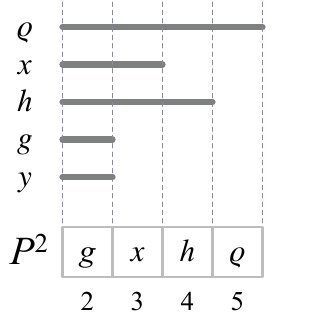}
	\caption{
	Optimal plans of Algorithm~{\PlanMonotonicity} in steps $1$ and $2$. The tables show packet weights before the
	corresponding steps.
	Note the the weight of packet $\rho$ was changed, and the deadlines of packets $g$ and $h$ were reduced.
	}
	\label{fig:algM plans}
\end{center}
\end{figure}



\end{document}